\providecommand{\FourierTrafo}{\mathfrak{F}}
\providecommand{\sympForm}{\sigma}
\providecommand{\Pspace}{\R^d_x \times \R^d_p}
\providecommand{\Cont}{\mathcal{C}}
\providecommand{\BCont}{\mathcal{BC}}
\providecommand{\Fourier}{\FourierTrafo}
\providecommand{\spec}{\mathrm{spec} \, }
\providecommand{\image}{\mathrm{im} \, }
\providecommand{\Op}{\mathfrak{Op}}
\providecommand{\Fs}{\FourierTrafo_{\sympForm}}
\providecommand{\Qe}{\mathsf{Q}}
\providecommand{\Pe}{\mathsf{P}}
\providecommand{\Weyl}{\sharp}
\providecommand{\wastlim}{{\mathrm{w}^{\ast}\mbox{-}\lim}}
\providecommand{\slim}{\mathrm{s}\mbox{-}\lim}
\providecommand{\WeylSys}{W}
\providecommand{\pspace}{\Xi}
\providecommand{\spec}{\mathrm{spec}}
\providecommand{\Qe}{{Q_{\eps}}}
\providecommand{\ordere}[1]{\mathcal{O}(\eps^{#1})}
\providecommand{\order}{\mathcal{O}}
\providecommand{\Piref}{\hat{\Pi}_{\mathrm{ref}}}
\providecommand{\piref}{\pi_{\mathrm{ref}}}
\providecommand{\Hfast}{\mathcal{H}_{\mathrm{fast}}}
\providecommand{\Hslow}{\mathcal{H}_{\mathrm{slow}}}
\providecommand{\Hphys}{\mathcal{H}_{\mathrm{phys}}}
\providecommand{\Fs}{\mathcal{F}_{\sigma}}
\providecommand{\Weyl}{\star}
\providecommand{\Op}{\mathrm{Op}}
\providecommand{\WignerTrafo}{{\mathcal{W}}}
\providecommand{\Schwartz}{\mathcal{S}}
\providecommand{\rel}{\mathrm{rel}}
\providecommand{\Hil}{\mathcal{H}}
\providecommand{\Index}{\mathcal{J}}
\providecommand{\opHBO}{\hat{H}_{\mathrm{BO}}}
\providecommand{\He}{H_{\mathrm{e}}}
\title{Weyl Quantization and Semiclassics}
\author{Max Lein\footnote{\texttt{lein@ma.tum.de}}}
\date{\today} 
\begin{document}

\frontmatter
\maketitle

\chapter*{Acknowledgements}

I would like to thank Herbert Spohn for his encouragement, support and cutting the red tape. Chapters~\ref{hilbert_spaces} and \ref{operators} follow closely his `Quantendynamik' from 2005. Furthermore the physical arguments presented in  Chpater~\ref{semiclassics} to arrive at the proper formulation of the semiclassical limit (which I feel is the most important part) in is based on his lecture. 

Furthermore, I would like to express my gratitude towards Martin Fürst and David Sattlegger who were very helpful in improving the lecture. 

\tableofcontents

\mainmatter

\chapter{Introduction} 
\label{intro}

This lecture will elaborate on what a quantization is. If classical phase space, \ie the space of possible positions and momenta, is $T^* \R^d_x \cong \Pspace$, we give a concrete solution. Roughly speaking, we generalize Dirac's prescription to replace $x$ by ``multiplication with $x$,'' 
\begin{align*}
	(\hat{x} \varphi)(x) := x \, \varphi(x) 
	, 
	&& \varphi \in L^2(\R^d) 
	, 
\end{align*}
and momentum by a ``derivative with respect to $x$'' 
\begin{align*}
	(\hat{p} \varphi)(x) := - i \hbar \nabla_x \varphi(x) 
	, 
	&& \varphi \in L^2(\R^d) \cap \Cont^1(\R^d)
	.  
\end{align*}
Dirac's recipe only works if we are quantizing functions of $x$ \emph{or} $p$ and their linear combinations, \eg $H(x,p) = \frac{1}{2m} p^2 + V(x)$ becomes $\hat{H} = H(\hat{x},\hat{p}) = - \frac{\hbar^2}{2m} \Delta_x + V(\hat{x})$, and it needs to be supplemented with an ``operator ordering prescription:'' what is the quantization of $x \cdot p$? Possible solutions are 
\begin{align*}
	\hat{x} \cdot \hat{p} 
	, 
	&&
	\hat{p} \cdot \hat{x} = \hat{x} \cdot \hat{p} - i \hbar \, \id_{L^2}
	, 
	&&
	\tfrac{1}{2} \bigl ( \hat{x} \cdot \hat{p} + \hat{p} \cdot \hat{x} \bigr ) 
	, 
\end{align*}
all of which are \emph{different} operators and the first two are not even symmetric (= hermitian). Choosing an operator ordering is necessary, because $\hat{x}$ and $\hat{p}$ \emph{do not commute}, 
\begin{align*}
	i [ \hat{p}_l , \hat{x}_j ] = \hbar \, \delta_{lj} 
	. 
\end{align*}
The physical constant $\hbar$ has a \emph{fixed value} and units and it measures the ``degree of non-commutativity'' of position and momentum. On the other hand, can we find the function whose quantization is, say, $\hat{x} \cdot \hat{p} = \sum_{l = 1}^d \hat{x}_l \, \hat{p}_l$? One way to look at it is to find a product $\Weyl_{\hbar}$ which \emph{emulates the operator product} on the level of functions, \ie 
\begin{align*}
	\sum_{l = 1}^d \widehat{x_l \Weyl_{\hbar} p_l} = \sum_{l = 1}^d \hat{x}_l \, \hat{p}_l 
	. 
\end{align*}
On the one hand, this product must inherit the non-commutativity of the operator product, but on the other, if ``$\hbar$ is small,'' $x_l \Weyl p_l$ should be approximately given by $x_l \, p_l$. One of the major reasons to study $\Weyl_{\hbar}$ is that we can \emph{use it to derive corrections in perturbation expansions}: if we can find an expansion of the product $\Weyl_{\hbar}$ in $\hbar$, we can ``expand the operator product:'' 
\begin{align*}
	\hat{f} \, \hat{g} = \widehat{f \Weyl_{\hbar} g} = \sum_{n = 0}^{\infty} \hbar^n \widehat{(f \Weyl_{\hbar} g)}_{(n)} + \mbox{small error}
\end{align*}
We will make all of the above mathematically rigorous in this lecture and explain the symbols properly. This idea was first proposed by Littlejohn and Weigert \cite{LittlejohnWeigert:diagonalizationMultiWave:1993} (who are theoretical physicists) and used both in a mathematically rigorous fashion and in theoretical physics to derive currents in crystals that are subjected to electromagnetic fields \cite{PST:effDynamics:2003}, piezocurrents \cite{PanatiSparberTeufel:polarization:2006,Lein:polarization:2005}, guiding center motion of particles in electromagnetic fields \cite{Mueller:productRuleGaugeInvariantWeylSymbols:1999}, the non- and semirelativistic limit of the Dirac equation \cite{FuerstLein:nonRelLimitDirac:2008} and many others. 
\medskip

\noindent
The second main point of this lecture is the semiclassical limit (``$\hbar \rightarrow 0$''): for $\hbar = 0$, $\hat{p}$ and $\hat{x}$ commute. However, \emph{a priori} it is not at all clear how and why this implies classical behavior. Let $H(x,p) = \frac{1}{2m} p^2 + V(x)$ be a classical hamiltonian. If we start at $t = 0$ at the point $(x_0 , p_0)$ in phase space, Hamilton's equations of motion 
\begin{align*}
	\dot{x} &= + \nabla_p H = \tfrac{p}{m} \\
	\dot{p} &= - \nabla_x H = - \nabla_x V 
\end{align*}
with initial conditions $(x_0,p_0)$ determine a unique classical trajectory $\bigl ( x(t),p(t) \bigr )$. The quantum dynamics of a particle whose initial state is described by the wave function $\psi_0$ is given by the \emph{Schrödinger equation}, 
\begin{align}
	i \hbar \frac{\partial}{\partial t} \psi^{\hbar}(t) = \hat{H} \psi^{\hbar}(t) = \bigl ( - \tfrac{1}{2m} (- i \hbar \nabla_x)^2 + V(\hat{x}) \bigr ) \psi^{\hbar}(t) 
	, 
	&& \psi^{\hbar}(0) = \psi_0 \in L^2(\R^d)
	. 
\end{align}
If in order to ``take the semiclassical $\hbar \rightarrow 0$,'' we naïvely set $\hbar = 0$ in the Schrödinger equation, we get the unintelligible equation 
\begin{align*}
	0 = V(\hat{x}) \psi^0(t) 
\end{align*}
which implies trivial dynamics. The reason is that the actual solution requires a bit more thought. There are simple ansätze such as WKB functions, coherent states and other wave packet approaches\footnote{A wave packet is a \emph{sharply peaked} wave function whose envelope function varies slowly, just like in the case of a WKB function. The center of the wave packet is interpreted as ``semiclassical position'' and the time-derivative of this center gives the classical velocity.}, but this suggests that only special (``good'') initial states have a good semiclassical limit. First of all, why should nature ``choose'' such rather special initial conditions? Wave packet approaches are usually based on the Ehrenfest theorem which roughly states that ``quantum expectation values behave like classical observables.'' Although this is always correct, it is \emph{not always interesting}: consider the case of two wave packets on the line $\R$ that are identical in every respect, but run in opposite directions. The average position is always $0$, but the dynamics is far from trivial. 

We will show that such \emph{special initial states are not necessary} for semiclassical behavior, it is a generic phenomenon. The core is the \emph{Egorov theorem}: 
\begin{thm}[Egorov]
	Let $H(x,p) = \tfrac{1}{2m} p^2 + V(x)$ be a hamiltonian such that $V \in \Cont^{\infty}(\R^d_x)$ and $\babs{\partial_x^a V(x)} \leq C_a \sqrt{1 + x^2}^{[2-\abs{a}]_+}$ for $a \in \N_0^d$. Furthermore, let $f \in \BCont^{\infty}(\Pspace)$ be an observable. Then for $\hbar \ll 1$, we have 
	\begin{align*}
		e^{+ i \frac{t}{\hbar} \hat{H}} \hat{f} e^{- i \frac{t}{\hbar} \hat{H}} - \widehat{f \circ \Phi_t} = \mathcal{O}(\hbar) 
	\end{align*}
	where $\hat{f}$ denotes the Weyl quantization of $f$ and $\Phi_t$ the classical flow generated by $H$. 
\end{thm}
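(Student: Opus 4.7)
The natural strategy is to interpolate between the two operators via the Duhamel/Heisenberg trick. Write $U(t) := e^{-it\hat{H}/\hbar}$ and $f_t := f \circ \Phi_t$, and consider the one-parameter family $s \mapsto U(-s)\, \widehat{f_{t-s}}\, U(s)$, which equals $\widehat{f_t}$ at $s=0$ and equals the Heisenberg-evolved $U(-t)\hat{f}U(t)$ at $s=t$. Then the fundamental theorem of calculus gives
\begin{align*}
   e^{+it\hat H/\hbar}\hat f\, e^{-it\hat H/\hbar}\; -\; \widehat{f\circ\Phi_t}
   \;=\; \int_0^t U(-s)\,\Bigl(\tfrac{i}{\hbar}[\hat H,\widehat{f_{t-s}}] \;-\; \widehat{\{H,f_{t-s}\}}\Bigr)\,U(s)\,ds,
\end{align*}
where the Poisson bracket arises from $\partial_s f_{t-s} = -\{H,f_{t-s}\}$ via the Hamiltonian flow identity. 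The task reduces to showing that the integrand is $O(\hbar)$ in operator norm, uniformly on compact time intervals.

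The second ingredient is the semiclassical symbol calculus for the Moyal product $\Weyl_\hbar$. The principal-symbol statement $\tfrac{i}{\hbar}[\hat H,\hat g] = \widehat{\{H,g\}} + \hbar \widehat{r_\hbar[g]}$ follows directly from the asymptotic expansion of $\Weyl_\hbar$: the zeroth-order term $H\cdot g$ cancels in the commutator, and after dividing by $\hbar$ the antisymmetrized first-order term is exactly $\{H,g\}$, leaving a remainder symbol $r_\hbar[g]$. Feeding $g = f_{t-s}$ into this expansion and then quantizing via a Calder\'on--Vaillancourt-type bound turns the integrand into an operator of norm $O(\hbar)$, and integrating over $s\in[0,t]$ yields the claimed estimate with constant proportional to $|t|$.

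The main obstacle, and where the technical work actually sits, is justifying the remainder estimate $\snorm{\widehat{r_\hbar[f_{t-s}]}} = O(1)$ \emph{uniformly} in $s$, because $H$ is an unbounded symbol of order $2$ while $f$ lies in $\BCont^\infty$; one has to handle the Moyal product of symbol classes of different orders and show that the remainder of the $\hbar$-expansion still lands in a class whose Weyl quantization is bounded on $L^2$. This forces two auxiliary facts: (i) self-adjointness of $\hat H$ on a suitable domain, which is guaranteed by the growth assumption $\abs{\partial^a_x V}\le C_a\sqrt{1+x^2}^{[2-\abs{a}]_+}$ (Kato/Faris--Lavine), so that $U(t)$ is a genuine unitary group; and (ii) the stability statement $f\circ\Phi_t \in \BCont^\infty(\Pspace)$, uniformly for $t$ in compact intervals, with all seminorms controlled. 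Point (ii) is where the potential's quadratic growth is essential: it gives completeness of the Hamiltonian flow $\Phi_t$ and polynomial-in-$t$ bounds on the derivatives $\partial^\alpha\Phi_t$ via Gr\"onwall applied to the variational equation, which then feed through the chain rule to bound every $\BCont^\infty$-seminorm of $f_{t-s}$.

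Once these two technical inputs are in place, the argument is a one-line Duhamel computation followed by a Calder\'on--Vaillancourt estimate under the integral sign. I would therefore structure the write-up by first stating the symbol-class framework and the commutator expansion as a lemma, then verifying invariance of the symbol class under the classical flow, and only at the end performing the interpolation above; that way the Egorov bound becomes a short corollary of the semiclassical calculus rather than an isolated estimate.
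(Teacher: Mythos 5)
Your Duhamel interpolation combined with the asymptotic expansion of the Moyal/Weyl product is exactly the strategy of the paper's proof (Theorem~\ref{semiclassics:thm:semiclassical_limit_observables}): write the difference as $\int_0^t U(-s)\bigl(\tfrac{i}{\eps}[\hat H,\widehat{f_{t-s}}]-\widehat{\{H,f_{t-s}\}}\bigr)U(s)\,\dd s$ and control the integrand by the commutator expansion.

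There are two points worth flagging. First, the paper deliberately retreats to $h,f\in\Schwartz(\pspace)$, acknowledging that the theorem in the generality you are attempting (quadratically growing $H$, $f\in\BCont^\infty$) ``is out of reach with the tools we have at our disposal.'' In that restricted setting $\Op(h)$ is a bounded self-adjoint operator, so the self-adjointness/Kato argument and the Calder\'on--Vaillancourt bound you invoke become unnecessary; the price is that the result is then only a model theorem, not the statement as given. You are correct that for the actual statement one needs symbol classes closed under the flow with uniformly controlled seminorms (your Gr\"onwall/variational-equation argument is the right ingredient), but note these lemmas are cited, not proved, in your sketch — which is appropriate given the paper doesn't prove them either. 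Second, you can and should get more than $\order(\hbar)$ from your own computation: because $\bigl(\sigma(Y,Z)\bigr)^{2}$ is symmetric under $Y\leftrightarrow Z$, the second-order term in the Moyal expansion of $[H,g]_\Weyl$ vanishes identically, so $[H,g]_\Weyl = -i\hbar\{H,g\}+\order(\hbar^3)$ and hence $\tfrac{i}{\hbar}[\hat H,\hat g]-\widehat{\{H,g\}}=\order(\hbar^2)$. The paper emphasizes this cancellation and correspondingly proves an $\order(\eps^2)|t|$ bound; your write-up should do the same, since it is free and gives a strictly sharper estimate than the theorem claims.
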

The second goal of this lecture is to understand the Egorov theorem. Once all the symbols have been explained, the result is rather intuitive: it says that for ``good'' observables and ``good'' hamiltonians, the Heisenberg observable $F_{\mathrm{qm}}(t) := e^{+ i \frac{t}{\hbar} \hat{H}} \hat{f} e^{- i \frac{t}{\hbar} \hat{H}}$ can be approximated for \emph{macroscopic times} by $F_{\mathrm{cl}} := \widehat{f \circ \Phi_t}$. One can show that this is the case for typical hamiltonians whose kinetic and potential energy are smooth and grow at most quadratically. This is by no means the Egorov theorem in its most general form; in particular, it \emph{does not depend on the form of the hamiltonian}. \marginpar{\small 2009.10.20}

\chapter{Physical Frameworks} 
\label{frameworks}

Understanding of quantization requires knowledge of classical \emph{and} quantum mechanics. A quantization procedure is not merely a method to ``consistently assign operators on $L^2(\R^d)$ to functions on phase space,'' but rather a collection of procedures. A nice overview of the two frameworks can be found in the first few sections of chapter~5 in \cite{Waldmann:deformationQuantization:2008} and we will give a condensed account here: physical theories consist roughly of three parts: 
\begin{enumerate}[(i)]
	\item \emph{A state space: }
	states describe the current configuration of the system and need to be encoded in a mathematical structure. 
	\item \emph{Observables: }
	they represent quantities physicists would like to measure. Related to this is the idea of spectrum as the set of possible outcomes of measurements as well as expectation values (if ones deals with distributions of states). 
	\item \emph{An evolution equation: }
	usually, one is interested in the time evolution of states as well as observables. As energy is the observable conjugate to time, energy functions generate time evolution. 
\end{enumerate}

\section{Hamiltonian framework of classical mechanics} 
\label{frameworks:classical_mechanics}

For simplicity, we only treat classical mechanics of a \emph{spinless point particle moving in $\R^d$}, for the more general theory, we refer to \cite{MarsdenRatiu:introMechanicsSymmetry1999,Arnold:classical_mechanics:1989}. In any standard lecture on classical mechanics, at least two points of views are covered: lagrangian and hamiltonian mechanics. 

We will only treat hamiltonian mechanics here: the dynamics is generated by the so-called hamilton function $H : \Pspace \longrightarrow \R$ which describes the energy of the system for a given configuration. Here, $\Pspace$ is also known as \emph{phase space}. Since only \emph{energy differences} are measurable, the hamiltonian $H' := H + E_0$, $E_0 \in \R$, generates the \emph{same dynamics} as $H$. This is obvious from the \emph{hamiltonian equations of motion}, 
\begin{align}
	\dot{x}(t) &= + \nabla_p H \bigl ( x(t),p(t) \bigr ) 
	\label{frameworks:classical_mechanics:eqn:hamiltons_eom} \\ 
	\dot{p}(t) &= - \nabla_x H \bigl ( x(t),p(t) \bigr ) 
	, 
	\notag
\end{align}
which can be rewritten in matrix notation as 
\begin{align}
	J 
	\left (
	\begin{matrix}
		\dot{x}(t) \\
		\dot{p}(t) \\
	\end{matrix}
	\right ) 
	:= 
	\left (
	\begin{matrix}
		0 & - \id_{\R^d} \\
		+ \id_{\R^d} & 0 \\
	\end{matrix}
	\right )
	\left (
	\begin{matrix}
		\dot{x}(t) \\
		\dot{p}(t) \\
	\end{matrix}
	\right ) = \left (
	\begin{matrix}
		\nabla_x \\
		\nabla_p \\
	\end{matrix}
	\right ) H \bigl ( x(t),p(t) \bigr )
	\label{frameworks:classical_mechanics:eqn:hamiltons_eom} 
	. 
\end{align}
The matrix $J$ appearing on the left-hand side is often called \emph{symplectic form} and leads to a geometric point of view of classical mechanics. For fixed initial condition $(x_0,p_0) \in \Pspace$ at time $t_0 = 0$, \ie initial position and momentum, the \emph{hamiltonian flow} 
\begin{align}
	\Phi : \R_t \times \Pspace \longrightarrow \Pspace 
\end{align}
maps $(x_0,p_0)$ onto the trajectory which solves the hamiltonian equations of motion, 
\begin{align*}
	\Phi_t(x_0,p_0) = \bigl ( x(t),p(t) \bigr )
	, 
	&&
	\bigl ( x(0),p(0) \bigr ) = (x_0,p_0) 
	. 
\end{align*}
If the flow exists for all $t \in \R_t$, it has the following nice properties: for all $t , t' \in \R_t$ and $(x_0,p_0) \in \Pspace$, we have 
\begin{enumerate}[(i)]
	\item $\Phi_t \bigl ( \Phi_{t'}(x_0,p_0) \bigr ) = \Phi_{t + t'}(x_0,p_0)$, 
	\item $\Phi_0(x_0,p_0) = (x_0,p_0)$, and 
	\item $\Phi_{t} \bigl ( \Phi_{-t}(x_0,p_0) \bigr ) = \Phi_{t - t}(x_0,p_0) = (x_0,p_0)$. 
\end{enumerate}
Mathematically, this means $\Phi$ is a \emph{group action} of $\R_t$ (with respect to time translations) on phase space $\Pspace$. This is a fancy way of saying: 
\begin{enumerate}[(i)]
	\item If we first evolve for time $t$ and then for time $t'$, this is the same as evolving for time $t + t'$. 
	\item If we do not evolve at all in time, nothing changes. 
	\item The system can be evolved forwards or backwards in time. 
\end{enumerate}
The existence of the flow is usually proven via the Theorem of Picard and Lindelöf. It holds in much broader generality and is no easier to prove if we specialize to $X = (x,p)$ and $\R^{2d} = \Pspace$. 
\begin{thm}[Picard-Lindelöf]\label{frameworks:classical_mechanics:thm:Picard_Lindeloef}
	Let $F$ be a continuous vector field, $F \in \Cont(U,\R^n)$, $U \subseteq \R^n$ open, which defines a system of differential equations, 
	\begin{align}
		\dot{X} &= F(X) 
		\label{frameworks:classical_mechanics:eqn:general_deq}
		. 
	\end{align}
	Assume for a certain initial condition $X_0 \in U$ there exists a ball $B_{\rho}(X_0) := \bigl \{ X \in \R^n \; \vert \; \sabs{X - X_0} < \rho \bigr \} \subseteq U$, $\rho > 0$, such that $F$ is Lipschitz on $B_{\rho}(X_0)$, \ie there exists $L > 0$ which satisfies 
	\begin{align*}
		\babs{F(X) - F(X')} \leq L \babs{X - X'} 
	\end{align*}
	for all $X , X' \in B_{\rho}(X_0)$. Then the initial value problem, equation~\eqref{frameworks:classical_mechanics:eqn:general_deq} with $X(0) = X_0$, has a unique solution $t \mapsto X(t)$ for times $\abs{t} \leq T := \min \bigl ( \nicefrac{\rho}{V_{\mathrm{max}}} , \nicefrac{1}{2L} \bigr )$ where the maximal velocity is defined as $V_{\mathrm{max}} := \sup_{X \in B_{\rho}(X_0)} \sabs{F(X)}$. 
\end{thm}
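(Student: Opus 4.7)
The plan is to convert the initial-value problem into a fixed-point problem for an integral operator and apply the Banach fixed point theorem on a suitable complete metric space. Integrating \eqref{frameworks:classical_mechanics:eqn:general_deq} from $0$ to $t$ with $X(0) = X_0$, any continuously differentiable solution must satisfy
\begin{align*}
	X(t) = X_0 + \int_0^t F \bigl ( X(s) \bigr ) \, ds ,
\end{align*}
and conversely any continuous solution of this integral equation is automatically $\Cont^1$ and solves the ODE. So it suffices to find a unique continuous fixed point of the map $(T X)(t) := X_0 + \int_0^t F(X(s)) \, ds$.

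First I would choose the function space. Set $I := [-T,T]$ with $T := \min(\rho / V_{\mathrm{max}}, 1/(2L))$ and consider
\begin{align*}
	M := \bigl \{ X \in \Cont(I,\R^n) \; \vert \; \sabs{X(t) - X_0} \leq \rho \text{ for all } t \in I \bigr \} ,
\end{align*}
equipped with the sup-norm $\snorm{X}_{\infty} := \sup_{t \in I} \sabs{X(t)}$. Because $M$ is a closed subset of the Banach space $\Cont(I,\R^n)$, it is itself complete.

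The two crucial estimates then fall out of the definition of $T$. For the self-mapping property, I would note that if $X \in M$, then $X(s) \in B_{\rho}(X_0) \subseteq U$ for all $s \in I$, so $F \circ X$ is defined and continuous, and
\begin{align*}
	\babs{(TX)(t) - X_0} \leq \abs{t} \, V_{\mathrm{max}} \leq T \, V_{\mathrm{max}} \leq \rho ,
\end{align*}
showing $TX \in M$. For the contraction property I would use Lipschitz continuity of $F$ on $B_{\rho}(X_0)$: for $X , Y \in M$,
\begin{align*}
	\babs{(TX)(t) - (TY)(t)} \leq \int_0^{\abs{t}} L \, \babs{X(s) - Y(s)} \, ds \leq L \, T \, \snorm{X - Y}_{\infty} \leq \tfrac{1}{2} \snorm{X - Y}_{\infty} .
\end{align*}
Taking the sup over $t \in I$ gives $\snorm{TX - TY}_{\infty} \leq \tfrac{1}{2} \snorm{X - Y}_{\infty}$, so $T$ is a contraction with constant $\tfrac{1}{2}$.

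Banach's fixed point theorem then delivers a unique $X \in M$ with $TX = X$, which is the desired solution on $I$. The main technical obstacle is not really any single estimate but rather the coordinated choice of $T$: one has to balance the self-mapping bound (requiring $T V_{\mathrm{max}} \leq \rho$, so that trajectories do not leave the Lipschitz ball where $F$ is controlled) against the contraction bound (requiring $L T < 1$). Once $T$ is set as in the statement, both hold simultaneously and the rest is mechanical. Uniqueness on $I$ is immediate from the fixed point theorem; uniqueness among solutions taking values outside $M$ can be recovered by a standard continuation-and-Grönwall argument, but this is not needed for the local statement given here.
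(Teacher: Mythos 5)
Your proof is correct and follows essentially the same route as the paper: both rewrite the initial-value problem as a fixed-point equation for the Picard integral operator on a sup-norm space of continuous paths staying inside the Lipschitz ball, and both verify the self-mapping property via $T \leq \rho/V_{\mathrm{max}}$ and the contraction constant $\tfrac{1}{2}$ via $T \leq 1/(2L)$ before invoking Banach's fixed point theorem. The only cosmetic difference is that you cite Banach directly while the paper spells out the Picard iterates $X_{n+1} = P(X_n)$, and you add a (correct, though unnecessary for the local statement) closing remark on continuation/Grönwall.
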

Multiplying the hamiltonian equations of motion~\eqref{frameworks:classical_mechanics:eqn:hamiltons_eom} by the inverse of the symplectic form, we get the vector field $F$ explicitly in terms of the gradients of $H$ with respect to $x$ and $p$. 
\begin{proof}
	We only sketch the proof here since it is part of the standard course on analysis (for~physicists) or differential equations (for mathematicians). It consists of three steps: 
	
	We can rewrite the initial value problem equation~\eqref{frameworks:classical_mechanics:eqn:general_deq} with $X(0) = X_0$ as 
	\begin{align*}
		X(t) = X_0 + \int_0^t \dd s \, F(X(s)) 
	\end{align*}
	This equation can be solved iteratively: we define $X_0(t) := X_0$ and the $n+1$th iteration 
	$X_{n+1}(t) := \bigl ( P (X_n) \bigr )(t)$ in terms of the so-called Picard map 
	\begin{align*}
		\bigl ( P (X_n) \bigr )(t) := X_0 + \int_0^t \dd s \, F(X_n(t)) 
		. 
	\end{align*}
	If $t \in [-T,+T]$ and $T > 0$ is chosen to be small enough, $P : \mathcal{X} \longrightarrow \mathcal{X}$ is a contraction on the space of trajectories which start at $X_0$, 
	\begin{align*}
		\mathcal{X} := \Bigl \{ Y \in \Cont \bigl ([-T,+T] , B_{\rho}(X_0) \bigr ) \; \big \vert \; Y(0) = X_0 \Bigr \} 
		. 
	\end{align*}
	$\mathcal{X}$ is a \emph{complete} metric space if we use 
	\begin{align*}
		\mathrm{d} (Y,Z) := \sup_{t \in [-T,+T]} \babs{Y(t) - Z(t)} 
		&& 
		Y,Z \in \mathcal{X} 
	\end{align*}
	to measure distances between trajectories. A map $P : \mathcal{X} \longrightarrow \mathcal{X}$ is a contraction if for all $Y,Z$ there exists $C < 1$ such that 
	\begin{align*}
		\mathrm{d} \bigl ( P(Y) , P(Z) \bigr ) \leq C \, \mathrm{d} (Y,Z) 
		. 
	\end{align*}
	%
	If the Picard iteration is a contraction on $\mathcal{X}$, since it is complete, the sequence $(X_n)$ converges to the \emph{unqiue} solution of \eqref{frameworks:classical_mechanics:eqn:general_deq} $X = P(X)$ by the Banach fix point theorem. 
	
	To show that $P$ is really a contraction, we need to treat both possible choices of $T$: the first, $T \leq \nicefrac{\rho}{V_{\mathrm{max}}}$ implies that the trajectory cannot leave the ball $B_{\rho}(X_0)$. For any $Y \in \mathcal{X}$, we have 
	\begin{align*}
		\babs{P(Y) - X_0} = \abs{\int_0^t \dd s \, F(Y(s))} \leq t \, V_{\mathrm{max}} \leq T \, V_{\mathrm{max}} < \rho 
		. 
	\end{align*}
	The second condition, $T \leq \nicefrac{1}{2 L}$, together with the Lipschitz property ensure that it is also a contraction with $C = \nicefrac{1}{2}$: for any $Y,Z \in \mathcal{X}$, we have 
	\begin{align*}
		\mathrm{d} \bigl ( P(Y) , P(Z) \bigr ) &= \sup_{t \in [-T,+T]} \abs{\int_0^t \dd s \, \bigl [ \bigl ( F(Y) \bigr )(t) - \bigl ( F(Z) \bigr )(t) \bigr ]} 
		\\
		&\leq T \, L \sup_{t \in [-T,+T]} \babs{Y(t) - Z(t)} 
		\leq \tfrac{1}{2L} L \, \mathrm{d} (Y,Z) = \tfrac{1}{2} \mathrm{d} (Y,Z)
	\end{align*}
	This concludes the proof. 
\end{proof}
\begin{cor}\label{frameworks:classical_mechanics:cor:existence_flow}
	If the vector field $F$ satisfies the Lipschitz condition \emph{globally}, \ie there exists $L > 0$ such that 
	\begin{align*}
		\babs{F(X) - F(X')} \leq L \babs{X - X'} 
	\end{align*}
	holds for all $X , X' \in \R^n$, then $t \mapsto \Phi_t(X_0)$ exists globally for all $t \in \R_t$ and $X_0 \in \R^n$. 
\end{cor}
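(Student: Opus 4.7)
The plan is to sharpen the local existence theorem in the globally Lipschitz setting so that the existence time is independent of the initial condition, and then to concatenate such local solutions into a global one.

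First, I would rerun the proof of Theorem~\ref{frameworks:classical_mechanics:thm:Picard_Lindeloef} and observe that the ball $B_{\rho}(X_0)$ was only introduced to have a region on which the Lipschitz estimate for $F$ is valid; the condition $T \leq \nicefrac{\rho}{V_{\mathrm{max}}}$ served only to keep the iterates inside that ball. Under a global Lipschitz assumption this constraint is redundant: one may set up the Picard iteration directly on the complete metric space
\begin{align*}
	\mathcal{X}' := \bigl \{ Y \in \Cont([-T,+T], \R^n) \; \big \vert \; Y(0) = X_0 \bigr \}
\end{align*}
endowed with the sup metric, and the very same computation as in the proof above shows that $P$ maps $\mathcal{X}'$ to itself and is a contraction with constant $\tfrac{1}{2}$ provided $T \leq T_0 := \nicefrac{1}{2L}$. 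The Banach fixed point theorem then produces a unique solution on $[-T_0,+T_0]$, and the crucial point is that $T_0$ depends \emph{only} on $L$, not on $X_0$.

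Second, I would iterate: starting from the solution on $[-T_0,+T_0]$, the endpoint $X(T_0) \in \R^n$ is a new admissible initial condition, and the sharpened local statement (with time shifted by $T_0$) extends the solution to $[-T_0, 2T_0]$. Uniqueness on the overlap $[0,T_0]$ guarantees that the pieces glue consistently. Repeating the construction forward and backward produces a solution $X : \R \to \R^n$ for every $X_0$. Defining $\Phi_t(X_0) := X(t)$ gives the desired global flow, and properties~(i)--(iii) stated before the corollary follow from uniqueness applied to the initial value problems satisfied by $s \mapsto \Phi_{t+s}(X_0)$ and $s \mapsto \Phi_s(\Phi_t(X_0))$.

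The only real obstacle is justifying the first step, \ie that one is allowed to drop the $\nicefrac{\rho}{V_{\mathrm{max}}}$ factor and work on the unrestricted space $\mathcal{X}'$. This is precisely where the global hypothesis enters: without it, iterates could leave any fixed ball and the Lipschitz estimate would cease to be available; with it, the Lipschitz inequality is valid wherever the iterates go, and the remainder of the argument is essentially bookkeeping.
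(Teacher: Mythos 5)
Your proof is correct and lands at the same uniform existence time $T_0 = \nicefrac{1}{2L}$ followed by the same patching argument, but you reach it by a slightly different route than the paper. The paper stays inside the framework of Theorem~\ref{frameworks:classical_mechanics:thm:Picard_Lindeloef}: it observes that the global Lipschitz bound gives $\babs{F(X_0)} \leq \babs{F(0)} + L\babs{X_0}$, hence $V_{\mathrm{max}} \leq \babs{F(0)} + L(\babs{X_0} + \rho)$, and then picks $\rho$ large enough (namely $\rho > \babs{X_0} + \nicefrac{\babs{F(0)}}{L}$) so that the constraint $T \leq \nicefrac{\rho}{V_{\mathrm{max}}}$ is automatically dominated by $T \leq \nicefrac{1}{2L}$; the local theorem can then be quoted verbatim with the uniform time step. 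You instead rerun the Picard contraction argument on the unrestricted space $\mathcal{X}'$, discarding the ball altogether. Both are correct: the paper's version is more economical because it reuses the stated local theorem as a black box, whereas yours is arguably cleaner conceptually since it removes the now-redundant ball bookkeeping and makes it transparent why the $\nicefrac{\rho}{V_{\mathrm{max}}}$ condition plays no role under a global Lipschitz hypothesis. One minor point worth stating explicitly in your write-up is that $\mathcal{X}'$ is indeed complete (uniform limits of continuous $\R^n$-valued functions are continuous and the constraint $Y(0) = X_0$ is closed), and that $P$ maps $\mathcal{X}'$ into itself because $F$ is continuous (a fact guaranteed by the Lipschitz bound even without the continuity hypothesis of the theorem).
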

%
%
\begin{proof}
	For every $X_0 \in \R^n$, we can solve the initial value problem at least for $\abs{t} \leq \nicefrac{1}{2 L}$. Since 
	\begin{align*}
		\babs{F(X_0)} \leq \babs{F(0)} + L \babs{X_0} 
	\end{align*}
	implies that if we choose our ball large enough, \ie for any $\rho > \abs{X_0} + \nicefrac{\abs{F(0)}}{L}$, the condition $\abs{t} \leq \nicefrac{\rho}{V_{\mathrm{max}}}$ is automatically satisfied, 
	\begin{align*}
		\frac{\rho}{V_{\mathrm{max}}} \geq \frac{\rho}{\sabs{F (0)} + L(\abs{X_0} + \rho)} \geq \frac{1}{2 L} 
		. 
	\end{align*}
	Hence, we can patch \emph{local} solutions together using the group property $\Phi_{t} \circ \Phi_{t'} = \Phi_{t + t'}$ of the flow to obtain \emph{global} solutions. 
\end{proof}
Another important fact is that the flow $\Phi$ inherits the smoothness of the vector field which generates it. 
\marginpar{\small 2009.10.21}%
\begin{thm}\label{frameworks:classical_mechanics:thm:smoothness_flow}
	Assume the vector field $F$ is $k$ times continuously differentiable, $F \in \Cont^k ( U , \R^n )$, $U \subseteq \R^n$. Then the flow $\Phi$ associated to \eqref{frameworks:classical_mechanics:eqn:general_deq} is also $k$ times continuously differentiable, \ie $\Phi \in \Cont^k \bigl ( [-T,+T] \times V , U \bigr )$ where $V \subset U$ is suitable. 
\end{thm}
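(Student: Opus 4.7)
The plan is to proceed by induction on $k$, making repeated use of variational equations and Grönwall's inequality. The base case $k=0$ amounts to continuous dependence of the flow on the initial condition: one reinterprets the Picard map from the proof above as $P_{X_0}(Y)(t) := X_0 + \int_0^t \dd s \, F(Y(s))$, observes that $X_0 \mapsto P_{X_0}$ is Lipschitz uniformly in $Y$, and invokes the parameter-dependent Banach fixed point theorem. This yields a neighbourhood $V \ni X_0$ on which $X_0 \mapsto \Phi_{\cdot}(X_0)$ is continuous into $\Cont \bigl ( [-T,+T], U \bigr )$ with the sup norm, and the standard estimate $\babs{\Phi_{t_n}(X_n) - \Phi_t(X_0)} \leq \babs{\Phi_{t_n}(X_n) - \Phi_{t_n}(X_0)} + \babs{\Phi_{t_n}(X_0) - \Phi_t(X_0)}$ promotes this to joint continuity in $(t,X_0)$. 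The ODE $\partial_t \Phi_t(X_0) = F \bigl ( \Phi_t(X_0) \bigr )$ then automatically lifts continuity in $t$ to $\Cont^1$ in $t$.

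For $k=1$, I would first guess the $X_0$-derivative by formally differentiating the ODE: the candidate $M(t,X_0) := D_{X_0} \Phi_t(X_0)$ ought to satisfy the linear \emph{variational equation}

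\begin{align*}
	\dot M(t) = DF \bigl ( \Phi_t(X_0) \bigr ) \, M(t)
	,
	&& M(0) = \id_{\R^n}
	.
\end{align*}

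Since $t \mapsto DF \bigl ( \Phi_t(X_0) \bigr )$ is continuous and the equation is linear, Picard--Lindelöf supplies a unique solution $M(\cdot,X_0)$ on all of $[-T,+T]$. The substantive step is to verify that this candidate really is the Fréchet derivative, i.e.\ that the remainder

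\begin{align*}
	R(t,h) := \Phi_t(X_0 + h) - \Phi_t(X_0) - M(t,X_0) \, h
\end{align*}

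satisfies $\babs{R(t,h)} = o(\abs{h})$ uniformly in $t \in [-T,+T]$. Writing $R(t,h)$ as an integral using the defining ODEs for $\Phi$ and $M$, inserting the first-order Taylor expansion $F(Y) - F(X) - DF(X)(Y-X) = o(\abs{Y-X})$, and applying Grönwall's inequality delivers the estimate. Continuity of $M$ in $(t,X_0)$ follows from the $k=0$ step applied to the coupled system for $(\Phi, M)$.

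The inductive step is structurally identical. Assuming $\Phi \in \Cont^k$ and $F \in \Cont^{k+1}$, one differentiates the variational equation further in $X_0$. By the Faà~di~Bruno formula, each multi-index derivative $D^{\alpha}_{X_0} \Phi_t$ satisfies a linear ODE whose inhomogeneity is a universal polynomial in the lower-order $X_0$-derivatives of $\Phi_t$, with coefficients of the form $D^{\beta} F \bigl ( \Phi_t(X_0) \bigr )$ for $\abs{\beta} \leq \abs{\alpha}+1$. By the inductive hypothesis and $F \in \Cont^{k+1}$ these coefficients are continuous, so Picard--Lindelöf again gives a unique solution, and the same Grönwall remainder estimate as in the $k=1$ case identifies it with $D^{\alpha}_{X_0} \Phi_t$. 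Joint $\Cont^k$-regularity in $(t,X_0)$ is then finished by repeatedly using $\partial_t \Phi_t = F \circ \Phi_t$ to convert $t$-derivatives into compositions of $F$ with already-smooth quantities.

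The main obstacle is the identification step: guessing the variational equation and solving it is cheap, but showing that its solution \emph{is} the derivative of $\Phi_t$ requires the Grönwall control of $R(t,h)$, and one must bookkeep carefully how Faà~di~Bruno forces exactly one extra derivative of $F$ at each induction level, matching the assumed regularity class. The only other technicality is gluing separate continuity in $t$ and $X_0$ into joint continuity, which is a minor but necessary check.
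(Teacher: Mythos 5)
The paper does not actually prove this theorem: the ``proof'' in the text consists entirely of a citation to Arnold's ODE book (Chapter~3, Section~7.3 of \cite{Arnold:ode:1992}). Your sketch fills in what that reference contains, and it is the standard argument: induct on $k$, at each level guess the candidate derivative from the variational equation, solve that linear ODE by Picard--Lindelöf, identify the solution with the actual Fréchet derivative via a Grönwall estimate on the remainder, and trade $t$-regularity for $X_0$-regularity through the defining ODE $\partial_t\Phi_t = F\circ\Phi_t$. This is correct in outline and is the proof one would write out if one did not want to defer to a textbook.

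One small bookkeeping point: for the base case $k=0$ you invoke the parameter-dependent Banach fixed point theorem, but this needs the Picard map to be a contraction, which in turn needs $F$ to be locally Lipschitz. Mere continuity of $F$ gives existence (Peano) but not uniqueness, so the flow need not even be well defined. In the setting of this chapter the flow exists because the preceding Picard--Lindelöf theorem already assumes a Lipschitz condition on $F$, so your $k=0$ step is really ``Lipschitz $F$ implies jointly continuous $\Phi$.'' Alternatively, one can start the induction at $k=1$, where $F\in\Cont^1$ is automatically locally Lipschitz. Either way the argument closes; it is just worth being explicit that the base of the induction leans on more than $\Cont^0$.
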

\begin{proof}
	We refer to Chapter 3, Section 7.3 in \cite{Arnold:ode:1992}. 
\end{proof}
The above results immediately apply to the hamiltonian equations of motion: 
\begin{cor}
	Let $H = \tfrac{1}{2m} p^2 + V(x)$ be the hamiltonian which generates the dynamics according to equation~\eqref{frameworks:classical_mechanics:eqn:hamiltons_eom} such that $\nabla_x V$ satisfies a global Lipschitz condition 
	\begin{align*}
		\babs{\nabla_x V(x) - \nabla_x V(x')} \leq L \babs{x - x'} 
		&& \forall x , x' \in \R^d_x 
		. 
	\end{align*}
	Then the hamiltonian flow $\Phi$ exists for all $t \in \R_t$. 
\end{cor}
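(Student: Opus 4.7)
The plan is to reduce the corollary directly to Corollary~\ref{frameworks:classical_mechanics:cor:existence_flow} by checking that the vector field on phase space coming from $H$ is globally Lipschitz on $\R^{2d} = \Pspace$.

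First, I would rewrite Hamilton's equations~\eqref{frameworks:classical_mechanics:eqn:hamiltons_eom} in the form $\dot{X} = F(X)$ required by Theorem~\ref{frameworks:classical_mechanics:thm:Picard_Lindeloef}. Setting $X = (x,p)$, the explicit inversion of the symplectic matrix $J$ and the special form of $H$ give
\begin{align*}
    F(X) = F(x,p) = \left ( \tfrac{1}{m} p \, , \, - \nabla_x V(x) \right ) .
\end{align*}
Both components are continuous on all of $\R^{2d}$, so $F \in \Cont(\R^{2d},\R^{2d})$ and the setting of the corollary is met.

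The main step is then the Lipschitz estimate for $F$ on $\R^{2d}$. For $X = (x,p)$ and $X' = (x',p')$, the kinetic part contributes $\tfrac{1}{m} \abs{p - p'}$ and, by the assumed global Lipschitz bound on $\nabla_x V$, the potential part contributes at most $L \abs{x - x'}$. Combining both components gives
\begin{align*}
    \babs{F(X) - F(X')}^2 \leq \tfrac{1}{m^2} \abs{p-p'}^2 + L^2 \abs{x-x'}^2 \leq \max \bigl ( \tfrac{1}{m^2} , L^2 \bigr ) \, \abs{X - X'}^2 ,
\end{align*}
so $F$ is globally Lipschitz on $\R^{2d}$ with constant $L' := \max(\tfrac{1}{m}, L)$.

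With the global Lipschitz property established, Corollary~\ref{frameworks:classical_mechanics:cor:existence_flow} applies verbatim and yields existence of $\Phi_t(X_0)$ for every $t \in \R_t$ and every initial condition $X_0 = (x_0,p_0) \in \Pspace$, which is the claim. I do not anticipate a real obstacle: the only mildly non-trivial point is properly packaging the two components of $F$ into a single Lipschitz estimate on $\R^{2d}$, which is handled by the elementary inequality above.
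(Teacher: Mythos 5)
Your proof is correct and is exactly the argument the paper leaves implicit: the text states the corollary as an "immediate" application of Corollary~\ref{frameworks:classical_mechanics:cor:existence_flow} without writing out a proof, and filling that in amounts precisely to identifying the phase-space vector field $F(x,p) = (p/m, -\nabla_x V(x))$ and verifying the global Lipschitz bound, which you do cleanly via the componentwise estimate.
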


\paragraph{Classical states} 
\label{frameworks:classical_mechanics:classical_states}

Pure states in classical mechanics are simply points in phase space: a point particle's state at time $t$ is characterized by its position $x(t)$ and momentum $p(t)$. More generally, one can consider \emph{distributions of initial conditions} which are relevant in statistical mechanics, for instance. 
\begin{defn}[Classical states]
	A classical state is a probability measure $\mu$ on phase space, that is a positive Borel measure\footnote{Unfortunately we do not have time to define Borel sets and Borel measures in this context. We refer the interested reader to chapter~1 of \cite{LiebLoss:Analysis:2001}. Essentially, a Borel measure assigns a ``volume'' to ``nice'' sets, \ie Borel sets. } which is normed to $1$, 
	\begin{align*}
		\mu(U) &\geq 0 && \mbox{for all Borel sets $U \subseteq \Pspace$} \\
		\mu(\Pspace) &= 1 
		. 
	\end{align*}
	Pure states are \emph{point measures}, \ie if $(x_0,p_0) \in \Pspace$, then the associated pure state is given by $\mu_{(x_0,p_0)}(\cdot) := \delta_{(x_0,p_0)}(\cdot) = \delta(\cdot - (x_0,p_0))$.\footnote{Here, $\delta$ is the Dirac distribution which we will consider in detail in Chapter~\ref{S_and_Sprime}. } 
\end{defn}
%

\paragraph{Observables} 
\label{frameworks:classical_mechanics:observables}

Observables $f$ such as position, momentum, angular momentum and energy are smooth functions on phase space with values in $\R$, $f \in \Cont^{\infty}(\Pspace,\R)$. If we compose observables with the flow, we can work with \emph{time-evolved observables} 
\begin{align*}
	f(t) := f \circ \Phi_t : \Pspace \longrightarrow \R 
\end{align*}
where $\Phi$ is the flow generated by some suitable hamiltonian $H$. If $\bigl ( x(t) , p(t) \bigr )$ is the trajectory associated to the initial conditions $(x_0,p_0)$, then $\bigl ( f(t) \bigr ) (x_0,p_0) = f \bigl ( \Phi_t(x_0,p_0) \bigr ) = f \bigl ( x(t) , p(t) \bigr )$ gives the value of the observable at time $t$. This is analogous to the \emph{Heisenberg picture} in quantum mechanics: observables are evolved in time and states stay constant. The possible outcomes of measurements is given by the 
\begin{defn}[Spectrum of an observable]
	The spectrum of a classical observables, \ie the set of possible outcomes of measurements, is given by 
	\begin{align*}
		\spec f := f(\Pspace) = \image f 
		. 
	\end{align*}
\end{defn}
If we are given a classical state $\mu$, then the \emph{expected value $\mathbb{E}_{\mu}$} of an observable $f$ for the distribution of initial conditions $\mu$ is given by 
\begin{align*}
	\mathbb{E}_{\mu} \bigl ( f(t) \bigr ) :=& \int_{\Pspace} \dd \mu(x,p) \, f \bigl ( \Phi_t(x,p) \bigr ) 
	= \int_{\Pspace} \dd \mu \bigl ( \Phi_{-t}(x,p) \bigr ) \, f(x,p) \\
	=& \mathbb{E}_{\mu(t)}(f) 
	. 
\end{align*}
The right-hand side corresponds to the \emph{Schrödinger picture} where states are evolved in time and not observables: the time-evolved state $\mu(t)$ associated to $\mu$ is defined by $\mu(t) := \mu \circ \Phi_{-t}$. 


\paragraph{Time evolution} 
\label{frameworks:classical_mechanics:time_evolution}

Now to the dynamics: we have already defined the time-evolved observable $f(t)$. Which equation generates its time evolution? 
\begin{prop}\label{frameworks:classical_mechanics:prop:equations_of_motion_Poisson_bracket}
	Let $f \in \Cont^{\infty}(\Pspace,\R)$ be an observable and $\Phi$ the hamiltonian flow which solves the equations of motion~\eqref{frameworks:classical_mechanics:eqn:hamiltons_eom} associated to a hamiltonian $H \in \Cont^{\infty}(\Pspace,\R)$ which we assume to exist globally in time for all $(x_0,p_0) \in \Pspace$. Then 
	\begin{align}
		\frac{\dd }{\dd t} f(t) = \bigl \{ H , f(t) \bigr \} 
		\label{frameworks:classical_mechanics:eqn:eom_observables}
	\end{align}
	%
	holds where $\bigl \{ f , g \bigr \} := \sum_{l = 1}^d \bigl ( \partial_{p_l} f \, \partial_{x_l} g - \partial_{x_l} f \, \partial_{p_l} g \bigr )$ is the so-called \emph{Poisson bracket}. 
\end{prop}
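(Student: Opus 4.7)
The plan is to reduce the identity to a chain-rule computation, exploiting the group property of the Hamiltonian flow so that we never actually have to differentiate $f \circ \Phi_t$ in the initial coordinates directly. By Theorem \ref{frameworks:classical_mechanics:thm:smoothness_flow}, $\Phi$ is smooth (since $H \in \Cont^\infty$ implies the vector field $J^{-1} \nabla H$ is smooth), so all manipulations below are justified.

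First I would prove the following infinitesimal version: for any $g \in \Cont^\infty(\Pspace, \R)$ and any $(x,p) \in \Pspace$,
\begin{align*}
    \frac{\dd}{\dd s} \Big\vert_{s=0} g\bigl(\Phi_s(x,p)\bigr) = \bigl\{H, g\bigr\}(x,p).
\end{align*}
This is a direct chain-rule calculation: with $\bigl(x(s), p(s)\bigr) = \Phi_s(x,p)$, the left-hand side equals $\sum_l \bigl(\partial_{x_l} g \, \dot x_l(0) + \partial_{p_l} g \, \dot p_l(0)\bigr)$ evaluated at $(x,p)$. Substituting Hamilton's equations~\eqref{frameworks:classical_mechanics:eqn:hamiltons_eom} turns $\dot x_l(0)$ into $\partial_{p_l} H(x,p)$ and $\dot p_l(0)$ into $-\partial_{x_l} H(x,p)$, and the result is precisely $\{H, g\}(x,p)$ as defined in the proposition.

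Second, I would apply this identity with $g := f(t) = f \circ \Phi_t$ (at a fixed $t$). The group property $\Phi_t \circ \Phi_s = \Phi_{t+s}$ gives
\begin{align*}
    \frac{\dd}{\dd s} \Big\vert_{s=0} f(t)\bigl(\Phi_s(x_0, p_0)\bigr) = \frac{\dd}{\dd s} \Big\vert_{s=0} f\bigl(\Phi_{t+s}(x_0,p_0)\bigr) = \frac{\dd}{\dd t} f\bigl(\Phi_t(x_0,p_0)\bigr),
\end{align*}
while by the infinitesimal identity this same quantity equals $\{H, f(t)\}(x_0,p_0)$. Comparing both expressions yields equation~\eqref{frameworks:classical_mechanics:eqn:eom_observables}.

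There is no real obstacle: the only subtle point is noticing that one should \emph{not} try to brute-force differentiate $f \circ \Phi_t$ in the variables $(x_0,p_0)$ (which would require controlling the Jacobian of $\Phi_t$), but instead rewrite the time derivative as a derivative at $s=0$ using the group law. After that observation, smoothness of $\Phi$ together with Hamilton's equations does all the work.
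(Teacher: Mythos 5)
Your proof is correct and genuinely differs from the paper's argument in a way worth noting. The paper differentiates $f(\Phi_t(x_0,p_0))$ directly at arbitrary $t$ via the chain rule, which after substituting Hamilton's equations produces $\{H,f\}$ evaluated \emph{along the trajectory}, i.e.\ $(\{H,f\} \circ \Phi_t)(x_0,p_0)$. To land on the form $\{H,f(t)\}$ (Poisson bracket taken in the initial variables $(x_0,p_0)$), the paper then appeals to conservation of energy to replace $H(t)$ by $H$ in the bracket; strictly speaking this step tacitly uses that the flow preserves the Poisson bracket, since $(\{H,f\}\circ\Phi_t)$ and $\{H\circ\Phi_t, f\circ\Phi_t\}$ are a priori different objects. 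You sidestep this entirely: by the group law $\Phi_{t+s}=\Phi_t\circ\Phi_s$ you rewrite $\tfrac{\dd}{\dd t}f(\Phi_t(x_0,p_0))$ as $\tfrac{\dd}{\dd s}\big\vert_{s=0}(f(t)\circ\Phi_s)(x_0,p_0)$, and then a single chain-rule evaluation at $s=0$ gives $\{H,f(t)\}(x_0,p_0)$ on the nose, with $H$ and all its derivatives evaluated at the base point. Your route is slightly longer to set up but avoids both the Jacobian of $\Phi_t$ and the hidden symplectomorphism fact; the paper's route is shorter on the page but leans on the reader accepting the $H(t)\leadsto H$ replacement without fully unpacking it.
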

\begin{proof}
	Theorem~\ref{frameworks:classical_mechanics:thm:smoothness_flow} implies the smoothness of the flow from the smoothness of the hamiltonian. This means $f(t) \in \Cont^{\infty}(\Pspace,\R)$ is again a classical observable. By assumption, all initial conditions lead to trajectories that exist globally in time.\footnote{A slightly more sophisticated argument shows that the Proposition holds if the hamiltonian flow exists only locally in time. } For $(x_0,p_0)$, we compute the time derivative of $f(t)$ to be 
	\begin{align*}
		\left ( \frac{\dd }{\dd t} f(t) \right )(x_0,p_0) &= \frac{\dd }{\dd t} f \bigl ( x(t) , p(t) \bigr ) 
		\\
		&= \sum_{l = 1}^d \Bigl ( \partial_{x_l} f \circ \Phi_t(x_0,p_0) \, \dot{x}_l(t) + \partial_{p_l} f \circ \Phi_t(x_0,p_0) \, \dot{p}_l(t) \Bigr ) 
		\\
		&\overset{\ast}{=} \sum_{l = 1}^d \Bigl ( \partial_{x_l} f \circ \Phi_t(x_0,p_0) \, \partial_{p_l} H \circ \Phi_t(x_0,p_0) 
		+ \\
		&\qquad \qquad 
		+ \partial_{p_l} f \circ \Phi_t(x_0,p_0) \, \bigl ( - \partial_{x_l} H \circ \Phi_t(x_0,p_0) \bigr ) \Bigr ) 
		\\
		&= \bigl \{ H(t) , f(t) \bigr \} 
		. 
	\end{align*}
	In the step marked with $\ast$, we have inserted the hamiltonian equations of motion. Compared to equation~\eqref{frameworks:classical_mechanics:eqn:eom_observables}, we have $H$ instead of $H(t)$ as argument in the Poisson bracket. However, by setting $f(t) = H(t)$ in the above equation, we see that energy is a \emph{conserved quantity}, 
	\begin{align*}
		\frac{\dd }{\dd t} H(t) = \bigl \{ H(t) , H(t) \bigr \} = 0 
		. 
	\end{align*}
	Hence, we can replace $H(t)$ by $H$ in the Poisson bracket with $f$ and obtain equation~\eqref{frameworks:classical_mechanics:eqn:eom_observables}. 
\end{proof}
The proof immediately leads to the notion of conserved quantity: 
\begin{defn}[Conserved quantity/constant of motion]\label{frameworks:classical:defn:conserved_quantity}
	An observable $f \in \Cont^{\infty}(\Pspace,\R)$ which is invariant under the flow $\Phi$ generated by the hamiltonian $H \in \Cont^{\infty}(\Pspace,\R)$, \ie 
	\begin{align*}
		f(t) = f(0) 
		, 
	\end{align*}
	or equivalently satisfies 
	\begin{align*}
		\frac{\dd }{\dd t} f(t) = \bigl \{ H , f(t) \bigr \} = 0 
		, 
	\end{align*}
	is called \emph{conserved quantity} or \emph{constant of motion}. 
\end{defn}
As is very often in physics and mathematics, we have completed the circle: starting from the hamiltonian equations of motion, we have proven that the time evolution of observables is given by the Poisson bracket. Alternatively, we could have \emph{started} by postulating 
\begin{align*}
	\frac{\dd}{\dd t} f(t) = \bigl \{ H , f(t) \bigr \} 
\end{align*}
for observables and we would have \emph{arrived} at the hamiltonian equations of motion by plugging in $x$ and $p$ as observables. Another important fact is Liouville's Theorem which states that the hamiltonian flow \emph{preserves phase space volume}: 
%
\begin{thm}[Liouville]\label{frameworks:classical:thm:Liouville}
	The hamiltonian vector field is divergence free, \ie the hamiltonian flow preserves volume in phase space of bounded subsets $V$ of $\Pspace$ with smooth boundary $\partial V$. In particular, the functional determinant of the flow is constant and equal to 
	\begin{align*}
		\mathrm{det} \, \bigl ( D \Phi_t(x,p) \bigr ) = 1 
	\end{align*}
	for all $t \in \R_t$ and $(x,p) \in \Pspace$. 
\end{thm}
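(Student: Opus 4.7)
The plan is to establish the three claims of the theorem in the order they appear: divergence-freeness of $X_H$, constancy of the Jacobian determinant of $\Phi_t$, and volume preservation. Each step is a short consequence of the previous one together with a standard result from multivariable calculus or ODE theory. The main nontrivial input is Jacobi's formula for the derivative of a matrix determinant.

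First, I would write the Hamiltonian vector field associated to \eqref{frameworks:classical_mechanics:eqn:hamiltons_eom} as $X_H := (\nabla_p H , -\nabla_x H) \in \Cont^{\infty}(\Pspace,\R^{2d})$. Because $H \in \Cont^{\infty}(\Pspace,\R)$, Schwarz's theorem on the equality of mixed partials immediately yields
\begin{align*}
\mathrm{div} \, X_H = \sum_{l = 1}^d \bigl ( \partial_{x_l} \partial_{p_l} H - \partial_{p_l} \partial_{x_l} H \bigr ) = 0 ,
\end{align*}
which is the divergence-free property. Next, I would derive the variational equation for the Jacobian $A(t) := D\Phi_t(x,p)$: differentiating the defining ODE $\tfrac{\dd}{\dd t} \Phi_t = X_H \circ \Phi_t$ with respect to the initial condition and interchanging derivatives (permitted by Theorem~\ref{frameworks:classical_mechanics:thm:smoothness_flow}) gives $\dot{A}(t) = DX_H \bigl ( \Phi_t(x,p) \bigr ) \, A(t)$. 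The matrix $A(t)$ is invertible because the backward flow $\Phi_{-t}$ provides the inverse via the chain rule. Combining this with Jacobi's formula $\tfrac{\dd}{\dd t} \det A(t) = \det A(t) \cdot \mathrm{tr} \bigl ( A(t)^{-1} \dot{A}(t) \bigr )$ and using $\mathrm{tr} \bigl ( A^{-1} DX_H \, A \bigr ) = \mathrm{tr}\, DX_H = \mathrm{div}\, X_H$, one obtains
\begin{align*}
\frac{\dd}{\dd t} \det \bigl ( D\Phi_t(x,p) \bigr ) = \bigl ( \mathrm{div} \, X_H \bigr ) \bigl ( \Phi_t(x,p) \bigr ) \cdot \det \bigl ( D\Phi_t(x,p) \bigr ) = 0 .
\end{align*}
Since $\Phi_0 = \mathrm{id}_{\Pspace}$ forces $\det(D\Phi_0) = 1$, we conclude $\det(D\Phi_t(x,p)) = 1$ for all $t \in \R_t$ and all $(x,p) \in \Pspace$.

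Finally, volume preservation follows from the change-of-variables formula: for any bounded Borel set $V \subseteq \Pspace$ with smooth boundary,
\begin{align*}
\mathrm{vol} \bigl ( \Phi_t(V) \bigr ) = \int_{\Phi_t(V)} \dd x \, \dd p = \int_V \babs{\det \bigl ( D\Phi_t(x,p) \bigr )} \, \dd x \, \dd p = \mathrm{vol}(V) .
\end{align*}
I expect the only genuinely nontrivial ingredient to be Jacobi's formula $\tfrac{\dd}{\dd t} \det A = \det A \cdot \mathrm{tr}(A^{-1} \dot A)$ together with the observation that similarity transformations do not affect the trace; once this linear-algebraic identity is in hand, everything else is standard smoothness bookkeeping already guaranteed by Theorem~\ref{frameworks:classical_mechanics:thm:smoothness_flow}.
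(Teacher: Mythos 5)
Your proof is correct and uses the same essential ingredients as the paper's: divergence-freeness of $X_H$, the variational equation for $D\Phi_t$ together with Jacobi's formula (cf.\ Remark~\ref{frameworks:classical:remark:Liouville}), and the change-of-variables formula. The only difference is a reordering: you establish $\det(D\Phi_t) = 1$ pointwise first and then deduce volume preservation as an immediate corollary, whereas the paper differentiates $\mathrm{Vol}(\Phi_t(V))$ directly (requiring a justification for interchanging $\tfrac{\dd}{\dd t}$ with $\int_V$) and extracts constancy of the determinant afterwards; your ordering sidesteps that interchange.
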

\begin{figure}
	\hfil\includegraphics[height=4cm]{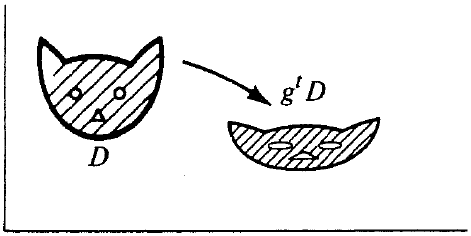}\hfil
	\caption{Phase space volume is preserved under the hamiltonian flow. }
\end{figure}
\begin{remark}\label{frameworks:classical:remark:Liouville}
	We will need a fact from the theory of dynamical systems: if $\Phi_t$ is the flow associated to a differential equation $\dot{X} = F(X)$ with $F \in \Cont^1(\R^d,R^d)$, then 
	\begin{align*}
		\frac{\dd }{\dd t} D \Phi_t(X) = DF(X) \, D \Phi_t(X) 
		, 
		&& D \Phi_t \big \vert_{t = 0} = \id_{\R^d} 
		, 
	\end{align*}
	holds for the differential of the flow. As a consequence, one can prove 
	\begin{align*}
		\frac{\dd }{\dd t} \bigl ( \mathrm{det} \, D \Phi_t(X) \bigr ) &= \mathrm{tr} \, \bigl ( DF \bigl (\Phi_t(X) \bigr ) \bigr ) \, \mathrm{\det} \, \bigl ( D \Phi_t(X) \bigr ) 
		\\
		&= \mathrm{div} \, F \bigl ( \Phi_t(X) \bigr ) \, \mathrm{\det} \, \bigl ( D \Phi_t(X) \bigr ) 
		, 
	\end{align*}
	and $\mathrm{det} \, \bigl ( D \Phi_t \bigr ) \big \vert_{t = 0} = 1$. We refer to \cite[Chapter~1.6]{Dreher:dynamische_systeme:2007} for details and proofs. 
\end{remark}
\begin{proof}
	Let $H$ be the hamiltonian which generates the flow $\Phi_t$. Let us denote the hamiltonian vector field by 
	\begin{align*}
		X_H = \left (
		\begin{matrix}
			+ \nabla_p H \\
			- \nabla_x H \\
		\end{matrix}
		\right ) 
		. 
	\end{align*}
	Then a direct calculation yields 
	\begin{align*}
		\mathrm{div} \, X_H = \sum_{l = 1}^d \Bigl ( \partial_{x_l} \bigl ( + \partial_{p_l} H \bigr ) + \partial_{p_l} \bigl ( - \partial_{x_l} H \bigr ) \Bigr ) = 0 
	\end{align*}
	and the hamiltonian vector field is divergence free. This implies the hamiltonian flow $\Phi_t$ preserves volumes in phase space: let $V \subseteq \Pspace$ be a bounded region in phase space (a Borel subset) with smooth boundary. Then for all $- T \leq t \leq T$ for which the flow exists, we have 
	\begin{align*}
		\frac{\dd}{\dd t} \mathrm{Vol} \, \bigl ( \Phi_t(V) \bigr ) &= \frac{\dd}{\dd t} \int_{\Phi_t(V)} \dd x \, \dd p 
		\\
		&= \frac{\dd}{\dd t} \int_{V} \dd x' \, \dd p' \, \mathrm{det} \, \bigl ( D \Phi_t(x',p') \bigr )
		. 
	\end{align*}
	Since $V$ is bounded, we can bound $\mathrm{det} \, \bigl ( D \Phi_t \bigr )$ and its time derivative uniformly. Thus, can interchange integration and differentiation and apply Remark~\ref{frameworks:classical:remark:Liouville}, 
	\begin{align*}
		\frac{\dd}{\dd t} \int_{V} \dd x' \, \dd p' \, \mathrm{det} \, \bigl ( D \Phi_t(x',p') \bigr ) &= \int_V \dd x' \, \dd p' \, \frac{\dd}{\dd t} \mathrm{det} \, \bigl ( D \Phi_t(x',p') \bigr ) 
		\\
		&
		= \int_V \dd x' \, \dd p' \, \underbrace{\mathrm{div} \, X_H \bigl ( \Phi_t(x',p') \bigr )}_{= 0} \, \mathrm{det} \, \bigl ( D \Phi_t(x',p') \bigr ) 
		\\
		&
		= 0 
		. 
	\end{align*}
	Hence $\frac{\dd}{\dd t} \mathrm{Vol} \, (V) = 0$ and the hamiltonian flow conserves phase space volume. The functional determinant of the flow is constant as the time derivative vanishes, 
	\begin{align*}
		\frac{\dd}{\dd t} \mathrm{det} \, \bigl ( D \Phi_t(x',p') \bigr ) = 0 
		, 
	\end{align*}
	and equal to $1$, 
	\begin{align*}
		\mathrm{det} \, \bigl ( D \Phi_t (x',p') \bigr ) \big \vert_{t = 0} = \mathrm{det} \, \id_{\Pspace} = 1 
		. 
	\end{align*}
	This concludes the proof. 
\end{proof}
With a different proof relying on alternating multilinear forms, the requirements on $V$ can be lifted, see \eg \cite[Satz~29.3 and Satz~29.5]{Knauf:klassische_Mechanik:2004}. 
\begin{cor}\label{frameworks:classical:states_stay_states}
	Let $\mu$ be a state on phase space $\Pspace$ and $\Phi_t$ the flow generated by a hamiltonian $H \in \Cont^{\infty}(\Pspace)$ which we assume to exist for $\abs{t} \leq T$ where $0 < T \leq \infty$ is suitable. Then $\mu(t)$ is again a state. 
\end{cor}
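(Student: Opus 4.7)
The plan is to verify directly that $\mu(t) := \mu \circ \Phi_{-t}$ satisfies the three defining properties of a classical state: being a positive Borel measure, being $\sigma$-additive, and being normalized to $1$. The single fact that makes all of this work is that $\Phi_{-t} : \Pspace \longrightarrow \Pspace$ is a bijection whose inverse $\Phi_t$ exists by the hypothesis on the flow (cf.\ Corollary~\ref{frameworks:classical_mechanics:cor:existence_flow}) and which is at least continuous (indeed $\Cont^{\infty}$, by Theorem~\ref{frameworks:classical_mechanics:thm:smoothness_flow}, once $H$ is smooth), so that $\Phi_{-t}$ maps Borel sets to Borel sets and $\mu(t)(U) := \mu \bigl ( \Phi_{-t}(U) \bigr )$ is well-defined on the Borel $\sigma$-algebra of $\Pspace$.

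First I would check positivity: for any Borel set $U \subseteq \Pspace$, the preimage $\Phi_{-t}(U)$ is again Borel, hence $\mu(t)(U) = \mu \bigl ( \Phi_{-t}(U) \bigr ) \geq 0$ because $\mu$ is a positive Borel measure. Second, I would verify $\sigma$-additivity: for any countable family $\{U_i\}_{i \in \N}$ of pairwise disjoint Borel sets, the sets $\Phi_{-t}(U_i)$ are also pairwise disjoint (because $\Phi_{-t}$ is injective), and since $\Phi_{-t}$ commutes with countable unions,
\begin{align*}
    \mu(t) \Bigl ( \bigcup_{i \in \N} U_i \Bigr ) = \mu \Bigl ( \bigcup_{i \in \N} \Phi_{-t}(U_i) \Bigr ) = \sum_{i \in \N} \mu \bigl ( \Phi_{-t}(U_i) \bigr ) = \sum_{i \in \N} \mu(t) (U_i) .
\end{align*}
Third, normalization follows from $\Phi_{-t}$ being a bijection on all of $\Pspace$: $\mu(t)(\Pspace) = \mu \bigl ( \Phi_{-t}(\Pspace) \bigr ) = \mu(\Pspace) = 1$.

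There is no real obstacle here; the proof is essentially a bookkeeping check that the pullback of a probability measure by a bimeasurable bijection is again a probability measure. The one subtlety worth noting, and which I would flag, is that for \emph{pure} states $\mu = \delta_{(x_0,p_0)}$ the corollary reproduces exactly the Hamiltonian trajectory: indeed $\mu(t) = \delta_{(x_0,p_0)} \circ \Phi_{-t} = \delta_{\Phi_t(x_0,p_0)}$, so the Schrödinger-picture evolution of a point mass coincides with the pointwise flow, which is a reassuring consistency check rather than an independent difficulty. Liouville's theorem (Theorem~\ref{frameworks:classical:thm:Liouville}) is \emph{not} needed for this corollary, but would be required if one additionally wanted $\mu(t)$ to have the same density as $\mu$ when $\mu$ is absolutely continuous with respect to Lebesgue measure.
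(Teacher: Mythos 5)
Your proof is correct and follows essentially the same route as the paper: check that $\mu(t) = \mu \circ \Phi_{-t}$ is a positive, normalized Borel measure by observing that $\Phi_{-t}$ is a bimeasurable bijection (positivity and $\sigma$-additivity are inherited from $\mu$ via $\Phi_{-t}(U)$ being Borel, normalization from $\Phi_{-t}(\Pspace) = \Pspace$). Your closing remark is the one point of substance worth keeping: the paper's proof \emph{announces} that Liouville's theorem guarantees positivity and normalization, but the computation actually given there -- $\mu(t)(U) = \mu\bigl(\Phi_{-t}(U)\bigr) \geq 0$ and $\mu(t)(\Pspace) = \mu(\Pspace) = 1$ -- uses only the pushforward along a bijection and never invokes $\det D\Phi_t = 1$; you are right that Liouville would only become relevant if one wanted to compare \emph{densities} of $\mu$ and $\mu(t)$ with respect to Lebesgue measure.
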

\begin{proof}
	Since $\Phi_t$ is continuous, it is also measurable. Thus $\mu(t) = \mu \circ \Phi_{-t}$ is also a Borel measure on $\Pspace$ ($\Phi_{-t}$ exists by assumption on $t$). In fact, $\Phi_t$ is a diffeomorphism on phase space. Liouville's theorem not only ensures that the measure $\mu(t)$ remains positive, but also that it is normalized to 1: let $U \subseteq \Pspace$ be a Borel subset. Then we conclude 
	\begin{align*}
		\bigl ( \mu(t) \bigr ) (U) &= \int_{U} \dd \bigl ( \mu(t) \bigr )(x,p) 
		= \int_{U} \dd \mu \bigl ( \Phi_{-t}(x,p) \bigr ) 
		\\ 
		&= \int_{\Phi_{-t}(U)} \dd \mu (x,p) \geq 0 
	\end{align*}
	where we have used the positivity of $\mu$ and the fact that $\Phi_{-t}(U)$ is again a Borel set by continuity of $\Phi_{-t}$. If we set $U = \Pspace$ and use the fact that the flow is a diffeomorphism, we see that $\Pspace$ is mapped onto itself, $\Phi_{-t}(\Pspace) = \Pspace$, and the normalization of $\mu$ leads to 
	\begin{align*}
		\bigl ( \mu(t) \bigr ) (\Pspace) &= \int_{\Phi_{-t}(\Pspace)} \dd \mu (x,p) = \int_{\Pspace} \dd \mu (x,p) = 1
	\end{align*}
	This concludes the proof. 
\end{proof}
%


\section{Quantum mechanics} 
\label{frameworks:quantum_mechanics}
While position and momentum characterize the state of a classical particle, both are \emph{not simultaneously measurable} with arbitrary precision in a quantum system. This is forbidden by Heisenberg's uncertainty principle, 
\begin{align*}
	\Delta_{\psi} \hat{x} \, \Delta_{\psi} \hat{p} \geq \frac{\hbar}{2} 
	, 
\end{align*}
where 
\begin{align*}
	\Delta_{\psi} \hat{x} &:= \sqrt{\mathrm{Var}_{\psi}(\hat{x})} := \sqrt{\bscpro{\psi}{(\hat{x} - \sscpro{\psi}{\hat{x} \psi})^2 \psi}}
\end{align*}
and $\Delta_{\psi} \hat{p} := \sqrt{\mathrm{Var}(\hat{p})}$ are the standard deviations of position and momentum with respect to the state $\psi$. They quantify how sharply the wave function $\psi$ is peaked in position and momentum space. Either we can pinpoint the location with increased precision (reduce the standard deviation $\Delta_{\psi} \hat{x}$) or measure the particle's momentum more accurately (reduce $\Delta_{\psi} \hat{p}$).\marginpar{\small 2009.10.22} 
In contrast, the variance (and thus the standard deviation) of a \emph{classical} observable \emph{with respect to a pure state vanishes}, 
\begin{align*}
	\mathrm{Var}_{\delta_{(x_0,p_0)}} \, (f) :=& \mathbb{E}_{\delta_{(x_0,p_0)}} \bigl ( (f - \mathbb{E}_{\delta_{(x_0,p_0)}}(f) )^2) \bigr ) 
	= \mathbb{E}_{\delta_{(x_0,p_0)}}(f^2) - \mathbb{E}_{\delta_{(x_0,p_0)}}(f)^2 
	\\
	=& f^2(x_0,p_0) - f(x_0,p_0)^2 = 0 
	. 
\end{align*}
Quantum particles simultaneously have wave and particle character: the Schrödinger equation 
\begin{align*}
	i \hbar \frac{\partial}{\partial t} \psi(t) = \hat{H} \psi(t) 
	, 
	&& \psi(t) \in L^2(\R^d), \psi(0) = \psi_0 
	, 
\end{align*}
is structurally very similar to a wave equation. $\hat{H}$ is the so-called hamilton operator (hamiltonian for short) and is typically given by 
\begin{align*}
	\hat{H} = \frac{1}{2m} (- i \hbar \nabla_x)^2 + V(\hat{x}) 
	. 
\end{align*}
Here $m > 0$ is the mass of the particle and $V$ the potential it has been subjected to. The physical constant $\hbar$ relates the energy of a particle with the associated wavelength and has units of $[\mbox{energy} \cdot \mbox{time}]$. 

Pure states are described by wave functions, \ie complex-valued, square integrable functions 
\begin{align}
	\psi \in L^2(\R^d) := \Bigl \{ \psi : \R^d \longrightarrow \C \; \big \vert \; \mbox{$\psi$ measurable, } \int_{\R^d} \dd x \, \abs{\psi(x)}^2 < \infty \Bigr \} / \sim 
	. 
\end{align}
$L^2(\R^d)$ with the usual scalar product 
\begin{align}
	\scpro{\phi}{\psi} := \int_{\R^d} \dd x \, \phi(x)^* \, \psi(x) 
\end{align}
and norm $\norm{\psi} := \sqrt{\scpro{\psi}{\psi}}$ is the prototype of a \emph{Hilbert space.} $\sim$ means we identify two functions if they agree almost everywhere. We will discuss this space in much more detail in Chapter~\ref{hilbert_spaces} and we focus on the physical content rather than mathematics. In physics text books, one usually encounters the the \emph{bra-ket} notation: here $\ket{\psi}$ is a state and $\ipro{x}{\psi}$ is $\psi(x)$. The scalar product of $\phi,\psi \in L^2(\R^d)$ is denoted by $\ipro{\phi}{\psi}$ and corresponds to $\scpro{\phi}{\psi}$. Although bra-ket notation can be ambiguous, it is sometimes useful and in fact used in mathematics every once in a while. 

Physically, $\abs{\psi(x,t)}^2$ is interpreted as the \emph{probability to measure a particle at time $t$ in (an infinitesimally small box located in) location $x$}. If we are interested in the probability that we can measure a particle in a region $\Lambda \subseteq \R^d$, we have to integrate $\abs{\psi(x,t)}^2$ over $\Lambda$, 
\begin{align}
	\mathbb{P}(X(t) \in \Lambda) = \int_{\Lambda} \dd x \, \abs{\psi(x,t)}^2 
	. 
\end{align}
If we want to interpret $\abs{\psi}^2$ as \emph{probability} density, the wave function has to be \emph{normalized}, \ie 
\begin{align*}
	\norm{\psi} = \sqrt{\int_{\R^d} \dd x \, \abs{\psi(x)}^2} = 1 
	. 
\end{align*}
%
This point of view is called \emph{Born rule}: $\abs{\psi}^2$ could either be a \emph{mass} or \emph{charge
 density} -- or a \emph{probability density}. To settle this, physicists have performed the double slit experiment with an electron source of low flux. If $\abs{\psi}^2$ were a density, one would see the whole interference pattern building up slowly. Instead, one \emph{measures} ``single impacts'' of electrons and the result is similar to the data obtained from experiments in statistics (\eg the Dalton board). Hence, we speak of \emph{particles}.

\paragraph{Quantum observables} 
\label{frameworks:quantum:quantum_observables}

Quantities that can be measured are represented by symmetric (hermitian) operators $\hat{A}$ on the Hilbert space (typically $L^2(\R^d)$), \ie special linear maps 
\begin{align*}
	\hat{A} : \mathcal{D}(\hat{A}) \subseteq L^2(\R^d) \longrightarrow L^2(\R^d) 
	. 
\end{align*}
Here, $\mathcal{D}(\hat{A})$ is the \emph{domain} of the operator since \emph{typical} observables are not defined for all $\psi \in L^2(\R^d)$. \emph{This is not a mathematical subtlety with no physical content}, quite the contrary: consider the observabble energy, typically given by 
\begin{align*}
	\hat{H} = \frac{1}{2m} (- i \hbar \nabla_x)^2 + V(\hat{x}) 
	, 
\end{align*}
then states in the domain 
\begin{align*}
	\mathcal{D}(\hat{H}) := \Bigl \{ \psi \in L^2(\R^d) \; \big \vert \; \hat{H} \psi \in L^2(\R^d) \Bigr \} \subseteq L^2(\R^d)
\end{align*}
are those of \emph{finite energy}. For all $\psi$ in the domain of the hamiltonian $\mathcal{D}(\hat{H}) \subseteq L^2(\R^d)$, the expectation value 
\begin{align*}
	\bscpro{\psi}{\hat{H} \psi} < \infty
\end{align*}
is bounded. Well-defined observables have domains that are \emph{dense} in $L^2(\R^d)$. Similarly, states in the domain $\mathcal{D}(\hat{x}_l)$ of the $l$th component of the position operator are those that are ``localized in a finite region'' in the sense of expectation values. 

Physically, results of measurements are real which is reflected in the selfadjointness of operators,\footnote{A symmetric operator is selfadjoint if $\hat{H}^{\ast} = \hat{H}$ and $\mathcal{D}(\hat{H}^{\ast}) = \mathcal{D}(\hat{H})$, see~Chapter~\ref{operators:selfadjoint_operators}. } 
\begin{align*}
	\hat{H}^* = \hat{H} 
	. 
\end{align*}
The spectrum $\sigma(\hat{H}) \subseteq \R$ is the set of all possible outcomes of measurements. Unfortunately, we cannot define the spectrum now, but we will do so in Chapter~\ref{operators:bounded}. 

Typically one ``guesses'' quantum observables from classical observables: in $d = 3$, the angular momentum operator is given by 
\begin{align*}
	\hat{L} = \hat{x} \wedge \hat{p} = \hat{x} \wedge  (- i \hbar \nabla_x) 
	. 
\end{align*}
In the simplest case, one uses Dirac's recipe (replace $x$ by $\hat{x}$ and $p$ by $\hat{p} = - i \hbar \nabla_x$) on the classical observable angular momentum $L(x,p) = x \wedge p$. In other words, \emph{many quantum observables are obtained as quantizations of classical observables}: examples are position, momentum and energy. Moreover, the \emph{interpretation} of, say, the angular momentum operator as angular momentum is taken from classical mechanics. This is an indication how important quantizations are conceptually. 

In the definition of the domain, we have already used the definition of expectation value: the expectation value of an observable $\hat{A}$ with respect to a state $\psi$ (which we assume to be normalized, $\norm{\psi} = 1$) is given by 
\begin{align}
	\mathbb{E}_{\psi}(\hat{A}) := \bscpro{\psi}{\hat{A} \psi} 
	. 
\end{align}
The expectation value is finite if the state $\psi$ is in the domain $\mathcal{D}(\hat{A})$. The Born rule of quantum mechanics tells us that if we repeat an experiment measuring the observable $\hat{A}$ many times for a particle that is prepared in the state $\psi$ each time, the statistical average calculated according to the relative frequencies converges to the expectation value $\mathbb{E}_{\psi}(\hat{A})$. 

Hence, quantum observables, selfadjoint operators on Hilbert spaces, are \emph{bookkeeping devices} that have two components: 
\begin{enumerate}[(i)]
	\item a \emph{set of possible outcomes} of measurements, the spectrum $\sigma(\hat{A})$, and 
	\item \emph{statistics}, \ie how often a possible outcome occurs. 
\end{enumerate}
If for two quantum observables $\hat{A}$ and $\hat{B}$ have the same spectrum and statistics, they have the chance of being ``equivalent.'' 


\paragraph{Quantum states} 
\label{frameworks:quantum:states}

Pure states are wave functions $\psi \in L^2(\R^d)$, or rather, wave functions up to a total phase: just like one can measure only energy \emph{differences}, only phase \emph{shifts} are accessible to measurements. Hence, one can think of pure states as orthogonal \emph{projections} 
\begin{align*}
	P_{\psi} := \sopro{\psi}{\psi} = \sscpro{\psi}{\cdot} \, \psi 
	. 
\end{align*}
if $\psi$ is normalized to $1$, $\norm{\psi} = 1$. 
Here, one can see the elegance of bra-ket notation vs. the notation that is ``mathematically proper.'' A generalization of this concept are \emph{density operators} $\hat{\rho}$ (often called density matrices): density matrices are defined via the trace. If $\hat{\rho}$ is a suitable linear operator and $\{ \varphi_n \}_{n \in \N}$ and orthonormal basis of $L^2(\R^d)$, then we define 
\begin{align*}
	\mathrm{tr} \, \hat{\rho} := \sum_{n \in \N} \sscpro{\varphi_n}{\hat{\rho} \varphi_n} 
	. 
\end{align*}
One can easily check that this definition is independent of the choice of basis and we will show this in a later chapter. Clearly, $P_{\psi}$ has trace $1$ and it is also positive in the sense that 
\begin{align*}
	\bscpro{\varphi}{P_{\psi} \varphi} \geq 0 
\end{align*}
for all $\varphi \in L^2(\R^d)$. This is also the good definition for quantum states: 
\begin{defn}[Quantum state]
	A quantum state (or density operator/matrix) $\hat{\rho}$ is a positive operator of trace $1$, \ie 
	\begin{align*}
		\bscpro{\psi}{\hat{\rho} \psi} &\geq 0 
		, \\
		\mathrm{tr} \, \hat{\rho} &= 1 
		, 
		&& \forall \psi \in L^2(\R^d) 
		. 
	\end{align*}
	If $\hat{\rho}$ is also an orthogonal projection, \ie $\hat{\rho}^2 = \hat{\rho}$, it is a pure state.\footnote{Note that the condition $\mathrm{tr} \, \hat{\rho} = 1$ implies that $\hat{\rho}$ is a \emph{bounded} operator while the positivity implies the selfadjointness. Hence, if $\hat{\rho}$ is a projection, \ie ${\hat{\rho}}^2 = \hat{\rho}$, it is automatically also an orthogonal projection. } Otherwise $\hat{\rho}$ is a mixed state. 
\end{defn}
\begin{example}
	Let $\psi_j \in L^2(\R^d)$ be two wave functions normalized to $1$. Then for any $0 < \alpha < 1$
	\begin{align*}
		\hat{\rho} = \alpha P_{\psi_1} + (1 - \alpha) P_{\psi_2} = \alpha \sopro{\psi_1}{\psi_1} + (1 - \alpha) \sopro{\psi_2}{\psi_2}
	\end{align*}
	is a mixed state as 
	\begin{align*}
		\hat{\rho}^2 
		&= \alpha^2 \sopro{\psi_1}{\psi_1} + (1 - \alpha)^2 \sopro{\psi_2}{\psi_2} 
		+ \\
		&\qquad \qquad 
		+ \alpha (1 - \alpha) \bigl ( \sopro{\psi_1}{\psi_1} \sopro{\psi_2}{\psi_2} + \sopro{\psi_2}{\psi_2} \sopro{\psi_1}{\psi_1} \bigr )
		\\
		&\neq \hat{\rho}
		. 
	\end{align*}
	Even if $\psi_1$ and $\psi_2$ are orthogonal to each other, since $\alpha^2 \neq \alpha$ and similarly $(1 - \alpha)^2 \neq (1 - \alpha)$, $\hat{\rho}$ cannot be a projection. Nevertheless, it is a state since $\mathrm{tr} \, \hat{\rho} = \alpha + (1 - \alpha) = 1$. Keep in mind that $\hat{\rho}$ does not project on $\alpha \psi_1 + (1 - \alpha) \psi_2$! 
\end{example}
%


\paragraph{Time evolution} 
\label{frameworks:quantum:time_evolution}

The time evolution is determined through the \emph{Schrödinger equation}, 
\begin{align}
	i \hbar \frac{\partial}{\partial t} \psi(t) = \hat{H} \psi(t)
	, 
	&& \psi(t) \in L^2(\R^d), \; \psi(0) = \psi_0, \; \norm{\psi_0} = 1 
	. 
\end{align}
Alternatively, one can write $\psi(t) = U(t) \psi_0$ with $U(0) = \id_{L^2}$. Then, we have 
\begin{align*}
	i \hbar \frac{\partial}{\partial t} U(t) = \hat{H} U(t) 
	, 
	&& U(0) = \id_{L^2} 
	. 
\end{align*}
If $\hat{H}$ \emph{were} a number, one would immediately use the ansatz 
\begin{align}
	U(t) = e^{- i \frac{t}{\hbar} \hat{H}} 
\end{align}
as solution to the Schrödinger equation. If $\hat{H}$ is a selfadjoint operator, this is \emph{still true}, but takes a lot of work to justify rigorously if the domain of $\hat{H}$ is not all of $L^2(\R^d)$ (the case of unbounded operators, the \emph{generic} case). 

As has already been mentioned, we can evolve either states or observables in time and one speaks of the Schrödinger or Heisenberg picture, respectively. In the Schrödinger picture, states evolve according to 
\begin{align*}
	\psi(t) = U(t) \psi_0 
\end{align*}
while observables remain fixed. Conversely, in the Heisenberg picture, states are kept fixed in time and observables evolve according to 
\begin{align}
	\hat{A}(t) := U(t)^* \, \hat{A} \, U(t) = e^{+ i \frac{t}{\hbar} \hat{H}} \hat{A} e^{- i \frac{t}{\hbar} \hat{H}} 
	. 
\end{align}
Heisenberg observables satisfy 
\begin{align}
	\frac{\dd}{\dd t} \hat{A}(t) = \frac{i}{\hbar} \bigl [ \hat{H} , \hat{A}(t) \bigr ] 
\end{align}
which\marginpar{\small 2009.10.27} can be checked by plugging in the definition of $\hat{A}(t)$ and elementary \emph{formal} manipulations. It is no coincidence that this equation looks structurally similar to equation~\eqref{frameworks:classical_mechanics:eqn:eom_observables}! 

As a last point, we mention the conservation of probability: if $\psi(t)$ solves the Schrödinger equation for some selfadjoint $\hat{H}$, then we can check at least formally that the time evolution is unitary and thus preserves probability, 
\begin{align*}
	\frac{\dd}{\dd t} \bnorm{\psi(t)}^2 &= \frac{\dd}{\dd t} \bscpro{\psi(t)}{\psi(t)} 
	= \bscpro{\tfrac{1}{i \hbar} \hat{H} \psi(t)}{\psi(t)} + \bscpro{\psi(t)}{\tfrac{1}{i \hbar} \hat{H} \psi(t)} 
	\\
	&
	= \frac{i}{\hbar} \Bigl ( \bscpro{\psi(t)}{\hat{H}^* \psi(t)} - \bscpro{\psi(t)}{\hat{H} \psi(t)} \Bigr ) 
	\\
	&
	= \frac{i}{\hbar} \bscpro{\psi(t)}{(\hat{H}^* - \hat{H}) \psi(t)} 
	= 0 
	. 
\end{align*}
Conservation of probability is reminiscent of Corollary~\ref{frameworks:classical:states_stay_states}. We see that the condition $\hat{H}^* = \hat{H}$ is the key here: selfadjoint operators generate unitary evolution groups. As a matter of fact, there are cases when one \emph{wants} to violate conservation of proability: one has to introduce so-called \emph{optical potentials} which simulate particle creation and annihilation. 

The time evolution $e^{- i \frac{t}{\hbar} \hat{H}}$ is not the only unitary group of interest, other commonly used examples are \emph{translations} in position or momentum which are generated by the momentum and position operator, respectively (the order is reversed!), as well as rotations which are generated by the angular momentum operators. 



\section{Comparison of the two frameworks} 
\label{frameworks:comparison}
Now that we have an understanding of the structures of classical and quantum mechanics, juxtaposed in Table~\ref{frameworks:comparison:table:overview_frameworks}, we can elaborate on the differences and similarities of both theories. 
%
\begin{table}
	\begin{tabularx}{\textwidth}{>{\small\raggedright\hsize=3cm}X | >{\small\raggedright}X >{\small\raggedright}X >{\hsize=0cm}X} 
		\makebox[1cm]{}  & \textit{Classical} & \textit{Quantum} & \\ \hline 
		\textit{Observables} & $f \in \Cont^{\infty}(\Pspace,\R)$ & selfadjoint operators acting on Hilbert space $L^2(\R^d)$ & \\ [0.5ex] 
		\textit{Building block observables} & position $x$ and momentum $p$ & position $\hat{x}$ and momentum $\hat{p}$ operators & \\ [0.5ex] 
		\textit{Possible results of measurements} & $\mathrm{im}(f)$ & $\sigma(\hat{A})$ & \\ [0.5ex] 
		\textit{States} & probability measures $\mu$ on phase space $\Pspace$ & density operators $\hat{\rho}$ on $L^2(\R^d)$ & \\ [0.5ex] 
		\textit{Pure states} & points in phase space $\Pspace$ & wave functions $\psi \in L^2(\R^d)$ & \\ [0.5ex] 
		\textit{Generator of evolution} & hamiltonian function $H : \Pspace \longrightarrow \R$ & hamiltonian operator $\hat{H}$ & \\ [0.5ex] 
		\textit{Infinitesimal time evolution equation} & $\frac{\dd}{\dd t} f(t) = \{ H , f(t) \}$ & $\frac{\dd }{\dd t} \hat{A}(t) = \frac{i}{\hbar} [ \hat{H} , \hat{A}(t) ]$ & \\ [0.5ex]
		\textit{Integrated time evolution} & hamiltonian flow $\phi_t$ & $e^{+ i \frac{t}{\hbar} \hat{H}} \, \Box \, e^{- i \frac{t}{\hbar} \hat{H}}$ & \\ 
	\end{tabularx}
	\caption{Comparison of classical and quantum framework}
	\label{frameworks:comparison:table:overview_frameworks}
\end{table}
For instance, observables form an \emph{algebra} (a vector space with multiplication): in classical mechanics, we use the \emph{pointwise product} of functions, 
\begin{align*}
	\cdot :& \, \Cont^{\infty}(\Pspace) \times \Cont^{\infty}(\Pspace) \longrightarrow \Cont^{\infty}(\Pspace) , \; (f , g) \mapsto f \cdot g 
	\\
	& (f \cdot g)(x,p) := f(x,p) \, g(x,p) 
	, 
\end{align*}
which is obviously commutative. We also admit \emph{complex}-valued functions and add \emph{complex conjugation} as involution (\ie $f^{\ast \ast} = f$). Lastly, we add the Poisson bracket to make $\Cont^{\infty}(\Pspace)$ into a so-called Poisson algebra. As we have seen, the notion of Poisson bracket gives rise to dynamics as soon as we choose an energy function (hamiltonian). 

On the quantum side, bounded operators (see~Chapter~\ref{operators:bounded}) form an algebra. This algebra is non-commutative, \ie 
\begin{align*}
	\hat{A} \cdot \hat{B} \neq \hat{B} \cdot \hat{A} 
	. 
\end{align*}
\emph{Exactly this is what makes quantum mechanics different.} Taking adjoints is the involution here and the commutator plays the role of the Poisson bracket. Again, once a hamiltonian (operator) is chosen, the dynamics of Heisenberg observables $\hat{A}(t)$ is determined by the commutator of the $\hat{A}(t)$ with the hamiltonian $\hat{H}$. If an operator commutes with the hamiltonian, \emph{it is a constant of motion}. This is in analogy with Definition~\ref{frameworks:classical:defn:conserved_quantity} where a classical observable is a constant of motion if and only if its Poisson bracket with the hamiltonian (function) vanishes. 


\section{Properties of quantizations} 
\label{frameworks:properties_of_quantizations}

Quantization is a problem of theoretical and mathematical physics, \emph{not a problem of nature.} Quantum mechanics is simply \emph{more fundamental} than classical mechanics and the reason we can \emph{guess} the correct quantum mechanics comes from the fact that most observables we are interested are macroscopic and have a good semiclassical limit. In a way, quantization is going backwards again. That is the reason why not every quantum observable has a classical analog. Spin, for instance, is purely a quantum mechanical concept with no semiclassical limit. 

Since quantum mechanics is more deep and classical mechanics emerges as an approximation under certain conditions, there should be ``more'' quantum observables than classical observables (although it is usually not possible to qualify what we mean by ``more,'' the frameworks and constructions are too different). In the simplest case, we start with an algebra of classical observables and ``quantize them,'' \ie we associate them to operators on a Hilbert space in a systematic fashion. Although this is what most people think \emph{is} a quantization, we emphasize that this is only \emph{part} of what really is a consistent quantization procedure. 

Let us call the quantization map $\Op$ which promotes suitable functions on phase space to operators on $L^2(\R^d)$. A number of requirements on $\Op$ seem natural:

\paragraph{Linearity} 
\label{frameworks:properties_of_quantizations:linearity}

The map $\Op$ should be linear, \ie for two classical observables $f , g \in \mathfrak{A}_{\mathrm{cl}}$ taken from the algebra of classical observables and $\alpha , \beta \in \C$, we should have 
\begin{align*}
	\Op \bigl ( \alpha f + \beta g \bigr ) = \alpha \, \Op(f) + \beta \, \Op(g) \in \mathfrak{A}_{\mathrm{qm}} 
	. 
\end{align*}
Here, $\mathfrak{A}_{\mathrm{qm}}$ is an algebra of quantum observables. 


\paragraph{Compatibility with involution} 
\label{frameworks:properties_of_quantizations:involution}

$\Op$ should intertwine complex conjugation and taking adjoints, \ie for all $f \in \mathfrak{A}_{\mathrm{cl}}$ 
\begin{align*}
	\Op (f^*) = \Op(f)^* \in \mathfrak{A}_{\mathrm{qm}} 
	. 
\end{align*}
%


\paragraph{Products} 
\label{frameworks:properties_of_quantizations:products}

The two products \emph{cannot} be equivalent: the operator product is noncommutative and hence for general $f , g \in \mathfrak{A}_{\mathrm{cl}}$ 
\begin{align*}
	\Op(f \cdot g) \neq \Op(f) \cdot \Op(g) 
	. 
\end{align*}
Instead, for \emph{suitable} functions $f,g$, we can define a \emph{non-commutative} product $\Weyl$ on the level of functions on phase space such that 
\begin{align*}
	\Op(f \Weyl g) := \Op(f) \cdot \Op(g) \in \mathfrak{A}_{\mathrm{qm}} 
	. 
\end{align*}
\emph{A priori} it is not at all clear whether $f \Weyl g \in \mathfrak{A}_{\mathrm{cl}}$. 


\paragraph{Poisson bracket and commutator} 
\label{frameworks:properties_of_quantizations:brackets}

Roughly, Poisson bracket and commutators play similar roles and thus are analogs of one another, 
\begin{align*}
	\bigl \{ f , g \bigr \} \leftrightsquigarrow \frac{i}{\hbar} \bigl [ \Op(f) , \Op(g) \bigr ] 
	. 
\end{align*}
Just like with the product, the quantization of the Poisson bracket usually does not coincide with with $\nicefrac{1}{i \hbar}$ times the commutator. \marginpar{\small 2009.10.28}


\paragraph{Small parameters} 
\label{frameworks:properties_of_quantizations:epsilon}

One of the most important (and potentially confusing) aspects of the semiclassical limit is the question of the small parameter. Planck's constant $\hbar$ is a \textbf{physical constant}, \ie its value is \emph{fixed} and it has \emph{units}. \emph{Any good small parameter should not have units, \eg the fine structure constant $\alpha \simeq \nicefrac{1}{137}$ is a good small parameter.} Hence, we cannot take the limit $\hbar \rightarrow 0$. Other interpretations of the semiclassical limit point in the right direction: in its earliest version, the semiclassical limit was the limit of ``large quantum numbers.'' Translated to a more modern setting, this means that the \emph{ratio of typical energies to the energy level spacing} is large. Rydberg states of hydrogen-like atoms, \ie states for which the main quantum number is in the range $n \gtrsim 50$, are probably the best-known example. Although this particular mechanism is limited to hamiltonians with non-trivial discrete spectrum, it suggests the root cause of (semi)classical behavior is the \emph{existence of two scales} and the semiclassical parameter \textbf{which from now on we will denote with $\eps$} is related to the ratio of these scales. We give two examples: 
\begin{enumerate}[(i)]
	\item \emph{Separation of spatial scales. }
	Quantum objects are typically much, much smaller than ``classical'' objects and the ratio of typical quantum to typical classical length scales is small. External electromagnetic fields, for instance, vary on the \emph{macroscopic} scale, \ie they \emph{change slowly on the microscopic scale}. In solid state physics, the microscopic length scale is of the order of $\unit[10-100]{\mbox{Å}}$ (\eg measured in terms of the typical lattice spacing of a crystal or the localization length of a particle) while the macroscopic length scale is around $\unit[10^{0}-10^{-2}]{mm}$; hence, the separation of scales is approximately $\eps \lesssim 10^{-3}$. 
	\item \emph{Ratio of masses/Born-Oppenheimer-type systems. }
	Born-Oppenheimer hamiltonians are simply many-body hamiltonians which are used to model molecules. Here, there are two types of particles: heavy (slow) nucleons and light (fast) electrons. The square root of the ratio of electronic and nucleonic mass is small, 
	\begin{align*}
		\eps := \sqrt{\frac{m_{\mathrm{e}}}{m_{\mathrm{nuc}}}} \lesssim \sqrt{\frac{1}{2000}} 
		, 
	\end{align*}
	and this is the reason why nucleons tend to behave classically. Born-Oppenheimer systems will be the main focus of Chapter~\ref{multiscale}. 
\end{enumerate}
However, we caution that the existence of small parameters -- even in the right places -- implies a semiclassical limit: if we take the semirelativistic limit of the Dirac equation, then $\eps = \nicefrac{v_0}{c}$ appears naturally in front of the gradient where $v_0$ is a typical velocity and $c$ is the speed of light. If $v_0$ is well below $c$ (the electron is much slower than the speed of light), no electron-positron pairs are created (since the kinetic energy is below the pair creation threshold of $2 m c^2$) and \emph{electrons and positrons decouple}. Hence, the $\eps \rightarrow 0$ limit need not have the interpretation as semiclassical limit. 
\medskip

\noindent
We see that the reason for different small parameters is that \textbf{semiclassical behavior can be due to very different physical mechanisms}! 
%
\medskip

\noindent
If the semiclassical parameter $\eps$ is small, we expect 
\begin{align*}
	\Op(f \cdot g) = \Op(f) \cdot \Op(g) + \ordere{} 
\end{align*}
as well as 
\begin{align*}
	\Op \bigl ( \{ f , g \} \bigr ) = \frac{1}{i \eps} \bigl [ \Op(f) , \Op(g) \bigr ] + \ordere{} 
	. 
\end{align*}
The latter equation is key in the derivation of the semiclassical limit (the Egorov Theorem~7.1) whose validity can be tested by experiment. 
\medskip

\noindent
For completeness, we mention that we could have quantized classical \emph{states} instead of observables. This is the point of view of Berezin \cite{Berezin:quantization:1975}. 


\chapter{Hilbert spaces} 
\label{hilbert_spaces}

We will give a few basic facts on Hilbert spaces. This is not intended to be a replacement for a lecture on functional analysis. For a more in depth look on the subject, we refer to \cite{Reed_Simon:bibel_1:1981,Werner:Funktionalanalysis:2005,LiebLoss:Analysis:2001}

\section{Prototypical Hilbert spaces: $L^2(\R^d)$ and $\ell^2(\Z^d)$} 
\label{hilbert_spaces:L2}

We have already introduced the space of square integrable functions on $\R^d$, 
\begin{align*}
	\mathcal{L}^2(\R^d) := \Bigl \{ \varphi : \R^d \longrightarrow \C \; \big \vert \; \varphi \mbox{ measurable, } \int_{\R^d} \dd x \, \abs{\varphi(x)}^2 < \infty \Bigr \} 
	, 
\end{align*}
when talking about wave functions. The Born rule states that $\abs{\psi(x)}^2$ is to be interpreted as a probability density on $\R^d$ for position. Hence, we are interested in solutions to the Schrödinger equation which are also square integrable. When we say integrable, we mean integrable with respect to the Lebesgue measure \cite[p.~6~ff.]{LiebLoss:Analysis:2001}. $\mathcal{L}^2(\R^d)$ is a $\C$-vector space, but 
\begin{align*}
	\norm{\varphi}^2 := \int_{\R^d} \dd x \, \abs{\varphi(x)}^2 
\end{align*}
is not a norm: there are functions $\varphi \neq 0$ for which $\norm{\varphi} = 0$. Instead, $\norm{\varphi} = 0$ only ensures 
\begin{align*}
	\varphi(x) = 0 \mbox{ almost everywhere (with respect to the Lebesgue measure $\dd x$).} 
\end{align*}
Almost everywhere is sometimes abbreviated with a.~e. and the terms ``almost surely'' and ``for almost all $x \in \R^d$'' can be used synonymously. If we introduce the equivalence relation 
\begin{align*}
	\varphi \sim \psi :\Leftrightarrow \norm{\varphi - \psi} = 0 
	, 
\end{align*}
then we can define the vector space $L^2(\R^d)$: 
\begin{defn}[$L^2(\R^d)$]
	We define $L^2(\R^d)$ as 
	\begin{align*}
		\mathcal{L}^2(\R^d) / \sim
	\end{align*}
	where $\sim$ is the equivalence relation that identifies $\varphi$ and $\psi$ if $\norm{\varphi - \psi} = 0$. 
\end{defn}
If $\varphi_1 \sim \varphi_2$ are two normalized functions in $\mathcal{L}^2(\R^d)$, then we get the same probabilities for both: if $\Lambda \subseteq \R^d$ is a measurable set, then 
\begin{align*}
	\mathbb{P}_1(X \in \Lambda) = \int_{\Lambda} \dd x \, \abs{\varphi_1(x)}^2 = \int_{\Lambda} \dd x \, \abs{\varphi_2(x)}^2 = \mathbb{P}_2(X \in \Lambda) 
	. 
\end{align*}
This is proven via the triangle inequality and the Cauchy-Schwartz inequality (we will prove the latter in the next chapter): 
\begin{align*}
	0 &\leq \babs{\mathbb{P}_1(X \in \Lambda) - \mathbb{P}_2(X \in \Lambda)} 
	= \abs{\int_{\Lambda} \dd x \, \abs{\varphi_1(x)}^2 - \int_{\Lambda} \dd x \, \abs{\varphi_2(x)}^2} 
	\\
	&
	= \abs{\int_{\Lambda} \dd x \, \bigl ( \varphi_1(x) - \varphi_2(x) \bigr )^* \, \varphi_1(x) - \int_{\Lambda} \dd x \, \varphi_2(x)^* \, \bigl ( \varphi_1(x) - \varphi_2(x) \bigr )} 
	\\
	&\leq \int_{\Lambda} \dd x \, \babs{\varphi_1(x) - \varphi_2(x)} \, \babs{\varphi_1(x)} - \int_{\Lambda} \dd x \, \babs{\varphi_2(x)} \, \babs{\varphi_1(x) - \varphi_2(x)} 
	\\
	&\leq \norm{\varphi_1 - \varphi_2} \, \norm{\varphi_1} + \norm{\varphi_2} \, \norm{\varphi_1 - \varphi_2} 
	= 0
\end{align*}
Very often, another space is used in applications (\eg in tight-binding models): 
\begin{defn}[$\ell^2(S)$]
	Let $S$ be a countable set. Then 
	\begin{align*}
		\ell^2(S) := \Bigl \{ c : S \longrightarrow \C \; \big \vert \; \mbox{$\sum_{j \in S} c_j^* c_j < \infty$} \Bigr \} 
	\end{align*}
	is the space of square-summable sequences. 
\end{defn}
On $\ell^2(S)$ the scalar product $\scpro{c}{c'} := \sum_{j \in S} c_j^* c'_j$ induces the norm $\norm{c} := \sqrt{\scpro{c}{c'}}$. With respect to this norm, $\ell^2(S)$ is complete. \marginpar{\small 2009.11.03}


\section{Abstract Hilbert spaces} 
\label{hilbert_spaces:abstract}

We will only consider vector spaces over $\C$, although much of what we do works just fine if the field of scalars is $\R$. 
\begin{defn}[Metric space]
	Let $\mathcal{X}$ be a set. A mapping $d : \mathcal{X} \times \mathcal{X} \longrightarrow [0,+\infty)$ with properties 
	\begin{enumerate}[(i)]
		\item $d(x,y) = 0$ exactly if $x = y$, 
		\item $d(x,y) = d(y,x)$ (symmetry), and 
		\item $d(x,z) \leq d(x,y) + d(y,z)$ (triangle inequality), 
	\end{enumerate}
	for all $x,y,z \in \mathcal{X}$ is called metric. We refer to $(\mathcal{X},d)$ as metric space (often only denoted as $\mathcal{X}$). A metric space $(\mathcal{X},d)$ is called complete if all Cauchy sequences $(x_n)$ (with respect to the metric) converge to some $x \in \mathcal{X}$. 
\end{defn}
A metric gives a notion of distance -- and thus a notion of convergence and open sets (a topology): quite naturally, one considers the topology generated by open balls defined in terms of $d$. There are more general ways to study convergence and alternative topologies (\eg Fréchet topologies or weak topologies) can be both useful and necessary. 
\begin{example}
	\begin{enumerate}[(i)]
		\item Let $\mathcal{X}$ be a set and define 
		\begin{align*}
			d(x,y) := 
			\begin{cases}
				1 \qquad x \neq y \\
				0 \qquad x = y \\
			\end{cases}
			. 
		\end{align*}
		It is easy to see $d$ satisfies the axioms of a metric and $\mathcal{X}$ is complete with respect to $d$. This particular choice leads to the discrete topology. 
		\item Let $\mathcal{X} = \Cont([a,b],\C)$ be the space of continuous functions on an interval. Then one naturally considers the metric  
		\begin{align*}
			d_{\infty}(f,g) := \sup_{x \in [a,b]} \abs{f(x) - g(x)} = \max_{x \in [a,b]} \abs{f(x) - g(x)}
		\end{align*}
		with respect to which $\Cont([a,b],\C)$ is complete. 
	\end{enumerate}
\end{example}
A more peculiar way of measuring distances -- one which is adapted to the linear structure of vector spaces already -- are \emph{norms}. 
\begin{defn}[Normed space]
	Let $\mathcal{X}$ be a vector space. A mapping $\norm{\cdot} : \mathcal{X} \longrightarrow [0,+\infty)$ with properties 
	\begin{enumerate}[(i)]
		\item $\norm{x} = 0$ if and only if $x = 0$, 
		\item $\norm{\alpha x} = \abs{\alpha} \, \norm{x}$, and 
		\item $\norm{x + y} \leq \norm{x} + \norm{y}$, 
	\end{enumerate}
	for all $x,y \in \mathcal{X}$, $\alpha \in \C$, is called norm. The pair $(\mathcal{X},\norm{\cdot})$ is then referred to as normed space. 
\end{defn}
A norm on $\mathcal{X}$ quite naturally induces a metric by setting 
\begin{align*}
	d(x,y) := \norm{x-y}
\end{align*}
for all $x,y \in \mathcal{X}$. Unless specifically mentioned otherwise, one always works with the metric induced by the norm. 
\begin{defn}[Banach space]
	A complete normed space is a Banach space. 
\end{defn}
\begin{example}
	\begin{enumerate}[(i)]
		\item The space $\mathcal{X} = \Cont([a,b],\C)$ from the previous list of examples has a norm, the sup norm 
		\begin{align*}
			\norm{f}_{\infty} = \sup_{x \in [a,b]} \abs{f(x)} 
			. 
		\end{align*}
		Since $\Cont([a,b],\C)$ is complete, it is a Banach space. 
		\item Another important example are the $L^p$ spaces we will get to know in Chapter~\ref{hilbert_spaces:facts_on_Lp}. 
	\end{enumerate}
\end{example}
If we add even more structure, we arrive at the notion of 
\begin{defn}[pre-Hilbert space and Hilbert space]
	A pre-Hilbert space is a complex vector space $\Hil$ with scalar product 
	\begin{align*}
		\scpro{\cdot}{\cdot} : \Hil \times \Hil \longrightarrow \C 
		, 
	\end{align*}
	\ie a mapping with properties 
	\begin{enumerate}[(i)]
		\item $\scpro{\varphi}{\varphi} \geq 0$ and $\scpro{\varphi}{\varphi} = 0$ implies $\varphi = 0$ (positive definiteness), 
		\item $\scpro{\varphi}{\psi}^* = \scpro{\psi}{\varphi}$, and 
		\item $\scpro{\varphi}{\alpha \psi + \chi} = \alpha \scpro{\varphi}{\psi} + \scpro{\varphi}{\chi}$ 
	\end{enumerate}
	for all $\varphi , \psi , \chi \in \Hil$ and $\alpha \in \C$. This induces a natural norm $\norm{\varphi} := \sqrt{\sscpro{\varphi}{\varphi}}$ and metric $d(\varphi,\psi) := \norm{\varphi - \psi}$, $\varphi , \psi \in \Hil$. If $\Hil$ is complete with respect to the induced metric, it is a Hilbert space. 
\end{defn}
\begin{example}
	\begin{enumerate}[(i)]
		\item $\C^n$ with scalar product
		\begin{align*}
			\scpro{z}{w} := \sum_{l = 1}^n z_j^* \, w_j 
		\end{align*}
		is a Hilbert space. 
		\item $\Cont([a,b],\C)$ with scalar product 
		\begin{align*}
			\scpro{f}{g} := \int_a^b \dd x \, f(x)^* \, g(x) 
		\end{align*}
		is just a pre-Hilbert space, since it is not complete. 
	\end{enumerate}
\end{example}
%


\section{Orthonormal bases and orthogonal subspaces} 
\label{hilbert_spaces:onb}

Hilbert spaces have the important notion of orthonormal vectors and sequences which do not exist in Banach spaces. 
\begin{defn}[Orthonormal set]
	Let $\mathcal{I}$ be a countable index set. A family of vectors $\{ \varphi_k \}_{k \in \mathcal{I}}$ is called orthonormal set if for all $k,j \in \mathcal{I}$
	\begin{align*}
		\scpro{\varphi_k}{\varphi_j} = \delta_{kj} 
	\end{align*}
	holds. 
\end{defn}
As we will see, all vectors in a separable Hilbert spaces can be written in terms of a countable orthonormal basis. Especially when we want to approximate elements in a Hilbert space by elements in a proper closed subspace, the vector of best approximation can be written as a linear combination of basis vectors. 
\begin{defn}[Orthonormal basis]
	Let $\mathcal{I}$ be a countable index set. An orthonormal set of vectors $\{ \varphi_k \}_{k \in \mathcal{I}}$ is called orthonormal basis if and only if for all $\psi \in \Hil$, we have 
	\begin{align*}
		\psi = \sum_{k \in \mathcal{I}} \sscpro{\varphi_k}{\psi} \, \varphi_k 
		. 
	\end{align*}
	If $\mathcal{I}$ is countably infinite, $\mathcal{I} \cong \N$, then this means the sequence $\psi_n := \sum_{j = 1}^n \sscpro{\varphi_j}{\psi} \, \varphi_j$ of partial converges in norm to $\psi$, 
	\begin{align*}
		\lim_{n \rightarrow \infty} \Bnorm{\psi - \mbox{$\sum_{j = 1}^n$} \sscpro{\varphi_j}{\psi} \, \varphi_j} = 0
	\end{align*}
\end{defn}
With this general notion of orthogonality, we have a Pythagorean theorem: 
\begin{thm}[Pythagoras]
	Given a finite orthonormal family $\{ \varphi_1 , \ldots , \varphi_n \}$ in a pre-Hilbert space $\Hil$ and $\varphi \in \Hil$, we have 
	\begin{align*}
		\bnorm{\varphi}^2 = \mbox{$\sum_{k = 1}^n$} \babs{\sscpro{\varphi_k}{\varphi}}^2 + \bnorm{\varphi - \mbox{$\sum_{k = 1}^n$} \sscpro{\varphi_k}{\varphi} \, \varphi_k}^2 
		. 
	\end{align*}
\end{thm}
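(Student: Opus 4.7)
The plan is to write $\varphi$ as the sum of its orthogonal projection onto the span of $\{\varphi_1,\ldots,\varphi_n\}$ and a remainder, then verify that the two pieces are orthogonal and apply the two-term Pythagorean identity.

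First I would introduce the shorthand $\psi := \sum_{k=1}^n \sscpro{\varphi_k}{\varphi}\,\varphi_k$ and $r := \varphi - \psi$, so that trivially $\varphi = \psi + r$. The key computation is that $r$ is orthogonal to each $\varphi_j$: using sesquilinearity and the orthonormality $\sscpro{\varphi_k}{\varphi_j} = \delta_{kj}$,
\begin{align*}
	\sscpro{\varphi_j}{r} = \sscpro{\varphi_j}{\varphi} - \sum_{k=1}^n \sscpro{\varphi_k}{\varphi}\,\sscpro{\varphi_j}{\varphi_k} = \sscpro{\varphi_j}{\varphi} - \sscpro{\varphi_j}{\varphi} = 0.
\end{align*}
By linearity of the scalar product in the second argument, this immediately gives $\sscpro{\psi}{r} = 0$.

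Next I would expand $\bnorm{\varphi}^2 = \sscpro{\psi + r}{\psi + r}$. Using the conjugate symmetry and linearity axioms, the cross terms $\sscpro{\psi}{r}$ and $\sscpro{r}{\psi} = \sscpro{\psi}{r}^*$ both vanish, leaving
\begin{align*}
	\bnorm{\varphi}^2 = \bnorm{\psi}^2 + \bnorm{r}^2.
\end{align*}
Finally, a second application of orthonormality computes
\begin{align*}
	\bnorm{\psi}^2 = \sum_{j,k=1}^n \sscpro{\varphi_k}{\varphi}^* \sscpro{\varphi_j}{\varphi}\,\sscpro{\varphi_k}{\varphi_j} = \sum_{k=1}^n \babs{\sscpro{\varphi_k}{\varphi}}^2,
\end{align*}
and substituting back yields the claimed identity.

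There is no real obstacle here; the only thing to be careful about is the convention for sesquilinearity (linear in the second slot, antilinear in the first) so that the bookkeeping of complex conjugates in $\sscpro{\psi}{r}$ and in $\bnorm{\psi}^2$ comes out correctly. Note that completeness of $\Hil$ is not used, so the statement genuinely holds in any pre-Hilbert space.
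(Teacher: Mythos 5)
Your proof is correct and follows essentially the same route as the paper: decompose $\varphi = \psi + \psi^\perp$ with $\psi = \sum_{k=1}^n \sscpro{\varphi_k}{\varphi}\,\varphi_k$, verify orthogonality of the two pieces via orthonormality, and expand $\norm{\varphi}^2$. You spell out the orthogonality check and the computation of $\norm{\psi}^2$ in more detail than the paper, which just asserts them, but the argument is the same.
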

\begin{proof}
	It is easy to check that $\psi := \sum_{k = 1}^n \sscpro{\varphi_k}{\varphi} \, \varphi_k$ and $\psi^{\perp} := \varphi - \sum_{k = 1}^n \sscpro{\varphi_k}{\varphi} \, \varphi_k$ are orthogonal and $\varphi = \psi + \psi^{\perp}$. Hence, we obtain 
	\begin{align*}
		\norm{\varphi}^2 &= \sscpro{\varphi}{\varphi} 
		= \sscpro{\psi+\psi^{\perp}}{\psi+\psi^{\perp}} 
		= \sscpro{\psi}{\psi} + \sscpro{\psi^{\perp}}{\psi^{\perp}} 
		\\
		&= \bnorm{\mbox{$\sum_{k = 1}^n$} \sscpro{\varphi_k}{\varphi} \, \varphi_k}^2 + \bnorm{\varphi - \mbox{$\sum_{k = 1}^n$} \sscpro{\varphi_k}{\varphi} \, \varphi_k}^2 
		. 
	\end{align*}
	This concludes the proof. 
\end{proof}
A simple corollary are Bessel's inequality and the Cauchy-Schwarz inequality. 
\begin{thm}
	Let $\Hil$ be a pre-Hilbert space. 
	\begin{enumerate}[(i)]
		\item Bessel's inequality holds: let $\bigl \{ \varphi_1 , \ldots \varphi_n \bigr \}$ be a finite orthonormal sequence. Then 
		\begin{align*}
			\snorm{\psi}^2 \geq \sum_{j = 1}^n \sabs{\sscpro{\varphi_j}{\psi}}^2 
			. 
		\end{align*}
		holds for all $\psi \in \Hil$. 
		\item The Cauchy-Schwarz inequality holds, \ie 
		\begin{align*}
			\sabs{\sscpro{\varphi}{\psi}} \leq \snorm{\varphi} \snorm{\psi} 
		\end{align*}
		is valid for all $\varphi , \psi \in \Hil$
	\end{enumerate}
\end{thm}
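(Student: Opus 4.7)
Both statements will follow rather directly from the Pythagorean theorem that was just established, so my plan is to reduce Bessel's inequality to Pythagoras in one line and then reduce Cauchy--Schwarz to the $n=1$ case of Bessel's inequality.

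For part (i), the Pythagorean theorem applied to the orthonormal family $\{\varphi_1,\ldots,\varphi_n\}$ and the vector $\psi$ gives
\begin{align*}
	\snorm{\psi}^2 = \sum_{j=1}^n \sabs{\sscpro{\varphi_j}{\psi}}^2 + \Bnorm{\psi - \mbox{$\sum_{j=1}^n$} \sscpro{\varphi_j}{\psi}\,\varphi_j}^2.
\end{align*}
Since the second summand on the right-hand side is a squared norm, it is nonnegative, and dropping it yields Bessel's inequality. No further work is required here.

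For part (ii), I would first dispose of the trivial case $\varphi = 0$, where both sides of the Cauchy--Schwarz inequality are zero by positive definiteness and sesquilinearity of the scalar product. If $\varphi \neq 0$, then $\snorm{\varphi} > 0$ and the single vector $\tilde{\varphi} := \varphi/\snorm{\varphi}$ forms an orthonormal family of size $n=1$, since $\sscpro{\tilde{\varphi}}{\tilde{\varphi}} = \snorm{\varphi}^{-2}\sscpro{\varphi}{\varphi} = 1$. Applying Bessel's inequality from part (i) with $n=1$ to this family yields
\begin{align*}
	\snorm{\psi}^2 \geq \babs{\sscpro{\tilde{\varphi}}{\psi}}^2 = \frac{1}{\snorm{\varphi}^2}\,\babs{\sscpro{\varphi}{\psi}}^2,
\end{align*}
where I used sesquilinearity to pull the factor $1/\snorm{\varphi}$ out of the scalar product. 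Multiplying through by $\snorm{\varphi}^2$ and taking square roots gives $\sabs{\sscpro{\varphi}{\psi}} \leq \snorm{\varphi}\,\snorm{\psi}$, as claimed.

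There is essentially no obstacle here: the only substantive input is Pythagoras, and the only subtlety is handling the degenerate case $\varphi = 0$ before dividing by $\snorm{\varphi}$. An alternative route would be to prove Cauchy--Schwarz first by the standard trick of expanding $0 \leq \snorm{\psi - \lambda \varphi}^2$ and optimizing over $\lambda \in \C$, but since Pythagoras is already available, the Bessel reduction is cleaner and presents both inequalities as facets of the same decomposition.
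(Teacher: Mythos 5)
Your proof is correct and follows essentially the same route as the paper: Bessel from Pythagoras by dropping the nonnegative remainder term, and then Cauchy--Schwarz from the $n=1$ case of Bessel after normalizing $\varphi$ and handling $\varphi = 0$ separately.
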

\begin{proof}
	\begin{enumerate}[(i)]
		\item This follows trivially from the previous Theorem as $\snorm{\psi^{\perp}}^2 \geq 0$. 
		\item Pick $\varphi , \psi \in \Hil$. In case $\varphi = 0$, the inequality holds. So assume $\varphi \neq 0$ and define 
		\begin{align*}
			\varphi_1 := \frac{\varphi}{\norm{\varphi}} 
		\end{align*}
		which has norm $1$. We can apply (i) for $n = 1$ to conclude 
		\begin{align*}
			\norm{\psi}^2 \geq \abs{\sscpro{\varphi_1}{\psi}}^2 
			= \frac{1}{\norm{\varphi}^2} \abs{\sscpro{\varphi}{\psi}}^2 
			. 
		\end{align*}
		This is equivalent to the Cauchy-Schwarz inequality. 
	\end{enumerate}
\end{proof}
An important corollary says that the scalar product is continuous with respect to the norm topology. This is not at all surprising, after all the norm is induced by the scalar product! 
\begin{cor}\label{hilbert_spaces:onb:cor:continuity_scalar_product}
	Let $\Hil$ be a Hilbert space. Then the scalar product is continuous with respect to the norm topology, \ie for two sequences $(\varphi_n)_{n \in \N}$ and $(\psi_m)_{m \in \N}$ that converge to $\varphi$ and $\psi$, respectively, we have 
	\begin{align*}
		\lim_{n,m \rightarrow \infty} \sscpro{\varphi_n}{\psi_m} &= \sscpro{\varphi}{\psi} 
		. 
	\end{align*}
\end{cor}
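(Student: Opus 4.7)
The plan is to use the sesquilinearity of the scalar product to split the difference $\sscpro{\varphi_n}{\psi_m} - \sscpro{\varphi}{\psi}$ into two pieces that each depend linearly on one of the errors $\varphi_n - \varphi$ or $\psi_m - \psi$, and then bound each piece by the Cauchy–Schwarz inequality proved just above.

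Concretely, I would first write
\begin{align*}
  \sscpro{\varphi_n}{\psi_m} - \sscpro{\varphi}{\psi}
  = \sscpro{\varphi_n - \varphi}{\psi_m} + \sscpro{\varphi}{\psi_m - \psi},
\end{align*}
using conjugate-linearity in the first argument and linearity in the second. Applying Cauchy–Schwarz to each term then gives
\begin{align*}
  \babs{\sscpro{\varphi_n}{\psi_m} - \sscpro{\varphi}{\psi}}
  \leq \snorm{\varphi_n - \varphi}\,\snorm{\psi_m} + \snorm{\varphi}\,\snorm{\psi_m - \psi}.
\end{align*}

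Next I need the factor $\snorm{\psi_m}$ to stay under control as $m \to \infty$; this is the only mildly non-trivial point. Since $(\psi_m)$ converges in norm to $\psi$, the reverse triangle inequality gives $\bigl|\snorm{\psi_m} - \snorm{\psi}\bigr| \le \snorm{\psi_m - \psi} \to 0$, so $\snorm{\psi_m}$ is bounded by some $M < \infty$. Therefore
\begin{align*}
  \babs{\sscpro{\varphi_n}{\psi_m} - \sscpro{\varphi}{\psi}}
  \leq M\,\snorm{\varphi_n - \varphi} + \snorm{\varphi}\,\snorm{\psi_m - \psi},
\end{align*}
and both summands on the right tend to $0$ as $n,m \to \infty$, which is exactly the claim.

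No real obstacle arises here; the result is essentially immediate from Cauchy–Schwarz once the difference is rearranged appropriately. The only subtlety worth spelling out is the boundedness of $\snorm{\psi_m}$, which ensures that the first term really does vanish in the joint limit rather than merely in the iterated limit.
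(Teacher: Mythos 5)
Your proof is correct and follows essentially the same route as the paper: rewrite the difference via sesquilinearity into two cross terms, bound each by Cauchy--Schwarz, and use the boundedness of a convergent sequence's norms to close the argument. The only cosmetic difference is the choice of which term carries the sequence index (you put $\psi_m$ in the first term and $\varphi$ in the second; the paper uses $\psi$ and $\varphi_n$), but these are symmetric variants of the same decomposition.
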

\begin{proof}
	Let $(\varphi_n)_{n \in \N}$ and $(\psi_m)_{m \in \N}$ be two sequences in $\Hil$ that converge to $\varphi$ and $\psi$, respectively. Then by Cauchy-Schwarz, we have 
	\begin{align*}
		\lim_{n,m \rightarrow \infty} \babs{\sscpro{\varphi}{\psi} - \sscpro{\varphi_n}{\psi_m}} &= \lim_{n,m \rightarrow \infty} \babs{\sscpro{\varphi - \varphi_n}{\psi} - \sscpro{\varphi_n}{\psi_m - \psi}} 
		\\
		&\leq \lim_{n,m \rightarrow \infty} \babs{\sscpro{\varphi - \varphi_n}{\psi}} + \lim_{n,m \rightarrow 0} \babs{\sscpro{\varphi_n}{\psi_m - \psi}} 
		\\
		&\leq \lim_{n,m \rightarrow \infty} \snorm{\varphi - \varphi_n} \, \snorm{\psi} + \lim_{n,m \rightarrow \infty} \snorm{\varphi_n} \, \snorm{\psi_m - \psi} 
		= 0 
	\end{align*}
	since there exists some $C > 0$ such that $\norm{\varphi_n} \leq C$ for all $n \in \N$. 
\end{proof}
\begin{defn}[Separable Hilbert space]
	A Hilbert space $\Hil$ is called separable if there exists a countable dense subset. 
\end{defn}
Before we prove that a Hilbert space is separable exactly if it admits a \emph{countable} basis, we need to introduce the notion of orthogonal complement: if $A$ is a subset of a pre-Hilbert space $\Hil$, then we define 
\begin{align*}
	A^{\perp} := \bigl \{ \varphi \in \Hil \; \vert \; \sscpro{\varphi}{\psi} = 0 \; \forall \psi \in A \bigr \} 
	. 
\end{align*}
The following few properties of the orthogonal complement follow immediately from its definition: 
\begin{enumerate}[(i)]
	\item $\{ 0 \}^{\perp} = \Hil$ and $\Hil^{\perp} = \{ 0 \}$. 
	\item $A^{\perp}$ is a closed linear subspace of $\Hil$ for \emph{any} subset $A \subseteq \Hil$. 
	\item If $A \subseteq B$, then $B^{\perp} \subseteq A^{\perp}$. 
	\item If we denote the sub vector space spanned by the elements in $A$ by $\mathrm{span} \, A$, we have 
	\begin{align*}
		A^{\perp} = \bigl ( \mathrm{span} \, A \bigr )^{\perp} = \bigl ( \overline{\mathrm{span} \, A} \bigr )^{\perp}
	\end{align*}
	where $\overline{\mathrm{span} \, A}$ is the completion of $\mathrm{span} \, A$ with respect to the norm topology. 
\end{enumerate}
If $(\Hil,d)$ is a metric space, we can define the distance between a point $\varphi \in \Hil$ and a subset $A \subseteq \Hil$ as 
\begin{align*}
	d(\varphi,A) := \inf_{\psi \in A} d(\varphi,\psi) 
	. 
\end{align*}
If there exists $\varphi_0 \in A$ which minimizes the distance, \ie $d(\varphi,A) = d(\varphi,\varphi_0)$, then $\varphi_0$ is called \emph{element of best approximation} for $\varphi$ in $A$. This notion is helpful to understand why and how elements in an infinite-dimensional Hilbert space can be approximated by finite linear combinations -- something that is used in numerics all the time.  

If $A \subset \Hil$ is a convex subset of a Hilbert space $\Hil$, then one can show that there always exists an element of best approximation. In case $A$ is a linear subspace of $\Hil$, it is given by projecting an arbitrary $\psi \in \Hil$ down to the subspace $A$. 
\begin{thm}\label{hilbert_spaces:onb:thm:best_approx}
	Let $A$ be a closed convex subset of a Hilbert space $\Hil$. Then there exists for each $\varphi \in \Hil$ exactly one $\varphi_0 \in A$ such that 
	\begin{align*}
		d(\varphi,A) = d(\varphi,\varphi_0) 
		. 
	\end{align*}
\end{thm}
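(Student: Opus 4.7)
The plan is to use the standard trick of a minimizing sequence combined with the parallelogram identity
\[
\bnorm{x+y}^2 + \bnorm{x-y}^2 = 2\bnorm{x}^2 + 2\bnorm{y}^2 ,
\]
which is a simple algebraic consequence of the fact that the norm on $\Hil$ is induced by a scalar product. Convexity of $A$ then supplies the crucial geometric input: midpoints of elements of $A$ are again in $A$.

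Set $d := d(\varphi,A) = \inf_{\psi \in A} \norm{\varphi - \psi}$. First I would pick a minimizing sequence $(\varphi_n) \subset A$ with $\norm{\varphi - \varphi_n} \to d$. Applying the parallelogram identity to $x := \varphi - \varphi_n$ and $y := \varphi - \varphi_m$ yields
\[
\bnorm{\varphi_m - \varphi_n}^2 = 2\bnorm{\varphi - \varphi_n}^2 + 2\bnorm{\varphi - \varphi_m}^2 - 4 \Bnorm{\varphi - \tfrac{1}{2}(\varphi_n + \varphi_m)}^2 .
\]
By convexity, $\tfrac{1}{2}(\varphi_n + \varphi_m) \in A$, so the subtracted term is at least $4d^2$. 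Hence the right-hand side tends to $2d^2 + 2d^2 - 4d^2 = 0$ as $n,m \to \infty$, proving that $(\varphi_n)$ is Cauchy. Since $\Hil$ is complete and $A$ is closed, the sequence converges to some $\varphi_0 \in A$, and continuity of the norm (a consequence of Corollary~\ref{hilbert_spaces:onb:cor:continuity_scalar_product}) gives $\norm{\varphi - \varphi_0} = d$.

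For uniqueness, suppose $\varphi_0, \varphi_0' \in A$ both realize the infimum. Applying the parallelogram identity once more to $\varphi - \varphi_0$ and $\varphi - \varphi_0'$ gives
\[
\bnorm{\varphi_0 - \varphi_0'}^2 = 4d^2 - 4 \Bnorm{\varphi - \tfrac{1}{2}(\varphi_0 + \varphi_0')}^2 \leq 4d^2 - 4d^2 = 0,
\]
where the inequality again uses that the midpoint lies in $A$ by convexity. Hence $\varphi_0 = \varphi_0'$.

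The only mildly subtle point is the interplay between the two uses of convexity (once to show the minimizing sequence is Cauchy, once to rule out a second minimizer); both are routine once one has the parallelogram identity in hand. The main structural obstacle a reader should be aware of is that completeness of $\Hil$ is essential — without it one has no reason for the minimizing sequence to converge — and so is closedness of $A$, which guarantees that the limit still lies in $A$. Neither of these would be available in a general pre-Hilbert space or for non-closed convex sets.
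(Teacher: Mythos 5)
Your proof is correct and the existence part follows exactly the paper's argument: a minimizing sequence, the parallelogram identity, convexity of $A$ to bound the midpoint term below by $4 d(\varphi,A)^2$, and then completeness of $\Hil$ plus closedness of $A$ to extract the limit $\varphi_0 \in A$. (Incidentally, you correctly wrote $4 d(\varphi,A)^2$ where the paper's displayed estimate has a missing square.) Where you genuinely diverge is in the uniqueness argument: the paper defines an alternating sequence $\tilde{\psi}_{2n} := \varphi_0$, $\tilde{\psi}_{2n+1} := \varphi_0'$ and re-runs the Cauchy argument on it, concluding that the two candidate minimizers coincide because the alternating sequence must converge. You instead apply the parallelogram identity a single time directly to $\varphi - \varphi_0$ and $\varphi - \varphi_0'$, which yields $\bnorm{\varphi_0 - \varphi_0'}^2 \leq 0$ immediately. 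Your version is shorter and more transparent, and it has the pedagogical advantage of making it visible that uniqueness uses exactly the same inequality as existence rather than a separate sequential device; the paper's alternating-sequence trick is a standard way to avoid re-deriving the inequality, but here it is arguably heavier than necessary. Both arguments are valid.
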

\begin{proof}
	We choose a sequence $(\psi_n)_{n \in \N}$ in $A$ with $d(\varphi,\psi_n) = \norm{\varphi - \psi_n} \rightarrow d(x,A)$. This sequence is also a Cauchy sequence: we add and subtract $\varphi$ to get 
	\begin{align*}
		\bnorm{\psi_n - \psi_m}^2 &= \bnorm{(\psi_n - \varphi) + (\varphi - \psi_m)}^2 
		. 
	\end{align*}
	If $\Hil$ were a normed space, we could have to use the triangle inequality to estimate the right-hand side from above. However, $\Hil$ is a Hilbert space and by using the parallelogram identity,\footnote{For all $\varphi , \psi \in \Hil$, the identity $2 \norm{\varphi}^2 + 2 \norm{\psi}^2 = \norm{\varphi + \psi}^2 + \norm{\varphi - \psi}^2$ holds. } we see that the right-hand side is actually \emph{equal} to 
	\begin{align*}
		\bnorm{\psi_n - \psi_m}^2 &= 2 \bnorm{\psi_n - \varphi}^2 + 2 \bnorm{\psi_m - \varphi}^2 - \bnorm{\psi_n + \psi_m - 2 \varphi}^2 
		\\
		&
		= 2 \bnorm{\psi_n - \varphi}^2 + 2 \bnorm{\psi_m - \varphi}^2 - 4 \bnorm{\tfrac{1}{2} (\psi_n + \psi_m) - \varphi}^2 
		\\
		&\leq 2 \bnorm{\psi_n - \varphi}^2 + 2 \bnorm{\psi_m - \varphi}^2 - 4 d(\varphi,A)
		\\
		&
		\xrightarrow{n \rightarrow \infty} 2 d(\varphi,A) + 2 d(\varphi,A) - 4 d(\varphi,A) 
		= 0
		. 
	\end{align*}
	By convexity, $\frac{1}{2} (\psi_n + \psi_m)$ is again an element of $A$. This is crucial once again for the uniqueness argument. Letting $n,m \rightarrow \infty$, we see that $(\psi_n)_{n \in \N}$ is a Cauchy sequence in $A$ which converges in $A$ as it is a closed subset of $\Hil$. Let us call the limit point $\varphi_0 := \lim_{n \rightarrow \infty} \psi_n$. Then $\varphi_0$ is \emph{an} element of best approximation, 
	\begin{align*}
		\bnorm{\varphi - \varphi_0} = \lim_{n \rightarrow \infty} \bnorm{\varphi - \psi_n} = d(\varphi,A)
		. 
	\end{align*}
	To show uniqueness, we assume that there exists another element of best approximation $\varphi_0' \in A$. Define the sequence $(\tilde{\psi}_n)_{n \in \N}$ by $\tilde{\psi}_{2n} := \varphi_0$ for even indices and $\tilde{\psi}_{2n + 1} := \varphi_0'$ for odd indices. By assumption, we have $\norm{\varphi - \varphi_0} = d(\varphi,A) = \norm{\varphi - \varphi_0'}$ and thus, by repeating the steps above, we conclude $(\tilde{\psi}_n)_{n \in \N}$ is a Cauchy sequence that converges to some element. However, since the sequence is alternating, the two elements $\varphi'_0 = \varphi_0$ are in fact identical. 
\end{proof}
As we have seen, the condition that the set is convex and closed is crucial in the proof. Otherwise the minimizer may not be unique or even contained in the set. 
\begin{cor}
	Let $E$ be a closed subvector space of the Hilbert space $\Hil$. Then for any $\varphi \in \Hil$, there exists $\varphi_0 \in E$ such that $d(\varphi,E) = d(\varphi,\varphi_0)$. 
	\marginpar{\small 2009.11.04}
\end{cor}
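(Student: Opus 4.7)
My plan is to deduce this corollary directly from Theorem~\ref{hilbert_spaces:onb:thm:best_approx} by verifying that a closed linear subspace $E \subseteq \Hil$ is in particular a closed \emph{convex} subset of $\Hil$. Once convexity is established, the theorem immediately supplies the desired element $\varphi_0 \in E$ realizing the distance $d(\varphi,E)$.

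The key observation is that every linear subspace is convex: for any $\psi_1, \psi_2 \in E$ and $\alpha \in [0,1]$, the convex combination $\alpha \psi_1 + (1-\alpha) \psi_2$ is a particular linear combination of elements of $E$ and hence lies in $E$ by definition of a subvector space. Since $E$ is closed by hypothesis, it satisfies all assumptions of Theorem~\ref{hilbert_spaces:onb:thm:best_approx}, which yields the existence (and even uniqueness) of some $\varphi_0 \in E$ with $d(\varphi,\varphi_0) = d(\varphi,E)$.

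There is essentially no obstacle here; the statement is weaker than the theorem it is being derived from, since we are only asked to produce an element of best approximation (not to show uniqueness), and the extra structure of $E$ being a linear subspace actually gives \emph{more} than convexity. The only ``work'' is the one-line verification of convexity, and the proof is two lines long. If one wanted to say more, one could remark that $\varphi_0$ may be interpreted as the orthogonal projection of $\varphi$ onto $E$, with $\varphi - \varphi_0 \in E^{\perp}$, but this goes beyond what the statement asks for and is best relegated to a subsequent result.
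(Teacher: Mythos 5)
Your proof is correct and is exactly the intended derivation: the corollary is an immediate consequence of Theorem~\ref{hilbert_spaces:onb:thm:best_approx} once one observes that a linear subspace is convex, and the paper leaves this application implicit. Nothing further is needed.
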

This is all very abstract. For the case of a closed subvector space $E \subseteq \Hil$, we can express the element of best approximation in terms of the basis: not surprisingly, it is given by the projection of $\varphi$ onto $E$. 
\begin{thm}\label{hilbert_spaces:onb:thm:best_approx_explicit}
	Let $E \subseteq \Hil$ be a closed subspace of a Hilbert space that is spanned by countably many orthonormal basis vectors $\{ \varphi_k \}_{k \in \mathcal{I}}$. Then for any $\varphi \in \Hil$, the element of best approximation $\varphi_0 \in E$ is given by 
	\begin{align*}
		\varphi_0 = \sum_{k \in \mathcal{I}} \sscpro{\varphi_k}{\varphi} \, \varphi_k 
		. 
	\end{align*}
\end{thm}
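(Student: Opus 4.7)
The plan is to identify the proposed $\varphi_0$ as the unique minimizer characterized by the previous theorem (Theorem~\ref{hilbert_spaces:onb:thm:best_approx}) by verifying two things: (a) the series converges so that $\varphi_0$ is really a well-defined element of $E$, and (b) the difference $\varphi - \varphi_0$ is orthogonal to all of $E$, from which the minimization property drops out via Pythagoras.

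First I would establish convergence of the series $\sum_{k \in \mathcal{I}} \sscpro{\varphi_k}{\varphi} \, \varphi_k$. If $\mathcal{I}$ is finite there is nothing to do, so assume $\mathcal{I} \cong \N$. For any $N \leq M$, orthonormality gives $\bnorm{\sum_{k = N}^M \sscpro{\varphi_k}{\varphi} \varphi_k}^2 = \sum_{k = N}^M \babs{\sscpro{\varphi_k}{\varphi}}^2$, and by Bessel's inequality $\sum_{k \in \mathcal{I}} \babs{\sscpro{\varphi_k}{\varphi}}^2 \leq \norm{\varphi}^2 < \infty$. Hence the partial sums form a Cauchy sequence in $\Hil$, and since $\Hil$ is complete and $E$ is closed, the limit $\varphi_0$ lies in $E$.

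Next I would show that $\varphi - \varphi_0 \perp E$. By continuity of the scalar product (Corollary~\ref{hilbert_spaces:onb:cor:continuity_scalar_product}), for any basis vector $\varphi_j$,
\begin{align*}
    \sscpro{\varphi_j}{\varphi - \varphi_0} = \sscpro{\varphi_j}{\varphi} - \sum_{k \in \mathcal{I}} \sscpro{\varphi_k}{\varphi} \, \sscpro{\varphi_j}{\varphi_k} = \sscpro{\varphi_j}{\varphi} - \sscpro{\varphi_j}{\varphi} = 0.
\end{align*}
Thus $\varphi - \varphi_0$ is orthogonal to $\mathrm{span}\{\varphi_k\}_{k \in \mathcal{I}}$, and by continuity it is orthogonal to its closure, which is $E$ by assumption on the basis. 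Now for any $\psi \in E$, the vectors $\varphi - \varphi_0 \in E^{\perp}$ and $\varphi_0 - \psi \in E$ are orthogonal, so Pythagoras gives
\begin{align*}
    \norm{\varphi - \psi}^2 = \norm{(\varphi - \varphi_0) + (\varphi_0 - \psi)}^2 = \norm{\varphi - \varphi_0}^2 + \norm{\varphi_0 - \psi}^2 \geq \norm{\varphi - \varphi_0}^2,
\end{align*}
with equality iff $\psi = \varphi_0$. Hence $\varphi_0$ realizes $d(\varphi, E)$, and uniqueness follows from Theorem~\ref{hilbert_spaces:onb:thm:best_approx} (a closed subspace is convex).

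The only real subtlety is the convergence step together with the passage from orthogonality against each basis vector to orthogonality against all of $E$; both hinge on the continuity of the scalar product, which lets us exchange limits and $\sscpro{\cdot}{\cdot}$. The rest is a direct application of Bessel, Pythagoras, and Theorem~\ref{hilbert_spaces:onb:thm:best_approx}.
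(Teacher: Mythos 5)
Your proof is correct and follows essentially the same strategy as the paper: show $\varphi - \varphi_0 \perp E$, then conclude via Pythagoras. Two small improvements over the paper's presentation are worth noting: you explicitly justify that the defining series for $\varphi_0$ converges (via Bessel and completeness), which the paper takes for granted, and you obtain uniqueness by appealing to Theorem~\ref{hilbert_spaces:onb:thm:best_approx} rather than re-deriving it from scratch by comparing coefficients as the paper does.
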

\begin{proof}
	It is easy to show that $\varphi - \varphi_0$ is orthogonal to any $\psi = \sum_{k \in \mathcal{I}} \lambda_k \, \varphi_k \in E$: we focus on the more difficult case when $E$ is not finite dimensional: then, we have to approximate $\varphi_0$ and $\psi$ by finite linear combinations and take limits. We call $\varphi_0^{(n)} := \sum_{k = 1}^n \sscpro{\varphi_k}{\varphi} \, \varphi_k$ and $\psi^{(m)} := \sum_{l = 1}^m \lambda_l \, \varphi_l$. With that, we have 
	\begin{align*}
		\bscpro{\varphi - \varphi_0^{(n)}}{\psi^{(m)}} &= \Bscpro{\varphi - \mbox{$\sum_{k = 1}^n$} \sscpro{\varphi_k}{\varphi} \, \varphi_k}{\mbox{$\sum_{l = 1}^m$} \lambda_l \, \varphi_l} \
		\\
		&
		= \sum_{l = 1}^m \lambda_l \, \sscpro{\varphi}{\varphi_l} - \sum_{k = 1}^n \sum_{l = 1}^m \lambda_l \, \sscpro{\varphi_k}{\varphi}^* \, \sscpro{\varphi_k}{\varphi_l} 
		\\
		&= \sum_{l = 1}^m \lambda_l \, \sscpro{\varphi}{\varphi_l} \, \Bigl ( 1 - \mbox{$\sum_{k = 1}^n$} \, \delta_{kl} \Bigr ) 
		. 
	\end{align*}
	By continuity of the scalar product, Corollary~\ref{hilbert_spaces:onb:cor:continuity_scalar_product}, we can take the limit $n,m \rightarrow \infty$. The term in parentheses containing the sum is $0$ exactly when $l \in \{ 1 , \ldots , m \}$ and $1$ otherwise. Specifically, if $n \geq m$, the right-hand side vanishes identically. Hence, we have 
	\begin{align*}
		\bscpro{\varphi - \varphi_0}{\psi} &= \lim_{n,m \rightarrow \infty} \bscpro{\varphi - \varphi_0^{(n)}}{\psi^{(m)}} 
		= 0 
		, 
	\end{align*}
	in other words $\varphi - \varphi_0 \in E^{\perp}$. This, in turn, implies by the Pythagorean theorem that 
		\begin{align*}
			\norm{\varphi - \psi}^2 &= \norm{\varphi - \varphi_0}^2 + \norm{\varphi_0 - \psi}^2 
			\geq \norm{\varphi - \varphi_0}^2 
		\end{align*}
		and hence $\norm{\varphi - \varphi_0} = d(\varphi,E)$. Put another way, $\varphi_0$ is \emph{an} element of best approximation. Let us now show uniqueness. Assume, there exists another element of best approximation $\varphi'_0 = \sum_{k \in \mathcal{I}} \lambda'_k \, \varphi_k$. Then we know by repeating the previous calculation backwards that $\varphi - \varphi'_0 \in E^{\perp}$ and the scalar product with respect to any of the basis vectors $\varphi_k$ which span $E$ has to vanish, 
		\begin{align*}
			0 = \bscpro{\varphi_k}{\varphi - \varphi'_0} &= \sscpro{\varphi_k}{\varphi} - \sum_{l \in \mathcal{I}} {\lambda'_l} \, \sscpro{\varphi_k}{\varphi_l} 
			= \sscpro{\varphi_k}{\varphi} - \sum_{l \in \mathcal{I}} {\lambda'_l} \, \delta_{kl} 
			\\
			&= \sscpro{\varphi_k}{\varphi} - {\lambda'_k} 
			. 
		\end{align*}
		This means the coefficients with respect to the basis $\{ \varphi_k \}_{k \in \mathcal{I}}$ all agree with those of $\varphi_0$. Hence, the element of approximation is unique, $\varphi_0 = \varphi_0'$, and given by the projection of $\varphi$ onto $E$. 
\end{proof}
\begin{thm}\label{hilbert_spaces:onb:thm:direct_sum_decomp}
	Let $E$ be a closed linear subspace of a Hilbert space $\Hil$. Then 
	\begin{enumerate}[(i)]
		\item $\Hil = E \oplus E^{\perp}$, \ie every vector $\varphi \in \Hil$ can be uniquely decomposed as $\varphi = \psi + \psi^{\perp}$ with $\psi \in E$, $\psi^{\perp} \in E^{\perp}$. 
		\item $E^{\perp \perp} = E$. 
	\end{enumerate}
\end{thm}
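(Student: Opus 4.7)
The plan is to derive both statements from the best-approximation theorem (Theorem~\ref{hilbert_spaces:onb:thm:best_approx}) together with elementary manipulations with the scalar product. Since $E$ is a closed linear subspace of $\Hil$, it is in particular closed and convex, so for every $\varphi \in \Hil$ there is a unique $\psi \in E$ realising $d(\varphi, E) = \norm{\varphi - \psi}$.

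For (i), I would define $\psi$ to be this element of best approximation and set $\psi^{\perp} := \varphi - \psi$. The heart of the argument is to show $\psi^{\perp} \in E^{\perp}$ without appealing to a countable basis (so that the statement holds for general closed subspaces, not only for those covered by Theorem~\ref{hilbert_spaces:onb:thm:best_approx_explicit}). For this I would use the variational characterisation: for any $\chi \in E$ and any $\alpha \in \C$, the vector $\psi + \alpha \chi$ lies in $E$, so
\begin{align*}
\norm{\varphi - \psi}^2 \leq \norm{\varphi - \psi - \alpha \chi}^2 = \norm{\varphi - \psi}^2 - 2 \, \Re \bigl ( \alpha^* \sscpro{\chi}{\varphi - \psi} \bigr ) + \abs{\alpha}^2 \norm{\chi}^2 .
\end{align*}
Choosing $\alpha = t \, \sscpro{\chi}{\varphi - \psi}$ with $t \in \R$ small and letting $t \to 0$ from both sides forces $\sscpro{\chi}{\varphi - \psi} = 0$; hence $\psi^{\perp} \in E^{\perp}$. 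For uniqueness of the decomposition I would simply note that if $\varphi = \psi_1 + \psi_1^{\perp} = \psi_2 + \psi_2^{\perp}$ with $\psi_j \in E$ and $\psi_j^{\perp} \in E^{\perp}$, then $\psi_1 - \psi_2 \in E \cap E^{\perp}$, and any vector in this intersection has vanishing norm by positive definiteness of the scalar product.

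For (ii), the inclusion $E \subseteq E^{\perp\perp}$ is immediate from the definition of the orthogonal complement (every $\varphi \in E$ satisfies $\sscpro{\chi}{\varphi} = 0$ for all $\chi \in E^{\perp}$). The reverse inclusion is where part~(i) becomes essential: given $\varphi \in E^{\perp\perp}$, decompose $\varphi = \psi + \psi^{\perp}$ using (i). Since $\psi \in E \subseteq E^{\perp\perp}$, the difference $\psi^{\perp} = \varphi - \psi$ lies in $E^{\perp\perp}$, and by construction it also lies in $E^{\perp}$. Hence $\psi^{\perp} \in E^{\perp} \cap E^{\perp\perp}$, which forces $\sscpro{\psi^{\perp}}{\psi^{\perp}} = 0$ and thus $\psi^{\perp} = 0$. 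Consequently $\varphi = \psi \in E$.

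The only genuinely delicate step is the derivation $\psi^{\perp} \in E^{\perp}$ from the minimisation property; everything else (uniqueness, double orthogonal complement) is a short algebraic consequence. Closedness of $E$ is used once, at the very start, to guarantee existence of the best approximation; the linearity of $E$ is what allows the variational trick $\psi + \alpha \chi \in E$ for all $\alpha \in \C$, which is stronger than mere convexity and is what actually produces orthogonality rather than just a one-sided inequality.
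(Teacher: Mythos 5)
Your proof is correct, and while the overall architecture matches the paper's (existence of the best approximation from Theorem~\ref{hilbert_spaces:onb:thm:best_approx}, decomposition $\varphi = \psi + \psi^{\perp}$, uniqueness via $E \cap E^{\perp} = \{0\}$, and part~(ii) read off from part~(i)), you handle the crucial step $\psi^{\perp} \in E^{\perp}$ in a genuinely different way. The paper appeals to ``the proof of the previous theorem'' — i.e.\ Theorem~\ref{hilbert_spaces:onb:thm:best_approx_explicit} — which establishes $\varphi - \varphi_0 \in E^{\perp}$ under the extra hypothesis that $E$ is spanned by a \emph{countable} orthonormal family. You instead use the variational inequality $\norm{\varphi - \psi}^2 \leq \norm{\varphi - \psi - \alpha \chi}^2$ and the clever choice $\alpha = t \, \sscpro{\chi}{\varphi - \psi}$ to squeeze out $\sscpro{\chi}{\varphi - \psi} = 0$ directly. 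This is actually the cleaner route: it requires no separability and no basis, matching the generality of the theorem statement itself, whereas the paper's reduction quietly inherits the countability restriction of the cited lemma (and would, if read literally, make the subsequent separability proposition circular). What the paper's approach buys in exchange is pedagogical economy — it reuses a lemma that was just proved and avoids introducing the quadratic-in-$\alpha$ trick. Your closing remark correctly identifies that it is \emph{linearity} (not mere convexity) of $E$ that makes the variational argument yield full orthogonality, and that closedness is needed only for existence of the minimiser; this is a useful observation that the paper leaves implicit.
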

\begin{proof}
	\begin{enumerate}[(i)]
		\item By Theorem~\ref{hilbert_spaces:onb:thm:best_approx}, for each $\varphi \in \Hil$, there exists $\varphi_0 \in E$ such that $d(\varphi,E) = d(\varphi,\varphi_0)$. From the proof of the previous theorem, we see that $\varphi_0^{\perp} := \varphi - \varphi_0 \in E^{\perp}$. Hence, $\varphi = \varphi_0 + \varphi_0^{\perp}$ is \emph{a} decomposition of $\varphi$. To show that it is unique, assume $\varphi'_0 + {\varphi'_0}^{\perp} = \varphi = \varphi_0 + \varphi_0^{\perp}$ is another decomposition. Then by subtracting, we are led to conclude that 
		\begin{align*}
			E \ni \varphi'_0 - \varphi_0 = \varphi_0^{\perp} - {\varphi'_0}^{\perp} \in E^{\perp} 
		\end{align*}
		holds. On the other hand, $E \cap E^{\perp} = \{ 0 \}$ and thus $\varphi_0 = {\varphi'_0}$ and $\varphi_0^{\perp} = {\varphi'_0}^{\perp}$, the decomposition is unique. 
		\item It is easy to see that $E \subseteq E^{\perp \perp}$. Let $\tilde{\varphi} \in E^{\perp \perp}$. By the same arguments as above, we can decompose $\tilde{\varphi} \in E^{\perp \perp} \subseteq \Hil$ into 
		\begin{align*}
			\tilde{\varphi} = \tilde{\varphi}_0 + {\tilde{\varphi}}_0^{\perp} 
		\end{align*}
		with $\tilde{\varphi}_0 \in E \subseteq E^{\perp \perp}$ and ${\tilde{\varphi}}_0^{\perp} \in E^{\perp}$. Hence, ${\tilde{\varphi}} - \tilde{\varphi}_0 \in E^{\perp \perp} \cap E^{\perp} = (E^{\perp})^{\perp} \cap E^{\perp} = \{ 0 \}$ and thus $\tilde{\varphi} = \tilde{\varphi}_0 \in E$. 
	\end{enumerate}
\end{proof}
Now we are in a position to prove the following important Proposition: 
\begin{prop}
	A Hilbert space $\Hil$ is separable if and only if there exists a countable orthonormal basis. 
\end{prop}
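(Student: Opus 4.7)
The plan is to prove the two implications separately, since both rely on constructive arguments and invoke earlier results.

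For the ``countable ONB implies separable'' direction, I would suppose $\{\varphi_k\}_{k\in\mathcal{I}}$ is a countable orthonormal basis and then consider the set $D$ of all finite linear combinations $\sum_{k \in F} (q_k + i q_k') \, \varphi_k$ with $F \subseteq \mathcal{I}$ finite and $q_k, q_k' \in \mathbb{Q}$. Because $\mathcal{I}$ is countable, its finite subsets form a countable family, so $D$ itself is countable. Density follows in two standard steps: given $\psi \in \Hil$ and $\varepsilon > 0$, the ONB property lets me approximate $\psi$ to within $\varepsilon/2$ by a finite partial sum $\sum_{k=1}^n \sscpro{\varphi_k}{\psi}\,\varphi_k$, after which I approximate each coefficient $\sscpro{\varphi_k}{\psi} \in \C$ to within $\varepsilon/(2\sqrt{n})$ by a Gaussian rational, using orthonormality (Pythagoras) to control the total error in norm.

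For the converse, let $\{\psi_n\}_{n \in \N}$ be a countable dense subset of $\Hil$. My strategy is to run the Gram--Schmidt procedure on this sequence, discarding each $\psi_n$ that happens to lie in the linear span of its predecessors, in order to obtain an orthonormal family $\{\varphi_k\}_{k \in \mathcal{I}}$, with $\mathcal{I}$ either finite or a copy of $\N$. The essential property I need from the construction is that the finite spans agree at each stage, so that $\mathrm{span}\{\varphi_k\}_{k \in \mathcal{I}} = \mathrm{span}\{\psi_n\}_{n \in \N}$. Setting $E := \overline{\mathrm{span}\{\varphi_k\}}$, the denseness hypothesis then yields $E \supseteq \overline{\{\psi_n\}_{n \in \N}} = \Hil$, that is, $E = \Hil$.

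To close out the converse, I need to upgrade $\{\varphi_k\}$ from an orthonormal set whose closed span equals $\Hil$ into a genuine ONB in the sense of the paper's definition, and this step is supplied directly by Theorem~\ref{hilbert_spaces:onb:thm:best_approx_explicit}: for every $\varphi \in \Hil = E$, the element of best approximation in $E$ is $\varphi_0 = \sum_k \sscpro{\varphi_k}{\varphi}\,\varphi_k$, and because $\varphi$ itself already lies in $E$, the minimum distance is zero, which forces the series to converge to $\varphi$ itself. The main obstacle I anticipate is not deep: it lies in bookkeeping the spans carefully during Gram--Schmidt in the infinite-dimensional case, and in making the Gaussian-rational approximation argument in the first direction rigorous by truncating before passing to norm limits.
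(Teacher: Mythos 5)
Your proof is correct, and in the converse direction it is actually a cleaner variant of what the paper does. Both you and the paper prove $\Leftarrow$ by taking finite Gaussian-rational combinations of basis vectors and controlling the error via Bessel/Pythagoras; that direction is essentially identical. For $\Rightarrow$, both run Gram--Schmidt, but the details differ in a way worth noting. The paper repeatedly picks $\tilde\varphi_{n+1} \in \mathcal{D}\setminus E_n$ and then, at the end, argues somewhat informally that $E_\infty = \Hil$ because otherwise ``we have terminated the induction prematurely.'' Your version enumerates the dense set $\{\psi_n\}$ once, runs Gram--Schmidt down the whole list while discarding linearly dependent vectors, and thereby guarantees $\mathrm{span}\{\varphi_k\} = \mathrm{span}\{\psi_n\}$ by construction; the inclusion $\overline{\mathrm{span}\{\varphi_k\}} \supseteq \overline{\{\psi_n\}} = \Hil$ then falls out immediately, with no informal maximality argument. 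Your final step, invoking Theorem~\ref{hilbert_spaces:onb:thm:best_approx_explicit} to conclude that for $\varphi \in E = \Hil$ the best approximation $\varphi_0 = \sum_k \sscpro{\varphi_k}{\varphi}\,\varphi_k$ must equal $\varphi$ itself (since $d(\varphi,E)=0$, or equivalently $\varphi - \varphi_0 \in E \cap E^\perp = \{0\}$), is exactly the right way to upgrade ``orthonormal set with full closed span'' to ``orthonormal basis'' in the paper's sense, and it is more explicit about this step than the paper is. In short: same machinery, but your bookkeeping of the spans and your appeal to the best-approximation theorem make the argument tighter.
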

\begin{proof}
	$\Leftarrow$: 
	The set generated by the orthonormal basis $\{ \varphi_j \}_{j \in \mathcal{I}}$, $\mathcal{I}$ countable, and coefficients $z = q + i p$, $q , p \in \Q$, is dense in $\Hil$, 
	\begin{align*}
		\Bigl \{ \mbox{$\sum_{j = 1}^n$} z_j \varphi_j \in \Hil \; \big \vert \; \N \ni n \leq \abs{\mathcal{I}}, \; \varphi_j \in \{ \varphi_k \}_{k \in \N} , \; z_j = q_j + i p_j , \; q_j , p_j \in \Q \Bigr \} 
		. 
	\end{align*}
	\medskip
	
	\noindent
	$\Rightarrow$: Assume there exists a countable dense subset $\mathcal{D}$, \ie $\overline{\mathcal{D}} = \Hil$. If $\Hil$ is finite dimensional, the induction terminates after finitely many steps and the proof is simpler. Hence, we will assume $\Hil$ to be infinite dimensional. Pick a vector $\tilde{\varphi}_1 \in \mathcal{D} \setminus \{ 0 \}$ and normalize it. The normalized vector is then called $\varphi_1$. Note that $\varphi_1$ need not be in $\mathcal{D}$. By Theorem~\ref{hilbert_spaces:onb:thm:direct_sum_decomp}, we can split any $\psi \in \mathcal{D}$ into $\psi_1$ and $\psi_1^{\perp}$ such that $\psi_1 \in \mathrm{span} \, \{ \varphi_1 \} := E_1$, $\psi_1^{\perp} \in \mathrm{span} \, \{ \varphi_1 \}^{\perp} := E_1^{\perp}$ and 
	\begin{align*}
		\psi = \psi_1 + \psi_1^{\perp} 
		. 
	\end{align*}
	pick a second $\tilde{\varphi}_2 \in \mathcal{D} \setminus E_1$ (which is non-empty). Now we apply Theorem~\ref{hilbert_spaces:onb:thm:best_approx_explicit} (which is in essence Gram-Schmidt orthonormalization) to $\tilde{\varphi}_2$, \ie we pick the part which is orthogonal to $\varphi_1$, 
	\begin{align*}
		\varphi'_2 := \tilde{\varphi}_2 - \sscpro{\varphi_1}{\tilde{\varphi}_2} \, \varphi_2
	\end{align*}
	and normalize to $\varphi_2$, 
	\begin{align*}
		\varphi_2 := \frac{\varphi'_2}{\snorm{\varphi'_2}} 
		. 
	\end{align*}
	This defines $E_2 := \mathrm{span} \, \{ \varphi_1 , \varphi_2 \}$ and $\Hil = E_2 \oplus E_2^{\perp}$. 
	
	Now we proceed by induction: assume we are given $E_n = \mathrm{span} \, \{ \varphi_1 , \ldots , \varphi_n \}$. Take $\tilde{\varphi}_{n+1} \in  \mathcal{D} \setminus E_n$ and apply Gram-Schmidt once again to yield $\varphi_{n+1}$ which is the obtained from normalizing the vector 
	\begin{align*}
		\varphi'_{n+1} := \tilde{\varphi}_{n+1} - \sum_{k = 1}^n \sscpro{\varphi_k}{\tilde{\varphi}_{n+1}} \, \varphi_k
		. 
	\end{align*}
	This induction yields an orthonormal sequence $\{ \varphi_n \}_{n \in \N}$ which is by definition an orthonormal basis of $E_{\infty} := \overline{\mathrm{span} \, \{ \varphi_n \}_{n \in \N}}$ a closed subspace of $\Hil$. If $E_{\infty} \subsetneq \Hil$, we can split the Hilbert space into $\Hil = E_{\infty} \oplus E_{\infty}^{\perp}$. Then either $\mathcal{D} \cap (\Hil \setminus E_{\infty}) = \emptyset$ -- in which case $\mathcal{D}$ cannot be dense in $\Hil$ -- or $\mathcal{D} \cap (\Hil \setminus E_{\infty}) \neq \emptyset$. But then we have terminated the induction prematurely. 
\end{proof}
%

\section{Subspaces ($\oplus$) and product spaces ($\otimes$)} 
\label{hilbert_spaces:sub_product_spaces}

There are several ways to split Hilbert spaces: in direct sums and direct products. With the same techniques, we can construct new ones from existing Hilbert spaces. In Theorem~\ref{hilbert_spaces:onb:thm:direct_sum_decomp}, we have shown that if $E$ is a closed subspace, then $\Hil$ decomposes into a direct sum 
\begin{align*}
	\Hil = E \oplus E^{\perp} 
	. 
\end{align*}
That means any vector $\varphi = \psi + \psi_{\perp} \in \Hil$ can be \emph{uniquely} decomposed into $\psi \in E$ and $\psi^{\perp} \in E^{\perp}$. We now define the direct sum of two Hilbert spaces: 
\begin{defn}[Direct sum $\oplus$]
	Let $\Hil_1$ and $\Hil_2$ be Hilbert spaces with scalar products $\sscpro{\cdot}{\cdot}_1$ and $\sscpro{\cdot}{\cdot}_2$. Then we define $\Hil_1 \oplus \Hil_2$ as the carteisan product $\Hil_1 \times \Hil_2$ of vector spaces endowed with the structure of a vector space in the following way: for any $\varphi = (\varphi_1 , \varphi_2) , \psi = (\psi_1 , \psi_2) \in \Hil_1 \times \Hil_2$ and $\alpha \in \C$, we define 
	\begin{enumerate}[(i)]
		\item addition component-wise, $\varphi + \psi = (\varphi_1,\varphi_2) + (\psi_1,\psi_2) := (\varphi_1 + \psi_1 , \varphi_2 + \psi_2)$, 
		\item scalar multiplication component-wise, $\alpha \varphi = \alpha (\varphi_1 , \varphi_2) := (\alpha \varphi_1 , \alpha \varphi_2)$, 
		\item and the scalar product on $\Hil_1 \oplus \Hil_2$ is the sum of the two scalar products, 
		\begin{align*}
			\sscpro{\varphi}{\psi} := \sscpro{\varphi_1}{\psi_1}_1 + \sscpro{\varphi_2}{\psi_2}_2 
			. 
		\end{align*}
	\end{enumerate}
\end{defn}
\begin{prop}
	The direct sum $\Hil_1 \oplus \Hil_2$ of two Hilbert spaces is a Hilbert space. 
\end{prop}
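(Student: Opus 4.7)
The plan is to verify, in order, the three things that define a Hilbert space: that $\Hil_1 \oplus \Hil_2$ is a complex vector space, that the stipulated pairing $\sscpro{\cdot}{\cdot}$ is an inner product, and that the induced metric is complete. The first item is essentially free: the componentwise operations inherit associativity, commutativity, distributivity and the existence of the zero element $(0,0)$ and additive inverses $(-\varphi_1,-\varphi_2)$ from the vector space structure of $\Hil_1$ and $\Hil_2$, so no real work is needed here beyond mentioning it.

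For the inner product axioms I would check positive definiteness, conjugate symmetry and linearity in the second slot directly. Positive definiteness is the most illustrative: $\sscpro{\varphi}{\varphi} = \snorm{\varphi_1}_1^2 + \snorm{\varphi_2}_2^2 \geq 0$ as a sum of nonnegative reals, and vanishes iff both summands vanish, which by positive definiteness of each $\sscpro{\cdot}{\cdot}_i$ forces $\varphi_1 = 0$ and $\varphi_2 = 0$, hence $\varphi = (0,0)$. Conjugate symmetry and linearity are immediate since they hold componentwise and the operations on $\Hil_1 \oplus \Hil_2$ are componentwise. The induced norm then satisfies
\begin{align*}
    \snorm{\varphi}^2 = \snorm{\varphi_1}_1^2 + \snorm{\varphi_2}_2^2,
\end{align*}
a formula I would record for use in the completeness step.

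The main (and really the only non-trivial) step is completeness. Let $(\varphi^{(n)})_{n \in \N}$ with $\varphi^{(n)} = (\varphi_1^{(n)}, \varphi_2^{(n)})$ be a Cauchy sequence in $\Hil_1 \oplus \Hil_2$. From the norm identity above I get $\snorm{\varphi_i^{(n)} - \varphi_i^{(m)}}_i \leq \snorm{\varphi^{(n)} - \varphi^{(m)}}$ for $i = 1,2$, so each component sequence is Cauchy in its respective Hilbert space. Completeness of $\Hil_1$ and $\Hil_2$ yields limits $\varphi_i := \lim_{n \rightarrow \infty} \varphi_i^{(n)} \in \Hil_i$. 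Setting $\varphi := (\varphi_1, \varphi_2)$ gives a candidate limit in $\Hil_1 \oplus \Hil_2$, and
\begin{align*}
    \snorm{\varphi^{(n)} - \varphi}^2 = \snorm{\varphi_1^{(n)} - \varphi_1}_1^2 + \snorm{\varphi_2^{(n)} - \varphi_2}_2^2 \longrightarrow 0
\end{align*}
as $n \to \infty$, which is the required convergence in $\Hil_1 \oplus \Hil_2$.

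I do not expect any real obstacle: the whole proof is a matter of pushing everything through componentwise, with the only mild subtlety being to notice that Cauchyness in the direct sum implies Cauchyness in each factor via the elementary inequality $a^2 \leq a^2 + b^2$ applied to the squared norms. Everything else is bookkeeping.
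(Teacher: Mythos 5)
Your proof is correct and follows essentially the same route as the paper: the paper also reduces completeness to the completeness of the components via the norm identity $\snorm{\varphi}^2 = \snorm{\varphi_1}_1^2 + \snorm{\varphi_2}_2^2$, and merely states the vector-space and inner-product verifications as "immediate" where you spell them out. Your version is a more detailed rendering of the same argument, not a different one.
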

\begin{proof}
	One immediately checks that $\Hil_1 \oplus \Hil_2$ is a vector space. Completeness also follows from the completeness of the components: let $\{ \varphi^{(n)} \}_{n \in \N}$ be a Cauchy sequence with respect to the norm induced by $\sscpro{\cdot}{\cdot}$. Writing out the definition, it is clear that this also means each component $\{ \varphi_j^{(n)} \}_{n \in \N}$, $j = 1 , 2$, is a Cauchy sequence in $\Hil_j$ which converges to some $\varphi_j \in \Hil_j$. Hence, $\varphi^{(n)} \longrightarrow (\varphi_1,\varphi_2)$ as $n \rightarrow \infty$. 
\end{proof}
The\marginpar{\small 2009.11.10} other way to construct new Hibert spaces is taking tensor products: if $\varphi_1 \in \Hil_1$ and $\varphi_2 \in \Hil_2$ are two vectors from two vector spaces, we can characterize $\varphi_1 \otimes \varphi_2$, the tensor product of $\varphi_1$ and $\varphi_2$, by the following defining properties: for any $\varphi_1 , \psi_1 \in \Hil_1$ and $\varphi_2, \psi_2 \in \Hil_2$ 
\begin{align*}
	\varphi_1 \otimes ( \varphi_2 + \psi_2 ) &= \varphi_1 \otimes \varphi_2 + \varphi_1 \otimes \psi_2 
	\\
	( \varphi_1 + \psi_1 ) \otimes \varphi_2 &= \varphi_1 \otimes \varphi_2 + \psi_1 \otimes \varphi_2 	
\end{align*}
holds. Scalars can be pushed back and forth between factors, 
\begin{align*}
	\alpha (\varphi_1 \otimes \varphi_2) = (\alpha \varphi_1) \otimes \varphi_2 = \varphi_1 \otimes (\alpha \varphi_2) 
	. 
\end{align*}
The formal definition is a lot more complicated: one has to construct a big space where vectors such as $(\alpha \varphi_1) \otimes \varphi_2$ and $\varphi_1 \otimes (\alpha \varphi_2)$ are distinct and then use equivalence relations to implement the above characteristics. The constructed space is defined only up to isomorphism. 
\begin{defn}[Tensor product $\otimes$]
	Let $\Hil_1$ and $\Hil_2$ be two Hilbert spaces with scalar products $\sscpro{\cdot}{\cdot}_1$ and $\sscpro{\cdot}{\cdot}_2$. Then the tensor product $\Hil_1 \otimes \Hil_2$ is defined as the completion of the algebraic tensor product 
	\begin{align*}
		\Hil_1 \odot \Hil_2 := \Bigl \{ \mbox{$\sum_{k = 1}^n$} \lambda_k \varphi_{1 \, k} \otimes \varphi_{2 \, k} \; \big \vert \; n \in \N , \, \varphi_{1 \, k} \in \Hil_1 , \, \varphi_{2 \, k} \in \Hil_2, \, \lambda_k \in \C \, \forall 1 \leq k \leq n \Bigr \} 
	\end{align*}
	with respect to the norm induced by the scalar product 
	\begin{align*}
		\sscpro{\varphi_1 \otimes \varphi_2}{\psi_1 \otimes \psi_2} := \sscpro{\varphi_1}{\psi_1}_1 \, \sscpro{\varphi_2}{\psi_2}_2 
		, 
		&&
		\forall \varphi_1 , \psi_1 \in \Hil_1 , \; \varphi_2 , \psi_2 \in \Hil_2  
		. 
	\end{align*}
\end{defn}
If $\{ \varphi_{1 \, k} \}_{k \in \mathcal{I}_1}$ and $\{ \varphi_{2 \, j} \}_{j \in \mathcal{I}_2}$ are orthonormal bases of two separable Hilbert spaces $\Hil_1$ and $\Hil_2$, respectively, then 
\begin{align*}
	\bigl \{ \varphi_{1 \, k} \otimes \varphi_{2 \, j} \bigr \}_{k \in \mathcal{I}_1, j \in \mathcal{I}_2}
\end{align*}
is an orthonormal basis in $\Hil_1 \otimes \Hil_2$, \ie every vector $\Psi \in \Hil_1 \otimes \Hil_2$ can be written as 
\begin{align*}
	\sum_{\substack{k \in \mathcal{I}_1 \\ j \in \mathcal{I}_2}} \bscpro{\varphi_{1 \, k} \otimes \varphi_{2 \, j}}{\Psi} \, \varphi_{1 \, k} \otimes \varphi_{2 \, j} 
	. 
\end{align*}
\begin{example}
	\begin{enumerate}[(i)]
		\item A non-relativistic spin-$\nicefrac{1}{2}$ particle lives in the Hilbert space $L^2(\R^d,\C^2)$. An easy, but very helpful exercise is to show the following equivalence (which correspond to different physical points of views): 
		\begin{align*}
			L^2(\R^d,\C^2) \cong L^2(\R^d) \oplus L^2(\R^d) \cong L^2(\R^d) \otimes \C^2 
		\end{align*}
		Depending on the physical situation, these identification may be very helpful in solving a problem. 
		\item Consider $L^2(\R^d) \otimes L^2(\R^d)$. This is the Hilbert space of two particles. If they are identical, we have to restrict ourselves to the symmetric and antisymmetric subspace, depending on whether the particle in question is a boson or a fermion. Keep in mind that in general, elements $\Psi \in L^2(\R^d) \otimes L^2(\R^d)$ cannot be written as the product of two wave functions $\varphi_1 , \varphi_2 \in L^2(\R^d)$, 
		\begin{align*}
			\Psi \neq \varphi_1 \otimes \varphi_2 
			. 
		\end{align*}
		We will show in an exercise that $L^2(\R^d) \otimes L^2(\R^d) \cong L^2(\R^d \times \R^d)$. 
	\end{enumerate}
\end{example}
%


\section{Linear functionals, dual space and weak convergence} 
\label{hilbert_spaces:dual_space}

A very important notion is that of a functional. We have already gotten to know the free energy functional 
\begin{align*}
	E_{\mathrm{free}} : &\mathcal{D}(E_{\mathrm{free}}) \subset L^2(\R^d) \longrightarrow [0,+\infty) \subset \C 
	, 
	\\
	&\varphi \mapsto E_{\mathrm{free}}(\varphi) = \frac{1}{2m} \sum_{l = 1}^d \bscpro{(- i \hbar \partial_{x_l} \varphi)}{(- i \hbar \partial_{x_l} \varphi)} 
	. 
\end{align*}
This functional, however, is not linear, and it is not defined for all $\varphi \in L^2(\R^d)$. Let us restrict ourselves to a smaller class of functionals: 
\begin{defn}[Bounded linear functional]
	Let $\mathcal{X}$ be a normed space. Then a map 
	\begin{align*}
		 L : \mathcal{X} \longrightarrow \C 
	\end{align*}
	is a bounded linear functional if and only if 
	\begin{enumerate}[(i)]
		\item there exists $C>0$ such that $\abs{L(x)} \leq C \norm{x}$ and 
		\item $L(x + \mu y) = L(x) + \mu L(y)$ 
	\end{enumerate}
	hold for all $x,y \in \mathcal{X}$ and $\mu \in \C$. 
\end{defn}
A very basic fact is that boundedness of a linear functional is equivalent to its continuity. 
\begin{thm}\label{hilbert_spaces:dual_space:thm:bounded_functionals_continuous}
	Let $L : \mathcal{X} \longrightarrow \C$ be a linear functional on the normed space $\mathcal{X}$. Then the following statements are equivalent: 
	\begin{enumerate}[(i)]
		\item $L$ is continuous at $x_0 \in \mathcal{X}$. 
		\item $L$ is continuous. 
		\item $L$ is bounded. 
	\end{enumerate}
\end{thm}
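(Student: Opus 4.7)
The plan is to prove the chain of implications (iii) $\Rightarrow$ (ii) $\Rightarrow$ (i) $\Rightarrow$ (iii), which is the most economical route: the first two implications are essentially immediate, so the real work is in deducing boundedness from continuity at a single point.

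First I would show (iii) $\Rightarrow$ (ii). Assuming $\abs{L(x)} \leq C \norm{x}$ for all $x \in \mathcal{X}$, linearity gives $\abs{L(x) - L(y)} = \abs{L(x-y)} \leq C \norm{x-y}$, so $L$ is in fact Lipschitz continuous, hence continuous everywhere. The implication (ii) $\Rightarrow$ (i) is trivial, as continuity at every point implies continuity at the specific point $x_0$.

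The main step is (i) $\Rightarrow$ (iii). The key observation is that linearity allows me to transport continuity at $x_0$ to continuity at $0$: if $x_n \to 0$, then $x_n + x_0 \to x_0$, so $L(x_n + x_0) \to L(x_0)$, and by linearity $L(x_n) = L(x_n + x_0) - L(x_0) \to 0$. Thus $L$ is continuous at $0$. Now I apply the $\epsilon$-$\delta$ definition of continuity at $0$ with $\epsilon = 1$: there exists $\delta > 0$ such that $\norm{x} \leq \delta$ implies $\abs{L(x)} \leq 1$. For arbitrary nonzero $y \in \mathcal{X}$, the rescaled vector $\tilde{y} := \frac{\delta}{\norm{y}} y$ has norm $\delta$, so $\abs{L(\tilde{y})} \leq 1$. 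Pulling the scalar out by linearity gives $\abs{L(y)} \leq \frac{1}{\delta} \norm{y}$, which is the desired bound with $C = \nicefrac{1}{\delta}$ (and trivially holds for $y = 0$).

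The only genuinely delicate point is the transport of continuity from $x_0$ to $0$ — a step that crucially uses linearity and would fail for general (nonlinear) continuous maps. Everything else is either straightforward or a direct application of the definitions, so no serious obstacle is expected.
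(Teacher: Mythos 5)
Your proof is correct and takes essentially the same approach as the paper: the Lipschitz estimate $\abs{L(x)-L(y)} \leq C\norm{x-y}$ for (iii)$\Rightarrow$(ii), and the $\epsilon=1$ rescaling argument at the origin for the converse. Where the paper merely asserts ``(i)$\Leftrightarrow$(ii) follows immediately from linearity,'' you spell out the transport of continuity from $x_0$ to $0$ explicitly, which is a welcome elaboration of the same idea.
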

\begin{proof}
	(i) $\Leftrightarrow$ (ii): This follows immediately from the linearity. 
	\medskip
	
	\noindent
	(ii) $\Rightarrow$ (iii): Assume $L$ to be continuous. Then it is continuous at $0$ and for $\eps = 1$, we can pick $\delta > 0$ such that %
	\begin{align*}
		\abs{L(x)} \leq \eps = 1 
	\end{align*}
	for all $x \in \mathcal{X}$ with $\norm{x} \leq \delta$. By linearity, this implies for any $y \in \mathcal{X} \setminus \{ 0 \}$ that 
	\begin{align*}
		\babs{L \bigl ( \tfrac{\delta}{\norm{y}} y \bigr )} = \tfrac{\delta}{\norm{y}} \, \babs{L(y)} \leq 1 
		. 
	\end{align*}
	Hence, $L$ is bounded with bound $\nicefrac{1}{\delta}$, 
	\begin{align*}
		\babs{L(y)} \leq \tfrac{1}{\delta} \norm{y} 
		. 
	\end{align*}
	(iii) $\Rightarrow$ (ii): Conversely, if $L$ is bounded by $C > 0$, 
	\begin{align*}
		\babs{L(x) - L(y)} \leq C \norm{x-y} 
		, 
	\end{align*}
	holds for all $x,y \in \mathcal{X}$. This means, $L$ is continuous: for $\eps > 0$ pick $\delta = \nicefrac{\eps}{C}$ so that 
	\begin{align*}
		\babs{L(x) - L(y)} \leq C \norm{x-y} \leq C \tfrac{\eps}{C} = \eps 
	\end{align*}
	holds for all $x,y \in \mathcal{X}$ such that $\norm{x - y} \leq \nicefrac{\eps}{C}$. 
\end{proof}
\begin{defn}[Dual space]
	Let $\mathcal{X}$ be a normed space. The dual space $\mathcal{X}^*$ is the vector space of bounded linear functionals endowed with the norm 
	\begin{align*}
		\norm{L}_* := \sup_{x \in \mathcal{X} \setminus \{ 0 \}} \frac{\abs{L(x)}}{\norm{x}} = \sup_{\substack{x \in \mathcal{X} \\ \norm{x} = 1}} \abs{L(x)} 
		. 
	\end{align*}
\end{defn}
Independently of whether $\mathcal{X}$ is complete, $\mathcal{X}^*$ is a Banach space. 
\begin{prop}\label{hilbert_spaces:dual_space:prop:completeness_Xstar}
	The dual space to a normed linear space $\mathcal{X}$ is a Banach space. 
\end{prop}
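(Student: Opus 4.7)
The plan is to verify completeness of $\mathcal{X}^*$ by extracting a candidate limit pointwise and then upgrading pointwise convergence to norm convergence. First, one checks routinely that $\norm{\cdot}_*$ really is a norm on the vector space of bounded linear functionals (homogeneity and the triangle inequality descend from the corresponding properties of $\abs{\cdot}$, and $\norm{L}_* = 0$ forces $L \equiv 0$ by definition of the supremum).

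Given a Cauchy sequence $(L_n)_{n \in \N}$ in $\mathcal{X}^*$, the key observation is that for every fixed $x \in \mathcal{X}$ one has $\babs{L_n(x) - L_m(x)} \leq \norm{L_n - L_m}_* \, \norm{x}$, so $\bigl ( L_n(x) \bigr )_{n \in \N}$ is Cauchy in $\C$ and therefore converges. Define $L(x) := \lim_{n \to \infty} L_n(x)$. Linearity of $L$ follows at once from the linearity of each $L_n$ together with the elementary fact that limits in $\C$ commute with finite linear combinations. Boundedness of $L$ follows from the general principle that Cauchy sequences in a normed space are bounded: pick $C := \sup_{n \in \N} \norm{L_n}_* < \infty$, and pass to the limit in $\babs{L_n(x)} \leq C \, \norm{x}$ to obtain $\babs{L(x)} \leq C \, \norm{x}$, hence $L \in \mathcal{X}^*$.

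The main step — and the only one that requires a touch of care — is upgrading pointwise convergence $L_n(x) \to L(x)$ to operator-norm convergence $\norm{L_n - L}_* \to 0$. The trick is to exploit the Cauchy property uniformly in $x$: given $\eps > 0$, choose $N$ such that $\norm{L_n - L_m}_* < \eps$ for all $n,m \geq N$, so that
\begin{align*}
	\babs{L_n(x) - L_m(x)} \leq \eps \, \norm{x}
	,
	&& \forall x \in \mathcal{X}, \; n, m \geq N
	.
\end{align*}
Holding $n \geq N$ and $x$ fixed and letting $m \to \infty$, continuity of $\abs{\cdot}$ gives $\babs{L_n(x) - L(x)} \leq \eps \, \norm{x}$ for all $x \in \mathcal{X}$, hence $\norm{L_n - L}_* \leq \eps$ for all $n \geq N$. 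This shows $L_n \to L$ in $\mathcal{X}^*$ and finishes the proof. Note that nowhere is completeness of $\mathcal{X}$ used — only completeness of the scalar field $\C$ — which is why $\mathcal{X}^*$ is Banach even when $\mathcal{X}$ itself is not.
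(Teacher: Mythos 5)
Your proof is correct and follows essentially the same route as the paper: extract a pointwise limit via completeness of $\C$, observe linearity is inherited, and pass to the limit in $\babs{L_n(x) - L_m(x)} \leq \norm{L_n - L_m}_* \norm{x}$ to get both boundedness and norm convergence. The only cosmetic difference is that you establish boundedness separately via the uniform bound $C = \sup_n \norm{L_n}_*$ before showing norm convergence, whereas the paper folds both into the single estimate $\babs{(L - L_n)(x)} < \eps \norm{x}$; and you also verify the norm axioms at the outset, which the paper leaves implicit.
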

\begin{proof}
	Let $( L_n )_{n \in \N}$ be a Cauchy sequence in $\mathcal{X}^*$, \ie a sequence for which 
	\begin{align*}
		\norm{L_k - L_j}_* \xrightarrow{k,j \rightarrow \infty} 0
		. 
	\end{align*}
	We have to show that $( L_n )_{n \in \N}$ converges to some $L \in \mathcal{X}^*$. For any $\eps > 0$, there exists $N(\eps) \in \N$ such that 
	\begin{align*}
		\norm{L_k - L_j}_* < \eps 
	\end{align*}
	for all $k,j \geq N(\eps)$. This also implies that for any $x \in \mathcal{X}$, $\bigl ( L_n(x) \bigr )_{n \in \N}$ converges as well, 
	\begin{align*}
		\babs{L_k(x) - L_j(x)} \leq \bnorm{L_k - L_j}_* \, \norm{x} < \eps \norm{x} 
		. 
	\end{align*}
	The field of complex numbers is complete and $\bigl ( L_n(x) \bigr )_{n \in \N}$ converges to some $L(x) \in \C$. We now \emph{define} 
	\begin{align*}
		L(x) := \lim_{n \rightarrow \infty} L_n(x) 
	\end{align*}
	for any $x \in \mathcal{X}$. Clearly, $L$ inherits the linearity of the $(L_n)_{n \in \N}$. The map $L$ is also bounded: for any $\eps > 0$, there exists $N(\eps) \in \N$ such that $\norm{L_j - L_n}_* < \eps$ for all $j , n \geq N(\eps)$. Then 
	\begin{align*}
		\babs{(L - L_n)(x)} &= \lim_{j \rightarrow \infty} \babs{(L_j - L_n)(x)} \leq \lim_{j \to \infty} \bnorm{L_j - L_n}_* \, \bnorm{x} 
		\\
		&< \eps \norm{x}
	\end{align*}
	holds for all $n \geq N(\eps)$. Since we can write $L$ as $L = L_n + (L - L_n)$, we can estimate the norm of the linear map $L$ by $\snorm{L}_* \leq \snorm{L_n}_* + \eps < \infty$. This means $L$ is a bounded linear functional on $\mathcal{X}$. 
\end{proof}
In case of Hilbert spaces, the dual $\Hil^*$ can be canonically identified with $\Hil$ itself: 
\begin{thm}[Riesz' Lemma]\label{hilbert_spaces:dual_space:thm:Riesz_Lemma}
	Let $\Hil$ be a Hilbert space. Then for all $L \in \Hil^*$ there exist $\psi_L \in \Hil$ such that 
	\begin{align*}
		L(\varphi) = \sscpro{\psi_L}{\varphi} 
		. 
	\end{align*}
	In particular, we have $\norm{L}_* = \snorm{\psi_L}$. 
\end{thm}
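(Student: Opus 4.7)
The plan is to produce the representative $\psi_L$ by exploiting the structure of the kernel of $L$ together with the orthogonal decomposition theorem (Theorem~\ref{hilbert_spaces:onb:thm:direct_sum_decomp}). First I would dispose of the trivial case: if $L = 0$, then $\psi_L := 0$ works and has the right norm. So from now on assume $L \neq 0$.

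The first substantive step is to observe that $\ker L = \bigl \{ \varphi \in \Hil \; \vert \; L(\varphi) = 0 \bigr \}$ is a closed linear subspace of $\Hil$. Linearity is immediate; closedness follows from Theorem~\ref{hilbert_spaces:dual_space:thm:bounded_functionals_continuous}, which tells us that $L$ is continuous, so $\ker L = L^{-1}(\{0\})$ is the preimage of a closed set. Since $L \neq 0$, we have $\ker L \subsetneq \Hil$, and by Theorem~\ref{hilbert_spaces:onb:thm:direct_sum_decomp} (applied to $E = \ker L$) we get $\Hil = \ker L \oplus (\ker L)^{\perp}$, with $(\ker L)^{\perp} \neq \{ 0 \}$. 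Pick any unit vector $\chi \in (\ker L)^{\perp}$; note $L(\chi) \neq 0$ because $\chi \notin \ker L$.

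Next I would define the candidate representative
\begin{align*}
	\psi_L := L(\chi)^{\ast} \, \chi
\end{align*}
and verify the representation formula. The trick is the decomposition: for arbitrary $\varphi \in \Hil$, set $\alpha := \nicefrac{L(\varphi)}{L(\chi)}$. Then $L(\varphi - \alpha \chi) = L(\varphi) - \alpha L(\chi) = 0$, so $\varphi - \alpha \chi \in \ker L$. Since $\chi \in (\ker L)^{\perp}$, this gives $\sscpro{\chi}{\varphi - \alpha \chi} = 0$, hence $\sscpro{\chi}{\varphi} = \alpha$. Using the convention that the scalar product is conjugate-linear in the first slot, we compute
\begin{align*}
	\sscpro{\psi_L}{\varphi} = \bscpro{L(\chi)^{\ast} \chi}{\varphi} = L(\chi) \, \sscpro{\chi}{\varphi} = L(\chi) \, \alpha = L(\varphi)
	,
\end{align*}
which is the claimed identity.

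It remains to compute the norm. By the Cauchy--Schwarz inequality, $\sabs{L(\varphi)} = \sabs{\sscpro{\psi_L}{\varphi}} \leq \snorm{\psi_L} \, \snorm{\varphi}$, so $\snorm{L}_{\ast} \leq \snorm{\psi_L}$. For the reverse inequality, if $\psi_L \neq 0$ simply evaluate $L$ on $\psi_L / \snorm{\psi_L}$ to obtain $\snorm{L}_{\ast} \geq \sabs{L(\psi_L / \snorm{\psi_L})} = \sscpro{\psi_L}{\psi_L}/\snorm{\psi_L} = \snorm{\psi_L}$. (If $\psi_L = 0$, then $L = 0$ and there is nothing to show.) I would also note uniqueness in passing: two representatives $\psi_L, \psi_L'$ satisfy $\sscpro{\psi_L - \psi_L'}{\varphi} = 0$ for every $\varphi \in \Hil$, so testing against $\varphi = \psi_L - \psi_L'$ forces $\psi_L = \psi_L'$.

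The only real obstacle is the non-triviality of $(\ker L)^{\perp}$ when $L \neq 0$, and this is exactly what the closedness of $\ker L$ together with Theorem~\ref{hilbert_spaces:onb:thm:direct_sum_decomp} buys us; everything else is a direct computation.
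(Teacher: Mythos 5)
Your proof is correct and follows essentially the same route as the paper: decompose $\Hil = \ker L \oplus (\ker L)^{\perp}$, pick a vector in the orthogonal complement, define $\psi_L$ as a scalar multiple of it, and verify the representation and the norm identity via Cauchy--Schwarz. The only cosmetic difference is that you normalize $\chi$ up front and extract the coefficient $\alpha$ directly from orthogonality, whereas the paper works with an unnormalized $\varphi_0$ and checks the representation on $\ker L$ and on the line through $\varphi_0$ separately; these are the same argument.
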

\begin{proof}
	Let $\ker L := \bigl \{ \varphi \in \Hil \; \vert \; L(\varphi) = 0 \bigr \}$ be the kernel of the functional $L$ and as such is a closed linear subspace of $\Hil$. If $\ker L = \Hil$, then $0 \in \Hil$ is the associated vector, 
	\begin{align*}
		L(\varphi) = 0 = \sscpro{0}{\varphi} 
		. 
	\end{align*}
	So assume $\ker L \subsetneq \Hil$ is a proper subspace. Then we can split $\Hil = \ker L \oplus (\ker L)^{\perp}$. Pick $\varphi_0 \in (\ker L)^{\perp}$, \ie $L(\varphi_0) \neq 0$. Then define 
	\begin{align*}
		\psi_L := \frac{L(\varphi_0)^*}{\snorm{\varphi_0}^2} \, \varphi_0 
		. 
	\end{align*}
	We will show that $L(\varphi) = \sscpro{\psi_L}{\varphi}$. If $\varphi \in \ker L$, then $L(\varphi) = 0 = \sscpro{\psi_L}{\varphi}$. One easily shows that for $\varphi = \alpha \, \varphi_0$, $\alpha \in \C$, 
	\begin{align*}
		L(\varphi) &= L(\alpha \, \varphi_0) = \alpha \, L(\varphi_0) 
		\\
		&= \sscpro{\psi_L}{\varphi} = \Bscpro{\tfrac{L(\varphi_0)^*}{\snorm{\varphi_0}^2} \varphi_0}{\alpha \, \varphi_0} 
		\\
		&= \alpha \, L(\varphi_0) \, \frac{\sscpro{\varphi_0}{\varphi_0}}{\snorm{\varphi_0}^2} 
		= \alpha \, L(\varphi_0) 
		. 
	\end{align*}
	Every $\varphi \in \Hil$ can be written as 
	\begin{align*}
		\varphi = \biggl ( \varphi - \frac{L(\varphi)}{L(\varphi_0)} \, \varphi_0 \biggr ) + \frac{L(\varphi)}{L(\varphi_0)} \, \varphi_0 
		. 
	\end{align*}
	Then the first term is in the kernel of $L$ while the second one is in the orthogonal complement of $\ker L$. Hence, $L(\varphi) = \sscpro{\psi_L}{\varphi}$ for all $\varphi \in \Hil$. If there exists a second $\psi_L' \in \Hil$, then for any $\varphi \in \Hil$
	\begin{align*}
		0 = L(\varphi) - L(\varphi) = \sscpro{\psi_L}{\varphi} - \sscpro{\psi_L'}{\varphi} 
		= \sscpro{\psi_L - \psi_L'}{\varphi} 
		. 
	\end{align*}
	This implies $\psi_L' = \psi_L$ and thus the element $\psi_L$ is unique. 
	
	To show $\snorm{L}_* = \snorm{\psi_L}$, assume $L \neq 0$. Then, we have 
	\begin{align*}
		\norm{L}_* &= \sup_{\norm{\varphi} = 1} \babs{L(\varphi)} \geq \babs{L \bigl ( \tfrac{\psi_L}{\snorm{\psi_L}} \bigr )} 
		\\
		&= \bscpro{\psi_L}{\tfrac{\psi_L}{\snorm{\psi_L}}} = \snorm{\psi_L} 
		. 
	\end{align*}
	On the other hand, the Cauchy-Schwarz inequality yields 
	\begin{align*}
		\norm{L}_* &= \sup_{\norm{\varphi} = 1} \babs{L(\varphi)} 
		= \sup_{\norm{\varphi} = 1} \babs{\sscpro{\psi_L}{\varphi}} 
		\\
		&\leq \sup_{\norm{\varphi} = 1} \snorm{\psi_L} \snorm{\varphi} 
		= \snorm{\psi_L} 
		. 
	\end{align*}
	Putting these two together, we conclude $\snorm{L}_* = \snorm{\psi_L}$. 
\end{proof}
\marginpar{\small 2009.11.11} 
\begin{remark}
	The bidual of a Hilbert space $\Hil^{**}$ can be canonically identified with $\Hil$ itself, \ie Hilbert spaces are reflexive. 
\end{remark}
\begin{defn}[Weak convergence]
	Let $\mathcal{X}$ be a Banach space. Then a sequence $(x_n)_{n \in \N}$ in $\mathcal{X}$ is said to converge weakly to $x \in \mathcal{X}$ if for all $L \in \mathcal{X}^*$ 
	\begin{align*}
		L(x_n) \xrightarrow{n \rightarrow \infty} L(x) 
	\end{align*}
	holds. In this case, one also writes $x_n \rightharpoonup x$. 
\end{defn}
Weak convergence, as the name suggests, is really weaker than convergence in norm. The reason why ``more'' sequences converge is that, a sense, uniformity is lost. If $\mathcal{X}$ is a Hilbert space, then applying a functional is the same as computing the inner product with respect to some vector $\psi_L$. If the ``non-convergent part'' lies in the orthogonal complement to $\{ \psi_L \}$, then this particular functional does not notice that the sequence has not converged yet. 
\begin{example}
	Let $\Hil$ be a separable infinite-dimensional Hilbert space and $\{ \varphi_n \}_{n \in \N}$ an ortho\-normal basis. Then the sequence $(\varphi_n)_{n \in \N}$ does not converge in norm, for as long as $n \neq k$ 
	\begin{align*}
		\norm{\varphi_n - \varphi_k} = \sqrt{2} 
		, 
	\end{align*}
	but it does converge weakly to $0$: for any functional $L = \sscpro{\psi_L}{\cdot}$, we see that $\bigl ( \abs{L(\varphi_n)} \bigr )_{n \in \N}$ is a sequence in $\R$ that converges to $0$. Since $\{ \varphi_n \}_{n \in \N}$ is a basis, we can write 
	\begin{align*}
		\psi_L = \sum_{n = 1}^{\infty} \sscpro{\varphi_n}{\psi_L} \, \varphi_n 
	\end{align*}
	and for the sequence of partial sums to converge to $\psi_L$, the sequence of coefficients 
	\begin{align*}
		\bigl ( \sscpro{\varphi_n}{\psi_L} \bigr )_{n \in \N} = \bigl ( L(\varphi_n)^* \bigr )_{n \in \N} 
	\end{align*}
	must converge to $0$. Since this is true for any $L \in \Hil^*$, we have proven that $\varphi_n \rightharpoonup 0$ (\ie $\varphi_n \rightarrow 0$ weakly). 
\end{example}

In case of $\mathcal{X} = L^p(\R^d)$, there are three basic mechanisms for when a sequence of functions $(f_k)$ does not converge in norm, but only weakly: 
\begin{enumerate}[(i)]
	\item \emph{$f_k$ oscillates to death:} take $f_k(x) = \sin (kx)$ for $0 \leq x \leq 1$ and zero otherwise. 
	\item \emph{$f_k$ goes up the spout:} pick $g \in L^p(\R)$ and define $f_k(x) := k^{\nicefrac{1}{p}} \, g(kx)$. This sequence explodes near $x = 0$ for large $k$. 
	\item \emph{$f_k$ wanders off to infinity:} this is the case when for some $g \in L^p(\R)$, we define $f_k(x) := g(x+k)$. 
\end{enumerate}
All of these sequences converge weakly to $0$, but do not converge in norm. 


\section{Important facts on $L^p(\R^d)$} 
\label{hilbert_spaces:facts_on_Lp}

For future reference, we collect a few facts on $L^p(\R^d)$ spaces. In particular, we will make use of dominated convergence frequently. We will give them without proof, they can be found in standard text books on analysis, see \eg \cite{LiebLoss:Analysis:2001}. 
\begin{defn}[$L^p(\R^d)$]
	Let $1 \leq p < \infty$. Then we define 
	\begin{align*}
		\mathcal{L}^p(\R^d) := \Bigl \{ f : \R^d \longrightarrow \C \; \big \vert \; \mbox{$f$ measurable, } \int_{\R^d} \dd x \, \abs{f(x)}^p < \infty \Bigr \} 
	\end{align*}
	as the vector space of functions whose $p$th power is integrable. Then $L^p(\R^d)$ is the vector space 
	\begin{align*}
		L^p(\R^d) := \mathcal{L}^p(\R^d) / \sim 
	\end{align*}
	consisting of equivalence classes of functions that agree almost everywhere. With the $p$ norm 
	\begin{align*}
		\norm{f}_p := \biggl ( \int_{\R^d} \dd x \, \abs{f(x)}^p \biggr )^{\nicefrac{1}{p}} 
	\end{align*}
	it forms a normed space. 
\end{defn}
In case $p = \infty$, we have to modify the definition a little bit. 
\begin{defn}[$L^{\infty}(\R^d)$]
	We define 
	\begin{align*}
		\mathcal{L}^{\infty}(\R^d) := \Bigl \{ f : \R^d \longrightarrow \C \; \big \vert \; \mbox{$f$ measurable, } \exists 0 < K < \infty : \abs{f(x)} \leq K \mbox{ almost everywhere} \Bigr \} 
	\end{align*}
	to be the space of functions that are bounded almost everywhere and 
	\begin{align*}
		\norm{f}_{\infty} := \mathrm{ess} \sup_{x \in \R^d} \babs{f(x)} 
		:= \inf \bigl \{ K \geq 0 \; \big \vert \abs{f(x)} \leq K \mbox{ for almost all $x \in \R^d$} \bigr \} 
		. 
	\end{align*}
	Then the space $L^{\infty}(\R^d) := \mathcal{L}^{\infty}(\R^d) / \sim$ is defined as the vector space of equivalence classes where two functions are identified if they agree almost everywhere. 
\end{defn}
\begin{thm}[Riesz-Fischer]
	For any $1 \leq p \leq \infty$, $L^p(\R^d)$ is complete with respect to the $\norm{\cdot}_p$ norm and thus a Banach space. If $p = 2$, $L^2(\R^d)$ is also a Hilbert space with scalar product 
	\begin{align*}
		\scpro{f}{g} = \int_{\R^d} \dd x \, f(x)^* \, g(x) 
		. 
	\end{align*}
\end{thm}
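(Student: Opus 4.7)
The plan is to treat the cases $1 \leq p < \infty$, $p = \infty$, and the Hilbert structure for $p=2$ in three separate steps. In every case the heart of the matter is to produce an almost-everywhere pointwise limit from a Cauchy sequence and then promote it to a limit in the $L^p$ norm.

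For $1 \leq p < \infty$, I would start with a Cauchy sequence $(f_n)_{n \in \N}$ in $L^p(\R^d)$ and extract a rapidly convergent subsequence $(f_{n_k})_{k \in \N}$ with $\norm{f_{n_{k+1}} - f_{n_k}}_p \leq 2^{-k}$. Define the partial telescoping majorants $g_K(x) := \sum_{k=1}^{K} \abs{f_{n_{k+1}}(x) - f_{n_k}(x)}$ and $g := \lim_K g_K$, which is measurable and $[0,\infty]$-valued. By Minkowski's inequality $\norm{g_K}_p \leq 1$, and then monotone convergence for $g_K^p \uparrow g^p$ gives $\norm{g}_p \leq 1$. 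In particular $g(x) < \infty$ almost everywhere, so the telescoping series defining $f(x) := f_{n_1}(x) + \sum_{k=1}^{\infty} \bigl(f_{n_{k+1}}(x) - f_{n_k}(x)\bigr)$ converges absolutely a.e.; set $f = 0$ on the null set where it does not. Then $f_{n_k} \to f$ pointwise a.e. with the integrable majorant $\abs{f_{n_1}} + g$, so dominated convergence (applied to $\abs{f_{n_k} - f}^p \leq (2(\abs{f_{n_1}}+g))^p$) yields $f \in L^p$ and $\norm{f_{n_k} - f}_p \to 0$. A standard Cauchy-subsequence argument then promotes convergence of the subsequence to convergence of the whole sequence.

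For $p = \infty$, the argument is easier because the $L^\infty$ norm controls pointwise behavior off a null set. Given a Cauchy sequence $(f_n)$, for each pair $(n,m)$ there is a null set $N_{n,m}$ outside of which $\abs{f_n(x) - f_m(x)} \leq \norm{f_n - f_m}_\infty$. The countable union $N := \bigcup_{n,m} N_{n,m}$ is still null, and on $\R^d \setminus N$ the sequence $(f_n)$ is uniformly Cauchy, hence converges uniformly to some bounded $f$; extend $f$ by $0$ on $N$. Uniform convergence off $N$ gives $\norm{f_n - f}_\infty \to 0$, and $\norm{f}_\infty \leq \sup_n \norm{f_n}_\infty < \infty$.

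For the Hilbert space statement with $p = 2$, I would check the three scalar-product axioms directly from the linearity and conjugation properties of the Lebesgue integral; convergence of $\int f^* g \, \dd x$ for $f,g \in L^2$ follows from the Cauchy-Schwarz inequality $\abs{\int f^* g \, \dd x} \leq \norm{f}_2 \norm{g}_2$, which is the integral version of the Cauchy-Schwarz inequality already proved for general pre-Hilbert spaces (applied to simple functions and extended by density, or proved directly via the polarization trick $0 \leq \int \abs{f - \lambda g}^2 \, \dd x$). The induced norm $\sqrt{\sscpro{f}{f}}$ coincides with $\norm{\cdot}_2$, so completeness in norm implies completeness as a pre-Hilbert space, giving the Hilbert property.

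The main obstacle is the $1 \leq p < \infty$ case: one must carefully justify the interchange of sum and integral when constructing the pointwise majorant $g$, and then legitimize passage to the pointwise limit by dominated convergence without circularity. Once the telescoping/Minkowski/monotone-convergence bookkeeping is set up correctly, everything else -- the $p = \infty$ case and the $p = 2$ Hilbert structure -- falls out by routine verification.
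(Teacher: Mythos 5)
The paper itself offers no proof of the Riesz--Fischer theorem: it is stated in Section~\ref{hilbert_spaces:facts_on_Lp} as one of several facts "given without proof, they can be found in standard text books on analysis," with a pointer to Lieb--Loss. Your argument is correct and is precisely the classical proof one finds there: extract a rapidly convergent subsequence with $\norm{f_{n_{k+1}} - f_{n_k}}_p \leq 2^{-k}$, control the telescoping majorant $g$ via Minkowski plus monotone convergence to get a.e.\ absolute convergence of the series defining $f$, upgrade to $L^p$-convergence of the subsequence by dominated convergence, and then use the Cauchy-plus-convergent-subsequence device to get convergence of the full sequence; the $p=\infty$ case by a countable union of null sets and uniform Cauchy off that set; and the $p=2$ Hilbert structure by checking the scalar-product axioms, with integrability of $f^* g$ from Cauchy--Schwarz and the observation that the induced norm is $\norm{\cdot}_2$. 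No gaps.
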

\begin{thm}
	For any $1 \leq p \lneq \infty$, the Banach space $L^p(\R^d)$ is separable. 
\end{thm}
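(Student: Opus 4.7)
The plan is to exhibit an explicit countable dense subset of $L^p(\R^d)$. The natural candidate is
\begin{align*}
	\mathcal{D} := \Bigl \{ \mbox{$\sum_{k = 1}^n$} q_k \, \mathbf{1}_{Q_k} \; \big \vert \; n \in \N, \, q_k \in \Q + i \Q, \, Q_k \subset \R^d \text{ a product of bounded intervals with rational endpoints} \Bigr \}.
\end{align*}
This set is manifestly countable, being a countable union over $n$ of finite Cartesian products of countable sets. The task reduces to showing that for every $f \in L^p(\R^d)$ and every $\eps > 0$ there exists $g \in \mathcal{D}$ with $\norm{f - g}_p < \eps$.

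I would carry this out by a sequence of three reductions, each absorbing a fraction $\eps/4$ of the error and exploiting dominated convergence with the integrable majorant $\abs{f}^p$. First, the truncations $f_N := f \, \mathbf{1}_{[-N,N]^d}$ satisfy $\abs{f_N - f}^p \leq \abs{f}^p$ and converge pointwise to $f$, hence $\norm{f_N - f}_p \to 0$; this reduces to compactly supported $f$. Second, cutting real and imaginary parts at height $M$ produces bounded functions $f_{N,M}$ converging in $L^p$ to $f_N$ by the same argument, so we may additionally assume $f$ is bounded with compact support. Third, standard measure theory produces simple functions $s = \sum_{k} c_k \mathbf{1}_{A_k}$ with $A_k$ of finite Lebesgue measure, uniformly bounded and converging pointwise to $f$; again dominated convergence gives convergence in $L^p$.

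The heart of the argument is the final step: approximating each measurable set $A_k$ of finite measure by a finite union $Q_k$ of rational rectangles. This uses outer regularity of Lebesgue measure: $A_k$ can be enclosed in an open set $U_k$ with $\sabs{U_k \setminus A_k}$ arbitrarily small, and every open set in $\R^d$ is a countable union of disjoint dyadic cubes, a finite subcollection of which exhausts $U_k$ up to measure $\eps'$. The symmetric difference then satisfies $\sabs{A_k \triangle Q_k} < 2 \eps'$, and
\begin{align*}
	\Bnorm{\mbox{$\sum_{k}$} c_k \bigl ( \mathbf{1}_{A_k} - \mathbf{1}_{Q_k} \bigr )}_p \leq \mbox{$\sum_{k}$} \abs{c_k} \, \sabs{A_k \triangle Q_k}^{1/p},
\end{align*}
which can be made smaller than $\eps/4$ by choosing $\eps'$ small enough (since the sum over $k$ is finite). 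Replacing each coefficient $c_k$ by a Gaussian rational $q_k$ with $\abs{q_k - c_k}$ small contributes a further error bounded by $\bigl ( \sum_k \abs{q_k - c_k}^p \sabs{Q_k} \bigr )^{1/p}$, again controllable since the $Q_k$ have finite total measure. Chaining the four estimates via the triangle inequality yields $\norm{f - g}_p < \eps$ for some $g \in \mathcal{D}$.

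The main obstacle is really the bookkeeping in the third reduction: one must choose the order of truncations carefully so that the simple-function approximation step operates on a bounded, compactly supported function (guaranteeing finitely many sets of finite measure), and the rational-rectangle approximation must use a quantitative form of Lebesgue regularity. Once these steps are organized, the rest is routine triangle-inequality estimation.
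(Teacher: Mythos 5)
Your argument is correct and is precisely the construction the paper delegates to Lieb--Loss \cite[Lemma~2.17]{LiebLoss:Analysis:2001}: approximate by compactly supported bounded functions, then by simple functions, then by rational linear combinations of indicators of rational rectangles, tracking the errors with dominated convergence and outer regularity. One small slip: the bound $\bigl ( \sum_k \abs{q_k - c_k}^p \sabs{Q_k} \bigr )^{1/p}$ for the coefficient-rounding error assumes the $Q_k$ are disjoint; either arrange the dyadic cubes to be disjoint (as is standard) or use the triangle-inequality form $\sum_k \abs{q_k - c_k} \, \sabs{Q_k}^{1/p}$, which serves equally well since the sum is finite.
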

\begin{proof}
	We refer to \cite[Lemma~2.17]{LiebLoss:Analysis:2001} for an explicit construction. The idea is to approximate arbitrary functions by functions which are constant on cubes and take only values in the rational complex numbers. 
\end{proof}
\begin{thm}[Monotone Convergence]
	Let $(f_k)_{k \in \N}$ be a sequence of non-decreasing functions in $L^1(\R^d)$ with pointwise limit $f$ defined almost everywhere. Define $I_k := \int_{\R^d} \dd x \, f_k(x)$; then the sequence $(I_k)$ is non-decreasing as well. If $I := \lim_{k \to \infty} I_k < \infty$, then $I = \int_{\R^d} \dd x \, f(x)$, \ie 
	\begin{align*}
		\lim_{k \to \infty} \int_{\R^d} \dd x \, f_k(x) = \int_{\R^d} \dd x \, \lim_{k \to \infty} f_k(x) 
		= \int_{\R^d} \dd x \, f(x) 
	\end{align*}
	holds. 
\end{thm}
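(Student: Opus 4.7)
The plan is to prove the classical monotone convergence theorem in two directions, reducing first to the non-negative case. First I would replace $f_k$ by $g_k := f_k - f_1 \geq 0$ almost everywhere; this sequence is still non-decreasing, integrable, and has a pointwise limit $g = f - f_1$ almost everywhere. Since $\int g_k = I_k - I_1$, everything reduces to proving the statement for a non-decreasing sequence of non-negative integrable functions, so I may assume $f_k \geq 0$ throughout. Monotonicity of $(I_k)$ is immediate from the monotonicity of the Lebesgue integral applied to $f_k \leq f_{k+1}$, so $(I_k)$ converges (in $[0,+\infty]$) to some $I$, and the hypothesis asserts $I < \infty$.

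Next I would establish the easy inequality $I \leq \int f$. Measurability of $f$ follows because a pointwise (a.e.) limit of measurable functions is measurable. Since $f_k(x) \leq f(x)$ for almost every $x$, monotonicity of the integral gives $I_k \leq \int f$, so passing to the limit yields $I \leq \int f$ (this is valid even if $\int f = +\infty$). The harder direction is the reverse inequality $\int f \leq I$, which simultaneously forces $f \in L^1(\R^d)$ since $I$ is finite.

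For the reverse inequality I would use the standard characterization of the Lebesgue integral as the supremum of integrals of simple functions majorized by $f$. Fix a non-negative simple function $s$ with $0 \leq s \leq f$ a.e.\ and fix $\alpha \in (0,1)$. Define the measurable sets
\begin{align*}
    E_k := \bigl\{ x \in \R^d \; \big\vert \; f_k(x) \geq \alpha \, s(x) \bigr\} .
\end{align*}
Because $(f_k)$ is non-decreasing with pointwise limit $f \geq s$, the $E_k$ are nested and exhaust $\R^d$ up to a null set. On $E_k$ we have $f_k \geq \alpha s$, hence
\begin{align*}
    \alpha \int_{E_k} \dd x \, s(x) \leq \int_{E_k} \dd x \, f_k(x) \leq I_k \leq I .
\end{align*}
Taking $k \to \infty$ and using continuity of the measure associated with the simple function $s$ (which is a finite linear combination of indicator functions of measurable sets), the left side tends to $\alpha \int s$, so $\alpha \int s \leq I$. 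Letting $\alpha \nearrow 1$ gives $\int s \leq I$, and taking the supremum over all simple $0 \leq s \leq f$ yields $\int f \leq I$. Combined with the previous step, this gives $\int f = I < \infty$, so $f \in L^1(\R^d)$ and the claimed identity holds.

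The main obstacle is the reverse inequality: one cannot move the limit inside the integral naively, since a priori it is not clear that $f$ is integrable. The $E_k$-exhaustion trick with the parameter $\alpha < 1$ is the essential trick that converts the pointwise monotone convergence into a statement about integrals of simple functions, which is where the definition of the Lebesgue integral can be invoked. Everything else is bookkeeping with the reduction to non-negative functions.
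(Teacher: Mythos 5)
The paper states this theorem without proof (the text immediately preceding the $L^p$ facts explicitly says ``We will give them without proof, they can be found in standard text books on analysis, see \eg\ \cite{LiebLoss:Analysis:2001}''), so there is no in-paper argument to compare against. Your proof is correct and is the classical Rudin-style argument: reduce to the non-negative case by subtracting $f_1$, observe the easy inequality $I \leq \int f$ from pointwise monotonicity, and obtain the reverse inequality by the $\alpha$-scaling trick with the exhausting sets $E_k = \{ f_k \geq \alpha s \}$, using continuity from below of the measure applied to a simple function $s \leq f$. The one place worth making explicit is why $\bigcup_k E_k$ is all of $\R^d$ up to a null set: if $s(x) = 0$ then $x \in E_1$ trivially, while if $s(x) > 0$ then $f(x) \geq s(x) > \alpha s(x)$ (here $\alpha < 1$ is essential), so $f_k(x) \to f(x)$ forces $x \in E_k$ for $k$ large; you gesture at this but it is the reason the strict inequality $\alpha < 1$ is needed rather than $\alpha = 1$. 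Everything else is sound, and the reduction step correctly transfers finiteness of $I$ to integrability of $f$.
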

\begin{thm}[Dominated Convergence]
	Let $(f_k)_{k \in \N}$ be a sequence of functions in $L^1(\R^d)$ that converges almost everywhere pointwise to some $f : \R^d \longrightarrow \C$. If there exists a non-negative $g \in L^1(\R^d)$ such that $\abs{f_k(x)} \leq g(x)$ holds almost everywhere for all $k \in \N$, then $g$ also bounds $\abs{f}$, \ie $\abs{f(x)} \leq g(x)$ almost everywhere, and $f \in L^1(\R^d)$. Furthermore, the limit $k \to \infty$ and integration with respect to $x$ commute and we have 
	\begin{align*}
		\lim_{k \to \infty} \int_{\R^d} \dd x \, f_k(x) = \int_{\R^d} \dd x \, \lim_{k \to \infty} f_k(x) = \int_{\R^d} \dd x \, f(x) 
		. 
	\end{align*}
\end{thm}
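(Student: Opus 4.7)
The plan is to reduce the Dominated Convergence Theorem to the Monotone Convergence Theorem stated just above. The complex‐valued $f_k$ will be handled by working with the real, non-negative function $\abs{f_k - f}$, so I never need monotone convergence for complex sequences.

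First I would establish the two easy side claims. Since $\abs{f_k(x)} \leq g(x)$ holds off a null set $N_k$ and $f_k(x) \to f(x)$ off a null set $M$, on the complement of the countable union $N := M \cup \bigcup_k N_k$ (still a null set) we may pass to the pointwise limit in $\abs{f_k(x)} \leq g(x)$ to obtain $\abs{f(x)} \leq g(x)$ almost everywhere. Together with $g \in L^1(\R^d)$, this immediately gives $f \in L^1(\R^d)$, so both $\int f_k$ and $\int f$ are well defined as Lebesgue integrals of complex-valued integrable functions.

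For the main claim, I would introduce the tail-supremum
\begin{align*}
    h_k(x) := \sup_{n \geq k} \babs{f_n(x) - f(x)} .
\end{align*}
Each $h_k$ is measurable (countable supremum of measurable functions), non-negative, and satisfies $h_k(x) \leq \abs{f_k(x)} + \cdots \leq 2 g(x)$ almost everywhere, so $h_k \in L^1(\R^d)$. By construction $h_k$ is non-increasing in $k$, and because $f_n(x) \to f(x)$ almost everywhere we have $h_k(x) \downarrow 0$ almost everywhere. Now consider $G_k := 2g - h_k$; this is a non-decreasing sequence in $L^1(\R^d)$ with pointwise limit $2g$ almost everywhere, and $\int G_k \leq \int 2g < \infty$. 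Monotone Convergence applied to $(G_k)$ yields $\lim_k \int G_k = \int 2g$, equivalently $\lim_k \int h_k = 0$.

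To conclude, observe that by the triangle inequality for the Lebesgue integral,
\begin{align*}
    \abs{\int_{\R^d} \dd x \, f_k(x) - \int_{\R^d} \dd x \, f(x)} \leq \int_{\R^d} \dd x \, \babs{f_k(x) - f(x)} \leq \int_{\R^d} \dd x \, h_k(x) \xrightarrow{k \to \infty} 0 ,
\end{align*}
which is exactly the claimed interchange of limit and integral. The main subtlety is step (3): one has to verify that the pointwise convergence $f_k \to f$ a.e.\ really upgrades to the pointwise convergence $h_k \to 0$ a.e., since taking a sup destroys many arguments that work termwise; this uses only that for each $x$ off a null set, $\abs{f_n(x) - f(x)} \to 0$ implies $\sup_{n \geq k} \abs{f_n(x) - f(x)} \to 0$ as $k \to \infty$, which is just the definition of the limit of a real sequence. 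The measurability of $h_k$ is what forces the sup to be taken over the countable range $n \geq k$ rather than, say, over a continuous parameter, and this is harmless here.
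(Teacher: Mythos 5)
The paper does not prove this theorem: it appears in the section ``Important facts on $L^p(\R^d)$'' where the author explicitly writes that the results ``will be given without proof, they can be found in standard text books.'' So there is no proof of record to compare against.

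Your argument is correct and is one of the standard reductions of Dominated Convergence to the Monotone Convergence Theorem stated just above it in the paper. The key steps all hold up: $h_k := \sup_{n\geq k}\abs{f_n - f}$ is measurable as a countable supremum, bounded a.e.\ by $2g$ via the triangle inequality, non-increasing in $k$, and decreases to $0$ a.e.\ precisely because pointwise convergence of $(f_n(x))$ to $f(x)$ means the tail sups shrink to zero. Applying the paper's version of MCT to the non-decreasing sequence $G_k := 2g - h_k$ (which has integrals bounded above by $\int 2g < \infty$, so the hypothesis $I < \infty$ is met) gives $\int h_k \to 0$, and the triangle inequality for the integral finishes it. The only thing worth flagging stylistically: the paper's MCT does not assume non-negativity, only monotonicity in $L^1$, so you do not actually need the observation that $G_k \geq 0$; and the chain ``$h_k(x)\leq\abs{f_k(x)}+\cdots\leq 2g(x)$'' should really read ``$\abs{f_n(x)-f(x)}\leq\abs{f_n(x)}+\abs{f(x)}\leq 2g(x)$ for every $n\geq k$, hence $h_k\leq 2g$'' to make clear which index is being estimated. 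Neither affects correctness.
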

%

\chapter{Linear operators} 
\label{operators}

We will give a rough overview of operator theory. The most important section is that on selfadjointness which illuminates the connection between unitary evolution groups and selfadjoint operators.

\section{Bounded operators} 
\label{operators:bounded}

The simplest operators are bounded operators. 
\begin{defn}[Bounded operator]
	Let $\mathcal{X}$ and $\mathcal{Y}$ be normed spaces. A linear operator $T : \mathcal{X} \longrightarrow \mathcal{Y}$ is called bounded if there exists $M \geq 0$ with $\norm{T x}_{\mathcal{Y}} \leq M \norm{x}_{\mathcal{X}}$. 
\end{defn}
Just as in the case of linear functionals, we have 
\begin{thm}
	Let $T : \mathcal{X} \longrightarrow \mathcal{Y}$ be a linear operator between two normed spaces $\mathcal{X}$ and $\mathcal{Y}$. Then the following statements are equivalent: 
	\begin{enumerate}[(i)]
		\item $T$ is continuous at $x_0 \in \mathcal{X}$. 
		\item $T$ is continuous. 
		\item $T$ is bounded. 
	\end{enumerate}
\end{thm}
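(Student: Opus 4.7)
The plan is to prove the three implications (i) $\Rightarrow$ (ii) $\Rightarrow$ (iii) $\Rightarrow$ (i), following the exact same strategy as in Theorem~\ref{hilbert_spaces:dual_space:thm:bounded_functionals_continuous}, but with the codomain norm $\snorm{\cdot}_{\mathcal{Y}}$ playing the role formerly played by $\abs{\cdot}$ on $\C$. Nothing essential changes because the proof of the functional case only used that the target was a normed space together with the linearity of the map.

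First I would show (i) $\Rightarrow$ (ii). Suppose $T$ is continuous at the single point $x_0$. For any other $y \in \mathcal{X}$ and any sequence $y_n \to y$, the sequence $x_n := y_n - y + x_0$ converges to $x_0$, so continuity at $x_0$ gives $T x_n \to T x_0$. By linearity, $T x_n = T y_n - T y + T x_0$, hence $T y_n \to T y$. So $T$ is continuous everywhere. Implication (ii) $\Rightarrow$ (i) is trivial.

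Next, (ii) $\Rightarrow$ (iii). Assuming continuity, I pick $\eps = 1$ and use continuity at $0$ (with $T 0 = 0$) to obtain $\delta > 0$ such that $\snorm{T x}_{\mathcal{Y}} \leq 1$ whenever $\snorm{x}_{\mathcal{X}} \leq \delta$. For any nonzero $y \in \mathcal{X}$ the vector $\tfrac{\delta}{\snorm{y}_{\mathcal{X}}} y$ has norm $\delta$, so
\begin{align*}
	\bnorm{T \bigl ( \tfrac{\delta}{\snorm{y}_{\mathcal{X}}} y \bigr )}_{\mathcal{Y}} = \tfrac{\delta}{\snorm{y}_{\mathcal{X}}} \bnorm{T y}_{\mathcal{Y}} \leq 1 ,
\end{align*}
which yields the bound $\snorm{T y}_{\mathcal{Y}} \leq \tfrac{1}{\delta} \snorm{y}_{\mathcal{X}}$ for all $y \in \mathcal{X}$ (the case $y = 0$ is trivial). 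So $M = \tfrac{1}{\delta}$ works.

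Finally, (iii) $\Rightarrow$ (ii). If $\snorm{T x}_{\mathcal{Y}} \leq M \snorm{x}_{\mathcal{X}}$ holds for all $x$, then by linearity $\snorm{T x - T y}_{\mathcal{Y}} = \snorm{T(x-y)}_{\mathcal{Y}} \leq M \snorm{x-y}_{\mathcal{X}}$, so given $\eps > 0$ the choice $\delta = \nicefrac{\eps}{M}$ (with the convention $\delta = 1$ in the trivial case $M = 0$) shows that $T$ is (uniformly) continuous on $\mathcal{X}$. This closes the cycle.

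There is no real obstacle here; the only minor care is to handle the trivial case $M = 0$ (equivalently $T \equiv 0$) in the last step and to treat $y = 0$ separately when rescaling in the middle step. Since the argument is essentially verbatim the one already given for functionals, I would in a written version actually just remark ``the proof is identical to that of Theorem~\ref{hilbert_spaces:dual_space:thm:bounded_functionals_continuous} upon replacing $\abs{L(\cdot)}$ by $\snorm{T \cdot}_{\mathcal{Y}}$.''
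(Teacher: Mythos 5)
Your proof is correct and follows exactly the approach the paper intends: the paper's proof simply states ``We leave it to the reader to modify the proof of Theorem~\ref{hilbert_spaces:dual_space:thm:bounded_functionals_continuous},'' and your adaptation -- replacing $\abs{L(\cdot)}$ with $\snorm{T\cdot}_{\mathcal{Y}}$ and handling the trivial edge cases -- is precisely that modification.
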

\begin{proof}
	We leave it to the reader to modify the proof of Theorem~\ref{hilbert_spaces:dual_space:thm:bounded_functionals_continuous}. 
\end{proof}
We can introduce a norm on the operators which leads to a natural notion of convergence: 
\begin{defn}[Operator norm]
	Let $T : \mathcal{X} \longrightarrow \mathcal{Y}$ be a bounded linear operator between normed spaces. Then we define the operator norm of $T$ as 
	\begin{align*}
		\norm{T} := \sup_{\substack{x \in \mathcal{X} \\ \norm{x} = 1}} \norm{T x}_{\mathcal{Y}} 
		. 
	\end{align*}
	The space of all bounded linear operators between $\mathcal{X}$ and $\mathcal{Y}$ is denoted by $\mathcal{B}(\mathcal{X},\mathcal{Y})$. 
\end{defn}
One can show that $\norm{T}$ coincides with 
\begin{align*}
	\inf \bigl \{ M \geq 0 \; \vert \; \norm{T x}_{\mathcal{Y}} \leq M \norm{x}_{\mathcal{X}} \; \forall x \in \mathcal{X} \bigr \} = \norm{T} 
	. 
\end{align*}
The product of two bounded operators $T \in \mathcal{B}(\mathcal{Y},\mathcal{Z})$ and $S \in \mathcal{B}(\mathcal{X},\mathcal{Y})$ is again a bounded operator and its norm can be estimated from above by 
\begin{align*}
	\norm{T S} \leq \norm{T} \norm{S} 
	. 
\end{align*}
If $\mathcal{Y} = \mathcal{X} = \mathcal{Z}$, this implies that the product is jointly continuous with respect to the norm topology on $\mathcal{X}$. For Hilbert spaces, the following useful theorem holds: 
\begin{thm}[Hellinger-Toeplitz]
	Let $A$ be a linear operator on a Hilbert space $\Hil$ with dense domain $\mathcal{D}(A)$ such that $\scpro{\psi}{A \varphi} = \scpro{A \psi}{\varphi}$ holds for all $\varphi , \psi \in \mathcal{D}(A)$. Then $\mathcal{D}(A) = \Hil$ if and only if $A$ is bounded. 
\end{thm}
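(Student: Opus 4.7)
The plan is to prove the two implications separately. The direction ``$A$ bounded $\Rightarrow \mathcal{D}(A)=\Hil$'' amounts to extension by density, while ``$\mathcal{D}(A)=\Hil \Rightarrow A$ bounded'' is the classical Hellinger--Toeplitz content, which I would reduce to the closed graph theorem by exploiting the symmetry assumption.

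First I would dispatch the easy direction. Assuming $A$ is bounded on its dense domain $\mathcal{D}(A)$, any $\varphi \in \Hil$ is the limit of some sequence $(\varphi_n) \subset \mathcal{D}(A)$. Boundedness of $A$ forces $(A\varphi_n)$ to be Cauchy in $\Hil$, hence convergent to some $\psi \in \Hil$. A routine check (independence of the choice of approximating sequence) shows that setting $A\varphi := \psi$ yields a bounded linear extension of $A$ to all of $\Hil$ with the same operator norm, and symmetry is preserved by continuity of the scalar product (Corollary~\ref{hilbert_spaces:onb:cor:continuity_scalar_product}). Thus one may identify $A$ with this extension and take $\mathcal{D}(A)=\Hil$.

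For the converse I would invoke the closed graph theorem: a linear map between Banach spaces defined on the whole space is bounded if and only if its graph is closed. Hence it suffices to verify that $A$ is closed. Let $\varphi_n \to \varphi$ in $\Hil$ and assume that $A\varphi_n \to \psi$ also converges. For any $\chi \in \Hil = \mathcal{D}(A)$, continuity of the scalar product together with the symmetry hypothesis gives
\[
\scpro{\chi}{\psi} = \lim_{n \to \infty} \scpro{\chi}{A\varphi_n} = \lim_{n \to \infty} \scpro{A\chi}{\varphi_n} = \scpro{A\chi}{\varphi} = \scpro{\chi}{A\varphi}.
\]
Since $\chi \in \Hil$ was arbitrary and $\Hil$ has trivial orthogonal complement, we conclude $\psi = A\varphi$, so the graph of $A$ is closed and $A$ is bounded.

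The main obstacle is not the computation, which is a three-line manipulation, but its dependence on the closed graph theorem -- a non-elementary consequence of the Baire category theorem that has not been established in the notes up to this point. If it is not yet available, I would either cite it or first derive it from the uniform boundedness principle as a preliminary. The only subtle point in the calculation itself is that it crucially uses $\chi \in \mathcal{D}(A)$ in order to apply symmetry, which is exactly what the hypothesis $\mathcal{D}(A)=\Hil$ supplies for free -- without that hypothesis one would only obtain $\psi = A\varphi$ on a dense subspace, which is insufficient to close the graph.
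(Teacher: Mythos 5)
Your proof is correct and follows the same route as the paper: the direction $\mathcal{D}(A)=\Hil \Rightarrow A$ bounded is reduced to the Closed Graph Theorem. Where the paper simply defers to \cite{Reed_Simon:bibel_1:1981} for this direction, you actually carry out the verification that the graph is closed via the symmetry identity and the triviality of $\Hil^{\perp}$ --- this is precisely the standard Hellinger--Toeplitz computation and it is worth having spelled out. You are also right to flag that the Closed Graph Theorem is not established elsewhere in these notes and would have to be cited or proved as a preliminary.

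For the easy direction, your extension-by-density argument is more careful than the paper's, which asserts $\norm{A\varphi} \leq M \norm{\varphi}$ ``for all $\varphi \in \Hil$'' and then concludes $\mathcal{D}(A)=\Hil$; taken literally that estimate is not even defined off $\mathcal{D}(A)$, so what is really going on is the identification of $A$ with its unique bounded extension (Theorem~\ref{operators:bounded:thm:extensions_bounded_operators}), exactly as you describe. One small point worth retaining from your version: if one is pedantic, the theorem as stated is true only modulo that identification, since a bounded operator with a genuinely proper dense domain would falsify the literal reading of ``$\mathcal{D}(A)=\Hil$.'' Your phrasing makes the convention explicit, which is preferable.
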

\begin{proof}
	$\Leftarrow$: If $A$ is bounded, then $\norm{A \varphi} \leq M \norm{\varphi}$ for some $M \geq 0$ and all $\varphi \in \Hil$ by definition of the norm. Hence, the domain of $A$ is all of $\Hil$. 
	\medskip
	
	\noindent
	$\Rightarrow$: This direction relies on a rather deep result of functional analysis, the so-called Open Mapping Theorem and its corollary, the Closed Graph Theorem. The interested reader may look it up in Chapter~III.5 of \cite{Reed_Simon:bibel_1:1981}. 
\end{proof}
Let $T,S$ be bounded linear operators between the normed spaces $\mathcal{X}$ and $\mathcal{Y}$. If we define 
\begin{align*}
	(T + S) x := T x + S x 
\end{align*}
as addition and 
\begin{align*}
	\bigl ( \lambda \cdot T  \bigr ) x := \lambda T x 
\end{align*}
as scalar multiplication, the set of bounded linear operators forms a vector space. 
\begin{prop}
	The vector space $\mathcal{B}(\mathcal{X},\mathcal{Y})$ of bounded linear operators between normed spaces $\mathcal{X}$ and $\mathcal{Y}$ with operator norm forms a normed space. If $\mathcal{Y}$ is complete, $\mathcal{B}(\mathcal{X},\mathcal{Y})$ is a Banach space. 
\end{prop}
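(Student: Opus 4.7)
The plan is to mimic the proof of Proposition~\ref{hilbert_spaces:dual_space:prop:completeness_Xstar} (completeness of the dual) almost verbatim, replacing the scalar field $\C$ by the Banach space $\mathcal{Y}$. The proof splits cleanly into the normed-space part and the completeness part.

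For the normed-space part, I would check the three norm axioms on $\mathcal{B}(\mathcal{X},\mathcal{Y})$ directly from the definition $\norm{T} = \sup_{\norm{x} = 1} \norm{T x}_{\mathcal{Y}}$. Positive definiteness follows since $\norm{T} = 0$ forces $T x = 0$ for every unit vector $x$, hence (by linearity and $T(0)=0$) $T \equiv 0$; absolute homogeneity is immediate from $\norm{\lambda T x}_{\mathcal{Y}} = \abs{\lambda} \norm{T x}_{\mathcal{Y}}$; and the triangle inequality follows from $\norm{(T+S)x}_{\mathcal{Y}} \leq \norm{Tx}_{\mathcal{Y}} + \norm{Sx}_{\mathcal{Y}}$ after taking suprema. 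These are all one-liners.

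The completeness part is the substantive step. Given a Cauchy sequence $(T_n)_{n \in \N}$ in $\mathcal{B}(\mathcal{X},\mathcal{Y})$, I would first observe that for every fixed $x \in \mathcal{X}$ the sequence $(T_n x)_{n \in \N}$ is Cauchy in $\mathcal{Y}$ because
\begin{align*}
	\norm{T_k x - T_j x}_{\mathcal{Y}} \leq \norm{T_k - T_j} \, \norm{x}_{\mathcal{X}}.
\end{align*}
Since $\mathcal{Y}$ is complete, the pointwise limit $T x := \lim_{n \to \infty} T_n x$ exists in $\mathcal{Y}$. Linearity of $T$ is inherited from the $T_n$ by continuity of the vector space operations in $\mathcal{Y}$.

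The main obstacle — the only part requiring any care — is to show that $T$ is bounded and that $T_n \to T$ \emph{in operator norm} (not merely pointwise). For this I would fix $\eps > 0$, pick $N(\eps)$ so that $\norm{T_k - T_n} < \eps$ for $k,n \geq N(\eps)$, and then for any unit vector $x \in \mathcal{X}$ estimate
\begin{align*}
	\norm{(T - T_n)x}_{\mathcal{Y}} = \lim_{k \to \infty} \norm{(T_k - T_n)x}_{\mathcal{Y}} \leq \limsup_{k \to \infty} \norm{T_k - T_n} \, \norm{x}_{\mathcal{X}} \leq \eps,
\end{align*}
using continuity of $\norm{\cdot}_{\mathcal{Y}}$ in the first equality. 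Taking the supremum over unit vectors $x$ yields $\norm{T - T_n} \leq \eps$ for all $n \geq N(\eps)$, which simultaneously shows $T - T_n \in \mathcal{B}(\mathcal{X},\mathcal{Y})$ (hence $T = T_n + (T - T_n)$ is bounded with $\norm{T} \leq \norm{T_n} + \eps$) and $T_n \to T$ in the operator norm. This closes the proof.
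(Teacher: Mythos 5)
Your proposal is correct and is exactly what the paper intends: the paper's proof of this proposition simply says the normed-space part follows from the definition and leaves completeness as an exercise by modifying the proof of Proposition~\ref{hilbert_spaces:dual_space:prop:completeness_Xstar}, which is precisely the template you follow (replace $\C$ by the Banach space $\mathcal{Y}$). Your handling of the one delicate point — passing from pointwise convergence to operator-norm convergence via $\norm{(T-T_n)x}_{\mathcal{Y}} = \lim_k \norm{(T_k-T_n)x}_{\mathcal{Y}} \leq \eps \norm{x}$ — matches the paper's argument for functionals line by line.
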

\begin{proof}
	The fact $\mathcal{B}(\mathcal{X},\mathcal{Y})$ is a normed vector space follows directly from the definition. To show that $\mathcal{B}(\mathcal{X},\mathcal{Y})$ is a Banach space whenever $\mathcal{Y}$ is, one has to modify the proof of Theorem~\ref{hilbert_spaces:dual_space:prop:completeness_Xstar} to suit the current setting. This is left as an exercise. 
\end{proof}
Very often, it is easy to \emph{define} an operator $T$ on a ``nice'' dense subset $\mathcal{D} \subseteq \mathcal{X}$. Then the next theorem tells us that if the operator is bounded, there is a unique bounded extension of the operator to the whole space $\mathcal{X}$. For instance, this allows us to instantly extend the Fourier transform from Schwartz functions to $L^2(\R^d)$ functions (see~Proposition~\ref{S_and_Sprime:schwartz_functions:thm:Fourier_unitary_on_L2}). 
\begin{thm}\label{operators:bounded:thm:extensions_bounded_operators}
	Let $\mathcal{D} \subseteq \mathcal{X}$ be a dense subset of a normed space and $\mathcal{Y}$ be a Banach space. Furthermore, let $T : \mathcal{D} \longrightarrow \mathcal{Y}$ be a bounded linear operator. Then there exists a unique bounded linear extension $\tilde{T} : \mathcal{X} \longrightarrow \mathcal{Y}$ and $\snorm{\tilde{T}} = \norm{T}_{\mathcal{D}}$. 
\end{thm}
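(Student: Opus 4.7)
The plan is to use the classical density argument: define $\tilde{T}x$ as the limit of $Tx_n$ along any approximating sequence $x_n \in \mathcal{D}$, then verify that this limit exists, is independent of the chosen sequence, is linear, extends $T$, and has the correct norm. Uniqueness will follow from continuity and density.

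First I would fix $x \in \mathcal{X}$ and, using density of $\mathcal{D}$, pick a sequence $(x_n) \subseteq \mathcal{D}$ with $x_n \to x$. Such a sequence is Cauchy in $\mathcal{X}$, and by linearity and boundedness of $T$ we have
\begin{align*}
    \snorm{T x_n - T x_m}_{\mathcal{Y}} = \snorm{T (x_n - x_m)}_{\mathcal{Y}} \leq \norm{T}_{\mathcal{D}} \, \snorm{x_n - x_m}_{\mathcal{X}} ,
\end{align*}
so $(T x_n)$ is Cauchy in $\mathcal{Y}$. Since $\mathcal{Y}$ is a Banach space, the limit exists, and I would define $\tilde{T} x := \lim_{n \to \infty} T x_n$. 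To show this does not depend on the approximating sequence, I would take another sequence $x'_n \to x$ in $\mathcal{D}$ and use the same estimate $\snorm{T x_n - T x'_n}_{\mathcal{Y}} \leq \norm{T}_{\mathcal{D}} \snorm{x_n - x'_n}_{\mathcal{X}} \to 0$.

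Next I would verify the algebraic and analytic properties. For linearity, given $x, y \in \mathcal{X}$ and $\alpha \in \C$, approximating sequences $x_n \to x$ and $y_n \to y$ in $\mathcal{D}$ yield $\alpha x_n + y_n \to \alpha x + y$; applying $T$ and passing to the limit gives $\tilde{T}(\alpha x + y) = \alpha \tilde{T} x + \tilde{T} y$. That $\tilde{T}$ extends $T$ follows by taking the constant sequence when $x \in \mathcal{D}$. For the norm equality, continuity of the norm on $\mathcal{Y}$ gives
\begin{align*}
    \bnorm{\tilde{T} x}_{\mathcal{Y}} = \lim_{n \to \infty} \snorm{T x_n}_{\mathcal{Y}} \leq \norm{T}_{\mathcal{D}} \, \lim_{n \to \infty} \snorm{x_n}_{\mathcal{X}} = \norm{T}_{\mathcal{D}} \, \snorm{x}_{\mathcal{X}} ,
\end{align*}
so $\snorm{\tilde{T}} \leq \norm{T}_{\mathcal{D}}$; the reverse inequality is automatic because $\tilde{T}$ extends $T$, and taking the supremum over unit vectors in $\mathcal{D}$ cannot exceed the supremum over unit vectors in $\mathcal{X}$.

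Finally, for uniqueness, suppose $\tilde{T}_1$ and $\tilde{T}_2$ are two bounded linear extensions. They agree on $\mathcal{D}$, and both are continuous on $\mathcal{X}$ (boundedness implies continuity). Given $x \in \mathcal{X}$ with approximating sequence $x_n \to x$ in $\mathcal{D}$, continuity gives $\tilde{T}_1 x = \lim \tilde{T}_1 x_n = \lim T x_n = \lim \tilde{T}_2 x_n = \tilde{T}_2 x$. None of the steps is really an obstacle; the only subtlety is being careful about the order of well-definedness (the limit exists before one can talk about linearity), and ensuring that all norm manipulations use only the bound $\norm{T}_{\mathcal{D}}$ on the dense subset, which is legitimate because the approximating sequences lie entirely in $\mathcal{D}$.
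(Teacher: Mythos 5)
Your proof is correct and follows essentially the same route as the paper: define $\tilde{T}x$ as the limit of $Tx_n$ along an approximating sequence, use Cauchyness of $(Tx_n)$ and completeness of $\mathcal{Y}$ to get existence, then verify independence of the sequence, linearity, and the norm identity. The one stylistic difference is that you show well-definedness via the direct estimate $\snorm{Tx_n - Tx_n'} \leq \norm{T}_{\mathcal{D}} \snorm{x_n - x_n'} \to 0$, whereas the paper interleaves the two sequences and argues that subsequences of a convergent sequence share a common limit — your version is slightly cleaner — and you also spell out linearity and uniqueness, which the paper leaves implicit.
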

\begin{proof}
	We construct $\tilde{T}$ explicitly: let $x \in \mathcal{X}$ be arbitrary. Since $\mathcal{D}$ is dense in $\mathcal{X}$, there exists a sequence $(x_n)_{n \in \N}$ in $\mathcal{D}$ which converges to $x$. Then we set 
	\begin{align*}
		\tilde{T} x := \lim_{n \to \infty} T x_n 
		. 
	\end{align*}
	First of all, $\tilde{T}$ is linear. It is also well-defined: $(T x_n)_{n \in \N}$ is a Cauchy sequence in $\mathcal{Y}$, 
	\begin{align*}
		\bnorm{T x_n - T x_k}_{\mathcal{Y}} \leq \snorm{T}_{\mathcal{D}} \, \snorm{x_n - x_k}_{\mathcal{X}} \xrightarrow{n,k \to \infty} 0 
		, 
	\end{align*}
	where the norm of $T$ is defined as 
	\begin{align*}
		\snorm{T}_{\mathcal{D}} := \sup_{x \in \mathcal{D} \setminus \{ 0 \}} \frac{\norm{T x}_{\mathcal{Y}}}{\norm{x}_{\mathcal{X}}} 
		. 
	\end{align*}
	This Cauchy sequence in $\mathcal{Y}$ converges to some unique $y \in \mathcal{Y}$ as the target space is complete. Let $(x_n')_{n \in \N}$ be a second sequence in $\mathcal{D}$ that converges to $x$ and assume the sequence $(T x_n')_{n \in \N}$ converges to some $y' \in \mathcal{Y}$. We define a third sequence $(z_n)_{n \in \N}$ which alternates between elements of the first sequence $(x_n)_{n \in \N}$ and the second sequence $(x_n')_{n \in \N}$, \ie 
	\begin{align*}
		z_{2n - 1} &:= x_n 
		\\
		z_{2n} &:= x_n' 
	\end{align*}
	for all $n \in \N$. Then $(z_n)_{n \in \N}$ also converges to $x$ and $\bigl ( T z_n \bigr )$ forms a Cauchy sequence that converges to, say, $\zeta \in \mathcal{Y}$. Subsequences of convergent sequences are also convergent and they must converge to the same limit point. Hence, we conclude that 
	\begin{align*}
		\zeta &= \lim_{n \to \infty} T z_n = \lim_{n \to \infty} T z_{2n} = \lim_{n \to \infty} T x_n = y 
		\\
		&= \lim_{n \to \infty} T z_{2n - 1} = \lim_{n \to \infty} T x_n' = y' 
	\end{align*}
	holds and $\tilde{T} x$ does not depend on the particular choice of sequence which approximates $x$ in $\mathcal{D}$. It remains to show that $\snorm{\tilde{T}} = \snorm{T}_{\mathcal{D}}$: we can calculate the norm of $\tilde{T}$ on the dense subset $\mathcal{D}$ and use that $\tilde{T} \vert_{\mathcal{D}} = T$ to obtain 
	\begin{align*}
		\snorm{\tilde{T}} &= \sup_{\substack{x \in \mathcal{X} \\ \norm{x} = 1}} \snorm{\tilde{T} x} 
		= \sup_{x \in \mathcal{X} \setminus \{ 0 \}} \frac{\snorm{\tilde{T} x}}{\norm{x}} 
		= \sup_{x \in \mathcal{D} \setminus \{ 0 \}} \frac{\snorm{\tilde{T} x}}{\norm{x}} 
		\\
		&= \sup_{x \in \mathcal{D} \setminus \{ 0 \}} \frac{\norm{T x}}{\norm{x}} 
		. 
	\end{align*}
	Hence, the norm of the extension $\tilde{T}$ is equal to the norm of the original operator $T$. 
\end{proof}
The spectrum of an operator has been related to the set of possible outcomes of measurements (if the operator is selfadjoint). 
\begin{defn}[Spectrum]
	Let $T \in \mathcal{B}(\mathcal{X})$ be a bounded linear operator on a Banach space $\mathcal{X}$. We define: 
	\begin{enumerate}[(i)]
		\item The resolvent of $T$ is the set $\rho(T) := \bigl \{ \lambda \in \C \; \vert \; T - \lambda \id \mbox{ is bijective} \bigr \}$. 
		\item The spectrum $\sigma(T) := \C \setminus \rho(T)$ is the complement of $\rho(T)$ in $\C$. 
	\end{enumerate}
\end{defn}
One can show that for all $\lambda \in \rho(T)$, the map $(T - \lambda \id)^{-1}$ is a bounded operator and the spectrum is a closed subset of $\C$. One can show its $\sigma(T)$ is \emph{compact} and contained in $\bigl \{ \lambda \in \C \; \vert \; \abs{\lambda} \leq \norm{T} \bigr \} \subset \C$. 


\section{Adjoint operator} 
\label{operators:adjoint}

If $\mathcal{X}$ is a normed space, then we have defined $\mathcal{X}^*$, the space of bounded linear functionals on $\mathcal{X}$. If $T : \mathcal{X} \longrightarrow \mathcal{Y}$ is a bounded linear operator between two normed spaces, it naturally defines the \emph{adjoint operator} $T' : \mathcal{Y}^* \longrightarrow \mathcal{X}^*$ via 
\begin{align}
	(T' L)(x) := L(Tx) 
	\label{operators:adjoint:eqn:adjoint_operator_functional}
\end{align}
for all $x \in \mathcal{X}$ and $L \in \mathcal{Y}^*$. In case of Hilbert spaces, one can associate the \emph{Hilbert space adjoint}. We will almost exclusively work with the latter and thus drop ``Hilbert space'' most of the time. 
\begin{defn}[Hilbert space adjoint]
	Let $\Hil$ be a Hilbert space and $A \in \mathcal{B}(\Hil)$ be a bounded linear operator on $\Hil$. The antilinear isomorphism $C : \Hil \longrightarrow \Hil^*$ taken from Theorem~\ref{hilbert_spaces:dual_space:thm:Riesz_Lemma} maps functionals on $\Hil$ onto the corresponding vectors, \ie $C \psi := \sscpro{\psi}{\cdot} = L_{\psi}$. Then the Hilbert space adjoint is defined as 
	\begin{align*}
		A^* := C^{-1} A' C 
		, 
	\end{align*}
	or put differently 
	\begin{align*}
		\sscpro{A^* \varphi}{\psi} := \sscpro{\varphi}{A \psi} 
	\end{align*}
	for all $\varphi , \psi \in \Hil$. \marginpar{\small 2009.11.17} 
\end{defn}
\begin{prop}
	Let $A , B \in \mathcal{B}(\Hil)$ be two bounded linear operators on a Hilbert space $\Hil$ and $\alpha \in \C$. Then, we have: 
	\begin{enumerate}[(i)]
		\item $(A + B)^* = A^* + B^*$ 
		\item $(\alpha A)^* = \alpha^* \, A^*$
		\item $(A B)^* = B^* A^*$
		\item $\norm{A^*} = \norm{A}$
		\item $A^{**} = A$
		\item $\norm{A^* A} = \norm{A A^*} = \norm{A}^2$
		\item $\ker A = (\im A^*)^{\perp}$, $\ker A^* = (\im A)^{\perp}$
	\end{enumerate}
\end{prop}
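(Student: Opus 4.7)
The plan is to deduce all seven properties from the single defining identity $\sscpro{A^* \varphi}{\psi} = \sscpro{\varphi}{A \psi}$, together with the uniqueness statement inside Riesz' Lemma (Theorem~\ref{hilbert_spaces:dual_space:thm:Riesz_Lemma}). The guiding principle throughout is: whenever we want to show $B^* = C$ for some concrete candidate $C$, it suffices to verify $\sscpro{C \varphi}{\psi} = \sscpro{\varphi}{B \psi}$ for all $\varphi, \psi \in \Hil$, since Riesz' Lemma makes the adjoint unique.

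For (i)--(iii) and (v), I would just expand both sides. For instance, to prove (iii), compute $\sscpro{B^* A^* \varphi}{\psi} = \sscpro{A^* \varphi}{B \psi} = \sscpro{\varphi}{A B \psi}$, which identifies $B^* A^*$ as the adjoint of $AB$ by uniqueness. Item (v) is even quicker: the identity $\sscpro{\varphi}{A^* \psi} = \sscpro{A^* \psi}{\varphi}^* = \sscpro{\psi}{A \varphi}^* = \sscpro{A \varphi}{\psi}$ shows $A$ is the adjoint of $A^*$. Items (i) and (ii) are analogous.

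For (iv), I would estimate $\norm{A^* \varphi}$ by duality: $\norm{A^* \varphi} = \sup_{\norm{\psi} = 1} \babs{\sscpro{A^* \varphi}{\psi}} = \sup_{\norm{\psi} = 1} \babs{\sscpro{\varphi}{A \psi}} \leq \norm{A} \, \norm{\varphi}$, using Cauchy--Schwarz and the definition of $\norm{A}$. This gives $\norm{A^*} \leq \norm{A}$, and applying the same bound to $A^*$ (combined with $A^{**} = A$ from (v)) yields the reverse inequality.

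For (vii), $\varphi \in \ker A$ is equivalent to $\sscpro{\psi}{A \varphi} = 0$ for all $\psi \in \Hil$ (since a vector is zero iff it is orthogonal to everything, which follows from Riesz), and by the defining relation this is the same as $\sscpro{A^* \psi}{\varphi} = 0$ for all $\psi$, i.e.\ $\varphi \in (\im A^*)^{\perp}$. The second identity follows by swapping roles of $A$ and $A^*$ and invoking (v).

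The only genuinely content-bearing step is (vi), the $C^{*}$-identity, which I would treat last. One direction is cheap: by submultiplicativity of the operator norm and (iv), $\norm{A^* A} \leq \norm{A^*} \, \norm{A} = \norm{A}^2$, and similarly for $A A^*$. For the reverse direction I would compute, for unit $\psi$,
\begin{align*}
	\norm{A \psi}^2 = \sscpro{A \psi}{A \psi} = \sscpro{\psi}{A^* A \psi} \leq \norm{\psi} \, \norm{A^* A \psi} \leq \norm{A^* A}
\end{align*}
by Cauchy--Schwarz, and take the supremum over $\norm{\psi} = 1$ to obtain $\norm{A}^2 \leq \norm{A^* A}$. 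The same argument applied to $A^*$ (using (iv) and (v)) gives $\norm{A}^2 = \norm{A^*}^2 \leq \norm{A A^*}$. The main obstacle, mild as it is, is noticing that this one-line Cauchy--Schwarz trick is what turns the trivial inequality $\norm{A^* A} \leq \norm{A}^2$ into an equality; the rest of the proposition is essentially bookkeeping around the defining identity of the adjoint.
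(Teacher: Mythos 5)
Your proof is correct and follows essentially the same approach as the paper: properties (i)--(iii), (v), (vii) from the defining identity and uniqueness, (iv) by duality and Cauchy--Schwarz, and (vi) by the one-line Cauchy--Schwarz trick combined with submultiplicativity. The only cosmetic difference is that in (iv) you obtain the reverse inequality by symmetrizing via $A^{**}=A$, whereas the paper proves both inequalities directly through the Riesz isomorphism; this is a mild streamlining, not a different route.
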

\begin{proof}
	Properties (i)-(iii) follow directly from the defintion. 
	
	To show (iv), we note that $\norm{A} \leq \norm{A^*}$ follows from 
	\begin{align*}
		\norm{A \varphi} &= \abs{\Bscpro{\tfrac{A \varphi}{\snorm{A \varphi}}}{A \varphi}} 
		\overset{*}{=} \sup_{\norm{L}_* = 1} \abs{L(A \varphi)} 
		\\
		&= \sup_{\norm{\psi_L} = 1} \abs{\scpro{A^* \psi_L}{\varphi}} 
		\leq \norm{A^*} \norm{\varphi} 
	\end{align*}
	where in the step marked with $\ast$, we have used that we can calculate the norm from picking the functional associated to $\tfrac{A \varphi}{\snorm{A \varphi}}$: for a functional with norm 1, $\norm{L}_* =  1$, the norm of $L(A \varphi)$ cannot exceed that of $A \varphi$
	\begin{align*}
		\abs{L(A \varphi)} &= \sabs{\sscpro{\psi_L}{A \varphi}} 
		\leq \snorm{\psi_L} \snorm{A \varphi} 
		= \snorm{A \varphi} 
		. 
	\end{align*}
	Here, $\psi_L$ is the vector such that $L = \sscpro{\psi_L}{\cdot}$ which exists by Theorem~\ref{hilbert_spaces:dual_space:thm:Riesz_Lemma}. This theorem also ensures $\norm{L}_* = \snorm{\psi_L}$. On the other hand, from 
	\begin{align*}
		\bnorm{A^* \psi_L} &= \bnorm{L_{A^* \psi_L}}_* 
		= \sup_{\norm{\varphi} = 1} \babs{\bscpro{A^* \psi_L}{\varphi}} 
		\\
		&\leq \sup_{\norm{\varphi} = 1} \norm{\psi_L} \norm{A \varphi} 
		= \norm{A} \norm{L}_* 
		= \norm{A} \snorm{\psi_L} 
	\end{align*}
	we conclude $\norm{A^*} \leq \norm{A}$. Hence, $\norm{A^*} = \norm{A}$. 
	
	(v) is clear. For (vi), we remark 
	\begin{align*}
		\norm{A}^2 &= \sup_{\norm{\varphi} = 1} \norm{A \varphi}^2 
		= \sup_{\norm{\varphi} = 1} \bscpro{\varphi}{A^* A \varphi} 
		\\
		&\leq \sup_{\norm{\varphi} = 1} \norm{A^* A \varphi} 
		= \norm{A^* A} 
		. 
	\end{align*}
	This means 
	\begin{align*}
		\norm{A}^2 \leq \norm{A^* A} \leq \snorm{A^*} \norm{A} = \norm{A}^2 
		. 
	\end{align*}
	which combined with (iv), 
	\begin{align*}
		\norm{A}^2 = \norm{A^*}^2 \leq \norm{A A^*} \leq \snorm{A} \norm{A^*} = \norm{A}^2 
	\end{align*}
	implies $\norm{A^* A}= \norm{A}^2 = \norm{A A^*}$. (vii) is left as an exercise. 
\end{proof}
\begin{defn}
	Let $\Hil$ be a Hilbert space and $A \in \mathcal{B}(\Hil)$. Then $A$ is called 
	\begin{enumerate}[(i)]
		\item normal if $A^* A = A A^*$. 
		\item selfadjoint (or hermitian) if $A^* = A$. 
		\item unitary if $A^* A = \id_{\Hil} = A A^*$. 
		\item an orthogonal projection if $A^2 = A$ and $A^* = A$. 
		\item positive if $\bscpro{\varphi}{A \varphi} \geq 0$ for all $\varphi \in \Hil$. 
	\end{enumerate}
\end{defn}
%

\section{Unitary operators} 
\label{operators:unitary}

Unitary operators $U$ have the nice property that 
\begin{align*}
	\scpro{U \varphi}{U \psi} = \scpro{\varphi}{U^* U \psi} = \scpro{\varphi}{\psi} 
\end{align*}
for all $\varphi , \psi \in \Hil$. In case of quantum mechanics, we are interested in solutions to the Schrödinger equation 
\begin{align*}
	i \frac{\dd }{\dd t} \psi(t) = H \psi(t) 
	, 
	&& 
	\psi(t) = \psi_0 
	, 
\end{align*}
for a hamilton operator which satisfies $H^* = H$. Assume that $H$ is bounded (this is really the case for many simple quantum systems). Then the unitary group generated by $H$, 
\begin{align*}
	U(t) = e^{-i t H} 
	, 
\end{align*}
can be written as a power series, 
\begin{align*}
	e^{-i t H} = \sum_{n = 0}^{\infty} \frac{1}{n!} (-i t)^n \, H^n 
\end{align*}
where $H^0 := \id$ by convention. The sequence of partial sums converges in the operator norm to $e^{- i t H}$, 
\begin{align*}
	\sum_{n = 0}^N \frac{1}{n!} (-i t)^n \, H^n \xrightarrow{N \to \infty} e^{-i t H} 
	, 
\end{align*}
since we can make the simple estimate 
\begin{align*}
	\norm{\sum_{n = 0}^{\infty} \frac{1}{n!} (-i t)^n \, H^n \psi} &\leq \sum_{n = 0}^{\infty} \frac{1}{n!} \abs{t}^n \norm{H^n \psi} 
	\leq \sum_{n = 0}^{\infty} \frac{1}{n!} \abs{t}^n \norm{H}^n \, \norm{\psi} 
	\\ 
	&= e^{\abs{t} \norm{H}} \norm{\psi} < \infty 
	. 
\end{align*}
This shows that the power series of the exponential converges in the operator norm independently of the choice of $\psi$ to a bounded operator. Given a unitary evolution group, it is suggestive to obtain the hamiltonian which generates it by deriving $U(t) \psi$ with respect to time. This is indeed the correct idea. The left-hand side of the Schrödinger equation (modulo a factor of $i$) can be expressed as a limit 
\begin{align*}
	\frac{\dd}{\dd t} \psi(t) = \lim_{\delta \to 0} \tfrac{1}{\delta} \bigl ( \psi(t+\delta) - \psi(t) \bigr ) 
	. 
\end{align*}
This limit really exists, but before we compute it, we note that since 
\begin{align*}
	\psi(t+\delta) - \psi(t) = e^{- i (t+\delta) H} \psi_0 - e^{- i t H} \psi_0 
	= e^{- i t H} \bigl ( e^{- i \delta H} - 1 \bigr ) \psi_0 
	, 
\end{align*}
it suffices to consider differentiability at $t = 0$: taking limits in norm of $\Hil$, we get 
\begin{align*}
	\frac{\dd}{\dd t} \psi(0) &= \lim_{\delta \to 0} \, \tfrac{1}{\delta} \bigl ( \psi(\delta) - \psi_0 \bigr ) = \lim_{\delta \to 0} \, \frac{1}{\delta} \left ( \sum_{n = 0}^{\infty} \frac{(-i)^n}{n!} \delta^n H^n \psi_0 - \psi_0 \right ) 
	\\
	&
	= \lim_{\delta \to 0} \sum_{n = 1}^{\infty} \frac{(-i)^n}{n!} \delta^{n-1} H^n \psi_0 
	= -i H \psi_0 
	.  
\end{align*}
Hence, we have established that $e^{-i t H} \psi_0$ solves the Schrödinger condition with $\psi(0) = \psi_0$, 
\begin{align*}
	i \frac{\dd}{\dd t} \psi(t) = H \psi(t) 
	. 
\end{align*}
However, this procedure \emph{does not work} if $H$ is unbounded (\ie the generic case)! Before we proceed, we need to introduce several different notions of convergence of sequences of operators which are necessary to define derivatives of $U(t)$. 
\begin{defn}[Convergence of operators]
	Let $A_n \in \mathcal{B}(\Hil)$ be a sequence of bounded operators. We say that the sequence converges to $A \in \mathcal{B}(\Hil)$
	\begin{enumerate}[(i)]
		\item uniformly/in norm if $\lim_{n \to \infty} \bnorm{A_n - A} = 0$. 
		\item strongly if $\lim_{n \to \infty} \bnorm{A_n \psi - A \psi} = 0$ for all $\psi \in \Hil$. 
		\item weakly if $\lim_{n \to \infty} \bscpro{\varphi}{A_n \psi - A \psi} = 0$ for all $\varphi , \psi \in \Hil$. 
	\end{enumerate}
\end{defn}
Convergence of a sequence of operators in norm implies strong and weak convergence, but not the other way around. In the tutorials, we will also show explicitly that weak convergence does not necessarily imply strong convergence. 
\begin{example}
	With the arguments above, we have shown that if $H = H^*$ is selfadjoint and bounded, then $t \mapsto e^{-i t H}$ is \emph{uniformly} continuous. 
\end{example}
If $\norm{H} = \infty$ on the other hand, uniform continuity is too strong a requirement. If $H = - \frac{1}{2} \Delta_x$ is the free Schrödinger operator on $L^2(\R^d_x)$, then the Fourier transform $\Fourier$ links the position representation on $L^2(\R^d_x)$ to the momentum representation on $L^2(\R^d_k)$. In this representation, the free Schrödinger operator $H$ simplifies to the multiplication operator 
\begin{align*}
	\hat{H} = \tfrac{1}{2} \hat{k}^2 
\end{align*}
acting on $L^2(\R^d_k)$. More elaborate mathematical arguments show that for any $t \in \R$, the norm of the difference between $\hat{U}(t) = e^{- i t \frac{1}{2} \hat{k}^2}$ and $\hat{U}(0) = \id$
\begin{align*}
	\bnorm{\hat{U}(t) - \id} = \sup_{k \in \R^d} \babs{e^{- i t \frac{1}{2} k^2} - 1} = 2 
\end{align*}
is exactly $2$ and $\hat{U}(t)$ \emph{cannot} be uniformly continuous in $t$. However, if $\hat{\psi} \in L^2(\R^d_k)$ is a wave function, the estimate 
\begin{align*}
	\bnorm{\hat{U}(t) \hat{\psi} - \hat{\psi}}^2 &= \int_{\R^d_k} \dd k \, \babs{e^{- i t \frac{1}{2} k^2} - 1}^2 \, \babs{\hat{\psi}(k)}^2 
	\\
	&\leq 2^2 \int_{\R^d_k} \dd k \, \babs{\hat{\psi}(k)}^2 
	= 4 \bnorm{\hat{\psi}}^2 
\end{align*}
shows we can invoke the Theorem of Dominated Convergence to conclude $\hat{U}(t)$ is \emph{strongly continuous} in $t \in \R$. 
\begin{defn}[Strongly continuous one-parameter unitary group]
	A family of unitary operators $\{ U(t) \}_{t \in \R}$ on a Hilbert space $\Hil$ is called a strongly continuous one-parameter unitary group -- or unitary group for short -- if 
	\begin{enumerate}[(i)]
		\item $t \mapsto U(t)$ is strongly continuous and 
		\item $U(t) U(t') = U(t+t')$ as well as $U(0) = \id_{\Hil}$ 
	\end{enumerate}
	hold for all $t,t' \in \R_t$. 
\end{defn}
This is again a \emph{group representation of $\R_t$} just as in the case of the classical flow $\Phi$. The form of the Schrödinger equation, 
\begin{align*}
	i \frac{\dd }{\dd t} \psi(t) = H \psi(t) 
	, 
\end{align*}
also suggests that strong continuity/differentiability is the correct notion. Let us once more consider the free hamiltonian $H = - \frac{1}{2} \Delta_x$ on $L^2(\R^d_x)$. We have shown in the tutorials that its domain is 
\begin{align*}
	\mathcal{D}(H) = \bigl \{ \varphi \in L^2(\R^d_x) \; \vert \; - \Delta_x \varphi \in L^2(\R^d_x) \bigr \} 
	. 
\end{align*}
In Chapter~\ref{S_and_Sprime}, we will learn that $\mathcal{D}(H)$ is mapped by the Fourier transform onto 
\begin{align*}
	\mathcal{D}(\hat{H}) = \bigl \{ \hat{\psi} \in L^2(\R^d_k) \; \vert \; \hat{k}^2 \hat{\psi} \in L^2(\R^d_k) \bigr \} 
	. 
\end{align*}
Dominated Convergence can once more be used to make the following claims rigorous: for any $\hat{\psi} \in \mathcal{D}(\hat{H})$, we have 
\begin{align}
	\lim_{t \to 0} &\bnorm{\tfrac{i}{t} \bigl ( \hat{U}(t) - \id) \hat{\psi} - \tfrac{1}{2} \hat{k}^2 \hat{\psi}} 
	\leq \lim_{t \to 0} \bnorm{\tfrac{i}{t} \bigl ( \hat{U}(t) - \id) \hat{\psi}} +  \bnorm{\tfrac{1}{2} \hat{k}^2 \hat{\psi}}
	\label{operators:unitary:eqn:free_evolution_strongly_continuous}
	. 
\end{align}
The second term is finite since $\hat{\psi} \in \mathcal{D}(\hat{H})$ and we have to focus on the first term. On the level of functions, 
\begin{align*}
	\lim_{t \to 0} \tfrac{i}{t} \bigl ( e^{- i t \frac{1}{2} k^2} - 1 \bigr ) = i \frac{\dd}{\dd t} e^{-i t \frac{1}{2} k^2} \Big \vert_{t = 0} = \tfrac{1}{2} k^2 
\end{align*}
holds pointwise. Furthermore, by the mean value theorem, for any finite $t \in \R$ with $\abs{t} \leq 1$, for instance, then there exists $0 \leq t_0 \leq t$ such that 
\begin{align*}
	\tfrac{1}{t} \bigl ( e^{- i t \frac{1}{2} k^2} - 1 \bigr ) = \partial_t e^{- i t \frac{1}{2} k^2} \big \vert_{t = t_0} 
	= -i \tfrac{1}{2} k^2 \, e^{- i t_0 \frac{1}{2} k^2} 
	. 
\end{align*}
This can be bounded uniformly in $t$ by $\frac{1}{2} k^2$. Thus, also the first term can be bounded by $\bnorm{\frac{1}{2} \hat{k}^2 \hat{\psi}}$ uniformly. By Dominated Convergence, we can interchange the limit $t \to 0$ and integration with respect to $k$ on the left-hand side of equation~\eqref{operators:unitary:eqn:free_evolution_strongly_continuous}. But then the integrand is zero and thus the domain where the free evolution group is differentiable coincides with the domain of the Fourier transformed hamiltonian, \marginpar{2009.11.18}
\begin{align*}
	\lim_{t \to 0} \norm{\tfrac{i}{t} \bigl ( \hat{U}(t) - \id) \hat{\psi} - \tfrac{1}{2} \hat{k}^2 \hat{\psi}} = 0 
	. 
\end{align*}
This suggests to use the following definition: 
\begin{defn}[Generator of a unitary group]
	A densely defined linear operator on a Hilbert space $\Hil$ with domain $\mathcal{D}(H) \subseteq \Hil$ is called generator of a unitary evolution group $U(t)$, $t \in \R$, if 
	\begin{enumerate}[(i)]
		\item the domain coincides with 
		\begin{align*}
			\widetilde{\mathcal{D}(H)} = \Bigl \{ \varphi \in \Hil \; \big \vert \;  t \mapsto U(t) \varphi \mbox{ differentiable} \Bigr \} = \mathcal{D}(H)
		\end{align*}
		\item and for all $\psi \in \mathcal{D}(H)$, the Schrödinger equation holds, 
		\begin{align*}
			i \frac{\dd}{\dd t} U(t) \psi = H U(t) \psi 
			. 
		\end{align*}
	\end{enumerate}
\end{defn}
This is only one of the two implications: usually we are given a hamiltonian $H$ and we would like to know under which circumstances this operator generates a unitary evolution group. We will answer this question conclusively in the next section with Stone's Theorem. 
\begin{thm}
	Let $H$ be the generator of a strongly continuous evolution group $U(t)$, $t \in \R$. Then the following holds: 
	\begin{enumerate}[(i)]
		\item $\mathcal{D}(H)$ is invariant under the action of $U(t)$, \ie $U(t) \mathcal{D}(H) = \mathcal{D}(H)$ for all $t \in \R$. 
		\item $H$ commutes with $U(t)$, \ie $[U(t) , H] \psi := U(t) \, H \psi - H \, U(t) \psi = 0$ for all $t \in \R$ and $\psi \in \mathcal{D}(H)$. 
		\item $H$ is symmetric, \ie $\scpro{H \varphi}{\psi} = \scpro{\varphi}{H \psi}$ holds for all $\varphi , \psi \in \mathcal{D}(H)$. 
		\item $U(t)$ is uniquely determined by $H$. 
		\item $H$ is uniquely determined by $U(t)$. 
	\end{enumerate}
\end{thm}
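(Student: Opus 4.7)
The plan is to prove (i) and (ii) together, deduce (iii) from unitarity via a short differentiation argument, derive (iv) from (iii) using a Grönwall-type energy estimate, and observe (v) as essentially a restatement of the definition.

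For (i) and (ii), let $\psi \in \mathcal{D}(H)$ and fix $t \in \R$. Using the group property $U(s+t) = U(s) U(t) = U(t) U(s)$, I examine the limit
\begin{align*}
	\lim_{s \to 0} \tfrac{1}{s} \bigl ( U(s) U(t) \psi - U(t) \psi \bigr )
	= \lim_{s \to 0} \tfrac{1}{s} \bigl ( U(s+t) \psi - U(t) \psi \bigr )
	,
\end{align*}
which exists since $s \mapsto U(s+t)\psi$ is simply the differentiable curve $s \mapsto U(s)\psi$ translated in time; equivalently, by writing $U(s+t)\psi = U(t) U(s)\psi$ and pulling the bounded operator $U(t)$ out of the limit using its continuity, the limit equals $-iU(t) H \psi$. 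Hence $U(t)\psi$ lies in $\mathcal{D}(H)$ and $H U(t)\psi = U(t) H\psi$, giving both invariance and the commutation relation.

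For (iii), take $\varphi, \psi \in \mathcal{D}(H)$ and note that unitarity implies $\sscpro{U(t)\varphi}{U(t)\psi} = \sscpro{\varphi}{\psi}$ is constant in $t$. Differentiating at $t=0$ (using sesquilinearity and that both factors are differentiable by the definition of $\mathcal{D}(H)$) yields
\begin{align*}
	0 = \sscpro{-iH\varphi}{\psi} + \sscpro{\varphi}{-iH\psi} = i \bigl ( \sscpro{H\varphi}{\psi} - \sscpro{\varphi}{H\psi} \bigr ) ,
\end{align*}
so $\sscpro{H\varphi}{\psi} = \sscpro{\varphi}{H\psi}$.

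For (iv), suppose $V(t)$ is a second strongly continuous unitary group with the same generator $H$. For $\psi \in \mathcal{D}(H)$ define $g(t) := U(t)\psi - V(t)\psi$; by (i), both $U(t)\psi$ and $V(t)\psi$ remain in $\mathcal{D}(H)$ for all $t$, so $g(t) \in \mathcal{D}(H)$ and $g'(t) = -iHg(t)$ with $g(0) = 0$. Using symmetry from (iii),
\begin{align*}
	\tfrac{\dd}{\dd t} \snorm{g(t)}^2 = \sscpro{-iHg(t)}{g(t)} + \sscpro{g(t)}{-iHg(t)} = 0 ,
\end{align*}
so $g(t) = 0$ on the dense set $\mathcal{D}(H)$, and by boundedness $U(t)=V(t)$ on all of $\Hil$. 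Finally (v) is immediate from the definition of generator: the domain $\mathcal{D}(H)$ is determined as the set of $\varphi$ for which $t \mapsto U(t)\varphi$ is differentiable, and on this set $H\varphi = i \tfrac{\dd}{\dd t}\big|_{t=0} U(t)\varphi$, so $U(t)$ fixes $H$ completely. The only real subtlety — and the point that requires (i) — is in step (iv), where one needs to know that the difference curve $g(t)$ stays inside the domain so that the Schrödinger equation can be applied to it; once this is in hand, the energy argument via symmetry collapses uniqueness to a one-line computation.
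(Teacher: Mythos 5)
Your proof is correct and follows essentially the same route as the paper's: (i) and (ii) by pulling the bounded $U(t)$ through the difference quotient, (iii) by differentiating $\sscpro{U(t)\varphi}{U(t)\psi}$, (iv) by showing a suitable quantity has vanishing time derivative, and (v) by appealing to the definition. Two small remarks. First, in (i) you only explicitly establish $U(t)\mathcal{D}(H) \subseteq \mathcal{D}(H)$; the stated equality $U(t)\mathcal{D}(H) = \mathcal{D}(H)$ then follows, as the paper points out, by applying the inclusion at $-t$ and using $U(t)U(-t) = \id$, but you should say so rather than just assert "invariance." Second, your formulation of (iv) is a bit cleaner than the paper's: the paper expands $\bnorm{(U(t)-\tilde U(t))\psi}^2 = 2\norm{\psi}^2 - 2\Re\sscpro{U(t)\psi}{\tilde U(t)\psi}$ and differentiates the cross term, whereas you treat $g(t) := U(t)\psi - V(t)\psi$ as a solution of $g' = -iHg$ with $g(0)=0$ and run the energy estimate directly on $\norm{g(t)}^2$; this is the same estimate in disguise but makes the role of (i) (keeping $g(t)$ in the domain) and of (iii) (symmetry killing the derivative) more transparent.
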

\begin{proof}
	\begin{enumerate}[(i)]
		\item Let $\psi \in \mathcal{D}(H)$. To show that $U(t) \psi$ is still in the domain, we have to show that the norm of $H U(t) \psi$ is finite. Since $H$ is the generator of $U(t)$, it is equal to 
		\begin{align*}
			H \psi = i \frac{\dd}{\dd s} U(s) \psi \bigg \vert_{s = 0} = \lim_{s \to 0} \tfrac{i}{s} \bigl ( U(s) - \id \bigr ) \psi 
			. 
		\end{align*}
		Let us start with $s > 0$ and omit the limit. Then 
		\begin{align*}
			\Bnorm{\tfrac{i}{s} \bigl ( U(s) - \id \bigr ) U(t) \psi} &= \Bnorm{U(t) \tfrac{i}{s} \bigl ( U(s) - \id \bigr ) \psi} 
			= \Bnorm{\tfrac{i}{s} \bigl ( U(s) - \id \bigr ) \psi} < \infty 
		\end{align*}
		holds for all $s > 0$. Taking the limit on left and right-hand side yields that we can estimate the norm of $H U(t) \psi$ by the norm of $H \psi$ -- which is finite since $\psi$ is in the domain. This means $U(t) \mathcal{D}(H) \subseteq \mathcal{D}(H)$. To show the converse, we repeat the proof for $U(-t) = U(t)^{-1} = U(t)^*$ to obtain 
		\begin{align*}
			\mathcal{D}(H) = U(-t) U(t) \mathcal{D}(H) \subseteq U(t) \mathcal{D}(H) 
			. 
		\end{align*}
		Hence, $U(t) \mathcal{D}(H) = \mathcal{D}(H)$. 
		\item This follows from an extension of the proof of (i): since the domain $\mathcal{D}(H)$ coincides with the set of vectors on which $U(t)$ is strongly differentiable and is left invariant by $U(t)$, taking limits on left- and right-hand side of 
		\begin{align*}
			\Bnorm{\tfrac{i}{s} \bigl ( U(s) - \id \bigr ) U(t) \psi - U(t) \tfrac{i}{s} \bigl ( U(s) - \id \bigr ) \psi} = 0 
		\end{align*}
		leads to $[H,U(t)] \psi = 0$. 
		\item This follows from differentiating $\scpro{U(t) \varphi}{U(t) \psi}$ for arbitrary $\varphi , \psi \in \mathcal{D}(H)$ and using $\bigl [ U(t) , H \bigr ] = 0$ as well as the unitarity of $U(t)$ for all $t \in \R$. 
		\item Assume that both unitary evolution groups, $U(t)$ and $\tilde{U}(t)$, have $H$ as their generator. For any $\psi \in \mathcal{D}(H)$, we can calculate the time derivative of $\bnorm{(U(t) - \tilde{U}(t)) \psi}^2$, 
		\begin{align*}
			\frac{\dd }{\dd t} \bnorm{(U(t) - \tilde{U}(t)) \psi}^2 &= 2 \frac{\dd }{\dd t} \bigl ( \norm{\psi}^2 - \Re \bscpro{U(t) \psi}{\tilde{U}(t) \psi} \bigr ) 
			\\
			&= - 2 \Re \bigl ( \bscpro{- i H U(t) \psi}{\tilde{U}(t) \psi} + \bscpro{U(t) \psi}{- i H \tilde{U}(t) \psi} \bigr ) 
			\\
			&= 0 
			. 
		\end{align*}
		Since $U(0) = \id = \tilde{U}(0)$, this means $U(t)$ and $\tilde{U}(t)$ agree at least on $\mathcal{D}(H)$. Using the fact that there is only bounded extension of a bounded operator to all of $\Hil$, Theorem~\ref{operators:bounded:thm:extensions_bounded_operators}, we conclude they must be equal on all of $\Hil$. 
		\item This follows from the definition of the generator and the density of the domain. 
	\end{enumerate}
\end{proof}
Now that we have collected a few facts on unitary evolution groups, one could think that \emph{symmetric} operators generate evolution groups, but \emph{this is false!} The standard example to showcase this fact is the group of translations on $L^2([0,1])$. Since we would like $T(t)$ to conserve ``mass'' -- or more accurately, probability, we define for $\varphi \in L^2([0,1])$ and $0 \leq t < 1$
\begin{align*}
	\bigl ( T(t) \varphi \bigr )(x) := \left \{
	\begin{matrix}
		\varphi(x - t) & x - t \in [0,1] \\
		\varphi(x - t + 1) & x - t + 1 \in [0,1] \\
	\end{matrix}
	\right . 
	. 
\end{align*}
For all other $t \in \R$, we extend this operator periodically, \ie we plug in the fractional part of $t$. Clearly, $\bscpro{T(t) \varphi}{T(t) \psi} = \bscpro{\varphi}{\psi}$ holds for all $\varphi , \psi \in L^2([0,1])$. Locally, the infinitesimal generator is $- i \partial_x$ as a simple calculation shows: 
\begin{align*}
	\biggl ( i \frac{\dd}{\dd t} \bigl ( T(t) \varphi \bigr ) \biggr )(x) \bigg \vert_{t = 0} &= i \frac{\dd}{\dd t} \varphi(x - t) \bigg \vert_{t = 0} 
	= - i \partial_x \varphi(x) 
\end{align*}
However, $T(t)$ does not respect the maximal domain of $- i \partial_x$, 
\begin{align*}
	\mathcal{D}_{\max}(- i \partial_x) = \bigl \{ \varphi \in L^2([0,1]) \; \vert \; -i \partial_x \varphi \in L^2([0,1]) \bigr \} 
	. 
\end{align*}
Any element of the maximal domain has a continuous representative, but if $\varphi(0) \neq \varphi(1)$, then for $t > 0$, $T(t) \varphi$ will have a discontinuity at $t$. We will denote the operator $- i \partial_x$ on $\mathcal{D}_{\max}(- i \partial_x)$ with $\mathsf{P}_{\max}$. Let us check whether $\mathsf{P}_{\max}$ is symmetric: for any $\varphi , \psi \in \mathcal{D}_{\max}(- i \partial_x)$, we compute 
\begin{align}
	\bscpro{\varphi}{-i \partial_x \psi} &= \int_0^1 \dd x \, \varphi^*(x) \, (- i \partial_x \psi)(x) 
	= \Bigl [ - i \varphi^*(x) \, \psi(x) \Bigr ]_0^1 - \int_0^1 \dd x \, (-i) \partial_x \varphi^*(x) \, \psi(x) 
	\notag \\
	&
	= i \bigl ( \varphi^*(0) \, \psi(0) - \varphi^*(1) \, \psi(1) \bigr ) + \int_0^1 \dd x \, (- i \partial_x \varphi)^*(x) \, \psi(x) 
	\notag \\
	&= i \bigl ( \varphi^*(0) \, \psi(0) - \varphi^*(1) \, \psi(1) \bigr ) + \bscpro{- i \partial_x \varphi}{\psi} 
	\label{operators:unitary:eqn:symmetry_translations_interval}
	. 
\end{align}
In general, the boundary terms do not disappear and the maximal domain is ``too large'' for $- i \partial_x$ to be symmetric. Thus it is not at all surprising, $T(t)$ does not leave $\mathcal{D}_{\max}(- i \partial_x)$ invariant. Let us try another domain: one way to make the boundary terms disappear is to choose 
\begin{align*}
	\mathcal{D}_{\min}(- i \partial_x) := \Bigl \{ \varphi \in L^2([0,1]) \; \big \vert \; - i \partial_x \varphi \in L^2([0,1]) , \; \varphi(0) = 0 = \varphi(1) \Bigr \} 
	. 
\end{align*}
We denote $- i \partial_x$ on this ``minimal'' domain with $\mathsf{P}_{\min}$. In this case, the boundary terms in equation~\eqref{operators:unitary:eqn:symmetry_translations_interval} vanish which tells us that $\mathsf{P}_{\min}$ is symmetric. Alas, the domain is still not invariant under translations $T(t)$, even though $\mathsf{P}_{\min}$ is symmetric. This is an example of a symmetric operator which \emph{does not} generate a unitary group. 

There is another thing we have missed so far: the translations allow for an additional phase factor, \ie for $\varphi , \psi \in L^2([0,1])$ and $\vartheta \in [0,2\pi)$, we define for $0 \leq t < 1$
\begin{align*}
	\bigl ( T_{\vartheta}(t) \varphi \bigr )(x) := \left \{
	\begin{matrix}
		\varphi(x - t) & x - t \in [0,1] \\
		e^{i \vartheta} \varphi(x - t + 1) & x - t + 1 \in [0,1] \\
	\end{matrix}
	\right . 
	. 
\end{align*}
while for all other $t$, we plug in the fractional part of $t$. The additional phase factor cancels in the inner product, $\bscpro{T_{\vartheta}(t) \varphi}{T_{\vartheta}(t) \psi} = \bscpro{\varphi}{\psi}$ still holds true for all $\varphi , \psi \in L^2([0,1])$. In general $T_{\vartheta}(t) \neq T_{\vartheta'}(t)$ if $\vartheta \neq \vartheta'$ and the unitary groups are genuinely different. Repeating the simple calculation from before, we see that the local generator still is $- i \partial_x$ and it would seem we can generate a family of unitary evolutions from a \emph{single} generator. The confusion is resolved if we focus on \emph{invariant domains}: choosing $\vartheta \in [0,2\pi)$, we define $\mathsf{P}_{\vartheta}$ to be the operator $- i \partial_x$ on the domain 
\begin{align*}
	\mathcal{D}_{\vartheta}(- i \partial_x) := \Bigl \{ \varphi \in L^2([0,1]) \; \big \vert \; - i \partial_x \varphi \in L^2([0,1]) , \; \varphi(0) = e^{- i \vartheta} \varphi(1) \Bigr \} 
	. 
\end{align*}
A quick look at equation~\eqref{operators:unitary:eqn:symmetry_translations_interval} reassures us that $\mathsf{P}_{\vartheta}$ is symmetric and a quick calculation shows it is also \emph{invariant} under the action of $T_{\vartheta}(t)$. Hence, $\mathsf{P}_{\vartheta}$ is the generator of $T_{\vartheta}$, and the \emph{definition of an unbounded operator is incomplete without spelling out its domain}. 
\begin{example}
	Another example where the domain is crucial in the properties is the wave equation on $[0,L]$, 
	\begin{align*}
		\partial_t^2 u(x,t) - \partial_x^2 u(x,t) = 0 
		, 
		&&
		u \in \Cont^2([0,L] \times \R_t)
		. 
	\end{align*}
	Here, $u$ is the amplitude of the vibration, \ie the lateral deflection. If we choose Dirichlet boundary conditions at both ends, \ie $u(0) = 0 = u(L)$, we model a closed pipe, if we choose Dirichlet boundary conditions on one end, $u(0) = 0$, and von Neumann boundary conditions on the other, $u'(L) = 0$, we model a half-closed pipe. Choosing domains is a question of physics! 
\end{example}
%


\section{Selfadjoint operators} 
\label{operators:selfadjoint_operators}

Although we do not have time to explore this very far, the crucial difference between $\mathsf{P}_{\min}$ and $\mathsf{P}_{\vartheta}$ is that the former is only symmetric while the latter is also selfadjoint. We first recall the definition of the adjoint of a possibly unbounded operator: 
\begin{defn}[Adjoint operator]
	Let $A$ be a densely defined linear operator on a Hilbert space $\Hil$ with domain $\mathcal{D}(A)$. Let $\mathcal{D}(A^*)$ be the set of $\varphi \in \Hil$ for which there exists $\phi \in \Hil$ with 
	\begin{align*}
		\scpro{A \psi}{\varphi} = \scpro{\psi}{\phi} 
		&& 
		\forall \psi \in \mathcal{D}(A) 
		. 
	\end{align*}
	For each $\varphi \in \mathcal{D}(A^*)$, we define $A^* \varphi := \phi$ and $A^*$ is called the adjoint of $A$. 
\end{defn}
\begin{remark}
	By Riesz Lemma, $\varphi$ belongs to $\mathcal{D}(A^*)$ if and only if 
	\begin{align*}
		\babs{\scpro{A \psi}{\varphi}} \leq C \norm{\psi} 
		&&
		\forall \psi \in \mathcal{D}(A) 
		. 
	\end{align*}
	This is equivalent to saying $\varphi \in \mathcal{D}(A^*)$ if and only if $\psi \mapsto \sscpro{A \psi}{\varphi}$ is continuous on $\mathcal{D}(A)$. As a matter of fact, we could have used to latter to \emph{define} the adjoint operator. 
\end{remark}
One word of caution: even if $A$ is densely defined, $A^*$ need not be. 
\begin{example}
	Let $f \in L^{\infty}(\R)$, but $f \not\in L^2(\R)$, and pick $\psi_0 \in L^2(\R)$. Define 
	\begin{align*}
		\mathcal{D}(T_f) := \Bigl \{ \psi \in L^2(\R) \; \vert \; \int_{\R} \dd x \, \abs{f(x) \, \psi(x)} < \infty \Bigr \} 
		. 
	\end{align*}
	Then the adjoint of the operator 
	\begin{align*}
		T_f \psi := \sscpro{f}{\psi} \, \psi_0 
		, 
		&&
		\psi \in \mathcal{D}(T_f) 
		, 
	\end{align*}
	has domain $\mathcal{D}(T_f^*) = \{ 0 \}$. Let $\psi \in \mathcal{D}(T_f)$. Then for any $\varphi \in \mathcal{D}(T_f^*)$ 
	\begin{align*}
		\bscpro{T_f \psi}{\varphi} &= \bscpro{\sscpro{f}{\psi} \, \psi_0}{\varphi} = \bscpro{\psi}{f} \, \bscpro{\psi_0}{\varphi} 
		\\
		&= \bscpro{\psi}{\sscpro{\psi_0}{\varphi} f} 
		. 
	\end{align*}
	Hence $T_f^* \varphi = \sscpro{\psi_0}{\varphi} f$. However $f \not\in L^2(\R)$ and thus $\varphi = 0$ is the only possible choice for which $T_f^* \varphi$ is well defined. 
\end{example}
Symmetric operators, however, are special: since $\scpro{A \varphi}{\psi} = \scpro{\varphi}{A \psi}$ holds by definition for all $\varphi , \psi \in \Hil$, the domain of $A^*$ is contained in that of $A$, $\mathcal{D}(A^*) \supseteq \mathcal{D}(A)$. In particular, $\mathcal{D}(A^*) \subseteq \Hil$ is also dense. Thus, $A^*$ is an \emph{extension of $A$}. 
\begin{defn}[Selfadjoint operator]
	Let $H$ be a symmetric operator on a Hilbert space $\Hil$ with domain $\mathcal{D}(H)$. $H$ is called selfadjoint, $H^* = H$, if $\mathcal{D}(H^*) = \mathcal{D}(H)$. 
\end{defn}
One word regarding notation: if we write $A^* = A$, we do not just imply that the ``operator prescription'' of $A$ and $A^*$ is the same, but that the \emph{domains} of both coincide. 
\begin{example}
	In this sense, $\mathsf{P}_{\min}^* \neq \mathsf{P}_{\min}$. 
\end{example}
The central theorem of this section is Stone's Theorem: 
\begin{thm}[Stone]
	 To every strongly continuous one-parameter unitary group $U$ on a Hilbert space $\Hil$, there exists a selfadjoint operator $H = H^*$ which generates $U(t) = e^{- i t H}$. Conversely, every selfadjoint operator $H$ generates the unitary evolution group $U(t) = e^{- i t H}$. 
\end{thm}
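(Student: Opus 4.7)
My plan splits Stone's theorem into its two implications, with the forward direction (unitary group $\Rightarrow$ selfadjoint generator) carrying the real content. For that direction I would first \emph{define} the candidate generator as
\[
	\mathcal{D}(H) := \bigl\{ \varphi \in \Hil \;\big\vert\; t \mapsto U(t)\varphi \mbox{ differentiable at } t=0 \bigr\}, \qquad H\varphi := i \lim_{t \to 0} \tfrac{1}{t}\bigl(U(t)\varphi - \varphi\bigr),
\]
and then verify density of the domain, symmetry, and finally selfadjointness. Density I would prove by the averaging trick: for any $\varphi \in \Hil$ and $t > 0$, the Bochner integral $\varphi_t := \tfrac{1}{t}\int_0^t U(s)\varphi\,\dd s$ is defined by strong continuity, and the identity $U(h)\varphi_t - \varphi_t = \tfrac{1}{t}\bigl(\int_t^{t+h} - \int_0^h\bigr) U(s)\varphi\,\dd s$ shows upon dividing by $h$ and sending $h \to 0$ that $\varphi_t \in \mathcal{D}(H)$ with $H\varphi_t = \tfrac{i}{t}(U(t) - 1)\varphi$; meanwhile $\varphi_t \to \varphi$ in norm as $t \to 0$. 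Symmetry on $\mathcal{D}(H)$ then falls out of differentiating the unitarity identity $\scpro{U(t)\varphi}{U(t)\psi} = \scpro{\varphi}{\psi}$ at $t = 0$.

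The hard step, and the one I expect to be the main obstacle, is upgrading \emph{symmetric} to \emph{selfadjoint}, i.e.\ showing $\mathcal{D}(H^*) \subseteq \mathcal{D}(H)$. Invariance $U(s)\mathcal{D}(H) \subseteq \mathcal{D}(H)$ together with $H U(s) = U(s) H$ on $\mathcal{D}(H)$ is a direct chain-rule observation. Given $\varphi \in \mathcal{D}(H^*)$ and $\psi \in \mathcal{D}(H)$, the scalar function $t \mapsto \scpro{U(t)\psi}{\varphi}$ is differentiable with derivative $\scpro{-i H U(t)\psi}{\varphi} = i \scpro{U(t)\psi}{H^*\varphi}$; integrating from $0$ to $t$ and using $U(t)^* = U(-t)$ to move everything onto the $\varphi$ slot gives
\[
	\scpro{\psi}{U(-t)\varphi - \varphi} = \bscpro{\psi}{i \int_0^t U(-s) H^*\varphi\,\dd s} \qquad \forall \psi \in \mathcal{D}(H).
\]
Density of $\mathcal{D}(H)$ lets me strip off $\psi$, yielding $U(-t)\varphi - \varphi = i \int_0^t U(-s) H^*\varphi\,\dd s$, whose right-hand side is manifestly differentiable at $t = 0$ by continuity of the integrand. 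Hence $s \mapsto U(s)\varphi$ is also differentiable at $s = 0$, placing $\varphi \in \mathcal{D}(H)$ with $H\varphi = H^*\varphi$.

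For the reverse direction I would appeal to the spectral theorem for selfadjoint operators, which provides the functional calculus $f \mapsto f(H)$; defining $U(t) := e^{-itH}$ via $f_t(\lambda) := e^{-it\lambda}$ makes unitarity ($\sabs{e^{-it\lambda}} = 1$), the group law ($f_{t+s} = f_t f_s$), and strong continuity (dominated convergence on the spectral integral) essentially tautological, while differentiating the spectral integral at $t = 0$ recovers $H$ as its generator on the correct domain. The identification $U(t) = e^{-itH}$ for the generator extracted from a given group then follows from the uniqueness clause of the theorem preceding Stone's. All genuine content lies in the selfadjointness step above: symmetry of the generator is automatic, but forcing $\mathcal{D}(H) = \mathcal{D}(H^*)$ requires using that $U$ is invertible with $U(-t) = U(t)^*$, which is exactly why only \emph{selfadjoint}, and not merely \emph{symmetric}, operators generate unitary groups.
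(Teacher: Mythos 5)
Your sketch is correct and is the standard proof of Stone's theorem (the one Reed--Simon give in Chapter~VIII.3, which is exactly where the paper defers to, since it does not reproduce a proof itself). The forward direction is carried by the three ingredients you identify: the Bochner-average $\varphi_t = \tfrac{1}{t}\int_0^t U(s)\varphi\,\dd s$ to establish density of $\mathcal{D}(H)$, symmetry from differentiating $\sscpro{U(t)\varphi}{U(t)\psi}$ at $t=0$, and the Duhamel-style integral identity to show $\mathcal{D}(H^*)\subseteq\mathcal{D}(H)$. Your computation $U(h)\varphi_t - \varphi_t = \tfrac{1}{t}\bigl(\int_t^{t+h} - \int_0^h\bigr)U(s)\varphi\,\dd s$ and the limit $\tfrac{1}{h}(U(h)\varphi_t-\varphi_t)\to\tfrac{1}{t}(U(t)-1)\varphi$ are both right, and the selfadjointness step correctly uses continuity of the scalar product plus density of $\mathcal{D}(H)$ to strip off $\psi$ from $\sscpro{\psi}{U(-t)\varphi-\varphi} = \sscpro{\psi}{i\int_0^t U(-s)H^*\varphi\,\dd s}$. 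Two small points worth making explicit if you were to write this out: (a)~the identity $\sscpro{H U(t)\psi}{\varphi} = \sscpro{U(t)\psi}{H^*\varphi}$ requires $U(t)\mathcal{D}(H)\subseteq\mathcal{D}(H)$, which the paper proves as item~(i) of the theorem preceding Stone's, so you should cite that; (b)~the interchange $\int_0^t\sscpro{\psi}{U(-s)H^*\varphi}\,\dd s = \sscpro{\psi}{\int_0^t U(-s)H^*\varphi\,\dd s}$ is a Bochner-integral property that relies on strong continuity of $s\mapsto U(-s)H^*\varphi$. For the converse direction, appealing to the spectral theorem for the functional calculus $e^{-it\lambda}$ is standard, though worth noting that the spectral theorem for unbounded selfadjoint operators is itself a substantial input not developed in these notes. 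Your closing remark --- that the unitary-group identity $U(-t)=U(t)^*$ is precisely what upgrades symmetric to selfadjoint --- is the right conceptual summary and dovetails with the paper's preceding discussion of why $\mathsf{P}_{\min}$ fails to generate a unitary group despite being symmetric.
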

A complete proof \cite[Chapter~VIII.3]{Reed_Simon:bibel_1:1981} is beyond our capabilities. \marginpar{\small 2009.11.24}


\chapter{Schwartz functions and tempered distributions} 
\label{S_and_Sprime}

Schwartz functions are a space of test functions, \ie a space of ``very nicely behaved functions.'' The dual of this space of test functions, the tempered distributions, allow us to extend common operations such as Fourier transforms and derivatives to objects which may not even be functions.

\section{Schwartz functions} 
\label{S_and_Sprime:schwartz_functions}

The motivation to define Schwartz functions on $\R^d$ comes from dealing with Fourier transforms: our class of test functions $\Schwartz(\R^d)$ has three defining properties: 
\begin{enumerate}[(i)]
	\item $\Schwartz(\R^d)$ forms a vector space. 
	\item \emph{Stability under derivation}, $\partial_x^{\alpha} \Schwartz(\R^d) \subset \Schwartz(\R^d)$: for all multiindices $\alpha \in \N_0^d$ and $f \in \Schwartz(\R^d)$, we have $\partial_x^{\alpha} f \in \Schwartz(\R^d)$. 
	\item \emph{Stability under Fourier transform}, $\Fourier \Schwartz(\R^d) \subseteq \Schwartz(\R^d)$: for all $f \in \Schwartz(\R^d)$, the Fourier transform 
	\begin{align}
		\Fourier^{\pm 1} f : \xi \mapsto \frac{1}{(2\pi)^{\nicefrac{d}{2}}} \int_{\R^d} \dd x \, e^{\mp i x \cdot \xi} \, f(x) \in \Schwartz(\R^d)
		\label{S_and_Sprime:schwartz_functions:eqn:Fourier_transform}
	\end{align}
	is also a test function. 
\end{enumerate}
These relatively simple requirements have surprisingly rich implications: 
\begin{enumerate}[(i)]
	\item $\Schwartz(\R^d) \subset L^1(\R^d)$, \ie any $f \in \Schwartz(\R^d)$ and all of its derivatives are integrable. 
	\item $\Fourier : \Schwartz(\R^d) \longrightarrow \Schwartz(\R^d)$ acts bijectively: if $f \in \Schwartz(\R^d) \subset L^1(\R^d)$, then $\Fourier f \in \Schwartz(\R^d) \subset L^1(\R^d)$. 
	\item For all $\alpha \in \N_0^d$, we have $\Fourier \bigl ( (i \partial_x)^{\alpha} f \bigr ) = x^{\alpha} \, \Fourier f \in \Schwartz(\R^d)$. This holds as all derivatives are integrable. 
	\item Hence, for all $a , \alpha \in \N_0^d$, we have $x^a \partial_x^{\alpha} f \in \Schwartz$. 
	\item Translations of Schwartz functions are again Schwartz functions, $f( \cdot - x_0) \in \Schwartz(\R^d)$; this follows from $\Fourier f(\cdot - x_0) = e^{- i \xi \cdot x_0} \, \Fourier f \in \Schwartz(\R^d)$ for all $x_0 \in \R^d$. 
\end{enumerate}
This leads to the following definition: 
\begin{defn}[Schwartz functions]
	The space of Schwartz functions 
	\begin{align*}
		\Schwartz(\R^d) := \Bigl \{ f \in \Cont^{\infty}(\R^d) \; \big \vert \; \forall a , \alpha \in \N_0^d : \norm{f}_{a \alpha} < \infty \Bigr \} 
	\end{align*}
	is defined in terms of the family of seminorms\footnote{A seminorm has all properties of a norm except that $\norm{f} = 0$ does not necessarily imply $f = 0$. } indexed by $a , \alpha \in \N_0^d$ 
	\begin{align*}
		\norm{f}_{a \alpha} := \sup_{x \in \R^d} \babs{x^a \partial_x^{\alpha} f(x)} 
		, 
		&& f \in \Cont^{\infty}(\R^d) 
		. 
	\end{align*}
\end{defn}
The family of seminorms defines a so-called \emph{Fréchet topology:} put in simple terms, to make sure that sequences in $\Schwartz(\R^d)$ converge to rapidly decreasing smooth functions, we need to control all derivatives as well as the decay. This is also the reason why there is \emph{no norm on $\Schwartz(\R^d)$} which generates the same topology as the family of seminorms. However, $\norm{f}_{a \alpha} = 0$ for all $a , \alpha \in \N_0^d$ ensures $f = 0$, all seminorms put together can distinguish points. 
\begin{example}
	Two simple examples of Schwartz functions are 
	\begin{align*}
		f(x) = e^{- a x^2} 
		, 
		&& 
		a > 0 
		, 
	\end{align*}
	and 
	\begin{align*}
		g(x) = \left \{
		\begin{matrix}
			e^{- \frac{1}{1 - x^2} + 1} & \abs{x} < 1 \\
			0 & \abs{x} \geq 1 \\
		\end{matrix}
		\right . 
		. 
	\end{align*}
	The second one even has compact support. 
\end{example}
The first major fact we will establish is completeness. 
\begin{thm}
	The space of Schwartz functions endowed with 
	\begin{align*}
		\mathrm{d}(f,g) := \sum_{n = 0}^{\infty} 2^{-n} \sup_{\abs{a} + \abs{\alpha} = n} \frac{\norm{f - g}_{a \alpha}}{1 + \norm{f - g}_{a \alpha}} 
	\end{align*}
	is a complete metric space. 
\end{thm}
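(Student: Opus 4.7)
The plan is to proceed in three stages: verify that $\mathrm{d}$ really is a metric, show that $\mathrm{d}$-convergence is equivalent to convergence in each individual seminorm $\norm{\cdot}_{a\alpha}$, and then build the limit of a Cauchy sequence and check that it lies in $\Schwartz(\R^d)$.

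For the metric axioms, the key observation is that the function $\phi(t) := t/(1+t)$ is monotonically increasing on $[0,\infty)$ and satisfies the subadditivity inequality $\phi(s+t) \leq \phi(s) + \phi(t)$. Combined with the triangle inequality of each seminorm $\norm{\cdot}_{a\alpha}$, this yields the triangle inequality for $\mathrm{d}$ term by term, and the geometric dominating series $\sum 2^{-n}$ guarantees convergence so that $\mathrm{d}(f,g) < \infty$ always. Positive definiteness follows from the fact that the family $\{\norm{\cdot}_{a\alpha}\}$ separates points: $\mathrm{d}(f,g)=0$ forces $\norm{f-g}_{a\alpha}=0$ for every $a,\alpha$, in particular $\norm{f-g}_{00} = \sup_x \abs{(f-g)(x)} = 0$, so $f=g$. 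Symmetry is obvious.

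Next I would establish that $\mathrm{d}(f_k, f) \to 0$ if and only if $\norm{f_k - f}_{a\alpha} \to 0$ for every pair $(a,\alpha)$. One direction is a simple $\eps/2$ argument using that the tail $\sum_{n \geq N} 2^{-n}$ can be made arbitrarily small, so only finitely many seminorms matter for a given precision. The other direction follows from monotonicity of $\phi$: controlling $\mathrm{d}(f_k,f)$ from above immediately controls each individual seminorm through $\phi(\norm{f_k-f}_{a\alpha}) \leq 2^{\abs{a}+\abs{\alpha}}\,\mathrm{d}(f_k,f)$, and inverting $\phi$ gives the corresponding seminorm bound. This equivalence is what makes the proof tractable: Cauchy in $\mathrm{d}$ is equivalent to Cauchy in every seminorm simultaneously.

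For completeness, let $(f_k)$ be a Cauchy sequence with respect to $\mathrm{d}$. By the equivalence just established, $(\partial_x^\alpha f_k)$ is Cauchy in the sup norm for every $\alpha$. Hence $(\partial_x^\alpha f_k)$ converges uniformly to some continuous function $g_\alpha$ on $\R^d$. A standard theorem from calculus (uniform convergence of derivatives plus pointwise convergence of functions yields differentiability of the limit with the expected derivative) applied inductively shows that the limit $f := g_0$ is smooth and $\partial_x^\alpha f = g_\alpha$ for every $\alpha$. To show $f \in \Schwartz(\R^d)$, fix $a,\alpha$ and choose $N$ with $\norm{f_k - f_N}_{a\alpha} \leq 1$ for $k \geq N$; letting $k\to\infty$ inside $\sup_x \babs{x^a \partial_x^\alpha (f_k-f_N)(x)}$ (justified by pointwise convergence) gives $\norm{f-f_N}_{a\alpha}\leq 1$, and thus $\norm{f}_{a\alpha} \leq 1 + \norm{f_N}_{a\alpha} < \infty$. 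The same argument shows $\norm{f_k - f}_{a\alpha} \to 0$ for every $a,\alpha$, which by the equivalence above means $\mathrm{d}(f_k, f) \to 0$.

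The main obstacle I expect is not any single hard step but rather being careful at the interface between pointwise/uniform limits and seminorm estimates, in particular justifying that the Cauchy control on $\norm{f_k - f_j}_{a\alpha}$ passes to the limit to yield $\norm{f_k - f}_{a\alpha} \to 0$ (and not merely some weaker notion of convergence). Once the subadditivity of $\phi(t) = t/(1+t)$ is in place and the seminorm/metric equivalence is established, the rest of the argument reduces to classical facts about uniform convergence of smooth functions and their derivatives on $\R^d$.
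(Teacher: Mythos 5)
Your proof is correct and follows essentially the same route as the paper's: verify the metric axioms via the properties of $t \mapsto t/(1+t)$, reduce $\mathrm{d}$-Cauchyness to Cauchyness in each seminorm, pass to uniform limits in $\BCont(\R^d)$ for all $x^a\partial_x^\alpha f_k$, and identify the limit's derivatives by a calculus argument about uniformly convergent derivatives. Your version is slightly more thorough in two spots the paper glosses over, namely the explicit equivalence between $\mathrm{d}$-convergence and seminorm-wise convergence, and the verification that the limit $f$ actually belongs to $\Schwartz(\R^d)$ with $\norm{f_k - f}_{a\alpha} \to 0$ for all $(a,\alpha)$.
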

\begin{proof}
	$\mathrm{d}$ is positive and symmetric. It also satisfies the triangle inequality as $x \mapsto \frac{x}{1 + x}$ is concave on $\R^+_0$ and all of the seminorms satisfy the triangle inequality. Hence, $\bigl ( \Schwartz(\R^d) , \mathrm{d} \bigr )$ is a metric space. 
	
	To show completeness, take a Cauchy sequence $(f_n)$ with respect to $\mathrm{d}$. By definition and positivity, this means $(f_n)$ is also a Cauchy sequence with respect to all of the seminorms $\norm{\cdot}_{a \alpha}$. Each of the $\bigl ( x^a \partial_x^{\alpha} f_n \bigr )$ converge to some $g_{a \alpha}$ as the space of bounded continuous functions $\BCont(\R^d)$ with $\sup$ norm is complete. It remains to show that $g_{a \alpha} = x^a \partial_x^{\alpha} g_{00}$. Clearly, only taking derivatives is problematic: we will prove this for $\abs{\alpha} = 1$, the general result follows from a simple induction. Assume we are interested in the sequence $(\partial_{x_k} f_n)$, $k \in \{ 1 , \ldots , d \}$. With $\alpha_k := (0 , \ldots , 0 , 1 , 0, \ldots)$ as the multiindex that has a $1$ in the $k$th entry and $e_k := (0 , \ldots , 0 , 1 , 0, \ldots) \in \R^d$ as the $k$th canonical base vector, we know that 
	\begin{align*}
		f_n(x) = f_n(x - x_k e_k) + \int_0^{x_k} \dd s \, \partial_{x_k} f_n \bigl ( x + (s - x_k) e_k \bigr ) 
	\end{align*}
	as well as 
	\begin{align*}
		g_{00}(x) = g_{00}(x - x_k e_k) + \int_0^{x_k} \dd s \, \partial_{x_k} g_{00} \bigl ( x + (s - x_k) e_k \bigr ) 
	\end{align*}
	hold since $f_n \to g_{00}$ and $\partial_{x_k} f_n \to g_{0\alpha_k}$ uniformly. Hence, $g_{00} \in \Cont^1(\R^d)$ and the derivative of $g_{00}$ coincides with $g_{0\alpha_k}$, $\partial_{x_k} g_{00} = g_{0 \alpha_k}$. We then proceed by induction to show $g_{00} \in \Cont^{\infty}(\R^d)$. This means $\mathrm{d}(f_n,g_{00}) \to 0$ as $n \to \infty$ and $\Schwartz(\R^d)$ is complete. 
\end{proof}
The $L^p$ norm of each element in $\Schwartz(\R^d)$ can be dominated by two seminorms: 
\begin{lem}\label{S_and_Sprime:schwartz_functions:lem:Lp_estimate}
	Let $f \in \Schwartz(\R^d)$. Then for each $1 \leq p < \infty$, the $L^p$ norm of $f$ can be dominated by a finite number of seminorms, 
	\begin{align*}
		\norm{f}_{L^p(\R^d)} \leq C_1(d) \norm{f}_{00} + C_2(d) \max_{\abs{a} = 2n(d)} \norm{f}_{a 0} 
		, 
	\end{align*}
	where $C_1(d) , C_2(d) \in \R^+$ and $n(d) \in \N_0$ only depend on the dimension of $\R^d$. Hence, $f \in L^p(\R^d)$. 
\end{lem}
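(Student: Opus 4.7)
The plan is to split the integration at the unit ball and treat the two regions with different seminorms. Writing $\int_{\R^d} \abs{f(x)}^p \dd x = \int_{\abs{x}\leq 1}\abs{f(x)}^p \dd x + \int_{\abs{x}>1}\abs{f(x)}^p \dd x$, the first integral is immediately bounded by $V_d \, \norm{f}_{00}^p$, where $V_d$ denotes the volume of the unit ball in $\R^d$.

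For the tail, I would fix an integer $n(d)$ with $2n(d) > d$, which ensures that $(1+\abs{x}^2)^{-n(d)}$ is Lebesgue integrable on $\R^d$. Since $p \geq 1$, on the set $\abs{x} \geq 1$ we have $(1+\abs{x}^2)^{-n(d) p} \leq (1+\abs{x}^2)^{-n(d)}$, and factoring $\abs{f(x)} = (1+\abs{x}^2)^{-n(d)} \, \babs{(1+\abs{x}^2)^{n(d)} f(x)}$ then yields
\begin{align*}
    \int_{\abs{x}>1} \abs{f(x)}^p \dd x \leq M^p \int_{\R^d} (1+\abs{x}^2)^{-n(d)} \dd x,
\end{align*}
where $M := \sup_{x \in \R^d} \babs{(1+\abs{x}^2)^{n(d)} f(x)}$.

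The next step is to control $M$ by the seminorms appearing in the statement. Expanding via the multinomial theorem gives $(1+\abs{x}^2)^{n(d)} = \sum_{\abs{a} \leq 2n(d)} c_a \, x^a$ with nonnegative coefficients $c_a$ (only even multi-indices occur), hence $M \leq \sum_{\abs{a}\leq 2n(d)} c_a \, \norm{f}_{a0}$. To collapse intermediate orders into $\norm{f}_{00}$ and $\max_{\abs{a}=2n(d)}\norm{f}_{a0}$, I would interpolate: for any $\abs{a} \leq 2n(d)$, on $\abs{x}\leq 1$ one has $\abs{x^a f(x)} \leq \norm{f}_{00}$, while on $\abs{x}\geq 1$ the bound $\abs{x^a} \leq \abs{x}^{2n(d)}$ combined with a second multinomial expansion of $\abs{x}^{2n(d)} = (x_1^2 + \cdots + x_d^2)^{n(d)}$ gives $\abs{x^a f(x)} \leq C_d \, \max_{\abs{b}=2n(d)} \norm{f}_{b0}$.

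Assembling these estimates and taking $p$-th roots via $(\alpha+\beta)^{1/p} \leq \alpha^{1/p} + \beta^{1/p}$ produces the claimed inequality. The residual $p$-dependence sitting in $V_d^{1/p}$ and in the $p$-th root of the fixed integral $\int (1+\abs{x}^2)^{-n(d)} \dd x$ is uniformly bounded for $p \geq 1$ by either the corresponding value at $p=1$ or by $1$, so all prefactors can be absorbed into constants depending only on $d$. Nothing in the argument is genuinely hard; the one thing to watch is that $n(d)$ must be chosen independently of $p$, which is possible precisely because $(1+\abs{x}^2)^{-n(d) p}$ is majorized by its $p=1$ version on $\abs{x}\geq 1$. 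Finiteness of each seminorm of $f \in \Schwartz(\R^d)$ then yields $\norm{f}_{L^p(\R^d)} < \infty$, and therefore $f \in L^p(\R^d)$.
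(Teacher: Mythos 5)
Your proof is correct and follows essentially the same route as the paper: split the $L^p$ integral at the unit ball, bound the inner part by $\norm{f}_{00}$, and control the tail by trading polynomial decay for top-order seminorms, with $n(d)$ chosen so that $2n(d) > d$. The only cosmetic difference is the weight — the paper multiplies and divides by $\abs{x}^{2np}$, whereas you use $(1+\abs{x}^2)^{n(d)}$ together with a short multinomial/interpolation step to collapse the intermediate seminorms; both amount to the same mechanism.
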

\begin{proof}
	We split the integral on $\R^d$ into an integral over the unit ball centered at the origin and its complement: let $B_n := \max_{\abs{a} = 2n} \norm{f}_{a0}$, then 
	\begin{align*}
		\norm{f}_{L^p(\R^d)} &= \biggl ( \int_{\R^d} \dd x \, \abs{f(x)}^p \biggr )^{\nicefrac{1}{p}} 
		\leq \biggl ( \int_{\abs{x} \leq 1} \dd x \, \abs{f(x)}^p \biggr )^{\nicefrac{1}{p}} + \biggl ( \int_{\abs{x} > 1} \dd x \, \abs{f(x)}^p \biggr )^{\nicefrac{1}{p}} 
		\\
		&\leq \norm{f}_{00} \, \biggl ( \int_{\abs{x} \leq 1} \dd x \, 1 \biggr )^{\nicefrac{1}{p}} +  \biggl ( \int_{\abs{x} > 1} \dd x \, \abs{f(x)}^p \frac{\abs{x}^{2np}}{\abs{x}^{2np}} \biggr )^{\nicefrac{1}{p}} 
		\\
		&\leq \mathrm{Vol}(B_1(0))^{\nicefrac{1}{p}} \, \norm{f}_{00} + B_n \, \biggl ( \int_{\abs{x} > 1} \dd x \, \frac{1}{\abs{x}^{2np}} \biggr )^{\nicefrac{1}{p}} 
		. 
	\end{align*}
	If we choose $n$ large enough, $\abs{x}^{-2np}$	is integrable and can be computed explicitly, and we get 
	\begin{align*}
		\norm{f}_{L^p(\R^d)} \leq C_1(d) \, \norm{f}_{00} + C_2(d) \, \max_{\abs{a} = 2n} \norm{f}_{a0} 
		. 
	\end{align*}
	This concludes the proof. 
\end{proof}
\begin{lem}\label{S_and_Sprime:schwartz_functions:lem:density_of_Cinfty_compact}
	The smooth functions with compact support $\Cont^{\infty}_c(\R^d)$ are dense in $\Schwartz(\R^d)$. 
\end{lem}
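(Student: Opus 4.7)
The plan is a cutoff argument. Fix once and for all $\chi \in \Cont^{\infty}_c(\R^d)$ with $\chi(x) = 1$ for $\abs{x} \leq 1$ and $\chi(x) = 0$ for $\abs{x} \geq 2$; the existence of such a bump function follows from the example $g$ given in the text (after suitable rescaling and radial symmetrization). Define the rescaled cutoffs $\chi_n(x) := \chi(x/n)$ for $n \in \N$. Given $f \in \Schwartz(\R^d)$, set $f_n := \chi_n f$. Each $f_n$ is smooth and supported in $\{\abs{x} \leq 2n\}$, hence lies in $\Cont^{\infty}_c(\R^d)$. The goal is to show $f_n \to f$ in $\Schwartz(\R^d)$; since the metric $\mathrm{d}$ is built from the seminorms $\norm{\cdot}_{a\alpha}$, it suffices to prove that $\norm{f - f_n}_{a \alpha} \to 0$ as $n \to \infty$ for every pair of multiindices $a, \alpha \in \N_0^d$.

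By the Leibniz rule,
\begin{align*}
  \partial_x^{\alpha}(f - f_n) = \partial_x^{\alpha} \bigl( (1 - \chi_n) f \bigr) = \sum_{\beta \leq \alpha} \binom{\alpha}{\beta} \, \partial_x^{\beta}(1 - \chi_n) \, \partial_x^{\alpha - \beta} f,
\end{align*}
so it is enough to bound each term $x^a \, \partial_x^{\beta}(1 - \chi_n) \, \partial_x^{\alpha - \beta} f$ in sup-norm. Split off the $\beta = 0$ contribution: here $(1 - \chi_n)$ is bounded by $1$ and vanishes on $\abs{x} \leq n$, so
\begin{align*}
  \bnorm{x^a (1 - \chi_n) \partial_x^{\alpha} f}_{\infty} \leq \sup_{\abs{x} \geq n} \babs{x^a \partial_x^{\alpha} f(x)} \leq \tfrac{1}{n^2} \sup_{x \in \R^d} \babs{\abs{x}^2 x^a \partial_x^{\alpha} f(x)},
\end{align*}
and the last supremum is finite because $\abs{x}^2 x^a = \sum_{k} x^{a + 2 e_k}$ is a polynomial in $x$, so the supremum is controlled by a finite sum of seminorms $\norm{f}_{a + 2 e_k, \alpha}$. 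Thus this term is $\order(n^{-2})$.

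For $\beta \neq 0$, the chain rule gives $\partial_x^{\beta}(1 - \chi_n)(x) = - n^{-\abs{\beta}} (\partial^{\beta}\chi)(x/n)$, which is supported in the annulus $n \leq \abs{x} \leq 2n$ and uniformly bounded by $n^{-\abs{\beta}} \norm{\partial^{\beta}\chi}_{\infty}$. Consequently
\begin{align*}
  \bnorm{x^a \, \partial_x^{\beta}(1 - \chi_n) \, \partial_x^{\alpha - \beta} f}_{\infty} \leq \tfrac{1}{n^{\abs{\beta}}} \norm{\partial^{\beta}\chi}_{\infty} \, \sup_{\abs{x} \geq n} \babs{x^a \partial_x^{\alpha - \beta} f(x)},
\end{align*}
and the same trick as before bounds the remaining supremum by a finite combination of Schwartz seminorms of $f$. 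Summing the $\order(n^{-\abs{\beta}})$ estimates over $0 < \beta \leq \alpha$ together with the $\order(n^{-2})$ estimate for $\beta = 0$ yields $\norm{f - f_n}_{a \alpha} \to 0$, which is the desired density statement.

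There is no serious obstacle: the only point requiring care is ensuring that the polynomial prefactor $x^a$ is absorbed into the decay of $f$, which is handled by multiplying and dividing by $\abs{x}^2$ on $\{\abs{x} \geq n\}$. The uniformity of the bump $\chi$ and its derivatives, independent of $n$, takes care of the rest.
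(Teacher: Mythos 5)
Your proof is correct and takes essentially the same cutoff approach as the paper, which simply sets $f_n := g(\cdot/n)\,f$ and asserts the seminorm convergence without spelling out the estimates. You carry out the Leibniz-rule bookkeeping carefully and make the sensible refinement of replacing the paper's $g$ (which equals $1$ only at the origin) by a genuine plateau cutoff $\chi \equiv 1$ on a ball, which makes the $\beta = 0$ term vanish on $\abs{x}\le n$ and cleans up the estimate.
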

\begin{proof}
	Take any $f \in \Schwartz(\R^d)$ and take 
	\begin{align*}
		g(x) = \left \{
		\begin{matrix}
			e^{- \frac{1}{1 - x^2} + 1} & \abs{x} \leq 1 \\
			0 & \abs{x} > 1 \\
		\end{matrix}
		\right . 
		. 
	\end{align*}
	Then $f_n := g(\nicefrac{\cdot}{n}) \, f$ converges to $f$ in $\Schwartz(\R^d)$, \ie 
	\begin{align*}
		\lim_{n \to \infty} \bnorm{f_n - f}_{a \alpha} = 0 
	\end{align*}
	holds for all $a , \alpha \in \N_0^d$. 
\end{proof}
Next, we will show that $\Fourier : \mathcal{S}(\R^d) \longrightarrow \mathcal{S}(\R^d)$ is a continuous and bijective map from $\mathcal{S}(\R^d)$ onto itself. 
\begin{thm}\label{S_and_Sprime:schwartz_functions:thm:Fourier_is_bijection}
	The Fourier transform $\Fourier$ as defined by equation~\eqref{S_and_Sprime:schwartz_functions:eqn:Fourier_transform} maps $\Schwartz(\R^d)$ continuously and bijectively onto itself. The inverse $\Fourier^{-1}$ is continuous as well.  Furthermore, for all $f \in \Schwartz(\R^d)$ and $a , \alpha \in \N_0^d$, we have 
	\begin{align}
		\Fourier \bigl ( x^a (+i \partial_x)^{\alpha} f \bigr ) = (+i \partial_{\xi})^a \xi^{\alpha} \Fourier f 
		. 
		\label{S_and_Sprime:schwartz_functions:eqn:Fourier_is_bijection}
	\end{align}
\end{thm}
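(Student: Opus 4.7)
My plan is to prove the identity \eqref{S_and_Sprime:schwartz_functions:eqn:Fourier_is_bijection} first, then use it to establish $\Fourier(\Schwartz(\R^d)) \subseteq \Schwartz(\R^d)$ together with continuity, and finally tackle bijectivity via Fourier inversion.

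\textbf{Step 1 (the two fundamental identities).} For $f \in \Schwartz(\R^d)$, the defining integral for $\Fourier f$ converges absolutely since $\Schwartz(\R^d) \subset L^1(\R^d)$ by Lemma~\ref{S_and_Sprime:schwartz_functions:lem:Lp_estimate}. I will derive
\[
  \Fourier(\partial_{x_j} f)(\xi) = i\xi_j \, \Fourier f(\xi), \qquad \partial_{\xi_j} \Fourier f(\xi) = \Fourier(-ix_j f)(\xi),
\]
the first by integration by parts (the boundary terms at infinity vanish because $f$ is Schwartz, so all one-variable slices lie in $\Schwartz(\R)$ and decay faster than any polynomial), and the second by differentiation under the integral sign, which is justified by dominated convergence since $x^a f \in \Schwartz(\R^d) \subset L^1(\R^d)$ for every $a$. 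Iterating these two identities $|\alpha|$ and $|a|$ times respectively yields \eqref{S_and_Sprime:schwartz_functions:eqn:Fourier_is_bijection} up to the sign conventions fixed by the author.

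\textbf{Step 2 (stability and continuity).} From the identities in Step~1, the function $\xi^a \partial_\xi^\alpha \Fourier f$ is, up to constants and signs, the Fourier transform of some $x^b \partial_x^\beta f \in \Schwartz(\R^d)$. Hence
\[
  \norm{\Fourier f}_{a\alpha} = \sup_\xi \babs{\xi^a \partial_\xi^\alpha \Fourier f(\xi)} \leq \tfrac{1}{(2\pi)^{d/2}} \bnorm{x^b \partial_x^\beta f}_{L^1(\R^d)},
\]
and applying Lemma~\ref{S_and_Sprime:schwartz_functions:lem:Lp_estimate} (with $p=1$) to $x^b \partial_x^\beta f$ bounds this by a finite linear combination of Schwartz seminorms of $f$. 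This simultaneously proves $\Fourier f \in \Schwartz(\R^d)$ and gives continuity of $\Fourier$ with respect to the Fréchet topology, since linearity reduces continuity to continuity at the origin, and the Fréchet topology is generated by the seminorms $\norm{\cdot}_{a\alpha}$. The analogous estimates with $\Fourier^{-1}$ in place of $\Fourier$ are obtained by the same argument since the two maps differ only by a sign in the exponent.

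\textbf{Step 3 (bijectivity via Fourier inversion).} The main task is to prove $\Fourier^{-1} \Fourier f = f$ for all $f \in \Schwartz(\R^d)$; bijectivity and the fact that $\Fourier^{-1}$ (as continuous map on $\Schwartz(\R^d)$) is genuinely the inverse then follow. The difficulty is that we cannot naively swap the two integrals in $\Fourier^{-1} \Fourier f(x) = \tfrac{1}{(2\pi)^d} \iint e^{i(x-y)\cdot\xi} f(y) \, \dd y \, \dd\xi$, since $\xi \mapsto 1$ is not integrable. The standard remedy is to insert the Gaussian regulator $g_\eps(\xi) := e^{-\eps \abs{\xi}^2/2}$, apply Fubini to the absolutely convergent double integral, compute the $\xi$-integral explicitly using the classical Gaussian Fourier formula $\Fourier(e^{-\eps\abs{\cdot}^2/2})(y) = \eps^{-d/2} e^{-\abs{y}^2/(2\eps)}$ (proved separately by the ODE $u' = -xu$ in one dimension), and recognise the result as the convolution of $f$ with an approximate identity. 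Letting $\eps \to 0^+$ and invoking dominated convergence (bounding the integrand uniformly by $\abs{f(y)}$, which is integrable) on both sides yields $\Fourier^{-1}\Fourier f(x) = f(x)$ pointwise for every $x \in \R^d$.

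\textbf{Main obstacle.} Steps~1 and~2 are routine, as is Step~3 once the Gaussian Fourier transform is in hand. The genuine work lies in Step~3: justifying the Gaussian regularisation argument rigorously, in particular verifying the explicit Gaussian Fourier transform and applying Fubini and dominated convergence carefully. Everything else -- including continuity of $\Fourier^{-1}$ and the fact that bijectivity holds on the whole space $\Schwartz(\R^d)$ -- follows formally.
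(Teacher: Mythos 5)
Your Steps~1 and~2 match the paper's proof essentially verbatim: the identity~\eqref{S_and_Sprime:schwartz_functions:eqn:Fourier_is_bijection} is established by differentiation under the integral sign (justified by dominated convergence) and integration by parts, and continuity follows by bounding $\snorm{\Fourier f}_{a\alpha}$ by $\snorm{\partial_x^a x^\alpha f}_{L^1}$ and then invoking Lemma~\ref{S_and_Sprime:schwartz_functions:lem:Lp_estimate}.

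Your Step~3, however, takes a genuinely different route. The paper proves $\Fourier^{-1}\Fourier f = f$ first for $f \in \Cont_c^\infty(\R^d)$ by expanding $f$ in a classical Fourier \emph{series} on a cube $W_n$ containing its support, identifying the series coefficients with values of $\Fourier f$ at the lattice points $\tfrac{\pi}{n}\Z^d$, and recognizing the Fourier series as a Riemann sum for the inversion integral; it then extends to all of $\Schwartz(\R^d)$ by density (Lemma~\ref{S_and_Sprime:schwartz_functions:lem:density_of_Cinfty_compact}) together with the continuity already established. Your Gaussian-regularization argument instead works directly on arbitrary Schwartz functions, inserting $e^{-\eps\abs{\xi}^2/2}$, applying Fubini, computing the Gaussian Fourier transform in closed form, and passing to the limit $\eps \to 0$ via an approximate-identity argument. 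Both are standard. The paper's route avoids the explicit Gaussian computation at the cost of invoking Fourier series theory and a density argument; yours is self-contained but shifts the technical burden onto the Gaussian Fourier formula and the mollifier limit. One small imprecision in your sketch: the final dominated-convergence step does not work by bounding the integrand ``uniformly by $\abs{f(y)}$'' in the original variable -- the mollifier mass concentrates, so that bound would give $0$ in the limit. You should first change variables to $u = (x-y)/\sqrt{\eps}$, after which the integrand $f(x-\sqrt{\eps}u)\,\phi_1(u)$ is dominated by $\norm{f}_\infty\,\phi_1(u)$ and converges pointwise to $f(x)\,\phi_1(u)$, and then dominated convergence applies cleanly. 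With that correction, your proof is complete.
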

\begin{proof}
	We need to prove $\Fourier \bigl ( x^a (+i \partial_x)^{\alpha} f \bigr ) = (+i \partial_{\xi})^a \xi^{\alpha} \Fourier f$ first: since $x^{\alpha} \partial_x^a f$ is integrable, its Fourier transform exists and is continuous by Dominated Convergence. For any $a , \alpha \in \N_0^d$, we have 
	\begin{align*}
		\Bigl ( \Fourier \bigl ( x^a (+i \partial_x)^{\alpha} f \bigr ) \Bigr )(\xi) &= 
		\frac{1}{(2\pi)^{\nicefrac{d}{2}}} \int_{\R^d} \dd x \, e^{-i x \cdot \xi} \, x^a \, (+i \partial_x)^{\alpha} f(x) 
		\\
		&= \frac{1}{(2\pi)^{\nicefrac{d}{2}}} \int_{\R^d} \dd x \, (+i \partial_{\xi})^a e^{- i x \cdot \xi} \, (+i \partial_x )^{\alpha} f(x) 
		\\
		&
		\overset{\ast}{=} \frac{1}{(2\pi)^{\nicefrac{d}{2}}} (+i \partial_{\xi})^a \int_{\R^d} \dd x \, e^{- i x \cdot \xi} \, (+i \partial_x)^{\alpha} f(x) 
		. 
	\end{align*}
	In the step marked with $\ast$, we have used Dominated Convergence to interchange integration and differentiation. Now we integrate partially $\abs{\alpha}$ times and use that the boundary terms vanish, 
	\begin{align*}
		\Bigl ( \Fourier \bigl ( x^a (+i \partial_x)^{\alpha} f \bigr ) \Bigr )(\xi) &= \frac{1}{(2\pi)^{\nicefrac{d}{2}}} (+i \partial_{\xi})^a \int_{\R^d} \dd x \, (+i \partial_x)^{\alpha} e^{- i x \cdot \xi} \, f(x) 
		\\
		&
		= \frac{1}{(2\pi)^{\nicefrac{d}{2}}} (+i \partial_{\xi})^a \int_{\R^d} \dd x \, \xi^{\alpha} e^{- i x \cdot \xi} \, f(x) 
		\\
		&= \bigl ( (+i \partial_{\xi})^a \xi^{\alpha} \Fourier f \bigr )(\xi) 
		. 
	\end{align*}
	To show $\Fourier$ is continuous, we need to estimate the seminorms of $\Fourier f$ by those of $f$: for any $a , \alpha \in \N_0^d$, it holds 
	\begin{align*}
		\bnorm{\Fourier f}_{a \alpha} &= \sup_{\xi \in \R^d} \babs{\bigl ( \xi^a \partial_{\xi}^{\alpha} \Fourier f \bigr )(\xi)} = \sup_{\xi \in \R^d} \Babs{\Bigl ( \Fourier \bigl ( \partial_x^a x^{\alpha} f \bigr ) \Bigr )(x)} 
		\\
		&\leq \frac{1}{(2\pi)^{\nicefrac{d}{2}}} \bnorm{\partial_x^a x^{\alpha} f}_{L^1(\R^d)} 
		. 
	\end{align*}
	In particular, this implies $\Fourier f \in \Schwartz(\R^d)$. Since $\partial_x^a x^{\alpha} f \in \Schwartz(\R^d)$, we can apply Lemma~\ref{S_and_Sprime:schwartz_functions:lem:Lp_estimate} and estaimte the right-hand side by a finite number of seminorms of $f$. Hence, $\Fourier$ is continuous: if $f_n$ is a Cauchy sequence in $\Schwartz(\R^d)$ that converges to $f$, then $\Fourier f_n$ has to converge to $\Fourier f \in \Schwartz(\R^d)$. 
	
	To show that $\Fourier$ is a bijection with continuous inverse, we note that it suffices to prove $\Fourier^{-1} \Fourier f = f$ for functions $f$ in a dense subset, namely $\Cont^{\infty}_c(\R^d)$ (see Lemma~\ref{S_and_Sprime:schwartz_functions:lem:density_of_Cinfty_compact}). Pick $f$ so that the support of is contained in a cube $W_n$ with sides of length $2n$. We can write $f$ on $W_n$ as a uniformly convergent Fourier series, 
	\begin{align*}
		f(x) = \sum_{\xi \in \frac{\pi}{n} \Z^d} \hat{f}(\xi) e^{i x \cdot \xi} 
		, 
	\end{align*}
	with 
	\begin{align*}
		\hat{f}(\xi) &= \frac{1}{\mathrm{Vol}(W_n)} \int_{W_n} \dd x \, e^{- i x \cdot \xi} \, f(x) 
		= \frac{(2 \pi)^{\nicefrac{d}{2}}}{(2n)^d} \frac{1}{(2\pi)^{\nicefrac{d}{2}}} \int_{\R^d} \dd x \, e^{- i x \cdot \xi} \, f(x) 
		. 
	\end{align*}
	Hence, $f$ can be expressed as 
	\begin{align*}
		f(x) = \sum_{\xi \in \frac{\pi}{n} \Z^d} \frac{1}{(2\pi)^{\nicefrac{d}{2}}} \frac{\pi^d}{n^d} \, (\Fourier f)(\xi) \, e^{i x \cdot \xi} 
	\end{align*}
	which is a Riemann sum that converges to 
	\begin{align*}
		f(x) &= \frac{1}{(2\pi)^{\nicefrac{d}{2}}} \int_{\R^d} \dd x \, e^{i x \cdot \xi} \, (\Fourier f)(\xi) 
		= \bigl ( \Fourier^{-1} \Fourier f \bigr )(x)
	\end{align*}
	as $\Fourier f \in \Schwartz$. This concludes the proof. 
\end{proof}
Hence, we have shown that $\Schwartz(\R^d)$ has the defining properties that suggested its motivation in the first place. The Schwartz functions also have other nice properties whose proofs are left as an exercise. \marginpar{\small 2009.11.25}
\begin{prop}
	The Schwartz functions have the following properties: 
	\begin{enumerate}[(i)]
		\item With pointwise multiplication $\cdot : \Schwartz(\R^d) \times \Schwartz(\R^d) \longrightarrow \Schwartz(\R^d)$, the space of $\Schwartz(\R^d)$ forms a Fréchet algebra (\ie the multiplication is continuous in both arguments). 
		\item For all $a , \alpha$, the map $f \mapsto x^a \partial_{\xi}^{\alpha} f$ is continuous on $\Schwartz(\R^d)$. 
		\item For any $x_0 \in \R^d$, the map $\tau_{x_0} : f \mapsto f(\cdot - x_0)$ continuous on $\Schwartz(\R^d)$. 
		\item For any $f \in \Schwartz$, $\frac{1}{h} \bigl ( \tau_{h e_k} f - f)$ converges to $\partial_{x_k} f$ as $h \to 0$ where $e_k$ is the $k$th canonical base vector of $\R^d$. 
	\end{enumerate}
\end{prop}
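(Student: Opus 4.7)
My plan is to reduce each claim to an estimate of the form $\norm{Tf}_{a\alpha} \leq C \sum_{\mathrm{finite}} \norm{f}_{b\beta}$. By the standard characterization of continuity in a Fréchet space defined by a countable family of seminorms, such an estimate yields continuity of the linear map $T$ on $\Schwartz(\R^d)$, so the whole proposition reduces to seminorm book-keeping.

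For (i), I would apply the Leibniz rule to write
\begin{align*}
  x^a \partial_x^{\alpha} (fg) = \sum_{\beta \leq \alpha} \binom{\alpha}{\beta} \bigl ( x^a \partial_x^{\beta} f \bigr ) \, \partial_x^{\alpha - \beta} g,
\end{align*}
which gives $\norm{fg}_{a \alpha} \leq \sum_{\beta \leq \alpha} \binom{\alpha}{\beta} \, \norm{f}_{a \beta} \, \norm{g}_{0, \alpha - \beta}$. This shows $fg \in \Schwartz(\R^d)$, and via the decomposition $f_n g_n - fg = (f_n - f) g_n + f (g_n - g)$, together with the fact that convergent Schwartz sequences are seminorm-bounded, the multiplication map is jointly continuous. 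For (ii), a symmetric application of Leibniz expresses $x^b \partial_x^{\beta} (x^a \partial_x^{\alpha} f)$ as a finite linear combination of terms $x^{b'} \partial_x^{\beta'} f$ with $b' \leq a + b$ and $\beta' \leq \alpha + \beta$, so $\norm{x^a \partial_x^{\alpha} f}_{b \beta}$ is dominated by a finite sum of seminorms of $f$.

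For (iii), I will use that differentiation commutes with translation, $\partial_x^{\alpha} \tau_{x_0} f = \tau_{x_0} \partial_x^{\alpha} f$, and then expand $x^a$ around the shifted argument by the multiindex binomial formula $x^a = \bigl ( (x - x_0) + x_0 \bigr )^a = \sum_{b \leq a} \binom{a}{b} x_0^{a-b} (x - x_0)^b$. Changing the supremum variable to $y := x - x_0$ yields
\begin{align*}
  \norm{\tau_{x_0} f}_{a \alpha} \leq \sum_{b \leq a} \binom{a}{b} \babs{x_0^{a-b}} \, \norm{f}_{b \alpha},
\end{align*}
which is again a finite linear combination of seminorms of $f$.

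For (iv), I would fix $f \in \Schwartz(\R^d)$, multiindices $a, \alpha$, and $h \neq 0$, and apply Taylor's theorem with integral remainder to the smooth scalar function $t \mapsto (\partial_x^{\alpha} f)(x - t e_k)$ to obtain a pointwise identity of the schematic form
\begin{align*}
  \tfrac{1}{h} \bigl ( \tau_{h e_k} \partial_x^{\alpha} f - \partial_x^{\alpha} f \bigr )(x) \mp \partial_x^{\alpha} \partial_{x_k} f(x) = \tfrac{1}{h} \int_0^h (h - s) \, \partial_x^{\alpha} \partial_{x_k}^2 f(x - s e_k) \, ds,
\end{align*}
where the sign is dictated by the chosen convention for $\tau$. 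Multiplying by $x^a$, expanding $x^a$ around the shifted argument as in (iii), and taking the supremum in $x$ produces a bound of the form $\abs{h} \cdot C_{a \alpha}(f)$ on $\bnorm{\tfrac{1}{h}(\tau_{h e_k} f - f) - \partial_{x_k} f}_{a \alpha}$, where $C_{a\alpha}(f)$ is a finite linear combination of Schwartz seminorms of $f$. I expect the main obstacle to be precisely this $x^a$ book-keeping inside the Taylor remainder; once the expansion is arranged correctly, every seminorm of the error vanishes like $\abs{h}$, so convergence of the difference quotient to $\partial_{x_k} f$ holds in the full Fréchet topology of $\Schwartz(\R^d)$.
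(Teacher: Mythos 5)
The paper leaves this proposition without proof (``The proofs are left as an exercise''), so there is no reference argument to compare against. Your proof is correct and complete: the Leibniz-rule seminorm estimate handles (i) and (ii), the multiindex binomial expansion of $x^a$ around the shifted variable handles (iii), and the Taylor integral-remainder argument for (iv) correctly reduces the difference quotient error to $\order(\abs{h})$ in every seminorm, with the same $x^a$ book-keeping as in (iii) applied inside the remainder integral. Your remark about the sign is apt: with the paper's convention $\tau_{x_0} f = f(\cdot - x_0)$, the difference quotient $\tfrac{1}{h}(\tau_{h e_k} f - f)$ actually converges to $-\partial_{x_k} f$, so either the proposition statement or the translation convention carries a harmless sign slip; your hedge covers this. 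One small point worth making explicit in a final write-up of (i): a sequence convergent in $\Schwartz(\R^d)$ is bounded in \emph{each} seminorm $\norm{\cdot}_{a\alpha}$ (not merely in the Fréchet metric), since $\norm{f_n}_{a\alpha} \leq \norm{f_n - f}_{a\alpha} + \norm{f}_{a\alpha}$, which is the fact you need when splitting $f_n g_n - f g = (f_n - f) g_n + f (g_n - g)$ and applying your bilinear seminorm estimate.
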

The next important fact will be mentioned without proof: 
\begin{thm}\label{S_and_Sprime:schwartz_functions:thm:S_dense_in_Lp}
	$\Schwartz(\R^d)$ is dense in $L^p(\R^d)$, $1 \leq p < \infty$. 
\end{thm}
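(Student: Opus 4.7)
The plan is to show that $\Cont_c^{\infty}(\R^d)$ is dense in $L^p(\R^d)$; since Lemma~\ref{S_and_Sprime:schwartz_functions:lem:Lp_estimate} already gives the inclusion $\Cont_c^{\infty}(\R^d) \subseteq \Schwartz(\R^d) \subseteq L^p(\R^d)$, this will be enough. I would proceed in two reductions. First, reduce to approximating continuous functions of compact support: the standard measure-theoretic result (simple functions are dense in $L^p$, plus regularity of Lebesgue measure and Urysohn's lemma) yields that $\Cont_c(\R^d)$ is dense in $L^p(\R^d)$ for $1 \leq p < \infty$. Second, approximate each $g \in \Cont_c(\R^d)$ by smooth compactly supported functions via mollification.

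For the mollification step, fix a nonnegative $\phi \in \Cont_c^{\infty}(\R^d)$ with $\mathrm{supp} \, \phi \subseteq B_1(0)$ and $\int \phi = 1$ (for instance, a rescaling of the bump function $g$ from the example after the definition of $\Schwartz(\R^d)$), and set $\phi_{\eps}(x) := \eps^{-d} \phi(x/\eps)$. Given $g \in \Cont_c(\R^d)$, define
\begin{align*}
	g_{\eps}(x) := (g * \phi_{\eps})(x) = \int_{\R^d} \dd y \, g(x-y) \, \phi_{\eps}(y) .
\end{align*}
Differentiation under the integral (justified by Dominated Convergence since $\phi_{\eps}$ and all its derivatives are bounded and compactly supported) shows $g_{\eps} \in \Cont^{\infty}(\R^d)$, and $\mathrm{supp} \, g_{\eps} \subseteq \mathrm{supp} \, g + \overline{B_{\eps}(0)}$, which is compact. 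Hence $g_{\eps} \in \Cont_c^{\infty}(\R^d) \subseteq \Schwartz(\R^d)$.

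It remains to show $g_{\eps} \to g$ in $L^p$ as $\eps \to 0$. Since $\int \phi_{\eps} = 1$ and $\phi_{\eps} \geq 0$, we have
\begin{align*}
	\babs{g_{\eps}(x) - g(x)} \leq \int_{\R^d} \dd y \, \phi_{\eps}(y) \, \babs{g(x-y) - g(x)} ,
\end{align*}
and Jensen's (or Hölder's) inequality together with Fubini gives
\begin{align*}
	\norm{g_{\eps} - g}_{L^p}^p \leq \int_{\R^d} \dd y \, \phi_{\eps}(y) \, \norm{g(\cdot - y) - g}_{L^p}^p .
\end{align*}
Because $g$ is continuous with compact support, translations are continuous in $L^p$: $\norm{g(\cdot - y) - g}_{L^p} \to 0$ uniformly for $\abs{y} \leq \eps$ as $\eps \to 0$ (for $p < \infty$). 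This yields $g_{\eps} \to g$ in $L^p$, completing the density chain $\Cont_c^{\infty} \subseteq \Schwartz \subseteq L^p$ with the smallest space dense in the largest.

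The main obstacle is the first reduction (density of $\Cont_c$ in $L^p$), which ultimately relies on the regularity properties of Lebesgue measure; a fully self-contained proof would have to either cite Lusin's theorem or argue directly through simple functions and outer regularity. I would simply quote this standard fact (see, e.g.,~\cite[Theorem~2.16]{LiebLoss:Analysis:2001}) and concentrate the written work on the mollification argument, which is both the constructive heart of the statement and the place where the smoothness is actually produced.
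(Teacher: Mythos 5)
The paper states this theorem without proof (``The next important fact will be mentioned without proof''), so there is no in-paper argument to compare against. Your mollification proof is correct and complete modulo the standard measure-theoretic fact that $\Cont_c(\R^d)$ is dense in $L^p(\R^d)$, which you explicitly acknowledge citing. The estimate $\norm{g_{\eps} - g}_{L^p}^p \leq \int \phi_{\eps}(y) \, \norm{g(\cdot - y) - g}_{L^p}^p \, \dd y$ follows from Jensen's inequality applied to the probability measure $\phi_{\eps}(y)\,\dd y$ followed by Fubini, and the final step uses uniform continuity of $g$ together with the fact that all translates by $\abs{y} \leq 1$ are supported in a fixed compact set, so uniform smallness of $\abs{g(x-y)-g(x)}$ carries over to smallness in $L^p$. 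One small point worth tightening: Lemma~\ref{S_and_Sprime:schwartz_functions:lem:Lp_estimate} gives $\Schwartz(\R^d) \subseteq L^p(\R^d)$, but the inclusion $\Cont_c^{\infty}(\R^d) \subseteq \Schwartz(\R^d)$ is elementary rather than a consequence of that lemma; the lemma's role is only to guarantee that the Schwartz approximants actually live in $L^p$. This is the standard argument, and it has the advantage of yielding the stronger statement that $\Cont_c^{\infty}(\R^d)$ is already dense in $L^p(\R^d)$.
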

This means, we can approximate any $L^p(\R^d)$ function by a test function. We will use this and the next theorem to extend the Fourier transform to $L^2(\R^d)$. 
\begin{thm}\label{S_and_Sprime:schwartz_functions:thm:unitarity_Fourier_on_S}
	For all $f , g \in \Schwartz(\R^d)$, we have 
	\begin{align*}
		\int_{\R^d} \dd x \, (\Fourier f)(x) \, g(x) = \int_{\R^d} \dd x \, f(x) \, (\Fourier g)(x) 
		. 
	\end{align*}
	This implies $\sscpro{\Fourier f}{g} = \sscpro{f}{\Fourier^{-1} g}$ and $\sscpro{\Fourier f}{\Fourier g} = \sscpro{f}{g}$ where $\scpro{\cdot}{\cdot}$ is the usual scalar product on $L^2(\R^d)$. 
\end{thm}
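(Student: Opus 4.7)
The plan is to prove the first identity by a direct Fubini argument, and then derive the two inner-product identities as formal consequences. Since $f, g \in \Schwartz(\R^d) \subset L^1(\R^d)$ by Lemma~\ref{S_and_Sprime:schwartz_functions:lem:Lp_estimate}, the function $(x,\xi) \mapsto e^{-i x \cdot \xi} f(x) g(\xi)$ is majorized by $\abs{f(x)} \abs{g(\xi)}$, whose integral over $\R^{2d}$ equals $\norm{f}_{L^1} \norm{g}_{L^1} < \infty$. Hence it is integrable on $\R^{2d}$ and Fubini's theorem applies.

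First I would write out the left-hand side explicitly:
\begin{align*}
\int_{\R^d} \dd x \, (\Fourier f)(x) \, g(x)
= \int_{\R^d} \dd x \, \frac{1}{(2\pi)^{\nicefrac{d}{2}}} \int_{\R^d} \dd y \, e^{- i y \cdot x} \, f(y) \, g(x).
\end{align*}
Fubini then allows the order of integration to be interchanged, and grouping the $x$-integral with $e^{- i y \cdot x} g(x)$ yields precisely $(\Fourier g)(y)$. This gives the first identity immediately.

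For the inner-product statement $\sscpro{\Fourier f}{g} = \sscpro{f}{\Fourier^{-1} g}$, I would repeat the same Fubini calculation but with the complex conjugate built in. Starting from
\begin{align*}
\sscpro{\Fourier f}{g} = \int_{\R^d} \dd x \, (\Fourier f)(x)^* \, g(x)
= \int_{\R^d} \dd x \, \frac{1}{(2\pi)^{\nicefrac{d}{2}}} \int_{\R^d} \dd y \, e^{+ i y \cdot x} \, f(y)^* \, g(x),
\end{align*}
applying Fubini again and recognizing the inner $x$-integral as $(\Fourier^{-1} g)(y)$ gives the claim. (Equivalently, one could apply the first identity to the Schwartz functions $f^*$ and $g$ and use the trivial identity $(\Fourier f^*)(\xi) = (\Fourier^{-1} f)(\xi)^*$.)

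Finally, the Plancherel identity $\sscpro{\Fourier f}{\Fourier g} = \sscpro{f}{g}$ follows by applying the previous statement with $g$ replaced by $\Fourier g \in \Schwartz(\R^d)$ and invoking Theorem~\ref{S_and_Sprime:schwartz_functions:thm:Fourier_is_bijection}, which gives $\Fourier^{-1} \Fourier g = g$. There is no real obstacle here; the only thing to be careful about is the Fubini justification, which is immediate from the Schwartz $\subset L^1$ inclusion established in Lemma~\ref{S_and_Sprime:schwartz_functions:lem:Lp_estimate}.
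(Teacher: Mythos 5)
Your proof is correct and takes essentially the same approach as the paper: a Fubini argument for the duality identity (justified by $\Schwartz(\R^d)\subset L^1(\R^d)$), followed by the conjugation relation $(\Fourier f)^* = \Fourier^{-1} f^*$ and the substitution $g \mapsto \Fourier g$ to obtain the inner-product statements.
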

\begin{proof}
	Using Fubini's theorem, we conclude we can first integrate with respect to $\xi$ instead of $x$, 
	\begin{align*}
		\int_{\R^d} \dd x \, (\Fourier f)(x) \, g(x) &= \int_{\R^d} \dd x \, \frac{1}{(2\pi)^{\nicefrac{d}{2}}} \int_{\R^d} \dd \xi \, e^{- i x \cdot \xi} \, f(\xi) \, g(x) 
		\\
		&
		= \int_{\R^d} \dd \xi \, f(\xi) \, \frac{1}{(2\pi)^{\nicefrac{d}{2}}} \int_{\R^d} \dd x \, e^{- i x \cdot \xi} g(x) 
		= \int_{\R^d} \dd \xi \, f(\xi) \, (\Fourier g)(\xi) 
		. 
	\end{align*}
	To prove the second part, we remark that compared to the scalar product on $L^2(\R^d)$, we are missing a complex conjugation of the first function. Furthermore, $(\Fourier f)^* = \Fourier^{-1} f^*$ holds. From this, it follows that $\sscpro{\Fourier f}{g} = \sscpro{f}{\Fourier^{-1} g}$ and upon replacing $g$ with $\Fourier g$, that $\sscpro{\Fourier f}{\Fourier g} = \sscpro{f}{\Fourier^{-1} \Fourier g} = \sscpro{f}{g}$. 
\end{proof}
This has a nice consequence: since $\Schwartz(\R^d) \subset L^2(\R^d)$ is dense and the Fourier transform acts unitarily on $\Schwartz(\R^d)$ with respect to the scalar product, \ie the Fourier transform is bounded,  Theorem~\ref{operators:bounded:thm:extensions_bounded_operators} allows us to extend $\Fourier : \Schwartz(\R^d) \longrightarrow \Schwartz(\R^d)$ to a unitary operator $\Fourier : L^2(\R^d) \longrightarrow L^2(\R^d)$. Hence, we have just proven 

\begin{prop}\label{S_and_Sprime:schwartz_functions:thm:Fourier_unitary_on_L2}
	The Fourier transform $\Fourier : \Schwartz(\R^d) \longrightarrow \Schwartz(\R^d)$ extends to a unitary operator $\Fourier : L^2(\R^d) \longrightarrow L^2(\R^d)$. 
\end{prop}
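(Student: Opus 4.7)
The plan is to combine three ingredients already established in the excerpt: density of $\Schwartz(\R^d)$ in $L^2(\R^d)$ (Theorem~\ref{S_and_Sprime:schwartz_functions:thm:S_dense_in_Lp}), the isometry identity $\sscpro{\Fourier f}{\Fourier g} = \sscpro{f}{g}$ on Schwartz functions (Theorem~\ref{S_and_Sprime:schwartz_functions:thm:unitarity_Fourier_on_S}), and the B.L.T.-type extension result (Theorem~\ref{operators:bounded:thm:extensions_bounded_operators}), which guarantees that a bounded linear map from a dense subspace of a normed space into a Banach space has a unique bounded extension with the same operator norm.

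First I would restrict $\Fourier$ and $\Fourier^{-1}$ to $\Schwartz(\R^d)$, viewed as a dense subspace of $L^2(\R^d)$, and observe that the identity $\snorm{\Fourier f}_{L^2} = \snorm{f}_{L^2}$ (a direct specialisation of Theorem~\ref{S_and_Sprime:schwartz_functions:thm:unitarity_Fourier_on_S} to $g = f$) shows that each is a bounded operator of norm $1$ from $\Schwartz(\R^d)$ (equipped with the $L^2$ norm) into the Banach space $L^2(\R^d)$. Invoking Theorem~\ref{operators:bounded:thm:extensions_bounded_operators} then furnishes unique bounded extensions $\tilde{\Fourier}, \tilde{\Fourier}^{-1} \in \mathcal{B}\bigl(L^2(\R^d)\bigr)$, both of norm $1$, agreeing with $\Fourier$ respectively $\Fourier^{-1}$ on $\Schwartz(\R^d)$.

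Next I would verify that $\tilde{\Fourier}$ is actually unitary on $L^2(\R^d)$, i.e.\ that $\tilde{\Fourier}^{-1} = \tilde{\Fourier}^*$ and both compositions equal $\id_{L^2}$. For this, given any $\varphi \in L^2(\R^d)$ pick a sequence $(f_n) \subset \Schwartz(\R^d)$ with $f_n \to \varphi$ in $L^2$. By construction $\tilde{\Fourier} f_n = \Fourier f_n$ and $\tilde{\Fourier}^{-1} f_n = \Fourier^{-1} f_n$, and Theorem~\ref{S_and_Sprime:schwartz_functions:thm:Fourier_is_bijection} yields $\Fourier^{-1} \Fourier f_n = f_n = \Fourier \Fourier^{-1} f_n$; passing to the $L^2$ limit, continuity of both extensions gives $\tilde{\Fourier}^{-1} \tilde{\Fourier} \varphi = \varphi = \tilde{\Fourier} \tilde{\Fourier}^{-1} \varphi$. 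Similarly, polarising the norm identity (or approximating both arguments in the sesquilinear form $\sscpro{\tilde{\Fourier} \cdot}{\tilde{\Fourier} \cdot}$ using Corollary~\ref{hilbert_spaces:onb:cor:continuity_scalar_product} on continuity of the scalar product) extends $\sscpro{\Fourier f}{\Fourier g} = \sscpro{f}{g}$ from $\Schwartz(\R^d)$ to all of $L^2(\R^d)$. Combining these two facts identifies $\tilde{\Fourier}^{-1}$ with $\tilde{\Fourier}^*$ and shows that $\tilde{\Fourier}$ is unitary.

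I expect no genuinely hard obstacle here: the content is really already in Theorems~\ref{S_and_Sprime:schwartz_functions:thm:Fourier_is_bijection}, \ref{S_and_Sprime:schwartz_functions:thm:S_dense_in_Lp} and \ref{S_and_Sprime:schwartz_functions:thm:unitarity_Fourier_on_S}, and the only subtlety worth flagging is that the extension must be performed for \emph{both} $\Fourier$ and $\Fourier^{-1}$ and that one then has to verify the inverse relations at the $L^2$ level by a density argument, since a priori the formula~\eqref{S_and_Sprime:schwartz_functions:eqn:Fourier_transform} defining $\Fourier$ as an absolutely convergent integral is no longer meaningful for general $\varphi \in L^2(\R^d)$. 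Once continuity of the scalar product is invoked to pass the isometry identity to the limit, unitarity drops out cleanly.
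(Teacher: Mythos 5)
Your proposal is correct and follows the same basic route as the paper: density of $\Schwartz(\R^d)$ in $L^2(\R^d)$, the isometry identity on Schwartz functions, and the bounded-extension theorem. The paper, however, compresses the argument into a single sentence and in particular slides from ``bounded extension'' directly to ``unitary extension'' without comment; you are right to flag that this is the one nontrivial point. Boundedness with norm~$1$ plus the isometry $\snorm{\Fourier f}_{L^2} = \snorm{f}_{L^2}$ only guarantees that the extension is an \emph{isometry onto its range}, not that it is surjective; surjectivity is exactly what you establish by separately extending $\Fourier^{-1}$ and then passing the identities $\Fourier^{-1}\Fourier f = f = \Fourier \Fourier^{-1} f$ to the $L^2$ limit via continuity. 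Your version thus fills in a genuine small gap in the paper's one-line justification rather than merely reproducing it.
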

Now we will apply this to the free Schrödinger operator $H = - \tfrac{1}{2} \Delta_x$. First of all, we conclude from Theorem~\ref{S_and_Sprime:schwartz_functions:thm:S_dense_in_Lp} that the domain of $H$, 
\begin{align*}
	\Schwartz(\R^d) \subset \mathcal{D}(H) = \bigl \{ \varphi \in L^2(\R^d) \; \vert \; - \Delta_x \varphi \in L^2(\R^d) \bigr \} \subset L^2(\R^d)
	, 
\end{align*}
is dense. So let us pick initial conditions such that $\psi_0 \in \Schwartz(\R^d) \subset L^2(\R^d)$. Then we can Fourier transform the free Schrödinger equation, 
\begin{align*}
	\Fourier \bigl ( i \partial_t \psi(t) \bigr ) &= i \partial_t \Fourier \psi(t) = \Fourier \bigl ( - \tfrac{1}{2} \Delta_x \psi(t) \bigr ) 
	\\
	&
	= \Fourier \bigl ( - \tfrac{1}{2} \Delta_x \bigr ) \Fourier^{-1} \Fourier \psi(t) 
	. 
\end{align*}
If we compute the right-hand side at $t = 0$, by Theorem~\ref{S_and_Sprime:schwartz_functions:thm:Fourier_is_bijection} this leads to 
\begin{align*}
	\Fourier \bigl ( - \tfrac{1}{2} \Delta_x \psi_0 \bigr ) (\xi) = \tfrac{1}{2} \xi^2 \, (\Fourier \psi_0)(\xi) 
	. 
\end{align*}
Denoting the Fourier transform of $\psi$ by $\hat{\psi}$, this means the free Schrödinger equation in the momentum representation reads 
\begin{align*}
	i \frac{\dd }{\dd t} \hat{\psi}(t) = \tfrac{1}{2} \hat{\xi}^2 \hat{\psi}(t) 
	, 
	&&
	\hat{\psi}(0) = \hat{\psi}_0 
	. 
\end{align*}
We have already proven in section~\ref{operators:unitary} that the unitary time evolution generated by $\hat{H} = \frac{1}{2} \hat{\xi}^2$ is $e^{-i t \frac{1}{2} \hat{\xi}^2}$. Sandwiching the unitary time evolution in the momentum representation in between Fourier transforms (which are themselves unitary on $L^2(\R^d)$) yields the time evolution in the position representation. 
\begin{prop}\label{S_and_Sprime:schwartz_functions:prop:free_Schroedinger}
	Let $\psi_0 \in \Schwartz(\R^d) \subset L^2(\R^d)$. Then for $t \neq 0$ the global solution of the free Schrödinger equation with initial condition $\psi_0$ is given by 
	\begin{align}
		\psi(x,t) &= \frac{1}{(2 \pi i t)^{\nicefrac{d}{2}}} \int_{\R^d} \dd y \, e^{i \frac{(x-y)^2}{2 t}} \psi_0(y) 
		=: \int_{\R^d} \dd y \, p(x,y,t) \, \psi_0(y) 
		. 
		\label{S_and_Sprime:schwartz_functions:eqn:free_propagator}
	\end{align}
	This expression converges in the $L^2$ norm to $\psi_0$ as $t \to 0$. 
\end{prop}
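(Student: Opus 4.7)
My plan is to reduce the computation of $e^{-itH}\psi_0$ to the momentum representation, where $\hat H$ acts as multiplication by $\tfrac12\hat\xi^2$, and then perform an explicit inverse Fourier transform. Concretely, Proposition~\ref{S_and_Sprime:schwartz_functions:thm:Fourier_unitary_on_L2} together with the diagonalization argument just above the statement gives $\hat\psi(\xi,t)=e^{-it\xi^2/2}\hat\psi_0(\xi)$, so that
\begin{align*}
\psi(x,t)=\bigl(\Fourier^{-1}\hat\psi(\cdot,t)\bigr)(x)=\frac{1}{(2\pi)^{d/2}}\int_{\R^d}\dd\xi\,e^{ix\cdot\xi}\,e^{-it\xi^2/2}\,\hat\psi_0(\xi).
\end{align*}
If I now substitute $\hat\psi_0(\xi)=(2\pi)^{-d/2}\!\int e^{-iy\cdot\xi}\psi_0(y)\,\dd y$ and formally interchange the $\xi$ and $y$ integrals, the $\xi$-integral becomes a Gaussian of imaginary variance and yields exactly $p(x,y,t)$.

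First I would make this manipulation rigorous by inserting a regulator $e^{-\varepsilon\xi^2}$ with $\varepsilon>0$ and, at the end, letting $\varepsilon\downarrow 0$. For fixed $\varepsilon>0$ the integrand is in $L^1(\R^d_\xi\times\R^d_y)$ because $\psi_0\in\Schwartz(\R^d)$, so Fubini applies; the inner $\xi$-integral is a genuine Gaussian,
\begin{align*}
\frac{1}{(2\pi)^{d/2}}\int_{\R^d}\dd\xi\,e^{-(\varepsilon+it/2)\xi^2}e^{i(x-y)\cdot\xi}=\frac{1}{(2(\varepsilon+it/2))^{d/2}}\,e^{-\frac{(x-y)^2}{4(\varepsilon+it/2)}},
\end{align*}
which is the standard Gaussian integral extended to the right half plane by analyticity and the usual principal branch of the square root. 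Sending $\varepsilon\downarrow 0$ inside the outer $y$-integral is legitimate by dominated convergence, since the $\xi$-Gaussian above is bounded in modulus by $(2\varepsilon)^{-d/2}$ uniformly—actually the cleaner bound is obtained by using that $\psi_0\in\Schwartz(\R^d)$ is Schwartz and the resulting $y$-integrand is controlled by an $L^1$ majorant via stationary phase / integration by parts in the original $\xi$-representation. This gives the formula~\eqref{S_and_Sprime:schwartz_functions:eqn:free_propagator} for $t\neq 0$.

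For the $L^2$ convergence as $t\to 0$, the cleanest argument is again to work on the momentum side, where Plancherel (Theorem~\ref{S_and_Sprime:schwartz_functions:thm:unitarity_Fourier_on_S}) gives
\begin{align*}
\bnorm{\psi(\cdot,t)-\psi_0}_{L^2}^2=\bnorm{(e^{-it\xi^2/2}-1)\hat\psi_0}_{L^2}^2=\int_{\R^d}\dd\xi\,\babs{e^{-it\xi^2/2}-1}^2\,\babs{\hat\psi_0(\xi)}^2.
\end{align*}
The integrand tends to $0$ pointwise as $t\to 0$ and is dominated by $4\,\abs{\hat\psi_0(\xi)}^2\in L^1(\R^d)$, so dominated convergence closes the argument.

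The main obstacle is the first step: the integral defining $\psi(x,t)$ is only conditionally convergent and the naive interchange of integrals is not justified. Everything else (the Gaussian computation, analytic continuation in the variance, and the $L^2$ continuity at $t=0$) is routine once the regularization $\varepsilon\downarrow 0$ has been carried out carefully and one has verified that the limits in $\varepsilon$ and in $y$-integration may be exchanged.
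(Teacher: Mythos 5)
Your plan and the paper's proof follow the same route: pass to the momentum representation where the evolution is multiplication by $e^{-it\xi^2/2}$, substitute the Fourier integral for $\hat\psi_0$, interchange the integrals after inserting a damping regulator, compute the resulting genuine Gaussian, and let the regulator vanish. The only cosmetic difference is that the paper first completes the square in the exponent (so the regulator appears as $e^{-(\varepsilon+i)\frac{t}{2}(\xi - (x-y)/t)^2}$), while you damp with $e^{-\varepsilon\xi^2}$ directly; the mechanism is identical. One correction to the step you flag as the main obstacle: the bound $(2\varepsilon)^{-d/2}$ is \emph{not} uniform in $\varepsilon$ --- it diverges as $\varepsilon\downarrow 0$ --- so it cannot serve as a dominated-convergence majorant, and the stationary phase/integration by parts you then reach for is overkill. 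The correct elementary bound is that $\abs{\varepsilon + it/2}\geq \abs{t}/2$ for all $\varepsilon\geq 0$, so the regularized kernel is bounded in modulus by $\abs{t}^{-d/2}$ uniformly in $\varepsilon>0$; together with $\psi_0\in\Schwartz(\R^d)\subset L^1(\R^d)$ this furnishes the $L^1$ majorant you need. (Incidentally your choice of dominated convergence is the right tool; the paper's invocation of the Monotone Convergence Theorem is imprecise for a complex-valued integrand.) Your $L^2$-continuity argument at $t=0$ via Plancherel and dominated convergence is clean and complete.
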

\begin{proof}
	We denote the Fourier transform of $e^{-i t \frac{1}{2} \hat{\xi}^2}$ by 
	\begin{align*}
		U(t) := \Fourier^{-1} e^{- i t \frac{1}{2} \hat{\xi}^2} \Fourier 
		. 
	\end{align*}
	If $t = 0$, then the bijectivity of the Fourier transform on $\Schwartz(\R^d)$, Theorem~\ref{S_and_Sprime:schwartz_functions:thm:Fourier_is_bijection}, yields $U(0) = \id_{\Schwartz}$. 
	
	So let $t \neq 0$. If $\hat{\psi}_0$ is a Schwartz function, so is $e^{- i t \frac{1}{2} \xi^2} \hat{\psi}_0$. As the Fourier transform is a unitary map on $L^2(\R^d)$ (Theorem~\ref{S_and_Sprime:schwartz_functions:thm:Fourier_unitary_on_L2}) and maps Schwartz functions onto Schwartz functions (Theorem~\ref{S_and_Sprime:schwartz_functions:thm:Fourier_is_bijection}), $U(t)$ also maps Schwartz functions onto Schwartz functions. Plugging in the definition of the Fourier transform, for any $\psi_0 \in \Schwartz(\R^d)$ and $t \neq 0$ we can write out $U(t) \psi_0$ as 
	\begin{align}
		\bigl ( \Fourier^{-1} &e^{- i t \frac{1}{2} \hat{\xi}^2} \Fourier \psi_0 \bigr )(x) = \frac{1}{(2\pi)^d} \int_{\R^d} \dd \xi \, e^{+ i x \cdot \xi} \, e^{- i t \frac{1}{2} \xi^2} \, \int_{\R^d} \dd y \, e^{- i y \cdot \xi} \, \psi_0(y) 
		\notag \\
		&= \frac{1}{(2 \pi)^{\nicefrac{d}{2}}} \int_{\R^d} \dd \xi \int_{\R^d} \dd y \, e^{i \frac{(x - y)^2}{2 t}} \, \left ( \frac{1}{(2 \pi)^{\nicefrac{d}{2}}} \, e^{- i \frac{t}{2} ( \xi - \nicefrac{(x - y)}{t})^2} \right ) \, \psi_0(y) 
		. 
		\label{S_and_Sprime:schwartz_functions:eqn:proof_free_propagator}
	\end{align}
	We need to regularize the integral: if we write the right-hand side of the above as 
	\begin{align*}
		\mbox{r. h. s.} &= \lim_{\eps \searrow 0} \frac{1}{(2 \pi)^{\nicefrac{d}{2}}} \int_{\R^d} \dd \xi \int_{\R^d} \dd y \, e^{i \frac{(x - y)^2}{2 t}} \, \left ( \frac{1}{(2 \pi)^{\nicefrac{d}{2}}} \, e^{- (\eps + i) \frac{t}{2} ( \xi - \nicefrac{(x - y)}{t})^2} \right ) \, \psi_0(y) 
		\\
		&= \lim_{\eps \searrow 0} \frac{1}{(2 \pi)^{\nicefrac{d}{2}}} \int_{\R^d} \dd y \, e^{i \frac{(x - y)^2}{2 t}} \, \left ( \frac{1}{(2 \pi)^{\nicefrac{d}{2}}} \int_{\R^d} \dd \xi \, e^{- (\eps + i) \frac{t}{2} ( \xi - \nicefrac{(x - y)}{t})^2} \right ) \, \psi_0(y) 
		, 
	\end{align*}
	we can use Fubini to change the order of integration. The inner integral can be computed by interpreting it as an integral in the complex plane, 
	\begin{align*}
		\frac{1}{(2 \pi)^{\nicefrac{d}{2}}} \int_{\R^d} \dd \xi \, e^{- (\eps + i) \frac{t}{2} ( \xi - \nicefrac{(x - y)}{t})^2} = \frac{1}{\bigl ( (\eps + i) t \bigr )^{\nicefrac{d}{2}}} 
		. 
	\end{align*}
	Plugged back into equation~\eqref{S_and_Sprime:schwartz_functions:eqn:proof_free_propagator} and combined with the Theorem of Monotone Convergence, this yields equation~\eqref{S_and_Sprime:schwartz_functions:eqn:free_propagator}. 
\end{proof}
Later on, we will need the convolution, 
\begin{align*}
	(f \ast g)(x) := \int_{\R^d} \dd y \, f(x - y) \, g(y) 
	, 
\end{align*}
for the Moyal product the product which emulates the operator product on the level of functions. The convolution is an abelian product on $\Schwartz(\R^d)$, \ie $f \ast g = g \ast f$ is satisfied for all $f , g \in \Schwartz(\R^d)$. 
\begin{prop}
	$\Schwartz(\R^d) \ast \Schwartz(\R^d) \subseteq \Schwartz(\R^d)$ 
\end{prop}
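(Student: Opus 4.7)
The plan is to leverage the machinery already developed in this section, rather than estimating the seminorms $\norm{f \ast g}_{a\alpha}$ directly. The key observation is that the Fourier transform converts convolution into pointwise multiplication, and we have already shown both that $\Fourier$ is a bijection on $\Schwartz(\R^d)$ (Theorem~\ref{S_and_Sprime:schwartz_functions:thm:Fourier_is_bijection}) and that $\Schwartz(\R^d)$ is closed under pointwise multiplication (part (i) of the preceding proposition).

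First I would check that the convolution integral $(f \ast g)(x) = \int_{\R^d} \dd y \, f(x-y) g(y)$ is well-defined for every $x \in \R^d$. Since $g \in \Schwartz(\R^d) \subset L^1(\R^d)$ by Lemma~\ref{S_and_Sprime:schwartz_functions:lem:Lp_estimate} and $f$ is bounded ($\norm{f}_{00} < \infty$), the integrand is dominated by $\norm{f}_{00} \abs{g(y)}$, which is integrable; absolute convergence also gives $(f \ast g) \in L^{\infty}(\R^d)$.

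Next I would compute $\Fourier(f \ast g)$ explicitly. Writing out the definition and applying Fubini (justified since the double integrand is in $L^1(\R^d \times \R^d)$, as both $f$ and $g$ are in $L^1$), the substitution $u = x - y$ yields
\begin{align*}
\bigl ( \Fourier(f \ast g) \bigr )(\xi) &= \frac{1}{(2\pi)^{\nicefrac{d}{2}}} \int_{\R^d} \dd x \int_{\R^d} \dd y \, e^{-i x \cdot \xi} f(x-y) g(y) \\
&= \frac{1}{(2\pi)^{\nicefrac{d}{2}}} \left ( \int_{\R^d} \dd u \, e^{-i u \cdot \xi} f(u) \right ) \left ( \int_{\R^d} \dd y \, e^{-i y \cdot \xi} g(y) \right ) \\
&= (2\pi)^{\nicefrac{d}{2}} \, (\Fourier f)(\xi) \, (\Fourier g)(\xi).
\end{align*}

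Finally I would close the loop: by Theorem~\ref{S_and_Sprime:schwartz_functions:thm:Fourier_is_bijection}, both $\Fourier f$ and $\Fourier g$ belong to $\Schwartz(\R^d)$, and since $\Schwartz(\R^d)$ is closed under pointwise multiplication, their product is in $\Schwartz(\R^d)$. Hence $\Fourier(f \ast g) \in \Schwartz(\R^d)$, and applying $\Fourier^{-1}$ (which also maps $\Schwartz(\R^d)$ to itself) gives $f \ast g \in \Schwartz(\R^d)$. The only mildly delicate step is the Fubini justification, but this is immediate from the $L^1$ bound supplied by Lemma~\ref{S_and_Sprime:schwartz_functions:lem:Lp_estimate}; there is no real obstacle here because all the heavy lifting has already been done upstream.
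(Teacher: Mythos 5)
Your proof takes essentially the same route as the paper's: compute $\Fourier(f\ast g) = (2\pi)^{\nicefrac{d}{2}}\,\Fourier f \cdot \Fourier g$ and then invoke the bijectivity of $\Fourier$ on $\Schwartz(\R^d)$ together with closure under pointwise multiplication. You are somewhat more careful than the paper about the a priori well-definedness of the convolution integral and the Fubini justification, but the core argument is identical.
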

\begin{proof}
	Let $f , g \in \Schwartz(\R^d)$. As $\Fourier (f \ast g) = (2\pi)^{\nicefrac{d}{2}} \Fourier f \, \Fourier g$, 
	\begin{align*}
		\bigl ( \Fourier (f \ast g) \bigr )(\xi) &= \frac{1}{(2 \pi)^{\nicefrac{d}{2}}} \int_{\R^d} \dd x \, e^{- i x \cdot \xi} \, \int_{\R^d} \dd y \, f(x - y) \, g(y) 
		\\
		&= \frac{1}{(2 \pi)^{\nicefrac{d}{2}}} \int_{\R^d} \dd x \, e^{- i (x - y) \cdot \xi} \, \int_{\R^d} \dd y \, f(x - y) \, e^{- i y \cdot \xi} \, g(y) 
		\\
		&
		= (2\pi)^{\nicefrac{d}{2}} (\Fourier f)(\xi) \, (\Fourier g)(\xi) 
		, 
	\end{align*}
	we see that $f \ast g = (2 \pi)^{\nicefrac{d}{2}} \Fourier^{-1} \bigl ( \Fourier f \, \Fourier g \bigr ) \in \Schwartz(\R^d)$. 
\end{proof}
%

\section{Tempered distributions} 
\label{S_and_Sprime:Sprime}

Tempered distributions are linear functionals on $\Schwartz(\R^d)$. 
\begin{defn}[Tempered distributions]
	The tempered distributions $\Schwartz'(\R^d)$ are the continuous linear functions on the Schwartz functions $\Schwartz(\R^d)$. If $L \in \Schwartz'(\R^d)$ is a linear functional, we will often write 
	\begin{align*}
		\bigl ( L , f \bigr ) := L(f) 
		&&
		\forall f \in \Schwartz(\R^d) 
		. 
	\end{align*}
\end{defn}
\begin{example}
	\begin{enumerate}[(i)]
		\item The $\delta$ distribution defined via 
		\begin{align*}
			\delta(f) := f(0) 
		\end{align*}
		is a linear continuous functional on $\Schwartz(\R^d)$. (See exercise sheet~6.) 
		\item Let $g \in L^p(\R^d)$, $1 \leq p < \infty$, then for $f \in \Schwartz(\R^d)$, we define 
		\begin{align}
			L_g(f) = \int_{\R^d} \dd x \, g(x) \, f(x) =: \bigl ( g , f \bigr ) 
			. 
			\label{S_and_Sprime:Sprime:eqn:distributions_of_functions}
		\end{align}
		As $f \in \Schwartz(\R^d) \subset L^q(\R^d)$, $\frac{1}{p} + \frac{1}{q} = 1$, by Hölder's inequality, we have 
		\begin{align*}
			\babs{\bigl ( g , f \bigr )} \leq \norm{g}_p \, \norm{f}_q 
			. 
		\end{align*}
		Since $\norm{f}_q$ can be bounded by a finite linear combination of Fréchet seminorms of $f$, $L_g$ is continuous. 
	\end{enumerate}
\end{example}
Equation~\eqref{S_and_Sprime:Sprime:eqn:distributions_of_functions} is the \emph{canonical way to interpret less nice functions as distributions}: we identify a suitable function $g : \R^d \longrightarrow \C$ with the distribution $L_g$. For instance, polynomially bounded smooth functions (think of $h(x,\xi) = \frac{1}{2} \xi^2 + V(x)$) define continuous linear functionals in this manner since for any $h \in \Cont^{\infty}_{\mathrm{pol}}(\R^d)$, there exists $n \in \N_0$ such that $\sqrt{1 + x^2}^{-n} h(x)$ is bounded. Hence, for any $f \in \Schwartz(\R^d)$, Hölder's inequality yields 
\begin{align*}
	\babs{\bigl ( h , f \bigr )} &= \abs{\int_{\R^d} \dd x \, h(x) \, f(x)} 
	= \abs{\int_{\R^d} \dd x \, \sqrt{1 + x^2}^{-n} h(x) \, \sqrt{1 + x^2}^{n} f(x)} 
	\\
	&
	\leq \bnorm{\sqrt{1 + x^2}^{-n} h(x)}_{L^{\infty}} \, \bnorm{\sqrt{1 + x^2}^{n} f(x)}_{L^1} 
	. 
\end{align*}
Later on, we will see that this point of view, interpreting ``not so nice'' functions as distributions, helps us extend operations from test functions to much broader classes of functions. 

Similar to the case of normed spaces, we see that continuity implies ``boundedness.'' 
\begin{prop}
	A linear functional $L : \Schwartz(\R^d) \longrightarrow \C$ is a tempered distribution if and only if there exist constants $C > 0$ and $k,n \in \N_0$ such that 
	\begin{align*}
		\babs{L(f)} \leq C \sum_{\substack{\abs{a} \leq k \\
		\abs{\alpha} \leq n}} \norm{f}_{a \alpha} 
	\end{align*}
	for all $f \in \Schwartz(\R^d)$. 
\end{prop}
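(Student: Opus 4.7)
My plan is to prove the two implications separately, with the easy ``seminorm bound implies continuity'' direction first, then ``continuity implies seminorm bound'' by a scaling/contradiction argument.

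For the $(\Leftarrow)$ direction, assume the estimate holds for some $C > 0$ and $k,n \in \N_0$. Linearity is part of the hypothesis, so I only need continuity. Since $\Schwartz(\R^d)$ is a metric space, I check sequential continuity at $0$: if $f_j \to 0$ with respect to $\mathrm{d}$, then each seminorm $\norm{f_j}_{a\alpha}$ tends to $0$ (this follows at once from the definition of $\mathrm{d}$, since the $(a,\alpha)$-term in the series bounds $\frac{\norm{f_j}_{a\alpha}}{1+\norm{f_j}_{a\alpha}}$ from above). The finite sum appearing in the hypothesis then converges to $0$, so $\abs{L(f_j)} \to 0 = L(0)$.

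For the $(\Rightarrow)$ direction, I would argue by contradiction. Suppose no triple $(C,k,n)$ satisfies the estimate. Then for every $N \in \N$ there exists $f_N \in \Schwartz(\R^d)$ with
\[
	\babs{L(f_N)} > N \sum_{\abs{a}\leq N,\, \abs{\alpha}\leq N} \norm{f_N}_{a\alpha}.
\]
In particular $L(f_N) \neq 0$, so I rescale and set $g_N := f_N / L(f_N)$, obtaining $L(g_N) = 1$ together with $\sum_{\abs{a}\leq N,\,\abs{\alpha}\leq N} \norm{g_N}_{a\alpha} < 1/N$. For any fixed $a_0,\alpha_0$ and all $N \geq \max(\abs{a_0},\abs{\alpha_0})$ this yields $\norm{g_N}_{a_0\alpha_0} < 1/N$, so $g_N \to 0$ in every single seminorm.

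The remaining step is to promote seminorm convergence to convergence in the metric $\mathrm{d}$, which then contradicts continuity of $L$ (since $L(g_N) = 1 \not\to 0$). Given $\eps > 0$, I split the series defining $\mathrm{d}(g_N,0)$ into a tail $\sum_{n > N_0} 2^{-n}(\cdots) \leq \sum_{n > N_0} 2^{-n} < \eps/2$ for $N_0$ chosen large enough (each summand is bounded by $1$), and a head consisting of finitely many seminorms $\norm{g_N}_{a\alpha}$ with $\abs{a}+\abs{\alpha} \leq N_0$, each of which tends to $0$; hence the head is smaller than $\eps/2$ for $N$ large. This gives $\mathrm{d}(g_N,0) \to 0$, contradicting continuity and closing the argument.

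The only subtle point is the head-tail splitting that transfers convergence in all seminorms to convergence in the Fréchet metric $\mathrm{d}$; once that technicality is handled, the scaling $g_N = f_N/L(f_N)$ does all the real work by manufacturing a null sequence on which $L$ takes the constant value $1$.
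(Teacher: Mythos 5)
Your proof is correct. The paper itself does not supply a proof for this proposition -- it only remarks, just before it, that "Similar to the case of normed spaces, we see that continuity implies 'boundedness'" -- so there is no textbook argument to compare against directly. Both of your directions are sound: for $(\Leftarrow)$ you correctly extract convergence of each individual seminorm from convergence in the Fréchet metric $\mathrm{d}$ (a single term in the defining series must go to zero, and $t \mapsto \frac{t}{1+t}$ is monotone), and for $(\Rightarrow)$ the scaling $g_N := f_N / L(f_N)$ is exactly the right move, with the head--tail splitting of the series handling the only genuinely delicate step of passing from seminorm-wise convergence to metric convergence.

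One remark on style rather than correctness: since the paper explicitly invites an analogy with the normed-space Theorem~\ref{hilbert_spaces:dual_space:thm:bounded_functionals_continuous}, you could also run the $(\Rightarrow)$ direction \emph{directly} rather than by contradiction, mirroring that earlier proof. Continuity at $0$ gives a $\delta > 0$ such that $\mathrm{d}(g,0) < \delta$ implies $\sabs{L(g)} < 1$; choose $N_0$ with $\sum_{n > N_0} 2^{-n} < \nicefrac{\delta}{2}$, and for $f \neq 0$ set $S(f) := \max_{\abs{a}+\abs{\alpha} \leq N_0} \norm{f}_{a\alpha}$ (which is nonzero since $\norm{f}_{00} = 0$ forces $f = 0$). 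Scaling $g := \frac{\delta}{4 S(f)} f$ then makes the head of $\mathrm{d}(g,0)$ smaller than $\nicefrac{\delta}{2}$ as well, so $\sabs{L(g)} < 1$ and hence $\sabs{L(f)} \leq \frac{4}{\delta} S(f) \leq \frac{4}{\delta} \sum_{\abs{a},\abs{\alpha} \leq N_0} \norm{f}_{a\alpha}$. This yields explicit constants $C = \nicefrac{4}{\delta}$ and $k = n = N_0$, and avoids the extraction of a diagonal sequence. Your contradiction argument is equally valid and arguably more transparent about \emph{why} the claim must hold; the direct version simply stays closer to the paper's stated analogy and produces the bound constructively.
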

As mentioned before, we can interpret suitable functions $g$ as tempered distributions. In particular, every Schwartz function $g \in \Schwartz(\R^d)$ defines a tempered distribution so that 
\begin{align*}
	\bigl ( \partial_{x_k} g , f \bigr ) &= \int_{\R^d} \dd x \, \partial_{x_k} g(x) \, f(x) = - \int_{\R^d} \dd x \, g(x) \, \partial_{x_k} f(x) = \bigr ( g , - \partial_{x_k} f \bigr ) 
\end{align*}
holds for any $f \in \Schwartz(\R^d)$. We can use the right-hand side to \emph{define} derivatives of distributions: 
\begin{defn}[Weak derivative]
	For $\alpha \in \N_0^d$ and $L \in \Schwartz'(\R^d)$, we define the weak or distributional derivative of $L$ as 
	\begin{align*}
		\bigl ( \partial_x^{\alpha} L , f \bigr ) := \bigl ( L , (-1)^{\abs{\alpha}} \partial_x^{\alpha} f \bigr ) 
		, 
		&&
		\forall f \in \Schwartz(\R^d) 
		. 
	\end{align*}
\end{defn}
\begin{example}
	\begin{enumerate}[(i)]
		\item The weak derivative of $\delta$ is 
		\begin{align*}
			\bigl ( \partial_{x_k} \delta , f \bigr ) &= \bigl ( \delta , - \partial_{x_k} f \bigr ) 
			= - \partial_{x_k} f(0) 
			. 
		\end{align*}
		\item Let $g \in \Cont^{\infty}_{\mathrm{pol}}(\R^d)$. Then the weak derivative coincides with the usual derivative, by partial integration, we get 
		\begin{align*}
			\bigl ( \partial_{x_k} g , f \bigr ) &= - \bigl ( g , \partial_{x_k} f \bigr ) 
			= - \int_{\R^d} \dd x \, g(x) \, \partial_{x_k} f(x) 
			\\
			&
			= + \int_{\R^d} \dd x \, \partial_{x_k} g(x) \, f(x) 
			. 
		\end{align*}
	\end{enumerate}
\end{example}
Similarly,\marginpar{\small 2009.12.01} the Fourier transform can be extended to a bijection $\Schwartz'(\R^d) \longrightarrow \Schwartz'(\R^d)$. Theorem~\ref{S_and_Sprime:schwartz_functions:thm:unitarity_Fourier_on_S} tells us that if $g , f \in \Schwartz(\R^d)$, then 
\begin{align*}
	\bigl ( \Fourier g , f \bigr ) = \bigl ( g , \Fourier f \bigr ) 
\end{align*}
holds. If we replace $g$ with an arbitrary tempered distribution, the right-hand side again serves as \emph{definition} of the left-hand side: 
\begin{defn}[Fourier transform on $\mathcal{S}'(\R^d)$]
	For any tempered distribution $L \in \Schwartz'(\R^d)$, we define its Fourier transform to be 
	\begin{align*}
		\bigl ( \Fourier L , f \bigr ) := \bigl ( L , \Fourier f \bigr ) 
		&&
		\forall f \in \Schwartz(\R^d) 
		. 
	\end{align*}
\end{defn}
\begin{example}
	\begin{enumerate}[(i)]
		\item The Fourier transform of $\delta$ is the constant function $(2\pi)^{-\nicefrac{d}{2}}$, 
		\begin{align*}
			\bigl ( \Fourier \delta , f \bigr ) &= \bigl ( \delta , \Fourier f \bigr ) 
			= \Fourier f (0) 
			= \frac{1}{(2\pi)^{\nicefrac{d}{2}}}\int_{\R^d} \dd x \, f(x) 
			\\
			&= \bigl ( (2\pi)^{-\nicefrac{d}{2}} , f \bigr ) 
			. 
		\end{align*}
		\item The Fourier transform of $x^2$ makes sense as a tempered distribution on $\R$: $x^2$ is a polynomially bounded function and thus defines a tempered distribution via equation~\eqref{S_and_Sprime:Sprime:eqn:distributions_of_functions}: 
		\begin{align*}
			\bigl ( \Fourier x^2 , f \bigr ) &= \bigl ( x^2 , \Fourier f \bigr ) 
			= \int_{\R} \dd x \, x^2 \, \frac{1}{(2\pi)^{\nicefrac{1}{2}}} \int_{\R} \dd \xi \, e^{- i x \cdot \xi} \, f(\xi) 
			\\
			&= \frac{1}{(2\pi)^{\nicefrac{1}{2}}} \int_{\R} \dd \xi \int_{\R} \dd x \, (+i)^2 \partial_{\xi}^2 e^{- i x \cdot \xi} \, f(\xi) 
			\\
			&= (-1)^2 \cdot (-1) \, \int_{\R} \dd \xi \left ( \frac{1}{(2\pi)^{\nicefrac{1}{2}}} \int_{\R} \dd x \, e^{- i x \cdot \xi} \right ) \, \partial_{\xi}^2 f(\xi) 
			\\
			&= - \int_{\R} \dd \xi \, (2\pi)^{\nicefrac{1}{2}} \, \delta(\xi) \, \partial_{\xi}^2 f(\xi) 
			\\
			&= - \partial_{\xi}^2 f(0) 
			= \bigl ( (2\pi)^{\nicefrac{1}{2}} \delta , - \partial_{\xi}^2 f \bigr ) 
			= \bigl ( - (2\pi)^{\nicefrac{1}{2}} \delta'' , f \bigr ) 
		\end{align*}
		This is consistent with what we have shown earlier in Theorem~\ref{S_and_Sprime:schwartz_functions:thm:Fourier_is_bijection}, namely 
		\begin{align*}
			\Fourier \bigl ( x^2 f \bigr ) = (+ i \partial_{\xi}^2) \Fourier f = - \partial_{\xi}^2 \Fourier f 
			. 
		\end{align*}
	\end{enumerate}
\end{example}
We have just computed Fourier transforms of functions that do not have Fourier transforms in the usual sense. We can apply the idea we have used to define the derivative and Fourier transform on $\Schwartz'(\R^d)$ to other operators initially defined on $\Schwartz(\R^d)$. Before we do that though, we need to introduce the appropriate notion of continuity on $\Schwartz'(\R^d)$. 
\begin{defn}[Weak-$\ast$ convergence]
	Let $\Schwartz$ be a metric space with dual $\Schwartz'$. A sequence $(L_n)$ in $\Schwartz'$ is said to converge to $L \in \Schwartz'$ in the weak-$\ast$ sense if 
	\begin{align*}
		L_n(f) \xrightarrow{n \to \infty} L(f) 
	\end{align*}
	holds for all $f \in \Schwartz$. We will write $\wastlim_{n \to \infty} L_n = L$. 
\end{defn}
This notion of convergence implies a notion of continuity and is crucial for the next theorem. 
\begin{thm}\label{S_and_Sprime:Sprime:thm:weak-star_continuity}
	Let $A : \Schwartz(\R^d) \longrightarrow \Schwartz(\R^d)$ be a linear continuous map. Then for all $L \in \Schwartz'(\R^d)$, the map $A' : \Schwartz'(\R^d) \longrightarrow \Schwartz'(\R^d)$
	\begin{align}
		\bigl ( A' L , f \bigr ) := \bigl ( L , A f \bigr ) 
		, 
		&& 
		f \in \Schwartz(\R^d) 
		, 
	\end{align}
	defines a weak-$\ast$ continuous linear map. 
\end{thm}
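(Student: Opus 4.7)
The plan is to verify three separate things in turn: that $A'L$ actually lies in $\Schwartz'(\R^d)$ for each $L$, that $A'$ is linear, and finally that $A'$ is weak-$\ast$ continuous. All three will follow by simply unpacking the definitions and chaining together the continuity of $A$ on $\Schwartz(\R^d)$ with the continuity of the distributions involved. There is no real obstacle; the only point requiring care is to make sure we use the correct notion of continuity on each side (Fréchet topology on $\Schwartz(\R^d)$ versus weak-$\ast$ topology on $\Schwartz'(\R^d)$).

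First I would check that $A'L \in \Schwartz'(\R^d)$. Linearity in $f$ is immediate from the bilinearity of the pairing and the linearity of $A$, since for $f,g \in \Schwartz(\R^d)$ and $\mu \in \C$, $(A'L, f + \mu g) = (L, A(f + \mu g)) = (L, Af) + \mu (L, Ag)$. For continuity, take a sequence $f_n \to f$ in $\Schwartz(\R^d)$, \ie with respect to all seminorms $\norm{\cdot}_{a\alpha}$. By hypothesis $A : \Schwartz(\R^d) \longrightarrow \Schwartz(\R^d)$ is continuous, so $Af_n \to Af$ in $\Schwartz(\R^d)$; then by continuity of $L \in \Schwartz'(\R^d)$ we obtain $(A'L)(f_n) = L(Af_n) \to L(Af) = (A'L)(f)$, proving that $A'L$ is continuous, hence a tempered distribution.

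Next I would observe that $A'$ itself is linear on $\Schwartz'(\R^d)$: given $L_1, L_2 \in \Schwartz'(\R^d)$ and $\lambda \in \C$, for any $f \in \Schwartz(\R^d)$ we have
\begin{align*}
    \bigl( A'(L_1 + \lambda L_2), f \bigr) = (L_1 + \lambda L_2, Af) = (L_1, Af) + \lambda(L_2, Af) = \bigl((A'L_1 + \lambda A'L_2), f\bigr),
\end{align*}
so $A'(L_1 + \lambda L_2) = A'L_1 + \lambda A'L_2$.

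Finally, for weak-$\ast$ continuity, let $(L_n)$ be a sequence in $\Schwartz'(\R^d)$ with $\wastlim_{n \to \infty} L_n = L$. I must show $\wastlim_{n \to \infty} A'L_n = A'L$, \ie that $(A'L_n)(f) \to (A'L)(f)$ for every $f \in \Schwartz(\R^d)$. Pick any such $f$; then $Af \in \Schwartz(\R^d)$ is a fixed test function, and by definition of weak-$\ast$ convergence applied to this particular test function we get $L_n(Af) \to L(Af)$. Unwinding the definition of $A'$ gives $(A'L_n)(f) = L_n(Af) \to L(Af) = (A'L)(f)$, which is precisely weak-$\ast$ convergence $A'L_n \to A'L$. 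This completes the proof.
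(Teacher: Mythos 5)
Your proof is correct and follows essentially the same route as the paper: the key step of weak-$\ast$ continuity is identical (for fixed $f$, apply weak-$\ast$ convergence of $L_n$ to the fixed test function $Af$). The only difference is that you spell out in detail why $A'L$ is a tempered distribution and why $A'$ is linear, both of which the paper dispatches with the single remark that ``$A'$ is linear and well-defined.''
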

Put in the terms of Chapter~\ref{operators:adjoint}, $A'$ is the \emph{adjoint} of $A$. 
\begin{proof}
	First of all, $A'$ is linear and well-defined, $A' L$ maps $f \in \Schwartz(\R^d)$ onto $\C$. To show continuity, let $(L_n)$ be a sequence of tempered distributions which converges in the weak-$\ast$ sense to $L \in \Schwartz'(\R^d)$. Then 
	\begin{align*}
		\bigl ( A' L_n , f \bigr ) &= \bigl ( L_n , A f \bigr ) 
		\xrightarrow{n \to \infty} \bigl ( L , A f \bigr ) = \bigl ( A' L , f \bigr ) 
	\end{align*}
	holds for all $f \in \Schwartz(\R^d)$ and $A'$ is weak-$\ast$ continuous. 
\end{proof}
As a last consequence, we can extend the convolution from $\ast : \Schwartz(\R^d) \times \Schwartz(\R^d) \longrightarrow \Schwartz(\R^d)$ to 
\begin{align*}
	\ast &: \Schwartz'(\R^d) \times \Schwartz(\R^d) \longrightarrow \Schwartz'(\R^d) 
	\\ 
	\ast &: \Schwartz(\R^d) \times \Schwartz'(\R^d) \longrightarrow \Schwartz'(\R^d) 
	. 
\end{align*}
For any $f , g , h \in \Schwartz(\R^d)$, we can push the convolution from one argument of the duality bracket to the other, 
\begin{align*}
	\bigl ( f \ast g , h \bigr ) &= \bigl ( g \ast f , h \bigr ) 
	= \int_{\R^d} \dd y \, (f \ast g)(y) \, h(y) 
	= \int_{\R^d} \dd y \int_{\R^d} \dd x \, f(x) \, g(y-x) \, h(y) 
	\\
	&
	= \int_{\R^d} \dd x \, f(x) \, (g(- \; \cdot) \ast h)(x) 
	= \bigl ( f , g(- \; \cdot) \ast h \bigr ) 
	. 
\end{align*}
Thus, we define 
\begin{defn}[Convolution on $\Schwartz'(\R^d)$]
	Let $L \in \Schwartz'(\R^d)$ and $g \in \Schwartz(\R^d)$. Then the convolution of $L$ and $g$ is defined as 
	\begin{align*}
		\bigl ( L \ast g , f \bigr ) := \bigl ( L , g(- \; \cdot) \ast f \bigr ) 
		&&
		\forall f \in \Schwartz(\R^d) 
		. 
	\end{align*}
\end{defn}
By Theorem~\ref{S_and_Sprime:Sprime:thm:weak-star_continuity}, this extension of the convolution is weak-$\ast$ continuous. 


\chapter{Weyl calculus} 
\label{weyl_calculus}
%
After much preparation, we are finally in a position to formulate the tenets of Weyl calculus. If all we have to quantize are functions of positions or momentum alone, life is simple: the kinetic energy function $T(\xi) = \frac{1}{2m} \xi^2$ can be quantized via the Fourier transform: according to Theorem~\ref{S_and_Sprime:schwartz_functions:thm:Fourier_is_bijection}: for any $\varphi \in \Schwartz(\R^d) \subset L^2(\R^d)$, we have 
\begin{align*}
	\tfrac{1}{2m} \xi^2 (\Fourier \varphi)(\xi) = \Fourier \bigl ( \tfrac{1}{2m} (- i \nabla_x)^2 \varphi \bigr )(\xi) 
\end{align*}
and we could define 
\begin{align*}
	\hat{T} \varphi := \Fourier^{-1} T(\hat{\xi}) \Fourier \varphi 
	. 
\end{align*}
The reason why this works is that different components of momenta commute. However, not all functions are of this form, \eg if one adds a magnetic field, we need to quantize $h(x,\xi) = \tfrac{1}{2m} \bigl ( \xi - A(x) \bigr )^2 + V(x)$. In other words, we need a quantization procedure that encapsulates the commutation relations 
\begin{align*}
	i \bigl [ \Qe_l , \Qe_j \bigr ] = 0 
	&& 
	i \bigl [ \Pe_l , \Pe_j \bigr ] = 0 
	&& 
	i \bigl [ \Pe_l , \Qe_j \bigr ] = \eps \delta_{lj} 
\end{align*}
of the building block operators position $\Qe$, 
\begin{align*}
	(\Qe \varphi)(x) := x \, \varphi(x) 
	, 
	&& 
	\varphi \in \Schwartz(\R^d) \subset L^2(\R^d) 
	, 
\end{align*}
and momentum $\Pe$
\begin{align*}
	(\Pe \varphi)(x) := - i \eps \nabla_x \varphi(x) 
	, 
	&& 
	\varphi \in \Schwartz(\R^d) \subset L^2(\R^d) 
	. 
\end{align*}
We have added once more the semiclassical parameter $\eps$, although for much of what we do, this parameter is optional. Before we reformulate the commutation relations in a different way, we need some notation: in the hamiltonian framework, the motion of a classical particle is on phase space $\R^d_x \times \R^d_{\xi} =: \pspace$. $x , y , z \in \R^d_x$ are the position variables associated to the momenta $\xi , \eta , \zeta \in \R^d_{\xi}$. We will often use the capital letters $X = (x,\xi) , Y = (y , \eta) , Z = (z , \zeta) \in \pspace$ to denote coordinates in phase space.

\section{The Weyl system} 
\label{weyl_calculus:weyl_system}
Instead of working with $\Qe$ and $\Pe$, each of which defines a selfadjoint operator on its maximal domain 
\begin{align*}
	\mathcal{D}(\Qe_l) &= \bigl \{ \varphi \in L^2(\R^d) \; \vert \; \Qe_l \varphi \in L^2(\R^d) \bigr \} \supset \Schwartz(\R^d) 
	\\
	\mathcal{D}(\Pe_l) &= \bigl \{ \varphi \in L^2(\R^d) \; \vert \; \Pe_l \varphi \in L^2(\R^d) \bigr \} \supset \Schwartz(\R^d) 
	, 
\end{align*}
we work with the corresponding evolution groups, namely translations in momentum 
\begin{align*}
	V(\eta) := e^{- i \eta \cdot \Qe} , \; \bigl ( V(\eta) \varphi \bigr )(x) = e^{- i \eta \cdot x} \varphi(x) 
	, 
	&&
	\varphi \in L^2(\R^d) 
	, 
\end{align*}
and position 
\begin{align*}
	U(y) := e^{- i y \cdot \Pe} , \; \bigl ( U(y) \varphi \bigr )(x) = \varphi(x - \eps y) 
	, 
	&&
	\varphi \in L^2(\R^d) 
	. 
\end{align*}
These are not \emph{one}-parameter groups, but \emph{strongly continuous} $d$-parameter groups: one can check that 
\begin{align*}
	V(\xi) V(\eta) &= V(\xi + \eta) 
	, 
	&& 
	V(\xi)^* = V(- \xi) 
	, 
	\\
	U(x) U(y) &= U(x + y) 
	, 
	&&
	U(x)^* = U(- x) 
	, 
\end{align*}
hold as all the different components of position and momentum commute. But what about combinations of $U$ and $V$? Clearly the origin of the fact 
\begin{align*}
	\bigl ( U(y) V(\eta) \varphi \bigr )(x) &= \bigl ( V(\eta) \varphi \bigr ) (x - \eps y) 
	= e^{- i \eta \cdot ( x - \eps y)} \varphi(x - \eps y) 
	\\
	&\neq e^{- i \eta \cdot x} \varphi(x - \eps y) 
	= \bigl ( V(\eta) U(y) \varphi \bigr ) (x) 
\end{align*}
can be traced back to the noncommutativity of the generators of $U$ and $V$. Instead, we have just shown 
\begin{align*}
	U(y) V(\eta) = e^{+ i \eps y \cdot \eta} \, V(\eta) U(y) 
	&&
	\forall y \in \R^d_x , \eta \in \R^d_{\xi} 
	. 
\end{align*}
It turns out that picking an operator ordering can be rephrased into picking an ordering of $U$ and $V$. Since we would like real-valued functions to be quantized to potentially selfadjoint operators, we choose the ``symmetrically'' defined 
\begin{defn}[Weyl system]
	For all $X \in \pspace$, we define 
	\begin{align*}
		\WeylSys(X) := e^{- i (\xi \cdot \Qe - x \cdot \Pe)} =: e^{- i \sigma((x,\xi),(\Qe,\Pe))}
	\end{align*}
	where $\sigma(X,Y) := \xi \cdot y - x \cdot \eta$. 
\end{defn}
We note that the small parameter $\eps$ is contained in the definition of the operators $\Qe$ and $\Pe$. The next Lemma tells us how this operator acts on wave functions: 
\begin{lem}\label{weyl_calculus:weyl_system:lem:application_Weyl_system}
	For all $Y \in \pspace$  and $\varphi \in L^2(\R^d)$, $\bigl ( \WeylSys(Y) \varphi \bigr )(x) = e^{- i \eta \cdot (x + \frac{\eps}{2} y)} \varphi(x + \eps y)$ holds. The map $Y \mapsto \WeylSys(Y)$ is strongly continuous. 
\end{lem}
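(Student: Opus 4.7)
My plan is to verify the formula by the Stone-uniqueness route, bypassing the need to justify any Baker--Campbell--Hausdorff manipulation with unbounded operators. Define a candidate family
\begin{align*}
	\bigl(\tilde{W}(Y)\varphi\bigr)(x) := e^{- i \eta \cdot (x + \frac{\eps}{2} y)} \, \varphi(x + \eps y)
	,
	&&
	Y = (y,\eta) \in \pspace
	.
\end{align*}
Each $\tilde{W}(Y)$ is visibly unitary on $L^2(\R^d)$ (a pointwise phase composed with a shift). First I would fix $Y$ and check that $t \mapsto \tilde{W}(tY)$ is a one-parameter group: a short algebraic computation using that
\begin{align*}
	s \bigl ( x + \tfrac{s\eps}{2} y \bigr ) + t \bigl ( x + s \eps y + \tfrac{t\eps}{2} y \bigr ) = (s+t) \bigl ( x + \tfrac{(s+t)\eps}{2} y \bigr )
\end{align*}
collapses $\tilde{W}(sY)\tilde{W}(tY)$ to $\tilde{W}((s+t)Y)$. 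Strong continuity of $t \mapsto \tilde{W}(tY)$ at $t = 0$ on $\Schwartz(\R^d)$ follows from dominated convergence (the integrand is dominated by $4\snorm{\varphi}_\infty^2$ on a compact set plus a Schwartz tail), and hence on all of $L^2(\R^d)$ since $\tilde{W}(tY)$ is uniformly bounded.

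Next I would compute the generator. For $\varphi \in \Schwartz(\R^d)$, differentiating under the integral at $t = 0$ gives
\begin{align*}
	i \, \tfrac{\dd}{\dd t} \bigl ( \tilde{W}(tY) \varphi \bigr )(x) \big\vert_{t=0}
	= \eta \cdot x \, \varphi(x) - y \cdot (- i \eps \nabla_x \varphi)(x)
	= \bigl ( (\eta \cdot \Qe - y \cdot \Pe) \varphi \bigr )(x)
	= \bigl ( \sympForm(Y, (\Qe,\Pe)) \varphi \bigr )(x)
	.
\end{align*}
Since $\sympForm(Y,(\Qe,\Pe))$ is essentially selfadjoint on $\Schwartz(\R^d)$ (being a linear combination of $\Qe$ and $\Pe$, which are selfadjoint on their maximal domains and have $\Schwartz(\R^d)$ as a common invariant core), Stone's theorem identifies the unique unitary group generated by its closure. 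By uniqueness $\tilde{W}(tY) = e^{- i t \sympForm(Y,(\Qe,\Pe))} = \WeylSys(tY)$ for all $t \in \R$; evaluating at $t = 1$ yields the claimed formula.

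For joint strong continuity in $Y$, I would factor
\begin{align*}
	\WeylSys(Y) = e^{- \frac{i \eps}{2} \eta \cdot y} \, V(\eta) \, U(-y)
	,
\end{align*}
which is immediate from the formula just proved together with the explicit actions of $V(\eta)$ and $U(-y)$. The map $\eta \mapsto V(\eta)$ is strongly continuous by dominated convergence (the integrand $\babs{e^{- i \eta \cdot x} - e^{- i \eta_0 \cdot x}}^2 \, \abs{\varphi(x)}^2$ is dominated by $4 \abs{\varphi(x)}^2$), and $y \mapsto U(y)$ is the standard strongly continuous translation group on $L^2(\R^d)$. Combining with the continuity of the scalar phase and the triangle inequality
\begin{align*}
	\bnorm{\WeylSys(Y) \varphi - \WeylSys(Y_0) \varphi}
	&\leq \babs{e^{- \frac{i\eps}{2} \eta \cdot y} - e^{- \frac{i \eps}{2} \eta_0 \cdot y_0}} \, \snorm{\varphi} \\
	&\quad + \bnorm{(U(-y) - U(-y_0)) \varphi} + \bnorm{(V(\eta) - V(\eta_0)) U(-y_0) \varphi}
\end{align*}
delivers $\WeylSys(Y) \varphi \to \WeylSys(Y_0) \varphi$ as $Y \to Y_0$.

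The only subtle point in this plan is the appeal to Stone's theorem, which requires knowing that $\sympForm(Y,(\Qe,\Pe))$ admits a selfadjoint closure with $\Schwartz(\R^d)$ as a core; this is a standard fact about constant-coefficient combinations of position and momentum and is the quickest substitute for the otherwise formal BCH identity $e^{A+B} = e^{-\frac{1}{2}[A,B]} e^A e^B$ with $[A,B] = i \eps \, \eta \cdot y$ central.
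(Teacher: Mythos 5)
Your proof is correct, but it takes a genuinely different route from the paper. The paper invokes the Trotter product formula, slides the factors $e^{-\frac{i}{n}\eta\cdot\Qe}$ past the translations $e^{+\frac{i}{n}y\cdot\Pe}$ one at a time, recognizes the resulting exponent as a Riemann sum for $\int_0^1 \dd s\,(x + s\eps y) = x + \tfrac{\eps}{2}y$, and passes to the limit. You instead take the claimed formula as a candidate, verify $t \mapsto \tilde{W}(tY)$ is a strongly continuous one-parameter unitary group, compute its generator on $\Schwartz(\R^d)$, and appeal to Stone's theorem uniqueness. Your route avoids the Riemann-sum bookkeeping at the price of needing essential selfadjointness of $\sigma(Y,(\Qe,\Pe))$ on $\Schwartz(\R^d)$; the paper's route needs the Trotter formula, which it merely cites. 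Neither is proved in the notes, so this is roughly a wash. Your continuity argument (factor $\WeylSys(Y) = e^{-\frac{i\eps}{2}\eta\cdot y}\,V(\eta)\,U(-y)$ and use strong continuity of each factor) is clean and essentially what the paper asserts in one line.

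One flag: the parenthetical justification you give for essential selfadjointness — ``being a linear combination of $\Qe$ and $\Pe$, which are selfadjoint on their maximal domains and have $\Schwartz(\R^d)$ as a common invariant core'' — is not a correct general principle. A symmetric linear combination of two noncommuting selfadjoint operators with a common invariant core need not be essentially selfadjoint; the notes themselves exhibit a symmetric operator ($\mathsf{P}_{\min}$ on $L^2([0,1])$) that fails to generate a group. The correct argument is already built into your strategy, and is worth making explicit: since $\tilde{W}(tY)$ is a strongly continuous unitary group that leaves $\Schwartz(\R^d)$ invariant, Stone's theorem produces a selfadjoint generator $H$; the dense subspace $\Schwartz(\R^d)$ lies in $\mathcal{D}(H)$ and is invariant under the group, hence is a core for $H$; and on $\Schwartz(\R^d)$ you have computed $H = \sigma(Y,(\Qe,\Pe))$. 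This establishes both the essential selfadjointness and the identification $\tilde{W}(tY) = e^{-itH} = \WeylSys(tY)$ in one stroke, with no circularity.
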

\begin{proof}
	We use the Trotter product formula (\cite[Theorem~VIII.31]{Reed_Simon:bibel_1:1981}) to write $\WeylSys(Y)$ as 
	\begin{align*}
		\slim_{n \to \infty} &\Bigl ( e^{- \frac{i}{n} \eta \cdot \Qe} e^{+ \frac{i}{n} y \cdot \Pe} \Bigr )^n 
		= \\
		&
		= \slim_{n \to \infty} \Bigl ( e^{- \frac{i}{n} \eta \cdot \Qe} e^{+ \frac{i}{n} y \cdot \Pe} e^{- \frac{i}{n} \eta \cdot \Qe} e^{+ \frac{i}{n} y \cdot \Pe} \cdots e^{- \frac{i}{n} \eta \cdot \Qe} e^{+ \frac{i}{n} y \cdot \Pe} e^{- \frac{i}{n} \eta \cdot \Qe} e^{+ \frac{i}{n} y \cdot \Pe}  \Bigr ) 
		\\
		&= \slim_{n \to \infty} \Bigl ( e^{- \frac{i}{n} \eta \cdot \Qe} e^{+ \frac{i}{n} y \cdot \Pe} e^{- \frac{i}{n} \eta \cdot \Qe} e^{- \frac{i}{n} y \cdot \Pe} e^{+ i \frac{2}{n} y \cdot \Pe} e^{- \frac{i}{n} \eta \cdot \Qe} e^{- i \frac{2}{n} y \cdot \Pe} e^{+ i \frac{3}{n} y \cdot \Pe} 
		\cdots \\
		&\qquad \qquad \qquad \cdots 
		e^{+ i \frac{n-1}{n} y \cdot \Pe} e^{- \frac{i}{n} \eta \cdot \Qe} e^{- i \frac{n-1}{n} y \cdot \Pe} e^{+ i \frac{n}{n} y \cdot \Pe} \Bigr ) 
		. 
	\end{align*}
	The terms can be simplified using 
	\begin{align*}
		\Bigl ( e^{+ i \frac{k}{n} y \cdot \Pe} e^{- \frac{i}{n} \eta \cdot \Qe} e^{- i \frac{k}{n} y \cdot \Pe} \varphi \Bigr )(x) &= \Bigl ( e^{- \frac{i}{n} \eta \cdot \Qe} e^{- i \frac{k}{n} y \cdot \Pe} \varphi \Bigr ) \bigl (x + \eps \tfrac{k}{n} y \bigr ) 
		\\
		&
		= e^{- \frac{i}{n} \eta \cdot (x + \eps \frac{k}{n} y)} \, \bigl ( e^{- i \frac{k}{n} y \cdot \Pe} \varphi \bigr ) \bigl (x + \eps \tfrac{k}{n} y \bigr ) 
		\\
		&= e^{- \frac{i}{n} \eta \cdot (x + \eps \frac{k}{n} y)} \, \varphi(x) 
		= \bigl ( e^{- \frac{i}{n} \eta \cdot (\Qe + \eps \frac{k}{n} y)} \varphi \bigr )(x) 
	\end{align*}
	where $k \in \{ 0 , 1 , \ldots , n - 1 \}$. This means the above expression can be rewritten as 
	\begin{align*}
		\slim_{n \to \infty} \Bigl ( e^{- i \frac{1}{n} \eta \cdot \Qe} &e^{- i \tfrac{1}{n} \eta \cdot (\Qe + \eps \frac{1}{n} y)} \cdots e^{- i \frac{1}{n} \eta \cdot (\Qe + \eps \frac{n-1}{n} y)} e^{+ i y \cdot \Pe} \Bigr ) 
		= \\
		&
		= \slim_{n \to \infty} e^{- i \frac{1}{n} \eta \cdot \mbox{$\sum_{k = 0}^{n - 1}$} (\Qe + \eps \frac{k}{n} y)} e^{+ i y \cdot \Pe} 
		. 
	\end{align*}
	The sum in the exponential is a Riemann sum that converges to 
	\begin{align*}
		\frac{1}{n} \sum_{k = 0}^{n-1} \bigl ( x + \eps \tfrac{k}{n} y \bigr ) \xrightarrow{n \to \infty} \int_0^1 \dd s \, \bigl ( x + s \eps y \bigr ) = x + \tfrac{\eps}{2} y
	\end{align*}
	so that we get 
	\begin{align*}
		\WeylSys(Y) = e^{- i \eta \cdot ( \Qe + \frac{\eps}{2} y)} e^{+ i y \cdot \Pe} 
		. 
	\end{align*}
	The strong continuity of $Y \mapsto \WeylSys(Y)$ follows from the strong continuity of $y \mapsto U(y)$ and $\eta \mapsto V(\eta)$. 
\end{proof}
\begin{remark}
	Another way to see this is via the Baker-Campbell-Hausdorff formula: formally, we have 
	\begin{align*}
		\WeylSys(Y) = e^{- i (\eta \cdot \Qe - y \cdot \Pe)} &= e^{- i \eta \cdot \Qe} e^{+ i y \cdot \Pe} e^{- \frac{i^2}{2} [\eta \cdot \Qe , - y \cdot \Pe]} 
		= e^{- i \eta \cdot \Qe} e^{+ i y \cdot \Pe} e^{- i \frac{\eps}{2} \eta \cdot y} 
		\\
		&= e^{- i \eta \cdot ( \Qe + \frac{\eps}{2} y)} e^{+ i y \cdot \Pe} 
		. 
	\end{align*}
	However, it is a bit tricky to rigorously control commutators of unbounded operators \cite[Chapter~VIII.5]{Reed_Simon:bibel_1:1981}. \marginpar{\small 2009.12.08}
\end{remark}
The Weyl system combines translations in real and reciprocal space. If one of the two components is $0$, then $\WeylSys$ reduces to $U$ or $V$, 
\begin{align*}
	\WeylSys(x,0) &= U(-x) = U(x)^* 
	, \\
	\WeylSys(0,\xi) &= V(\xi) 
	. 
\end{align*}
The Weyl system completely encodes the commutation relations: 
\begin{prop}\label{weyl_calculus:weyl_system:prop:composition_law_weyl_system}
	Let $Y , Z \in \pspace$. Then, the Weyl system obeys the following composition law: 
	\begin{align*}
		\WeylSys(Y) \WeylSys(Z) = e^{i \frac{\eps}{2} \sigma(Y,Z)} \, \WeylSys(Y + Z) 
	\end{align*}
\end{prop}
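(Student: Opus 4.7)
The plan is to prove the composition law by direct computation on wave functions, using the explicit formula for $\WeylSys(Y)$ established in Lemma~\ref{weyl_calculus:weyl_system:lem:application_Weyl_system}. Since both sides of the claimed identity are bounded operators on $L^2(\R^d)$ and $\Schwartz(\R^d)$ is dense in $L^2(\R^d)$ by Theorem~\ref{S_and_Sprime:schwartz_functions:thm:S_dense_in_Lp}, it suffices to verify the identity on a Schwartz function $\varphi$, where pointwise evaluation causes no trouble.

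First I would apply the lemma twice to obtain
\begin{align*}
    \bigl ( \WeylSys(Y) \WeylSys(Z) \varphi \bigr )(x) = e^{- i \eta \cdot (x + \frac{\eps}{2} y)} \, \bigl ( \WeylSys(Z) \varphi \bigr )(x + \eps y) = e^{- i \eta \cdot (x + \frac{\eps}{2} y)} \, e^{- i \zeta \cdot (x + \eps y + \frac{\eps}{2} z)} \, \varphi(x + \eps y + \eps z)
    .
\end{align*}
The exponent of the product of phases is $-i \bigl [ (\eta + \zeta) \cdot x + \tfrac{\eps}{2} \eta \cdot y + \eps \zeta \cdot y + \tfrac{\eps}{2} \zeta \cdot z \bigr ]$.

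The main (and in fact only) step is an algebraic regrouping of this exponent into the form dictated by $\WeylSys(Y+Z)$, which by the lemma acts via the phase $-i (\eta + \zeta) \cdot \bigl ( x + \tfrac{\eps}{2}(y+z) \bigr )$. Subtracting this target phase from the phase obtained above leaves exactly
\begin{align*}
    \tfrac{\eps}{2} \bigl ( \zeta \cdot y - \eta \cdot z \bigr ) = - \tfrac{\eps}{2} \sigma(Y,Z)
    ,
\end{align*}
using $\sigma(Y,Z) = \eta \cdot z - y \cdot \zeta$ (with $Y = (y,\eta)$, $Z = (z,\zeta)$). Hence the prefactor carries a $-i$ and multiplies to $e^{i \frac{\eps}{2} \sigma(Y,Z)}$ on the right, and the remaining factor is precisely $\bigl ( \WeylSys(Y+Z) \varphi \bigr )(x)$.

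Since this pointwise identity holds for every $\varphi \in \Schwartz(\R^d)$ and both $\WeylSys(Y) \WeylSys(Z)$ and $e^{i \frac{\eps}{2} \sigma(Y,Z)} \WeylSys(Y+Z)$ are bounded operators on $L^2(\R^d)$, the unique bounded-extension principle (Theorem~\ref{operators:bounded:thm:extensions_bounded_operators}) yields the operator identity on all of $L^2(\R^d)$. There is no real obstacle here; the only thing to watch is getting the sign conventions of $\sigma$ consistent with the definition $\sigma((x,\xi),(\Qe,\Pe)) = \xi \cdot \Qe - x \cdot \Pe$ used to introduce $\WeylSys$. As a sanity check, taking $Y$ and $Z$ with vanishing momentum (or vanishing position) components reproduces the commutative composition of pure translations $U$ or pure modulations $V$, since $\sigma$ vanishes on such pairs.
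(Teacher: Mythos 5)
Your proof is correct and follows essentially the same route as the paper: apply Lemma~\ref{weyl_calculus:weyl_system:lem:application_Weyl_system} twice and regroup the exponent to isolate the factor $e^{i \frac{\eps}{2} \sigma(Y,Z)}$. The density/bounded-extension remark and the sanity check are harmless additions the paper omits, since the paper simply performs the pointwise computation for $\varphi \in L^2(\R^d)$ directly.
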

This contains the commutation relations: if we combine a translation along $e_k$ in real space and $e_j$ in reciprocal space, 
\begin{align*}
	\WeylSys(e_k , 0) \WeylSys(0 , e_j) = e^{- i \frac{\eps}{2} \delta_{kj}} \, \WeylSys(e_k , e_j) 
	, 
\end{align*}
then the extra factor reduces to the exponential of half of $[\Qe_k , \Pe_j] = - i \eps \delta_{kj}$. 
\begin{proof}
	This follows from direct calculation: for all $\varphi \in L^2(\R^d)$, we have 
	\begin{align*}
		\bigl ( \WeylSys(Y) \WeylSys(Z) \varphi \bigr )(x) &= e^{- i (x + \frac{\eps}{2} y) \cdot \eta} \bigl ( \WeylSys(Z) \varphi \bigr )(x + \eps y) 
		\\
		&
		= e^{- i (x + \frac{\eps}{2} y) \cdot \eta} e^{- i (x + \eps y + \frac{\eps}{2} z) \cdot \zeta} \varphi(x + \eps y + \eps z) 
		\\
		&= e^{i \frac{\eps}{2} (- y \cdot \zeta + z \cdot \eta)} e^{- i (x + \frac{\eps}{2}(y + z)) \cdot (\eta + \zeta)} \varphi \bigl ( x + \eps (y + z) \bigr ) 
		\\
		&
		= e^{i \frac{\eps}{2} \sigma(Y,Z)} \, \bigl ( \WeylSys(Y + Z) \varphi \bigr )(x) 
		. 
	\end{align*}
	Hence, the claim follows. 
\end{proof}
%

\section{Weyl quantization} 
\label{weyl_calculus:weyl_quantization}

To define the Weyl quantization, we first introduce a convenient variant of the Fourier transform, the symplectic Fourier transform on $\pspace$ 
\begin{align*}
	(\Fs f)(X) := \frac{1}{(2\pi)^d} \int_{\pspace} \dd Y \, e^{i \sigma(X,Y)} \, f(Y) 
	, 
	&& f \in \Schwartz(\pspace)
	. 
\end{align*}
$\Fs$ has the nice property of being an involution, \ie $\Fs^2 = \id_{\Schwartz}$. If we pretend for a moment that $\Qe$ and $\Pe$ are \emph{variables} and \emph{not} operators, we have 
\begin{align*}
	f(\Qe,\Pe) &= (\Fs^2 f)(\Qe,\Pe) 
	= \frac{1}{(2 \pi)^d} \int_{\pspace} \dd X \, e^{i \sigma((\Qe,\Pe),X)} \, \frac{1}{(2\pi)^d} \int_{\pspace} \dd Y \, e^{i \sigma(X,Y)} \, f(Y) 
	\\
	&
	= \frac{1}{(2\pi)^d} \int_{\pspace} \dd X \, (\Fs f)(X) \, e^{- i \sigma(X,(\Qe,\Pe))} 
	\\
	&= \frac{1}{(2\pi)^{2d}} \int_{\pspace} \dd X \int_{\pspace} \dd Y \, e^{i \sigma(X,Y-(\Qe,\Pe))} \, f(Y) 
	= \int_{\pspace} \dd Y \, \delta \bigl ( Y - (\Qe,\Pe) \bigr ) \, f(Y) 
	. 
\end{align*}
Even though the above does not make sense if $\Qe$ and $\Pe$ are operators, it gives an intuition why we define Weyl quantization the way we do: 
\begin{defn}[Weyl quantization]
	Let $f \in \Schwartz(\pspace)$. Then the Weyl quantization of $f$ is defined as 
	\begin{align*}
		\Op(f) \varphi := \frac{1}{(2\pi)^d} \int_{\pspace} \dd X \, (\Fs f)(X) \, \WeylSys(X) \varphi 
		, 
		&& 
		\forall \varphi \in \Schwartz(\R^d) 
		. 
	\end{align*}
\end{defn}
Lemma~\ref{weyl_calculus:weyl_system:lem:application_Weyl_system} essentially already tells us how $\Op(f)$ acts on wave functions: 
\begin{prop}\label{weyl_calculus:weyl_quantization:prop:application_Weyl_quantization}
	Let $f \in \Schwartz(\pspace)$ and $\varphi \in L^2(\R^d)$. Then $\Op(f)$ defines a bounded operator on $L^2(\R^d)$ and its action on $\varphi$ is given by 
	\begin{align}
		\bigl ( \Op(f) \varphi \bigr )(x) &= \frac{1}{(2 \pi)^{d}} \int_{\R^d_x} \dd y \int_{\R^d_{\xi}} \dd \eta \, e^{- i(y - x) \cdot \eta} \, f \bigl ( \tfrac{1}{2}(x+y) , \eps \eta \bigr ) \, \varphi(y) 
		\label{weyl_calculus:weyl_system:eqn:application_Weyl_quantization} \\
		&= \frac{1}{(2 \pi)^{\nicefrac{d}{2}}} \int_{\R^d_x} \dd y \, \eps^{-d} \, (\Fourier_2 f) \bigl ( \tfrac{1}{2}(x+y) , \tfrac{y-x}{\eps} \bigr ) \, \varphi(y) 
		\notag \\
		&=: \frac{1}{(2\pi)^{\nicefrac{d}{2}}} \int_{\R^d_x} \dd y \, K_f(x,y) \, \varphi(y) 
		=: \bigl ( \mathfrak{Int}(K_f) \varphi \bigr )(x)
		. 
		\notag 
	\end{align}
\end{prop}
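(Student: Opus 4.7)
The plan is to compute the action of $\Op(f)$ on a Schwartz function $\varphi$ by plugging in the definition, applying Lemma~\ref{weyl_calculus:weyl_system:lem:application_Weyl_system} to evaluate $\WeylSys(X)\varphi$, and then repeatedly exchanging the order of integration and changing variables, exploiting Schwartz decay to justify every step via Fubini and dominated convergence. Once the formula is established on the dense subspace $\Schwartz(\R^d)$, boundedness on $L^2(\R^d)$ follows by reading off the integral kernel $K_f$ and checking that it lies in $L^2(\R^d_x \times \R^d_y)$, so $\Op(f)$ is Hilbert--Schmidt and in particular bounded; the formula then extends uniquely to all of $L^2(\R^d)$ by Theorem~\ref{operators:bounded:thm:extensions_bounded_operators}.

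In detail, first I would fix $\varphi \in \Schwartz(\R^d)$ and write
\begin{align*}
\bigl(\Op(f)\varphi\bigr)(x)
&= \tfrac{1}{(2\pi)^d} \int_{\pspace} \dd Y\, (\Fs f)(Y)\, \bigl(\WeylSys(Y)\varphi\bigr)(x) \\
&= \tfrac{1}{(2\pi)^d} \int_{\R^d_x}\!\dd y_0 \int_{\R^d_{\xi}}\!\dd \eta_0\, (\Fs f)(y_0,\eta_0)\, e^{-i\eta_0\cdot(x + \frac{\eps}{2} y_0)}\, \varphi(x+\eps y_0),
\end{align*}
using Lemma~\ref{weyl_calculus:weyl_system:lem:application_Weyl_system}. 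Next, expand $\Fs f$ by its defining integral to get a fourfold integral over $(y_0,\eta_0,q,\pi)$ with integrand $e^{i(\eta_0\cdot q - y_0\cdot\pi)} f(q,\pi) e^{-i\eta_0\cdot(x+\frac{\eps}{2}y_0)}\varphi(x+\eps y_0)$; since $f\in\Schwartz(\pspace)$ and $\varphi\in\Schwartz(\R^d)$, Fubini applies after introducing a standard Gaussian regulariser. Performing the $\eta_0$-integral formally yields a factor $(2\pi)^d\,\delta(x + \tfrac{\eps}{2} y_0 - q)$, which collapses to $q = x + \tfrac{\eps}{2} y_0$ and leaves
\begin{align*}
\bigl(\Op(f)\varphi\bigr)(x) = \tfrac{1}{(2\pi)^d}\int_{\R^d_x}\!\dd y_0 \int_{\R^d_{\xi}}\!\dd \pi\, e^{-i y_0\cdot\pi}\, f\bigl(x+\tfrac{\eps}{2}y_0,\pi\bigr)\,\varphi(x+\eps y_0).
\end{align*}
The substitutions $y := x + \eps y_0$ (so $\dd y_0 = \eps^{-d}\dd y$ and $x + \tfrac{\eps}{2} y_0 = \tfrac{1}{2}(x+y)$) followed by $\eta := \pi/\eps$ on the momentum variable (or, alternatively, interpretation as a partial Fourier transform $\Fourier_2$) produce both displayed formulas for $\Op(f)\varphi$.

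For boundedness, I would read off the kernel
\begin{align*}
K_f(x,y) = (2\pi)^{-d/2}\,\eps^{-d}\,(\Fourier_2 f)\bigl(\tfrac{1}{2}(x+y),\tfrac{y-x}{\eps}\bigr),
\end{align*}
and note that since $f\in\Schwartz(\pspace)$, its partial Fourier transform $\Fourier_2 f$ is again Schwartz on $\pspace$ by Theorem~\ref{S_and_Sprime:schwartz_functions:thm:Fourier_is_bijection}. A change of variables $u = \tfrac{1}{2}(x+y)$, $v = (y-x)/\eps$ with Jacobian $\eps^d$ then shows $K_f\in L^2(\R^d\times\R^d)$, with an explicit bound in terms of a finite sum of Schwartz seminorms of $\Fourier_2 f$. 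Hence $\mathfrak{Int}(K_f)$ is Hilbert--Schmidt on $L^2(\R^d)$, giving boundedness on the dense subspace $\Schwartz(\R^d)$ and, by the bounded-extension Theorem~\ref{operators:bounded:thm:extensions_bounded_operators}, on all of $L^2(\R^d)$.

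The main obstacle will be justifying the interchange of the four integrals in the second step: the integrand is oscillatory and only conditionally convergent, so I would either insert a Gaussian cutoff $e^{-\delta(|y_0|^2+|\eta_0|^2)}$, carry out the manipulations, and pass to the limit $\delta\downarrow 0$ by dominated convergence; or view every step as an identity in $\Schwartz'(\pspace)$ and apply the duality $\langle \Fs f, g\rangle = \langle f, \Fs g\rangle$. Everything else is bookkeeping of constants and variable changes.
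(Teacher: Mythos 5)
Your derivation of the integral formula is essentially the same as the paper's: plug in the definition, apply Lemma~\ref{weyl_calculus:weyl_system:lem:application_Weyl_system}, integrate out the momentum variable to get a $\delta$, and change variables; the only cosmetic difference is that you collapse the $\eta_0$-integral to a $\delta$ in $q$, whereas the paper collapses it in $z$, which amounts to the same computation. Where you genuinely diverge from the paper is in the boundedness argument. The paper proves boundedness \emph{before} deriving the kernel, by the short estimate
\begin{align*}
\bnorm{\Op(f)\varphi} \leq \frac{1}{(2\pi)^d}\int_{\pspace} \dd X\,\babs{(\Fs f)(X)}\,\bnorm{\WeylSys(X)\varphi} = (2\pi)^{-d}\,\bnorm{\Fs f}_{L^1(\pspace)}\,\bnorm{\varphi}
\end{align*}
which exploits the unitarity of $\WeylSys(X)$ and needs only $\Fs f \in L^1(\pspace)$; this is more elementary, gives an explicit operator-norm bound in terms of $\norm{\Fs f}_{L^1}$, and therefore extends verbatim to the larger symbol class $\{f : \Fs f \in L^1(\pspace)\}$ that the author alludes to at the end of Chapter~\ref{weyl_calculus:weyl_calculus:wigner_transform}. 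You instead first derive $K_f$, verify $K_f \in L^2(\R^d\times\R^d)$ via the change of variables $u = \tfrac{1}{2}(x+y)$, $v = (y-x)/\eps$ with Jacobian $\eps^d$, and conclude $\Op(f)$ is Hilbert--Schmidt. That argument is correct and actually proves \emph{more} (Hilbert--Schmidt is strictly stronger than bounded, and is what one would want later for trace formulas), but it is tied to the $L^2$-ness of the kernel and so does not degrade as gracefully when one leaves Schwartz symbols. One small presentational point: once $K_f \in L^2(\R^d\times\R^d)$, $\mathfrak{Int}(K_f)$ is automatically bounded on all of $L^2(\R^d)$ without invoking Theorem~\ref{operators:bounded:thm:extensions_bounded_operators}; that theorem is only needed to upgrade the identity $\Op(f)\varphi = \mathfrak{Int}(K_f)\varphi$, proved for $\varphi \in \Schwartz(\R^d)$, to all of $L^2(\R^d)$.
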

\begin{proof}
	Let $\varphi \in \Schwartz(\R^d) \subset L^2(\R^d)$. Then we can bound $\Op(f) \varphi$ by 
	\begin{align*}
		\bnorm{\Op(f) \varphi} &\leq \frac{1}{(2\pi)^d} \int_{\pspace} \dd X \, \babs{(\Fs f)(X)} \, \bnorm{\WeylSys(X) \varphi} 
		= \left ( \frac{1}{(2\pi)^d} \int_{\pspace} \dd X \, \babs{(\Fs f)(X)} \right ) \, \bnorm{\varphi} 
		\\
		&= (2\pi)^{-d} \, \norm{\Fs f}_{L^1(\pspace)} \, \norm{\varphi} 
		. 
	\end{align*}
	The $L^1$-norm of $\Fs f$ is certainly bounded as $\Fs f$ is a Schwartz function and thus integrable. Hence, $\Op(f)$ defines a bounded operator on $\Schwartz(\R^d) \subset L^2(\R^d)$. Since the Schwartz functions are dense in $L^2(\R^d)$ by Lemma~\ref{S_and_Sprime:schwartz_functions:thm:S_dense_in_Lp}, we invoke Theorem~\ref{operators:bounded:thm:extensions_bounded_operators} which ensures the existence of a unique extension of $\Op(f)$ to all of $L^2(\R^d)$. 
	
	Equation~\eqref{weyl_calculus:weyl_system:eqn:application_Weyl_quantization} follows from direct computation and Lemma~\ref{weyl_calculus:weyl_system:lem:application_Weyl_system}: for any $\varphi \in L^2(\R^d)$, we have 
	\begin{align*}
		\bigl ( \Op(f) \varphi \bigr )(x) &= \frac{1}{(2\pi)^{2d}} \int_{\pspace} \dd Y \int_{\pspace} \dd Z \, e^{i \sigma(Y,Z)} \, f(Z) \, \bigl ( \WeylSys(Y) \varphi \bigr )(x) 
		\\
		&
		= \frac{1}{(2\pi)^{2d}} \int_{\R^d_x} \dd y \int_{\R^d_{\xi}} \dd \eta \int_{\R^d_x} \dd z \int_{\R^d_{\xi}} \dd \zeta \, e^{i ( \eta \cdot z - y \cdot \zeta )} \, f(z,\zeta) 
		\cdot \\
		&\qquad \qquad \qquad \qquad \qquad \qquad \qquad \qquad \cdot 
		e^{- i \eta \cdot (x + \frac{\eps}{2} y)} \, \varphi \bigl ( x + \tfrac{\eps}{2} y \bigr ) 
		\\
		&
		= \frac{\eps^{-d}}{(2\pi)^{2d}} \int_{\R^d_x} \dd y \int_{\R^d_{\xi}} \dd \eta \int_{\R^d_x} \dd z \int_{\R^d_{\xi}} \dd \zeta \, e^{i \eta \cdot (z - \frac{1}{2}(x+y))} e^{- \frac{i}{\eps}(y - x) \cdot \zeta} \, f(z,\zeta) \, \varphi(y) 
	\end{align*}
	If we integrate over $\eta$, we get a $\delta$ that can be used to kill the integral with respect to $z$. 
	\begin{align*}
		\bigl ( \Op(f) \varphi \bigr )(x) 
		&
		= \frac{\eps^{-d}}{(2\pi)^{2d}} \int_{\R^d_x} \dd y \int_{\R^d_x} \dd z \int_{\R^d_{\xi}} \dd \zeta \, (2 \pi)^d \delta \bigl ( z - \tfrac{1}{2}(x+y) \bigr ) \, e^{- \frac{i}{\eps}(y - x) \cdot \zeta} 
		\cdot \\
		&\qquad \qquad \qquad \qquad \qquad \qquad \quad \cdot 
		f(z,\zeta) \, \varphi(y) 
		\\
		&
		= \frac{1}{(2\pi \eps)^{d}} \int_{\R^d_x} \dd y \int_{\R^d_{\xi}} \dd \zeta \, e^{- \frac{i}{\eps}(y - x) \cdot \zeta} \, f \bigl ( \tfrac{1}{2}(x+y) , \zeta \bigr ) \, \varphi(y) 
	\end{align*}
	Identifying the integral with respect to $\zeta$ as partial Fourier transform, we get the second line of equation~\eqref{weyl_calculus:weyl_system:eqn:application_Weyl_quantization}. A variable substitution $y' := \eps y$ yields the first line. 
\end{proof}
\begin{example}
	Although we cannot justify this example rigorously yet, we will only do a quick sanity check whether the quantization of $h(x,\xi) = \frac{1}{2m} \xi^2 + V(x)$ gives the expected result. By linearity, we can consider each of the terms in turn: we have already computed the Fourier transform of $\frac{1}{2m} \xi^2$ in the distributional sense in Chapter~\ref{S_and_Sprime:Sprime}, for all $\varphi \in \Schwartz(\R^d)$ 
	\begin{align*}
		\bigl ( \Op \bigl (\tfrac{1}{2m} \xi^2 \bigr ) \varphi \bigr )(x) &= \frac{1}{(2 \pi)^{\nicefrac{d}{2}}} \int_{\R^d_x} \dd y \, \frac{1}{2m} \, \bigl ( \Fourier (\eps^2 \eta^2) \bigr )(y - x) \, \varphi(y) 
		\\
		&
		= \frac{1}{2m} \frac{1}{(2 \pi)^{\nicefrac{d}{2}}} \int_{\R^d_x} \dd y \, (2\pi)^{\nicefrac{d}{2}} (+i)^2 \eps^2 \Delta_y \delta (y - x) \, \varphi(y) 
		\\
		&= \tfrac{1}{2m} \bigl ( (+i)^2 \eps^2 \Delta_x \delta_x , \varphi \bigr ) 
		= \tfrac{1}{2m} (-i)^2 \eps^2 \Delta_x \varphi(x) 
		= - \tfrac{\eps^2}{2m} \Delta_x \varphi(x) 
	\end{align*}
	holds. We see that this calculation hinges on $\varphi$ being a test function. The other term can also be calculated using $\Fourier e^{+i x \cdot \eta} = (2\pi)^d \delta_x$ where $\delta_x(f) := f(x)$ is the shifted Dirac distribution, 
	\begin{align*}
		\bigl ( \Op(V) \varphi \bigr )(x) &= \frac{1}{(2\pi)^{\nicefrac{d}{2}}} \int_{\R^d_x} \dd y \, (\Fourier \, 1)(y - x) \, V \bigl ( \tfrac{1}{2} (x + y) \bigr ) \, \varphi(y) 
		\\
		&= \frac{1}{(2\pi)^{\nicefrac{d}{2}}} \int_{\R^d_x} \dd y \, (2\pi)^{\nicefrac{d}{2}} \delta(y - x) \, V \bigl ( \tfrac{1}{2} (x + y) \bigr ) \, \varphi(y) 
		= V(x) \, \varphi(x) 
		. 
	\end{align*}
	Hence, the quantization of $h$ yields 
	\begin{align*}
		\Op(h) = \tfrac{1}{2m} (- i \eps \nabla_x)^2 + V(\hat{x}) 
		= \tfrac{1}{2m} \Pe^2 + V(\Qe) 
		, 
	\end{align*}
	exactly what we have expected. We see that it was crucial for this argument to work that $\varphi \in \Schwartz(\R^d)$ and we have to work to extend the integral formula to other $\varphi \in L^2(\R^d) \setminus \Schwartz(\R^d)$. If $V$ is bounded, for instance, then $\Op(V) = V(\Qe)$ defines a bounded operator and we can use Theorem~\ref{operators:bounded:thm:extensions_bounded_operators} to extend it to all of $L^2(\R^d)$. $\Pe^2$, however, is an unbounded operator and thus does not extend to a \emph{bounded} operator on all of $L^2(\R^d)$
\end{example}
In Proposition~\ref{weyl_calculus:weyl_quantization:prop:application_Weyl_quantization}, we have introduced the notion of operator kernel: the operator kernel is a function or distribution that tells us how the operator acts on wave functions. It can be useful in calculating things, \eg if one wants to compute a trace. 

\emph{Every} bounded operator and many unbounded ones have a distributional operator kernel. In general, it is not a function, but only distribution on $\R^d \times \R^d$. The operator kernel of $- i \partial_{x_l}$ is $+ i (2\pi)^{\nicefrac{d}{2}} \partial_{x_l} \delta(x-y)$ since for all $\varphi \in \Schwartz(\R^d) \subset L^2(\R^d)$ 
\begin{align*}
	(- i \partial_{x_l} \varphi)(x) &= \bigl ( (+i) \partial_{x_l} \delta_x , \varphi \bigr ) 
	= \frac{1}{(2\pi)^{\nicefrac{d}{2}}} \int_{\R^d_x} \dd y \, i (2\pi)^{\nicefrac{d}{2}} \partial_{x_l} \delta(x-y) \, \varphi(y) 
	\\
	&
	= \Bigl ( \mathfrak{Int} \bigl ( + i (2\pi)^{\nicefrac{d}{2}} \partial_{x_l} \delta(x-y) \bigr ) \varphi \Bigr )(x) 
\end{align*}
holds. Similarly, the operator kernel associated to $T = \sopro{\psi}{\varphi} = \scpro{\varphi}{\cdot \, } \, \psi$, $\varphi , \psi \in \Schwartz(\R^d)$ is 
\begin{align*}
	(T \phi)(x) &= \sscpro{\varphi}{\phi} \, \psi(x) 
	= \frac{1}{(2\pi)^{\nicefrac{d}{2}}} \int_{\R^d_x} \dd y \, (2\pi)^{\nicefrac{d}{2}} \, \sscpro{\varphi}{\phi} \, \delta(x-y) \, \psi(y) 
	\\
	&
	= \frac{1}{(2\pi)^{\nicefrac{d}{2}}} \int_{\R^d_x} \dd y \int_{\R^d_x} \dd z \, (2\pi)^{\nicefrac{d}{2}} \, \varphi^*(z) \, \phi(z) \, \delta(x-y) \, \psi(y) 
	\\
	&= \frac{1}{(2\pi)^{\nicefrac{d}{2}}} \int_{\R^d_x} \dd z \biggl ( (2\pi)^{\nicefrac{d}{2}} \int_{\R^d_x} \dd y \, \varphi^*(z) \, \delta(x-y) \, \psi(y) \biggr ) \, \phi(z) 
	\\
	&
	=: \frac{1}{(2\pi)^{\nicefrac{d}{2}}} \int_{\R^d} \dd z \, K_T(x,z) \, \phi(z) 
	. 
\end{align*}
Hence, even this well-behaved operator has a distributional operator kernel! \marginpar{\small 2009.12.09}

To get back to the properties of quantization procedures: Weyl quantization is \emph{linear}, \ie 
\begin{align*}
	\Op(f + \alpha g) = \Op(f) + \alpha \Op(g) 
\end{align*}
holds for all $f , g \in \Schwartz(\pspace)$ and $\alpha \in \C$. Furthermore, we can compute the quantization of the constant function $1$ to be 
\begin{align*}
	\bigl ( \Op(f) \varphi \bigr )(x) &= \frac{\eps^{-d}}{(2\pi)^{\nicefrac{d}{2}}} \int_{\R^d_x} \dd y \, (\Fourier_2 \, 1) \bigl ( \tfrac{1}{2}(x + y) , \tfrac{y - x}{\eps} \bigr ) \, \varphi(y) 
	\\
	&= \frac{\eps^{-d}}{(2\pi)^{\nicefrac{d}{2}}} \int_{\R^d_x} \dd y \, \eps^d \, (2 \pi)^{\nicefrac{d}{2}} \, \delta(y - x) \, \varphi(y) 
	\\
	&= \varphi(x) 
	= \bigl ( \id_{L^2} \varphi \bigr )(x) 
	. 
\end{align*}
In the tutorials, we will show that Weyl quantization intertwines complex conjugation and taking adjoints, 
\begin{align*}
	\Op(f^*) = \Op(f)^* 
	. 
\end{align*}
This fact has the important consequence that real-valued functions are potentially mapped onto selfadjoint operators. 
\begin{thm}\label{weyl_calculus:weyl_quantization:thm:properties_weyl_quantization}
	Weyl quantization is linear, maps the constant function $1$ to $\id_{L^2}$ and intertwines complex conjugation with taking adjoints in the sense of operators, \ie $\Op(f^*) = \Op(f)^*$ holds for all $f \in \Schwartz(\pspace)$. 
\end{thm}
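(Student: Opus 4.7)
The plan is to verify the three claims in increasing order of difficulty. Linearity is immediate: the symplectic Fourier transform $\Fs$ is linear on $\Schwartz(\pspace)$, and $\Op(f)$ is defined as the integral of $W(X)$ against $(\Fs f)(X)$, so linearity in $f$ is inherited from linearity of the map $f\mapsto \Fs f$ and from linearity of the integral. The identity $\Op(1)=\id_{L^2}$ has, in effect, already been computed in the discussion preceding the theorem: using Proposition~\ref{weyl_calculus:weyl_quantization:prop:application_Weyl_quantization}, the partial Fourier transform $\Fourier_2 1$ in the momentum slot is $(2\pi)^{d/2}\delta$ (in the distributional sense established in Chapter~\ref{S_and_Sprime:Sprime}), so the kernel collapses to $(2\pi)^{d/2}\delta(y-x)$ and the integral returns $\varphi(x)$ for every $\varphi\in\Schwartz(\R^d)$; density of Schwartz functions in $L^2(\R^d)$ (Theorem~\ref{S_and_Sprime:schwartz_functions:thm:S_dense_in_Lp}) together with boundedness of $\Op(1)$ then extends this to all of $L^2(\R^d)$.

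The main content is the adjoint relation, and I would prove it directly at the level of the defining representation via the Weyl system rather than wrestle with the iterated oscillatory integral. First, from the defining formula
\begin{align*}
(\Fs f^{*})(X) = \tfrac{1}{(2\pi)^d}\int_{\pspace}\dd Y\, e^{i\sigma(X,Y)}f^{*}(Y) = \overline{\tfrac{1}{(2\pi)^d}\int_{\pspace}\dd Y\, e^{-i\sigma(X,Y)}f(Y)} = (\Fs f)(-X)^{*},
\end{align*}
so complex conjugation of the symplectic Fourier transform is intertwined with the reflection $X\mapsto -X$. Second, because $\xi\cdot\Qe - x\cdot\Pe$ is essentially self-adjoint on $\Schwartz(\R^d)$ (both $\Qe$ and $\Pe$ are self-adjoint on their natural domains, and the linear combination generates a strongly continuous unitary group by Lemma~\ref{weyl_calculus:weyl_system:lem:application_Weyl_system} together with Stone's theorem), the Weyl operators are unitary with $W(X)^{*}=W(-X)$.

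Combining these two observations and performing the change of variable $X\mapsto -X$ in the defining integral gives
\begin{align*}
\Op(f)^{*} = \tfrac{1}{(2\pi)^d}\int_{\pspace}\dd X\, (\Fs f)(X)^{*}\, W(X)^{*} = \tfrac{1}{(2\pi)^d}\int_{\pspace}\dd X\, (\Fs f)(-X)^{*}\, W(X) = \Op(f^{*}),
\end{align*}
where the integrals are understood in the strong sense on $\Schwartz(\R^d)$ (they are norm-convergent by the same estimate used in the proof of Proposition~\ref{weyl_calculus:weyl_quantization:prop:application_Weyl_quantization}, namely $\Vert \Op(f)\Vert \le (2\pi)^{-d}\Vert \Fs f\Vert_{L^{1}}$). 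Bounded extension to $L^2(\R^d)$ via Theorem~\ref{operators:bounded:thm:extensions_bounded_operators} finishes the argument.

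The only genuine obstacle is justifying $W(X)^{*}=W(-X)$ and the interchange of adjoint with the strong operator integral. The first follows from self-adjointness of the generator together with Stone's theorem, and for the second it suffices to verify the adjoint identity against test vectors $\varphi,\psi\in\Schwartz(\R^d)$: then $(\Fs f)(X)^{*}\scpro{\varphi}{W(X)^{*}\psi}=(\Fs f)(X)^{*}\scpro{W(X)\varphi}{\psi}$ is integrable in $X$ by the $L^1$-bound on $\Fs f$, so Fubini lets us pull the conjugation and the inner product through the integral without trouble.
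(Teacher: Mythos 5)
Your proof is correct and, for the adjoint property, supplies content the paper itself omits: the text merely asserts $\Op(f^*)=\Op(f)^*$ and refers it to the tutorials. The linearity and $\Op(1)=\id_{L^2}$ arguments match the paper's treatment verbatim, and the adjoint argument --- conjugation of $\Fs$ intertwined with the reflection $X\mapsto -X$, combined with $\WeylSys(X)^*=\WeylSys(-X)$ and a change of variables --- is sound, with the Fubini justification handled by the same $L^1$-bound $\norm{\Op(f)}\le (2\pi)^{-d}\norm{\Fs f}_{L^1}$ used in Proposition~\ref{weyl_calculus:weyl_quantization:prop:application_Weyl_quantization}. One small efficiency: rather than invoking essential self-adjointness of $\xi\cdot\Qe - x\cdot\Pe$ and Stone's theorem (neither of which the paper establishes for this mixed operator), you can obtain $\WeylSys(X)^*=\WeylSys(-X)$ entirely within the paper's toolbox. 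Lemma~\ref{weyl_calculus:weyl_system:lem:application_Weyl_system} exhibits $\WeylSys(Y)$ as a unimodular multiplication operator composed with a translation, hence unitary; then Proposition~\ref{weyl_calculus:weyl_system:prop:composition_law_weyl_system} with $Z=-Y$ gives $\WeylSys(Y)\WeylSys(-Y)=e^{i\frac{\eps}{2}\sigma(Y,-Y)}\WeylSys(0)=\id_{L^2}$ since $\sigma(Y,-Y)=0$, so $\WeylSys(-Y)=\WeylSys(Y)^{-1}=\WeylSys(Y)^*$. This keeps the argument self-contained and avoids any appeal to unstated operator-theoretic facts.
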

%

\section{The Wigner transform} 
\label{weyl_calculus:weyl_calculus:wigner_transform}

One natural question is whether we can rewrite the quantum expectation value $\bscpro{\varphi}{\Op(f) \psi}$ as a phase space average of $f$ with respect to some measure $\mu_{\varphi,\psi}$, 
\begin{align}
	\sscpro{\varphi}{\Op(f) \psi} &= \frac{1}{(2\pi)^d} \int_{\pspace} \dd X \, (\Fs f)(X) \, \bscpro{\varphi}{\WeylSys(X) \psi} 
	\label{weyl_calculus:wigner_transform:eqn:phase_space_exp_value} 
	\\
	&= \frac{1}{(2\pi)^d} \int_{\pspace} \dd X \, f(X) \, \Fs \bigl ( \bscpro{\varphi}{\WeylSys(\cdot) \psi} \bigr )(-X) 
	, 
	\notag 
\end{align}
suggests to look at the symplectic Fourier transform of the expectation value of the Weyl system. Let us start with the first building block: 
\begin{defn}[Fourier-Wigner transform]\label{weyl_calculus:wigner_transform:defn:fourier_wigner_trafo}
	Let $\psi,\varphi \in \Schwartz(\R^d)$. Then we define the Fourier-Wigner transform $\rho(\psi,\varphi)$ to be 
	\begin{align}
		\rho(\psi,\varphi)(X) := (2\pi)^{-\nicefrac{d}{2}} \, \bscpro{\varphi}{\WeylSys(X) \psi}
	\end{align}
\end{defn}
\begin{lem}
	Let $\psi,\varphi \in \Schwartz(\R^d)$. Then it holds 
	\begin{align*}
		\rho(\psi,\varphi)(X) = (2\pi)^{-\nicefrac{d}{2}} \, \bscpro{\varphi}{\WeylSys(X) \psi} 
		&= \frac{1}{(2\pi)^{\nicefrac{d}{2}}} \int_{\R^d_x} \dd y \, e^{-i y \cdot \xi} \, {\varphi}^{\ast} \bigl ( y - \tfrac{\eps}{2} x \bigr ) \, \psi \bigl ( y + \tfrac{\eps}{2} x \bigr ) 
		. 
	\end{align*}
\end{lem}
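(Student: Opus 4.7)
The plan is a direct computation: insert the explicit formula for $\WeylSys(X)\psi$ from Lemma~\ref{weyl_calculus:weyl_system:lem:application_Weyl_system}, write out the $L^2$ inner product as an integral over $\R^d_x$, and then perform an affine change of variables to symmetrize the integrand.

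First I would recall from Lemma~\ref{weyl_calculus:weyl_system:lem:application_Weyl_system} that for $X = (x,\xi) \in \pspace$ and any $\psi \in \Schwartz(\R^d)$,
\begin{align*}
	\bigl ( \WeylSys(X) \psi \bigr )(x') = e^{- i \xi \cdot ( x' + \frac{\eps}{2} x )} \, \psi(x' + \eps x)
	.
\end{align*}
Since both $\varphi$ and $\psi$ are Schwartz, the product $\varphi^*(x') \, (\WeylSys(X)\psi)(x')$ is integrable (in fact Schwartz in $x'$), so Fubini is unproblematic and we may compute
\begin{align*}
	(2\pi)^{-\nicefrac{d}{2}} \, \bscpro{\varphi}{\WeylSys(X) \psi}
	= \frac{1}{(2\pi)^{\nicefrac{d}{2}}} \int_{\R^d_x} \dd x' \, \varphi^*(x') \, e^{- i \xi \cdot ( x' + \frac{\eps}{2} x )} \, \psi(x' + \eps x)
	.
\end{align*}

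Next I would perform the substitution $y := x' + \tfrac{\eps}{2} x$, so that $x' = y - \tfrac{\eps}{2} x$ and $x' + \eps x = y + \tfrac{\eps}{2} x$, while $\dd x' = \dd y$ (the Jacobian is $1$). The phase factor collapses to $e^{- i y \cdot \xi}$, yielding
\begin{align*}
	(2\pi)^{-\nicefrac{d}{2}} \, \bscpro{\varphi}{\WeylSys(X) \psi}
	= \frac{1}{(2\pi)^{\nicefrac{d}{2}}} \int_{\R^d_x} \dd y \, e^{- i y \cdot \xi} \, \varphi^*\bigl ( y - \tfrac{\eps}{2} x \bigr ) \, \psi\bigl ( y + \tfrac{\eps}{2} x \bigr )
	,
\end{align*}
which is exactly the claimed identity.

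There is no real obstacle here: the only point worth flagging is the symmetric splitting of the shift between $\varphi^*$ and $\psi$, which is precisely what the substitution $y = x' + \tfrac{\eps}{2} x$ is designed to achieve and which is the reason the ``$\tfrac{\eps}{2}$'' prefactor in Lemma~\ref{weyl_calculus:weyl_system:lem:application_Weyl_system} (itself a consequence of the Baker-Campbell-Hausdorff/Trotter computation of $\WeylSys(X)$) shows up symmetrically on both arguments. The Schwartz assumption on $\varphi,\psi$ is used only to legitimise Fubini and the change of variables; the identity then extends to more general $\varphi,\psi$ by the usual density arguments, but that is outside the scope of the stated Lemma.
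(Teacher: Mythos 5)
Your proof is correct and follows essentially the same route as the paper: apply Lemma~\ref{weyl_calculus:weyl_system:lem:application_Weyl_system} to expand $\WeylSys(X)\psi$, write out the inner product as an integral, and perform the shift $y = x' + \tfrac{\eps}{2}x$ to symmetrize the arguments. The only cosmetic difference is that you name the substitution variable explicitly, whereas the paper carries it out silently in the last display line.
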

\begin{proof}
	By formal manipulation, we get 
	\begin{align*}
		\rho(\psi,\varphi)(X) &= (2\pi)^{-\nicefrac{d}{2}} \, \bscpro{\varphi}{\WeylSys(X) \psi} 
		= \frac{1}{(2\pi)^{\nicefrac{d}{2}}} \int_{\R^d_x} \dd y \, {\varphi}^{\ast}(y) \, \bigl ( \WeylSys(X) \psi \bigr )(y) \\
		&= \frac{1}{(2\pi)^{\nicefrac{d}{2}}} \int_{\R^d_x} \dd y \, {\varphi}^{\ast}(y) \, e^{-i \xi \cdot (y + \frac{\eps}{2} x)} \, \psi(y + \eps x) \\ 
		&= \frac{1}{(2\pi)^{\nicefrac{d}{2}}} \int_{\R^d_x} \dd y \, e^{-i y \cdot \xi} \, {\varphi}^{\ast} \bigl ( y - \tfrac{\eps}{2} x \bigr ) \, \psi \bigl ( y + \tfrac{\eps}{2} x \bigr ) 
		. 
	\end{align*}
	Since $\psi$ and $\varphi$ are Schwartz functions, they are also square integrable and the right-hand side exists for all $X \in \pspace$. 
\end{proof}
To write the quantum expectation value as a phase space average, we still have to push over the Fourier transform. 
\begin{defn}[Wigner transform]
	Let $\psi,\varphi \in \Schwartz(\R^d)$. The Wigner transform $\WignerTrafo(\psi,\varphi)$ is defined as the symplectic Fourier transform of $\rho(\psi,\varphi)$. 
	\begin{align*}
		\WignerTrafo(\psi,\varphi)(X) := \bigl ( \Fs \rho(\psi,\varphi) \bigr )(-X) 
		. 
	\end{align*}
\end{defn}
\begin{remark}
	There is a reason why we need to use $\WeylSys(-X)$ and not $\WeylSys(+X)$: the symplectic Fourier transform is unitary on $L^2(\pspace)$ and 
	\begin{align*}
		\bscpro{f}{g}_{L^2(\pspace)} = \bscpro{\Fs f}{\Fs g}_{L^2(\pspace)} = \bigl ( (\Fs f)^* , \Fs g \bigr ) = \bigl ( (\Fs f^*)(- \, \cdot) , \Fs g \bigr )
	\end{align*}
	holds. The extra sign stems from the fact that we are missing complex conjugation in integral~\eqref{weyl_calculus:wigner_transform:eqn:phase_space_exp_value}. 
\end{remark}
\begin{lem}\label{weyl_calculus:wigner_transform:lem:WignerTransform}
	The Wigner transform $\WignerTrafo(\psi,\varphi)$ with respect to $\psi,\varphi \in \Schwartz(\R^d)$ is given by 
	\begin{align*}
		\WignerTrafo(\psi,\varphi)(X) = \frac{1}{(2\pi)^{\nicefrac{d}{2}}} \int_{\R^d_x} \dd y \, e^{- i y \cdot \xi} \, {\varphi}^{\ast} \bigl ( x - \tfrac{\eps}{2} y \bigr ) \, \psi \bigl ( x + \tfrac{\eps}{2} y \bigr ) 
		. 
	\end{align*}
\end{lem}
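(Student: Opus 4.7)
The plan is to unfold both layers of definition---first the symplectic Fourier transform $\Fs$, then the integral representation of $\rho(\psi,\varphi)$ obtained in the previous Lemma---and then collapse one momentum integral against a $\delta$-distribution.

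Concretely, I would first substitute $Y = (y,\eta)$ into
\begin{align*}
	\WignerTrafo(\psi,\varphi)(X) = (\Fs \rho(\psi,\varphi))(-X) = \frac{1}{(2\pi)^d} \int_{\pspace} \dd Y \, e^{i \sigma(-X,Y)} \, \rho(\psi,\varphi)(Y),
\end{align*}
using $\sigma(-X,Y) = -\xi \cdot y + x \cdot \eta$, and then plug in the explicit formula
\begin{align*}
	\rho(\psi,\varphi)(y,\eta) = \frac{1}{(2\pi)^{\nicefrac{d}{2}}} \int_{\R^d_x} \dd z \, e^{-i z \cdot \eta} \, \varphi^*\bigl(z - \tfrac{\eps}{2} y\bigr) \, \psi\bigl(z + \tfrac{\eps}{2} y\bigr)
\end{align*}
from the preceding Lemma. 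This gives a triple integral in $(y,\eta,z)$ with integrand $(2\pi)^{-\nicefrac{3d}{2}} e^{-i\xi\cdot y} e^{i\eta\cdot(x-z)} \varphi^*(z - \tfrac{\eps}{2} y) \psi(z + \tfrac{\eps}{2} y)$.

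Next I would perform the $\eta$-integration first. Since $\psi,\varphi \in \Schwartz(\R^d)$, the integrand decays sufficiently in all variables to justify applying Fubini, and the $\eta$-integral yields
\begin{align*}
	\frac{1}{(2\pi)^d} \int_{\R^d_\xi} \dd \eta \, e^{i \eta \cdot (x-z)} = \delta(x-z)
\end{align*}
(rigorously, in the sense of tempered distributions as discussed in Chapter~\ref{S_and_Sprime}; since everything in sight is Schwartz in $z$, the pairing is well-defined). Integrating out $z$ against this $\delta$ sets $z = x$ and leaves the claimed formula
\begin{align*}
	\WignerTrafo(\psi,\varphi)(X) = \frac{1}{(2\pi)^{\nicefrac{d}{2}}} \int_{\R^d_x} \dd y \, e^{-i y \cdot \xi} \, \varphi^*\bigl(x - \tfrac{\eps}{2} y\bigr) \, \psi\bigl(x + \tfrac{\eps}{2} y\bigr).
\end{align*}

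There is no real obstacle here; the only point requiring mild care is justifying the interchange of integrals and the use of the oscillatory $\eta$-integral as a $\delta$-distribution. Because $\psi, \varphi \in \Schwartz(\R^d)$, the map $(y,z) \mapsto \varphi^*(z - \tfrac{\eps}{2}y) \psi(z + \tfrac{\eps}{2}y)$ lies in $\Schwartz(\R^{2d})$, so one can either regularize the $\eta$-integral with a Gaussian cutoff $e^{-\lambda \eta^2}$ and pass to the limit $\lambda \searrow 0$ (invoking dominated convergence exactly as in the proof of Proposition~\ref{S_and_Sprime:schwartz_functions:prop:free_Schroedinger}), or simply note that $\eta \mapsto \rho(\psi,\varphi)(y,\eta)$ is itself a partial Fourier transform, so $\Fs$ acts as composition of two partial Fourier transforms and the result reduces to a single partial Fourier transform via the Fourier inversion theorem on $\Schwartz(\R^d)$ (Theorem~\ref{S_and_Sprime:schwartz_functions:thm:Fourier_is_bijection}).
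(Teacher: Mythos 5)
Your proof is correct and follows essentially the same route as the paper: unfold $\Fs$ and the integral representation of $\rho(\psi,\varphi)$, do the $\eta$-integral to produce $\delta(x-z)$, and integrate out $z$. The paper is more terse (it silently reorders the integrals and collapses the oscillatory $\eta$-integral in one step), whereas you flag the Fubini and $\delta$-regularization issues explicitly; that extra care is fine but not needed beyond what the paper already does.
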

\begin{proof}
	We plug in the definition of the symplectic Fourier transform and obtain 
	\begin{align*}
		\WignerTrafo(\psi,\varphi)(X) &= \frac{1}{(2\pi)^d} \int_{\pspace} \dd Y \, e^{i \sigma(-X,Y)} \rho(\psi,\varphi) (Y) \\ 
		&= \frac{1}{(2\pi)^{\nicefrac{3d}{2}}} \int_{\pspace} \dd Y \, \int_{\R^d_x} \dd z \, e^{- i (\xi \cdot y - x \cdot \eta)} e^{- i z \cdot \eta} \, \varphi^{\ast} \bigl ( z - \tfrac{\eps}{2} y \bigr ) \, \psi \bigl ( z + \tfrac{\eps}{2} y \bigr ) \\ 
		&= \frac{1}{(2\pi)^{\nicefrac{3d}{2}}} \int_{\R^d_x} \dd y \, \int_{\R^d_x} \dd z \, \int_{\R^d_{\xi}} \dd \eta \, e^{i (x-z) \cdot \eta} e^{- i y \cdot \xi} \, \varphi^{\ast} \bigl ( z - \tfrac{\eps}{2} y \bigr ) \, \psi \bigl ( z + \tfrac{\eps}{2} y \bigr ) \\
		&= \frac{1}{(2\pi)^{\nicefrac{d}{2}}} \int_{\R^d_x} \dd y \, e^{- i y \cdot \xi} \, \varphi^{\ast} \bigl ( x - \tfrac{\eps}{2} y \bigr ) \, \psi \bigl ( x + \tfrac{\eps}{2} y \bigr ) 
		. 
	\end{align*}
	The right-hand side is also well-defined as $\varphi^{\ast} \bigl ( x - \tfrac{\eps}{2} \cdot \bigr ) \, \psi \bigl ( x + \tfrac{\eps}{2} \cdot \bigr ) \in L^1(\R^d)$ for any $x \in \R^d$ and hence its Fourier transform exists. 
\end{proof}
\begin{example}\label{weyl_calculus:wigner_transform:ex:first_excited_state_harmonic_oscillator}
	Take $d = 1$, $\eps = 1$ and consider $\varphi(x) = x \, e^{- \frac{x^2}{4}}$, for instance, the first excited state of the harmonic oscillator. Using 
	\begin{align*}
		(\Fourier e^{- \alpha x^2})(\xi) = \frac{1}{\sqrt{2 \alpha}} e^{- \frac{\xi^2}{4 \alpha}} 
	\end{align*}
	we calculate the Wigner transform to be \marginpar{\small 2009.12.15}
	\begin{align*}
		\WignerTrafo(\varphi,\varphi)(x,\xi) &= \frac{1}{\sqrt{2\pi}} \int_{\R} \dd y \, e^{- i y \cdot \xi} \, \varphi^* \bigl ( x - \tfrac{y}{2} \bigr ) \, \varphi \bigl ( x + \tfrac{y}{2} \bigr ) 
		\\
		&= \frac{1}{\sqrt{2\pi}} \int_{\R} \dd y \, e^{- i y \cdot \xi} \, \bigl ( x - \tfrac{y}{2} \bigr ) \, \bigl ( x + \tfrac{y}{2} \bigr ) \, e^{- \frac{1}{4} [(x - \frac{y}{2})^2 + (x + \frac{y}{2})^2]} 
		\\
		&= 2 e^{- \frac{x^2}{2}} \frac{1}{\sqrt{2\pi}} \int_{\R} \dd y \, e^{- i y \cdot 2 \xi} \, (x^2 - y^2) \, e^{- \frac{y^2}{2}} 
		\\
		&
		= 2 e^{- \frac{x^2}{2}} \, \Bigl ( x^2 e^{- 2 \xi^2} + \tfrac{1}{4} \partial_{\xi}^2 \bigl ( e^{- 2 \xi^2} \bigr ) \Bigr )
		\\
		&
		= 2 \Bigl ( x^2  + (2 \xi)^2 - 1 \Bigr ) \, e^{- 2 x^2} e^{- 2 \xi^2} 
		\not\geq 0 
		. 
	\end{align*}
\end{example}
\begin{cor}\label{weyl_calculus:weyl_calculus:wigner_transform:cor:quantum_classical_expectation_value}
	For all $\psi , \varphi \in \Schwartz(\R^d)$ and $f \in \Schwartz(\pspace)$, we have 
	\begin{align*}
		\bscpro{\varphi}{\Op(f) \psi} = \frac{1}{(2\pi)^{\nicefrac{d}{2}}} \int_{\pspace} \dd X \, f(X) \, \WignerTrafo(\psi,\varphi)(X) 
		. 
	\end{align*}
\end{cor}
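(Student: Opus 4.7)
The plan is a direct calculation: unfold the definition of $\Op(f)$, pull the inner product inside the phase-space integral, and then transfer the symplectic Fourier transform from $f$ onto the Fourier--Wigner transform $\rho(\psi,\varphi)$ via Fubini.

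First, by the definition of Weyl quantization,
\begin{align*}
	\bscpro{\varphi}{\Op(f) \psi} = \frac{1}{(2\pi)^d} \int_{\pspace} \dd X \, (\Fs f)(X) \, \bscpro{\varphi}{\WeylSys(X) \psi}.
\end{align*}
The exchange of the inner product with the integral is permitted because $\Fs f \in \Schwartz(\pspace) \subset L^1(\pspace)$ (cf.\ Theorem~\ref{S_and_Sprime:schwartz_functions:thm:Fourier_is_bijection}) while $\WeylSys(X)$ is uniformly bounded in $X$ on $L^2(\R^d)$ (it is unitary), so the integrand is absolutely integrable.

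Next, I would recognize the inner product against the Weyl system as the Fourier--Wigner transform. By Definition~\ref{weyl_calculus:wigner_transform:defn:fourier_wigner_trafo}, $\bscpro{\varphi}{\WeylSys(X) \psi} = (2\pi)^{\nicefrac{d}{2}} \, \rho(\psi,\varphi)(X)$, so
\begin{align*}
	\bscpro{\varphi}{\Op(f) \psi} = \frac{1}{(2\pi)^{\nicefrac{d}{2}}} \int_{\pspace} \dd X \, (\Fs f)(X) \, \rho(\psi,\varphi)(X).
\end{align*}

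The final step is to push the symplectic Fourier transform onto $\rho(\psi,\varphi)$. Writing out $(\Fs f)(X) = (2\pi)^{-d} \int_{\pspace} \dd Y \, e^{i \sigma(X,Y)} f(Y)$ and applying Fubini (legitimate since everything in sight is Schwartz in both variables), I compute
\begin{align*}
	\int_{\pspace} \dd X \, (\Fs f)(X) \, \rho(\psi,\varphi)(X)
	&= \int_{\pspace} \dd Y \, f(Y) \, \frac{1}{(2\pi)^d} \int_{\pspace} \dd X \, e^{i \sigma(X,Y)} \, \rho(\psi,\varphi)(X) \\
	&= \int_{\pspace} \dd Y \, f(Y) \, (\Fs \rho(\psi,\varphi))(-Y),
\end{align*}
where the sign flip arises from the antisymmetry $\sigma(X,Y) = - \sigma(Y,X)$. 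By definition of the Wigner transform, $(\Fs \rho(\psi,\varphi))(-Y) = \WignerTrafo(\psi,\varphi)(Y)$, and the claim follows.

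The only real subtlety is bookkeeping with the minus sign in the symplectic form—precisely the reason $\WignerTrafo(\psi,\varphi)$ was defined as $\Fs \rho(\psi,\varphi)$ evaluated at $-X$ rather than $+X$—so in essence the identity is built into the definitions, and the ``obstacle'' reduces to verifying that the Fubini swap is valid, which is immediate from Schwartz-class decay.
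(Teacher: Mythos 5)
Your proof is correct and follows precisely the route the paper itself lays out: the paper "proves" this corollary by embedding the whole computation in the motivational display (equation~\eqref{weyl_calculus:wigner_transform:eqn:phase_space_exp_value}) that precedes the definitions of $\rho(\psi,\varphi)$ and $\WignerTrafo(\psi,\varphi)$, and your three steps — unfold $\Op(f)$, identify $\bscpro{\varphi}{\WeylSys(X)\psi}$ with $(2\pi)^{\nicefrac{d}{2}}\rho(\psi,\varphi)(X)$, then push $\Fs$ onto $\rho$ via Fubini to land on $\WignerTrafo(\psi,\varphi)(Y) = (\Fs\rho(\psi,\varphi))(-Y)$ — are exactly that computation made explicit. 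You also correctly justify the two interchanges (inner product with integral via unitarity of $\WeylSys$ and $\Fs f \in L^1$; Fubini via Schwartz decay of both $f$ and $\rho(\psi,\varphi)$) and correctly flag the sign in $\sigma(X,Y) = -\sigma(Y,X)$ as the reason the Wigner transform was defined with the reflection built in.
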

The Wigner transform has some very interesting properties: 
\begin{thm}[Properties of the Wigner transform]
	Let $\varphi , \psi \in \Schwartz(\R^d)$, $x \in \R^d_x$ and $\xi \in \R^d_{\xi}$. 
	\begin{enumerate}[(i)]
		\item $\WignerTrafo(\psi,\psi)$ is a real-valued function, but not necessarily positive. 
		\item The marginals of the Wigner transform of $\varphi , \psi \in \Schwartz(\R^d)$ with respect to $x$ and $\xi$ are 
		\begin{align*}
			\frac{1}{(2\pi)^{\nicefrac{d}{2}}} \int_{\R^d_x} \dd x \, \WignerTrafo(\psi,\varphi) (x,\xi) &= \eps^{-d} \, (\Fourier \varphi)^*(\nicefrac{\xi}{\eps}) \, (\Fourier \psi)(\nicefrac{\xi}{\eps}) 
			, 
			\\
			\frac{1}{(2\pi)^{\nicefrac{d}{2}}} \int_{\R^d_{\xi}} \dd \xi \, \WignerTrafo(\psi,\varphi) (x,\xi) &= \varphi^*(x) \, \psi(x) 
			. 
		\end{align*}
		\item $\WignerTrafo(\psi,\varphi) \in L^1(\pspace)$
		\item $\bnorm{\WignerTrafo(\varphi,\psi)}_{L^2(\pspace)} = \eps^{- \nicefrac{d}{2}} \,  \norm{\varphi}_{L^2(\R^d)} \norm{\psi}_{L^2(\R^d)}$
		\item Let $R$ be the reflection operator defined by $(R \varphi)(x) := \varphi(-x)$. Then $\WignerTrafo(R \psi , R \varphi)(X) = \WignerTrafo(\psi , \varphi)(-X)$ holds. 
		\item $\WignerTrafo(\psi^*,\varphi^*)(x,\xi) = \WignerTrafo(\varphi,\psi)(x,-\xi)$ 
		\item $\WignerTrafo \bigl ( U(y) \psi , U(y) \varphi \bigr )(x,\xi) = \WignerTrafo(\psi , \varphi)(x - y,\xi)$ for all $y \in \R_x^d$
	\end{enumerate}
\end{thm}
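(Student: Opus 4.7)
The plan is to deduce all seven properties from the explicit integral representation
\[
\WignerTrafo(\psi,\varphi)(x,\xi) = \frac{1}{(2\pi)^{\nicefrac{d}{2}}} \int_{\R^d} \dd y \, e^{-i y \cdot \xi} \, \varphi^*\bigl(x - \tfrac{\eps}{2} y\bigr) \, \psi\bigl(x + \tfrac{\eps}{2} y\bigr)
\]
established in Lemma~\ref{weyl_calculus:wigner_transform:lem:WignerTransform}. Every item reduces to either (a) a change of integration variable, (b) recognizing a Fourier integral, or (c) a Plancherel-type computation. Since $\psi,\varphi \in \Schwartz(\R^d)$, all Fubini interchanges and substitutions are unproblematic.

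For (i), I would take the complex conjugate of $\WignerTrafo(\psi,\psi)(x,\xi)$ and perform the substitution $y \mapsto -y$; the two factors $\psi^*$ and $\psi$ swap, and the exponential phase flips sign, recovering the original integrand, so the value is real. The non-positivity is witnessed by Example~\ref{weyl_calculus:wigner_transform:ex:first_excited_state_harmonic_oscillator}, which was already computed. For (v), substitute $y \mapsto -y$ together with the explicit action of $R$; for (vi), take complex conjugates and note that passing to $\psi^*,\varphi^*$ is equivalent to flipping $\xi$ because of the complex exponential; for (vii), use translation invariance of the Lebesgue measure in $x$ after shifting $x \mapsto x-y$.

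The marginals in (ii) are the most informative step. Integrating over $\xi$ first produces the factor $\int_{\R^d_\xi} \dd \xi \, e^{-iy\cdot \xi} = (2\pi)^d \delta(y)$, which after the prefactor $(2\pi)^{-\nicefrac{d}{2}}$ collapses the $y$-integral and gives $\varphi^*(x)\,\psi(x)$. For the $x$-marginal I would perform the affine change of variables $u = x - \tfrac{\eps}{2}y$, $v = x + \tfrac{\eps}{2}y$, whose Jacobian is $\eps^d$, so $\dd x\, \dd y = \eps^{-d} \dd u\,\dd v$ and $-y\cdot \xi = (u-v)\cdot \xi/\eps$; the resulting double integral factorizes into two Fourier integrals producing $\eps^{-d}(\Fourier\varphi)^*(\nicefrac{\xi}{\eps})(\Fourier\psi)(\nicefrac{\xi}{\eps})$. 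Claim (iii) then follows from (ii) combined with the fact that both marginals are $L^1$ for Schwartz data, or more directly from the integrability estimates of the Schwartz integrand.

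For (iv), I would view $\WignerTrafo(\psi,\varphi)(x,\cdot)$, for fixed $x$, as the one-dimensional Fourier transform in the $\xi$-variable of the function $y \mapsto \varphi^*(x-\tfrac{\eps}{2}y)\,\psi(x+\tfrac{\eps}{2}y)$; Plancherel's theorem (Theorem~\ref{S_and_Sprime:schwartz_functions:thm:Fourier_unitary_on_L2}) yields
\[
\int_{\R^d_\xi} \dd\xi \, \babs{\WignerTrafo(\psi,\varphi)(x,\xi)}^2 = \int_{\R^d} \dd y \, \babs{\varphi(x-\tfrac{\eps}{2}y)}^2 \babs{\psi(x+\tfrac{\eps}{2}y)}^2 .
\]
Integrating in $x$ and applying the same change of variables as in (ii) decouples the $\varphi$ and $\psi$ integrals and produces $\eps^{-d}\norm{\varphi}^2\norm{\psi}^2$, giving the claimed identity after taking square roots. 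The main obstacle across the proof is purely bookkeeping: tracking $\eps$-factors and the normalization conventions of $\Fourier$ versus $\Fs$ consistently across the different substitutions, particularly in (ii) and (iv) where the affine Jacobian $\eps^d$ and the frequency rescaling $\xi/\eps$ both appear.
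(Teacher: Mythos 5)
Your proposal follows essentially the same route as the paper's proof: direct manipulation of the explicit integral formula from Lemma~\ref{weyl_calculus:wigner_transform:lem:WignerTransform}, using changes of variables, the $\delta$-function identity for Fourier inversion, and a Plancherel computation for (iv). One caveat is worth flagging. Your primary argument for (iii) --- that $\WignerTrafo(\psi,\varphi) \in L^1(\pspace)$ ``follows from (ii) combined with the fact that both marginals are $L^1$'' --- does not actually work, because the Wigner transform is not non-negative (as you yourself note in (i)): finiteness and integrability of both marginals do not imply $L^1(\pspace)$ membership for a sign-changing function. (The paper's own one-line argument for (iii), ``this follows immediately from (ii)\ldots'', has exactly the same gap.) The correct justification is the one you offer as an afterthought in your ``more directly'' clause: for $\psi, \varphi \in \Schwartz(\R^d)$ the map $(x,y) \mapsto \varphi^*\bigl(x - \tfrac{\eps}{2}y\bigr)\,\psi\bigl(x + \tfrac{\eps}{2}y\bigr)$ is Schwartz in $(x,y)$, and $\WignerTrafo(\psi,\varphi)$ is its partial Fourier transform in the $y$-variable, hence again Schwartz on $\pspace$ and in particular in $L^1(\pspace)$. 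Promote that from a fallback to the actual argument. Apart from that, the only difference from the paper is cosmetic: you compute the $\xi$-marginal first (collapsing the $y$-integral with a $\delta$) and then the $x$-marginal via the affine Jacobian substitution, whereas the paper computes the $x$-marginal first by a sequence of one-variable substitutions and declares the $\xi$-marginal ``analogous.''
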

\begin{proof}
	\begin{enumerate}[(i)]
		\item We have to show $\WignerTrafo(\varphi,\varphi)^* = \WignerTrafo(\varphi,\varphi)$: the complex conjugate of $\WignerTrafo(\varphi,\varphi)$ is 
		\begin{align*}
			\WignerTrafo(\varphi,\varphi)^*(x,\xi) &= \biggl ( \frac{1}{(2\pi)^{\nicefrac{d}{2}}} \int_{\R^d_x} \dd y \, e^{- i y \cdot \xi} \, \varphi^{\ast} \bigl ( x - \tfrac{\eps}{2} y \bigr ) \, \varphi \bigl ( x + \tfrac{\eps}{2} y \bigr )  \biggr )^* 
			\\
			&= \frac{1}{(2\pi)^{\nicefrac{d}{2}}} \int_{\R^d_x} \dd y \, e^{+ i y \cdot \xi} \, \varphi \bigl ( x - \tfrac{\eps}{2} y \bigr ) \, \varphi^* \bigl ( x + \tfrac{\eps}{2} y \bigr ) 
			\\
			&= \frac{1}{(2\pi)^{\nicefrac{d}{2}}} \int_{\R^d_x} \dd y \, e^{- i y \cdot \xi} \, \varphi \bigl ( x + \tfrac{\eps}{2} y \bigr ) \, \varphi^* \bigl ( x - \tfrac{\eps}{2} y \bigr ) 
			= \WignerTrafo(\varphi,\varphi)(x,\xi)
		\end{align*}
		Previously on page~\pageref{weyl_calculus:wigner_transform:ex:first_excited_state_harmonic_oscillator}, we have already given an example where $\WignerTrafo(\varphi,\varphi)$ is not positive. 
		\item If we take the marginals with respect to $x$, then up to a factor of $(2\pi)^{\nicefrac{d}{2}}$ that is due to the choice of convention in Definition~\ref{weyl_calculus:wigner_transform:defn:fourier_wigner_trafo}, we get 
		\begin{align*}
			\int_{\R^d_x} \dd x \, \WignerTrafo(\psi,\varphi)(x,\xi) &= \frac{1}{(2\pi)^{\nicefrac{d}{2}}} \int_{\R^d_x} \dd x \int_{\R^d_x} \dd y \, e^{-i y \cdot \xi} \, \varphi \bigl ( x - \tfrac{\eps}{2} y \bigr )^* \, \psi \bigl ( x + \tfrac{\eps}{2} y \bigr ) 
			\\
			&= \frac{1}{(2\pi)^{\nicefrac{d}{2}}} \int_{\R^d_x} \dd x' \int_{\R^d_x} \dd y \, e^{-i y \cdot \xi} \, \varphi (x')^* \, \psi (x' + \eps y) 
			\\
			&= \frac{\eps^{-d}}{(2\pi)^{\nicefrac{d}{2}}} \int_{\R^d_x} \dd x' \int_{\R^d_x} \dd y' \, e^{-i (y'-x') \cdot \frac{\xi}{\eps}} \, \varphi (x')^* \, \psi (y') 
			\\
			&= \eps^{-d} \, (2\pi)^{\nicefrac{d}{2}} \, (\Fourier \varphi)^*(\nicefrac{\xi}{\eps}) \, (\Fourier \psi)(\nicefrac{\xi}{\eps})
			. 
		\end{align*}
		The other marginal can be obtained analogously. 
		\item This follows immediately from (ii) since $\varphi$ and $\psi$ are square integrable. 
		\item We plug in the definition of the Wigner transform and compute 
		\begin{align*}
			\bnorm{\WignerTrafo(\psi,\varphi)}_{L^2(\pspace)}^2 &= \int_{\R^d_x} \dd x \int_{\R^d_{\xi}} \dd \xi \, \babs{\WignerTrafo(\psi,\varphi)(x,\xi)}^2 
			\\
			&= \frac{1}{(2\pi)^d} \int_{\R^d_x} \dd x \int_{\R^d_{\xi}} \dd \xi \int_{\R^d_{x}} \dd y \, \int_{\R^d_{x}} \dd y' \, e^{+i y \cdot \xi} e^{- i y' \cdot \xi}
			\cdot \\
			&\qquad \qquad \qquad \cdot 
			\varphi^* \bigl ( x - \tfrac{\eps}{2} y \bigr ) \, \psi \bigl ( x + \tfrac{\eps}{2} y \bigr ) \, \varphi \bigl ( x - \tfrac{\eps}{2} y' \bigr ) \, \psi^* \bigl ( x + \tfrac{\eps}{2} y' \bigr ) 
			\\
			&= \frac{1}{(2\pi)^d} \int_{\R^d_x} \dd x \int_{\R^d_{x}} \dd y \, \int_{\R^d_{x}} \dd y' \, \biggl ( \int_{\R^d_{\xi}} \dd \xi \, e^{+i (y - y') \cdot \xi} \biggr ) 
			\cdot \\
			&\qquad \qquad \qquad \cdot 
			\varphi^* \bigl ( x - \tfrac{\eps}{2} y \bigr ) \, \psi \bigl ( x + \tfrac{\eps}{2} y \bigr ) \, \varphi \bigl ( x - \tfrac{\eps}{2} y' \bigr ) \, \psi^* \bigl ( x + \tfrac{\eps}{2} y' \bigr ) 
			\\
			&= \int_{\R^d_x} \dd x \int_{\R^d_{x}} \dd y \, \varphi^* \bigl ( x - \tfrac{\eps}{2} y \bigr ) \, \psi \bigl ( x + \tfrac{\eps}{2} y \bigr ) \, \varphi \bigl ( x - \tfrac{\eps}{2} y \bigr ) \, \psi^* \bigl ( x + \tfrac{\eps}{2} y \bigr ) 
			. 
		\end{align*}
		After two changes of variables, this simplifies to 
		\begin{align*}
			\ldots &= \biggl ( \int_{\R^d_x} \dd x \, \varphi^* (x) \, \varphi (x) \biggr ) \, \biggl ( \eps^{-d} \, \int_{\R^d_x} \dd y \, \psi (y) \, \psi^* (y) \biggr ) 
			\\
			&= \eps^{-d} \, \norm{\varphi}_{L^2(\R^d)}^2 \norm{\psi}_{L^2(\R^d)}^2 
			. 
		\end{align*}
		\item This is a direct consequence of the definition, 
		\begin{align*}
			\WignerTrafo \bigl ( R \psi , R \varphi \bigr )(x,\xi) &= \frac{1}{(2 \pi)^{\nicefrac{d}{2}}} \int_{\R^d_x} \dd y \, e^{- i y \cdot \xi} \, (R \varphi)^* \bigl ( x - \tfrac{\eps}{2} y \bigr ) \, (R \psi) \bigl ( x + \tfrac{\eps}{2} y \bigr ) 
			\\
			&= \frac{1}{(2 \pi)^{\nicefrac{d}{2}}} \int_{\R^d_x} \dd y \, e^{- i y \cdot \xi} \, \varphi^* \bigl ( - x + \tfrac{\eps}{2} y \bigr ) \, \psi \bigl ( - x - \tfrac{\eps}{2} y \bigr ) 
			\\
			&= \frac{1}{(2 \pi)^{\nicefrac{d}{2}}} \int_{\R^d_x} \dd y' \, e^{- i y' \cdot (- \xi)} \, \varphi^* \bigl ( (- x) - \tfrac{\eps}{2} y' \bigr ) \, \psi \bigl ( (- x) + \tfrac{\eps}{2} y' \bigr ) 
			\\
			&= \WignerTrafo(\psi,\varphi)(- x, -\xi) 
			. 
		\end{align*}
		\item Follows directly from the definition of the Wigner transform. 
		\item Follows directly from the definition of the Wigner transform. 
	\end{enumerate}
\end{proof}
The Wigner transform is also the inverse of the Weyl quantization: first, we extend the definition to $\Schwartz(\R^d \times \R^d)$. 
\begin{defn}[Wigner transform on $\Schwartz(\R^d \times \R^d)$]
	We can extend the Wigner transform to functions $K \in \Schwartz(\R^d \times \R^d)$ by setting 
	\begin{align}
		\WignerTrafo K (x,\xi) := \frac{1}{(2\pi)^{\nicefrac{d}{2}}} \int_{\R^d_x} \dd y \, e^{- i y \cdot \xi} \, K \bigl ( x + \tfrac{\eps}{2} y , x - \tfrac{\eps}{2} y \bigr )
		. 
		\label{weyl_calculus:wigner_transform:eqn:Wigner_trafo_kernel} 
	\end{align}
\end{defn}
\begin{lem}\label{weyl_calculus:wigner_transform:lem:Wigner_trafo_unitary}
	$\WignerTrafo$ is an isomorphism between $\Schwartz(\R^d \times \R^d)$ and $\Schwartz(\pspace)$. 
\end{lem}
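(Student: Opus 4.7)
The plan is to factor $\WignerTrafo$ as the composition of two simpler maps, each of which is already known (or easily seen) to be a topological isomorphism of $\Schwartz$, and then conclude by a general principle.

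First I would introduce the linear change-of-variables operator $T \colon \Schwartz(\R^d \times \R^d) \longrightarrow \Schwartz(\R^d \times \R^d)$ defined by
\begin{align*}
  (TK)(x,y) := K\bigl(x + \tfrac{\eps}{2} y,\, x - \tfrac{\eps}{2} y\bigr).
\end{align*}
The underlying linear map on $\R^d \times \R^d$, namely $(x,y) \mapsto (x+\tfrac{\eps}{2}y, x-\tfrac{\eps}{2}y)$, is an invertible linear bijection with inverse $(u,v) \mapsto (\tfrac{1}{2}(u+v), \tfrac{1}{v}(u-v))$. Since any invertible linear change of coordinates preserves smoothness and rapid decay in all directions — concretely, each seminorm $\norm{TK}_{a\alpha}$ is controlled by a finite sum of seminorms of $K$ via the chain rule and the algebraic identity $(x \pm \tfrac{\eps}{2} y)^a$ — $T$ maps $\Schwartz(\R^d \times \R^d)$ continuously into itself, with a continuous two-sided inverse given by the same construction applied to the inverse change of variables.

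Next I observe that the defining formula~\eqref{weyl_calculus:wigner_transform:eqn:Wigner_trafo_kernel} can be rewritten as
\begin{align*}
  \WignerTrafo K(x,\xi) = \bigl( \Fourier_2 \, TK \bigr)(x,\xi),
\end{align*}
where $\Fourier_2$ is the partial Fourier transform in the second variable with the same convention as in Section~\ref{S_and_Sprime:schwartz_functions}. Theorem~\ref{S_and_Sprime:schwartz_functions:thm:Fourier_is_bijection} established that the full Fourier transform is a topological isomorphism of $\Schwartz$, and the identical proof (applied only to the second group of variables, with the first variable treated as a smooth parameter via Dominated Convergence) shows that $\Fourier_2$ is a topological isomorphism $\Schwartz(\R^d \times \R^d) \longrightarrow \Schwartz(\pspace)$. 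Composing the two isomorphisms gives $\WignerTrafo = \Fourier_2 \circ T$, which is therefore itself a topological isomorphism between $\Schwartz(\R^d \times \R^d)$ and $\Schwartz(\pspace)$, with explicit inverse $\WignerTrafo^{-1} = T^{-1} \circ \Fourier_2^{-1}$, i.e.
\begin{align*}
  (\WignerTrafo^{-1} f)(u,v) = \bigl( \Fourier_2^{-1} f \bigr)\bigl(\tfrac{1}{2}(u+v),\, \tfrac{1}{\eps}(u-v)\bigr).
\end{align*}

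The main obstacle is purely bookkeeping: one has to check that the two operations $T$ and $\Fourier_2$ really are continuous in the Fréchet topology on $\Schwartz$. For $\Fourier_2$ this is a straightforward adaptation of Theorem~\ref{S_and_Sprime:schwartz_functions:thm:Fourier_is_bijection}, while for $T$ the seminorm estimate uses the binomial expansion of $(x \pm \tfrac{\eps}{2} y)^a$ together with the chain rule for $\partial_x^{\alpha} \partial_y^{\beta}$, yielding a bound of the form $\norm{TK}_{(a,b)(\alpha,\beta)} \leq C_{\eps,a,b,\alpha,\beta} \max_{\abs{a'} \leq \abs{a}+\abs{b},\, \abs{\alpha'} \leq \abs{\alpha}+\abs{\beta}} \norm{K}_{a'\alpha'}$. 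No further analytic subtleties arise because the whole argument takes place within the stable class $\Schwartz$.
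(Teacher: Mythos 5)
Your proof is correct and takes essentially the same route as the paper's: factor $\WignerTrafo$ as a linear, invertible change of variables followed by a partial Fourier transform, each of which maps $\Schwartz$ continuously and bijectively onto $\Schwartz$. (Only a small typo in your first display of the inverse coordinate map, $\tfrac{1}{v}(u-v)$ should read $\tfrac{1}{\eps}(u-v)$, which you in fact write correctly in the later display of $\WignerTrafo^{-1}$.)
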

\begin{proof}
	$\WignerTrafo$ consists of a linear, bijective change of variables and a partial Fourier transform. Both leave $\Schwartz(\R^d \times \R^d)$ invariant (mathematically, $\R^d \times \R^d$ and $\pspace = \R^d_x \times \R^d_{\xi}$ are equivalent). The bijectivity follows from the bijectivity of the Fourier transform in the second argument. 
\end{proof}
The Wigner transform is essentially the inverse of Weyl quantization: 
\begin{prop}\label{weyl_calculus:wigner_transform:prop:inverse_weyl_quantization}
	Let $T \in \Op \bigl ( \Schwartz(\pspace) \bigr ) \subset \mathcal{B} \bigl ( L^2(\R^d) \bigr )$ be an operator with operator kernel $K_T$. The map $\Op^{-1} : \Op \bigl ( \Schwartz(\pspace) \bigr ) \longrightarrow \Schwartz(\pspace)$ defined by 
	\begin{align*}
		\Op^{-1} T := \eps^d \, \WignerTrafo K_T
	\end{align*}
	is the inverse to Weyl quantization, \ie we have $T = \Op \bigl ( \eps^d \, \WignerTrafo K_T \bigr)$ for all $T \in \Op \bigl ( \Schwartz(\pspace) \bigr )$. Conversely, if we take any $f \in \Schwartz(\pspace)$ with Weyl kernel 
	\begin{align*}
		K_f &= \frac{1}{(2\pi)^{\nicefrac{d}{2}}} \int_{\R^d_{\xi}} \dd \eta \, e^{- i (y - x) \cdot \eta} \, f \bigl ( \tfrac{1}{2} (x + y) , \eps \eta \bigr ) 
		= \eps^{-d} \, (\Fourier_2 f) \bigl ( \tfrac{1}{2}(x+y) , \tfrac{y - x}{\eps} \bigr ) 
		, 
	\end{align*}
	then $\Op^{-1} \bigl ( \Op(f) \bigr ) = \eps^d \WignerTrafo K_f = f$ holds. 
\end{prop}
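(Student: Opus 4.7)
The plan is to leverage the bijectivity from Lemma~\ref{weyl_calculus:wigner_transform:lem:Wigner_trafo_unitary} together with the explicit integral form for $K_f$ derived in Proposition~\ref{weyl_calculus:weyl_quantization:prop:application_Weyl_quantization}. The map $f \mapsto K_f$ from $\Schwartz(\pspace)$ into $\Schwartz(\R^d \times \R^d)$ is itself a bijection onto its image, since it is the composition of a partial Fourier transform in the second slot (bijective on $\Schwartz$ by Theorem~\ref{S_and_Sprime:schwartz_functions:thm:Fourier_is_bijection}) with the invertible affine change of variables $(u,v) \mapsto \bigl(\tfrac{1}{2}(u+v), \tfrac{v-u}{\eps}\bigr)$. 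Because $\eps^d \WignerTrafo$ is a bijection $\Schwartz(\R^d \times \R^d) \to \Schwartz(\pspace)$, it suffices to verify the single identity $\eps^d \WignerTrafo K_f = f$ for every $f \in \Schwartz(\pspace)$. The companion statement $T = \Op\bigl(\eps^d \WignerTrafo K_T\bigr)$ then follows immediately by writing $T = \Op(f)$ (which is allowed by hypothesis), invoking the identity, and using injectivity of $f \mapsto K_f$.

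The core computation is a direct substitution. With $u := x + \tfrac{\eps}{2}y$ and $v := x - \tfrac{\eps}{2}y$ one has $\tfrac{1}{2}(u+v) = x$ and $\tfrac{v-u}{\eps} = -y$, hence
\begin{align*}
K_f\bigl(x + \tfrac{\eps}{2}y,\, x - \tfrac{\eps}{2}y\bigr) = \eps^{-d}\, (\Fourier_2 f)(x,-y).
\end{align*}
Plugging this into the defining formula~\eqref{weyl_calculus:wigner_transform:eqn:Wigner_trafo_kernel} for $\WignerTrafo K_f$, the prefactor $\eps^{-d}$ is absorbed by the $\eps^d$ in $\eps^d \WignerTrafo K_f$, leaving
\begin{align*}
\eps^d \WignerTrafo K_f(x,\xi) = \frac{1}{(2\pi)^{\nicefrac{d}{2}}} \int_{\R^d} \dd y \, e^{-i y \cdot \xi} \, (\Fourier_2 f)(x,-y).
\end{align*}
A sign flip $y \mapsto -y$ then identifies the remaining integral as the inverse partial Fourier transform applied to $\Fourier_2 f$ evaluated at $(x,\xi)$, so the right-hand side equals $f(x,\xi)$.

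The only points needing a sentence of justification beyond bookkeeping are (i) that $K_f$ really does lie in $\Schwartz(\R^d \times \R^d)$ when $f \in \Schwartz(\pspace)$, which is immediate because $\Fourier_2$ and the rescaled affine change of variables both preserve the Schwartz class, and (ii) that the interchanges of integration implicit in the Fourier inversion are permissible, trivially so because every integrand is Schwartz in the variable being integrated out. There is no genuine obstacle: the proposition is ultimately partial Fourier inversion in disguise. The main hazard is purely bookkeeping — keeping the arguments of $\Fourier_2$, the sign in the Wigner exponential, and the powers of $\eps$ consistent throughout the substitution.
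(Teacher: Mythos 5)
Your proof is correct and hinges on the same core computation as the paper's: the substitution $u = x + \tfrac{\eps}{2}y$, $v = x - \tfrac{\eps}{2}y$ into the Weyl kernel followed by partial Fourier inversion, together with the bijectivity of the kernel map (Remark~\ref{weyl_calculus:wigner_transform:rem:kernel_map}) and of $\WignerTrafo$ (Lemma~\ref{weyl_calculus:wigner_transform:lem:Wigner_trafo_unitary}) -- though you streamline by deriving $\Op\bigl(\Op^{-1}T\bigr) = T$ from bijectivity rather than repeating the paper's longer explicit kernel computation. One small imprecision: to conclude $K_T = K_f$ when $T = \Op(f)$, the ingredient you actually need is uniqueness of the operator kernel of $T$ (which the paper establishes by subtracting the two integral representations of $T\varphi$), not injectivity of $f \mapsto K_f$.
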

\marginpar{\small 2009.12.16}
\begin{remark}\label{weyl_calculus:wigner_transform:rem:kernel_map}
	The Weyl kernel map $K : \Schwartz(\pspace) \longrightarrow \Schwartz(\R^d \times \R^d)$ which maps $f \in \Schwartz(\pspace)$ onto 
	\begin{align*}
		K_f(x,y) := \eps^{-d} \, (\Fourier_2 f) \bigl ( \tfrac{1}{2}(x+y) , \tfrac{y - x}{\eps} \bigr )
	\end{align*}
	is bijective as it consists of a bijective linear change of variables and a partial Fourier transform. 
\end{remark}
\begin{proof}
	Let $K_T$ be the operator kernel associated to the operator $T \in \Op \bigl ( \Schwartz(\pspace) \bigr )$. Then $K_T$ has to be in $\Schwartz(\R^d \times \R^d)$: as $T$ is an operator that has been obtained by Weyl quantization, there is \emph{a} preimage $f_T \in \Schwartz(\pspace)$ and its Weyl kernel $K_{f_T}$ has to be in $\Schwartz(\R^d \times \R^d)$. Writing $T \varphi$, $\varphi \in \Schwartz(\R^d)$, in two different ways, 
	\begin{align*}
		0 &= T \varphi - T \varphi 
		= \frac{1}{(2\pi)^{\nicefrac{d}{2}}} \int_{\R^d_x} \dd y \, K_T(x,y) \, \varphi(y) - \bigl ( \Op(f_T) \varphi \bigr )(x) 
		\\
		&= \frac{1}{(2\pi)^{\nicefrac{d}{2}}} \int_{\R^d_x} \dd y \, K_T(x,y) \, \varphi(y) - \frac{1}{(2\pi)^{\nicefrac{d}{2}}} \int_{\R^d_x} \dd y \, K_{f_T}(x,y) \, \varphi(y) 
		\\
		&= \frac{1}{(2\pi)^{\nicefrac{d}{2}}} \int_{\R^d_x} \dd y \, \bigl ( K_T(x,y) - K_{f_T}(x,y) \bigr ) \, \varphi(y) 
		, 
	\end{align*}
	we conclude $K_T$ and $K_{f_T}$ have to agree almost everywhere. Without loss of generality, we take the continuous (and thus smooth) representative $K_{f_T} \in \Schwartz(\R^d \times \R^d)$. 
	
	We have to confirm that 
	\begin{align*}
		\bigl ( \Op \bigl ( \Op^{-1}(T) \bigr ) \varphi \bigr )(x) &= 
		\bigl ( \Op \bigl ( \eps^d \, \WignerTrafo K_T \bigr ) \varphi \bigr )(x) = \frac{1}{(2\pi)^{\nicefrac{d}{2}}} \int_{\R^d_x} \dd y \, K_T(x,y) \, \varphi(y) 
		\\
		&= \bigl ( T \varphi \bigr )(x) 
	\end{align*}
	holds. Plugging in the definition and making a change of variables, we get 
	\begin{align*}
		\bigl ( \Op \bigl ( \Op^{-1}(T) \bigr ) \varphi \bigr )(x) &= \frac{1}{(2\pi)^d} \int_{\R^d_x} \dd y \int_{\R^d_{\xi}} \dd \eta \, e^{- i (y - x) \cdot \eta} \, \eps^d \, \bigl ( \WignerTrafo K_T \bigr ) \bigl ( \tfrac{1}{2}(x + y) , \eps \eta \bigr ) \, \varphi(y) 
		\\
		&= \frac{\eps^d}{(2\pi)^{\nicefrac{3d}{2}}} \int_{\R^d_x} \dd y \int_{\R^d_{\xi}} \dd \eta \int_{\R^d_x} \dd z \, e^{- i (y - x) \cdot \eta} e^{-i z \cdot \eps \eta} 
		\cdot \\
		&\qquad \qquad \qquad \qquad \cdot 
		K_T \bigl ( \tfrac{1}{2}(x + y) + \tfrac{\eps}{2} z , \tfrac{1}{2} (x + y) - \tfrac{\eps}{2} z \bigr ) \, \varphi(y) 
		\\
		&= \frac{\eps^d}{(2\pi)^{\nicefrac{3d}{2}}} \int_{\R^d_x} \dd y \int_{\R^d_{\xi}} \dd \eta \int_{\R^d_x} \dd z' \, \eps^{-d} \, e^{+ i (x - y - z') \cdot \eta} 
		\cdot \\
		&\qquad \qquad \qquad \qquad \cdot 
		K_T \bigl ( \tfrac{1}{2}(x + y + z') , \tfrac{1}{2} (x + y - z') \bigr ) \, \varphi(y) 
		. 
	\end{align*}
	Now we integrate over $\eta$ which yields a $\delta$ distribution that can be used to kill one of the remaining integrals, 
	\begin{align*}
		\ldots &= \frac{\eps^{d - d}}{(2\pi)^{\nicefrac{d}{2}}} \int_{\R^d_x} \dd y \int_{\R^d_x} \dd z' \, \delta(x - y - z') 
		\cdot \\
		&\qquad \qquad \qquad \qquad \cdot 
		K_T \bigl ( \tfrac{1}{2}(x + y + z') , \tfrac{1}{2} (x + y - z') \bigr ) \, \varphi(y) 
		\\
		&= \frac{1}{(2\pi)^{\nicefrac{d}{2}}} \int_{\R^d_x} \dd y \, K_T \bigl ( \tfrac{1}{2}(x + y + (x - y)) , \tfrac{1}{2} (x + y - (x - y)) \bigr ) \, \varphi(y) 
		\\
		&= \frac{1}{(2\pi)^{\nicefrac{d}{2}}} \int_{\R^d_x} \dd y \, K_T (x,y) \, \varphi(y) 
		= \bigl ( T \varphi \bigr )(x) 
		. 
	\end{align*}
	On the other hand, let $T = \Op(f)$ be the Weyl quantization of $f \in \Schwartz(\pspace)$. Then $\Op^{-1} T = f$ follows from direct calculation: using 
	\begin{align*}
		K_f \bigl ( x + \tfrac{\eps}{2} y , x - \tfrac{\eps}{2} y \bigr ) &= \eps^{-d} \, (\Fourier_2 f) \bigl ( \tfrac{1}{2} \bigl ( x + \tfrac{\eps}{2} y \bigr ) + \tfrac{1}{2} \bigl ( x - \tfrac{\eps}{2} y \bigr ) , \tfrac{1}{\eps} \bigl ( x - \tfrac{\eps}{2} y \bigr ) - \tfrac{1}{\eps} \bigl ( x + \tfrac{\eps}{2} y \bigr ) \bigr ) 
		\\
		&= \eps^{-d} \, (\Fourier_2 f)(x , -y) 
		= \eps^{-d} \, (\Fourier_2^{-1} f)(x , y) 
		, 
	\end{align*}
	we get 
	\begin{align*}
		\Op^{-1} \bigl ( \Op(f) \bigr ) &= \eps^d \, \WignerTrafo K_f(x,\xi) =  \frac{\eps^d}{(2\pi)^{\nicefrac{d}{2}}} \int_{\R^d_x} \dd y \, e^{- i y \cdot \xi} \, K_f \bigl ( x - \tfrac{\eps}{2} y , x + \tfrac{\eps}{2} y \bigr ) 
		\\
		&= \frac{\eps^{d-d}}{(2\pi)^{d}} \int_{\R^d_x} \dd y \int_{\R^d_{\xi}} \dd \eta \, e^{- i y \cdot \xi} e^{+ i y \cdot \eta} \, f ( x , \eta ) 
		= f(x , \xi) 
		. 
	\end{align*}
	To show that $\Op^{-1}$ maps $\Op(\Schwartz(\pspace))$ onto $\Schwartz(\pspace)$, we invoke Remark~\ref{weyl_calculus:wigner_transform:rem:kernel_map} and Lemma~\ref{weyl_calculus:wigner_transform:lem:Wigner_trafo_unitary} which state that the kernel map $K : \Schwartz(\pspace) \longrightarrow \Schwartz(\R^d \times \R^d)$, $f \mapsto K_f$, and the Wigner transform $\WignerTrafo : \Schwartz(\R^d \times \R^d) \longrightarrow \Schwartz(\pspace)$ are bijective. Hence the composition of the kernel map $K$ and the Wigner transform $\WignerTrafo$ is a bijection as well. In fact, 
	\begin{align*}
		\WignerTrafo \circ K : \Schwartz(\pspace) \longrightarrow \Schwartz(\pspace) 
	\end{align*}
	is the identity map by the above calculation. 
\end{proof}
\begin{example}
	Let us see if we can dequantize $\Pe^2 = \Op(\xi^2)$ to $\xi^2$; for simplicity, set $d = 1$, although the arguments carry over to arbitrary dimension. The operator kernel associated to $\Pe^2$ is 
	\begin{align*}
		K(x,y) = \sqrt{2 \pi} \, (-i \eps)^2 \delta''(x - y) 
	\end{align*}
	since for any $\varphi \in \Schwartz(\R^d) \subset L^2(\R^d)$ we have in the sense of distributions 
	\begin{align*}
		\frac{1}{\sqrt{2 \pi}} \int_{\R_x} \dd y \, K(x,y) \, \varphi(y) &= \frac{\sqrt{2 \pi}}{\sqrt{2 \pi}} \int_{\R_x} \dd y \, (-i \eps)^2 \, \delta''(x - y) \, \varphi(y) 
		\\
		&= (-1)^2 (-i \eps)^2 \, \int_{\R_x} \dd y \, \delta(x - y) \, \partial_y^2 \varphi(y) 
		= (-i \eps)^2 \partial_x^2 \varphi(x) 
		\\
		&= (\Pe^2 \varphi)(x) 
		. 
	\end{align*}
	Formally, we can dequantize $\Pe^2$, 
	\begin{align*}
		\Op^{-1}(\Pe^2) &= \eps \, (\WignerTrafo K)(x,\xi) 
		= \frac{\eps}{\sqrt{2 \pi}} \int_{\R_x} \dd y \, e^{- i y \cdot \xi} \, K \bigl ( x + \tfrac{\eps}{2} y , x - \tfrac{\eps}{2} y \bigr ) 
		\\
		&
		= - \eps^3 \sqrt{2\pi} \, \bigl ( \Fourier \delta''(\eps \cdot) \bigr )(\xi) 
		= - \frac{\eps^3}{\eps} \sqrt{2\pi} \, ( \Fourier \delta'')(\nicefrac{\xi}{\eps}) 
		\\
		&= - \eps^2 \sqrt{2\pi} \frac{1}{\sqrt{2\pi}} i^2 \frac{\xi^2}{\eps^2} 
		= \xi^2 
		. 
	\end{align*}
	This is exactly what we have expected to get. 
\end{example}
\begin{remark}
	In the physics literature, one often uses bra-ket notation to write the Wigner transform. Adopting our conventions regarding factors of $2 \pi$ as well as signs and setting $\eps = 1$, the Wigner transform of the operator $T$ can be written as 
	\begin{align*}
		\WignerTrafo T (x,\xi) = \frac{1}{(2\pi)^{\nicefrac{d}{2}}} \int_{\R^d_x} \dd y \, e^{- i y \cdot \xi} \, \bbra{x + \tfrac{y}{2}} T \bket{x - \tfrac{y}{2}} 
		. 
	\end{align*}
	The term $\bbra{x + \tfrac{y}{2}} T \bket{x - \tfrac{y}{2}}$ is nothing but the operator kernel $K_T \bigl ( x + \tfrac{y}{2} , x - \tfrac{y}{2} \bigr )$. Sometimes $\bbra{x + \tfrac{y}{2}} T \bket{x - \tfrac{y}{2}}$ is expanded in terms of eigenfunctions (which usually cannot be the eigenfunctions of $T$ as the spectrum need not be purely discrete!). 
\end{remark}
So far we have made no attempts to generalize Weyl quantization and Wigner transform beyond Schwartz functions. We shall resist the temptation for now. Weyl quantization, for instance, could immediately defined for functions $f : \pspace \longrightarrow \C$ such that $\Fs f \in L^1(\pspace)$. Similarly, as the Fourier transform extends to a unitary map on $L^2(\R^d)$, we could have extended the Wigner transform to a unitary map between $L^2(\R^d \times \R^d)$ and $L^2(\pspace)$. 


\section{Weyl product} 
\label{weyl_calculus:weyl_product}

The Weyl product $\Weyl$ emulates the operator product on the level of functions on phase space, \ie it satisfies 
\begin{align*}
	\Op \bigl ( f \Weyl g \bigr ) = \Op(f) \, \Op(g) 
\end{align*}
for suitable $f , g : \pspace \longrightarrow \C$. It can be derived from the composition law of the Weyl system (Proposition~\ref{weyl_calculus:weyl_system:prop:composition_law_weyl_system}). 
\begin{thm}\label{weyl_calculus:weyl_product:thm:weyl_product}
	For $f , g \in \Schwartz(\pspace)$, the distribution $f \Weyl g$ which satisfies $\Op \bigl ( f \Weyl g \bigr ) = \Op(f) \, \Op(g)$ is a Schwartz function given by 
	\begin{align}
		( f \Weyl g )(X) &= \frac{1}{(2 \pi)^{2d}} \int_{\pspace} \dd Y \int_{\pspace} \dd Z \, e^{i \sigma(X , Y + Z)} \, e^{i \frac{\eps}{2} \sigma(Y,Z)} \, (\Fs f)(Y) \, (\Fs g)(Z) 
		\label{weyl_calculus:weyl_product:eqn:weyl_product} \\
		&= \frac{1}{(\pi \eps)^{2d}} \int_{\pspace} \dd Y \int_{\pspace} \dd Z \, e^{- i \frac{2}{\eps} \sigma(X - Y, X - Z)} \, f(Y) \, g(Z) 
		\notag 
		. 
	\end{align}
\end{thm}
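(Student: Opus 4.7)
The plan is to start from the defining identity $\Op(f \Weyl g) = \Op(f)\,\Op(g)$ and work on the Fourier side, where the Weyl system composition law of Proposition~\ref{weyl_calculus:weyl_system:prop:composition_law_weyl_system} makes everything transparent. Writing both $\Op(f)$ and $\Op(g)$ via their defining integrals over $\WeylSys$, I obtain
\begin{align*}
	\Op(f)\,\Op(g) = \tfrac{1}{(2\pi)^{2d}} \int_{\pspace} \dd Y \int_{\pspace} \dd Z \, (\Fs f)(Y) \, (\Fs g)(Z) \, \WeylSys(Y)\,\WeylSys(Z).
\end{align*}
Applying $\WeylSys(Y)\WeylSys(Z) = e^{i\frac{\eps}{2}\sigma(Y,Z)}\WeylSys(Y+Z)$ and changing variables $(X',Y') := (Y+Z,Y)$ (Jacobian one), the inner integral becomes an expression of the form $\tfrac{1}{(2\pi)^d}\int \dd X' \, h(X') \WeylSys(X')$, from which I read off
\begin{align*}
	(\Fs(f\Weyl g))(X) = \tfrac{1}{(2\pi)^d} \int_{\pspace} \dd Y \, (\Fs f)(Y)\,(\Fs g)(X-Y)\,e^{i\frac{\eps}{2}\sigma(Y,X)},
\end{align*}
using $\sigma(Y,X-Y)=\sigma(Y,X)$ by antisymmetry. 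Applying $\Fs$ once more (recall $\Fs^2=\id$) and relabeling gives the first formula \eqref{weyl_calculus:weyl_product:eqn:weyl_product} directly.

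For the second, position-space formula, I plug the defining integrals of $\Fs f$ and $\Fs g$ into the first formula, producing a four-fold integral with phase
\begin{align*}
	\sigma(X,Y+Z) + \tfrac{\eps}{2}\sigma(Y,Z) + \sigma(Y,Y') + \sigma(Z,Z') = \sigma(X-Y',Y) + \sigma(X-Z',Z) + \tfrac{\eps}{2}\sigma(Y,Z).
\end{align*}
The phase is linear in $(Y,Z)$, so after grouping the $y$-, $\eta$-, $z$-, $\zeta$-components of $Y,Z$ and carrying out the $Y$ and $Z$ integrations (interpreted as oscillatory integrals / Fourier representations of the delta distribution), two delta distributions appear, enforcing $Z = (2/\eps)\cdot(\text{component of }X-Y')$ and analogously for the other pair. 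After evaluating these against the $(Y',Z')$ integrations and collecting constants, the remaining phase reassembles into $-\tfrac{2}{\eps}\sigma(X-Y',X-Z')$ and the prefactor $(4\pi/\eps)^{2d}/(2\pi)^{4d}$ simplifies to $1/(\pi\eps)^{2d}$, yielding the second line of \eqref{weyl_calculus:weyl_product:eqn:weyl_product}.

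The two places where care is needed are, first, justifying the use of Fubini and the manipulations of oscillatory integrals above (for $f,g \in \Schwartz(\pspace)$ all integrands in the Fourier-side derivation are products of Schwartz functions with a unimodular phase, so the first formula converges absolutely and Fubini applies without issue; the second formula only converges as an oscillatory integral and needs the usual $e^{-\delta(|Y|^2+|Z|^2)}$ regularization passed to the limit $\delta\searrow 0$), and second, verifying the asserted regularity $f\Weyl g \in \Schwartz(\pspace)$. The latter is in fact the main obstacle: one must check that $(\Fs(f\Weyl g))(X)$, given by the $Y$-integral above, is Schwartz in $X$. Since $(\Fs f)(Y)(\Fs g)(X-Y)$ is jointly Schwartz in $(X,Y)$ and $e^{i\frac{\eps}{2}\sigma(Y,X)}$ is smooth with polynomially bounded derivatives in $X$, differentiating under the integral produces polynomial factors in $Y$ which are absorbed by the Schwartz decay of $\Fs f$, and the whole integral decays faster than any polynomial in $X$ by the Schwartz decay of $\Fs g$; hence $\Fs(f\Weyl g) \in \Schwartz(\pspace)$, and by Theorem~\ref{S_and_Sprime:schwartz_functions:thm:Fourier_is_bijection} so is $f\Weyl g$. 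Uniqueness of $f\Weyl g$ as a distribution satisfying $\Op(f\Weyl g)=\Op(f)\Op(g)$ is then immediate from the injectivity of $\Op$ on $\Schwartz(\pspace)$ (Proposition~\ref{weyl_calculus:wigner_transform:prop:inverse_weyl_quantization}).
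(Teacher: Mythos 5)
Your derivation of the two integral formulas is essentially identical to the paper's: the first formula follows from the Weyl system composition law and a change of variables on the Fourier side, and the second from writing out the symplectic Fourier transforms in the first, integrating out one of the $Y,Z$ variables to produce a delta distribution, and cleaning up. Where you genuinely diverge is in proving $f \Weyl g \in \Schwartz(\pspace)$. The paper routes this through Weyl kernels: it invokes Remark~\ref{weyl_calculus:wigner_transform:rem:kernel_map} to place $K_f, K_g \in \Schwartz(\R^d \times \R^d)$, proves separately (Lemma~\ref{weyl_calculus:weyl_product:lem:composition_of_operator_kernels}) that the kernel composition $K_f \diamond K_g$ is again in $\Schwartz(\R^d \times \R^d)$, and then uses the inverse quantization $\Op^{-1} = \eps^d \, \WignerTrafo$ of Proposition~\ref{weyl_calculus:wigner_transform:prop:inverse_weyl_quantization} together with the bijectivity of $\WignerTrafo$ (Lemma~\ref{weyl_calculus:wigner_transform:lem:Wigner_trafo_unitary}) to conclude. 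You instead estimate the twisted convolution $\Fs(f \Weyl g)(X) = (2\pi)^{-d}\int \dd Y\, (\Fs f)(Y)\,(\Fs g)(X-Y)\,e^{i\frac{\eps}{2}\sigma(Y,X)}$ directly: $X$-derivatives of the phase produce polynomials in $Y$ absorbed by $\Fs f$, and writing $X^\beta$ as a sum of $Y^{\beta_1}(X-Y)^{\beta_2}$ gives the required decay. This is correct, more self-contained, and avoids the kernel machinery altogether; the paper's route has the compensating virtue that Lemma~\ref{weyl_calculus:weyl_product:lem:composition_of_operator_kernels} and the kernel/Wigner apparatus are reusable elsewhere and make the claim $\Op(f\Weyl g) = \Op(f)\Op(g)$ manifest at the operator-kernel level without any oscillatory-integral bookkeeping.
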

Before we can prove this statement, we need an auxiliary result: take two operators $T$ and $S$ whose operator kernels $K_T$ and $K_S$ are in $\Schwartz(\R^d \times \R^d)$. Then the operator kernel of $T S$ is given by 
\begin{align*}
	(K_T \diamond K_S)(x,y) := \frac{1}{(2\pi)^{\nicefrac{d}{2}}} \int_{\R^d_x} \dd z \, K_T(x,z) \, K_S(z,y) 
	. 
\end{align*}
\begin{lem}\label{weyl_calculus:weyl_product:lem:composition_of_operator_kernels}
	For any $K_T , K_S \in \Schwartz(\R^d \times \R^d)$, the product $K_T \diamond K_S$ is also in $\Schwartz(\R^d \times \R^d)$, \ie $\diamond : \Schwartz(\R^d \times \R^d) \times \Schwartz(\R^d \times \R^d) \longrightarrow \Schwartz(\R^d \times \R^d)$. 
\end{lem}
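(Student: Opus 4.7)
The plan is to verify the three things that characterize a Schwartz function on $\R^d \times \R^d$ for $F := K_T \diamond K_S$: (i) the defining integral converges absolutely for each $(x,y)$, (ii) $F$ is $\Cont^{\infty}$ with all derivatives obtained by differentiating under the integral, and (iii) every seminorm
\begin{align*}
  \norm{F}_{(a,b)(\alpha,\beta)} := \sup_{x,y \in \R^d} \babs{x^a y^b \, \partial_x^\alpha \partial_y^\beta F(x,y)}
\end{align*}
is finite for all multi-indices $a, b, \alpha, \beta \in \N_0^d$.

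First I would use the Schwartz decay of $K_T$ and $K_S$: for every multi-index $\alpha$ and every $N_1, N_2 \in \N_0$ there exists $C_{\alpha,N_1,N_2}$ with
\begin{align*}
  \babs{(\partial_x^\alpha K_T)(x,z)} \leq C_{\alpha,N_1,N_2} \, (1+\abs{x})^{-N_1} (1+\abs{z})^{-N_2},
\end{align*}
and analogously for $K_S$ with exponents $M_1, M_2$ in the variables $(z,y)$. Choosing $N_2 + M_1 > d$ makes the $z$-integrand dominated by an $L^1(\R^d)$ function locally uniformly in $(x,y)$, so Dominated Convergence both establishes absolute convergence of the defining integral and permits differentiation under the integral sign to all orders, yielding
\begin{align*}
  \partial_x^\alpha \partial_y^\beta (K_T \diamond K_S)(x,y) = \frac{1}{(2\pi)^{d/2}} \int_{\R^d} \dd z \, (\partial_x^\alpha K_T)(x,z) \, (\partial_y^\beta K_S)(z,y).
\end{align*}

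To bound the seminorms I would absorb $x^a$ into the $x$-decay of $K_T$ and $y^b$ into the $y$-decay of $K_S$. Using $\abs{x^a} \leq (1+\abs{x})^{\abs{a}}$ and the analogous bound in $y$, and picking $N_1 > \abs{a}$, $M_2 > \abs{b}$, and $N_2 + M_1 > d$, the preceding estimates give
\begin{align*}
  \babs{x^a y^b \, \partial_x^\alpha \partial_y^\beta (K_T \diamond K_S)(x,y)} \leq C' \, (1 + \abs{x})^{\abs{a} - N_1} \, (1 + \abs{y})^{\abs{b} - M_2} \int_{\R^d} \dd z \, (1+\abs{z})^{-(N_2+M_1)},
\end{align*}
which is bounded uniformly in $(x,y) \in \R^d \times \R^d$ since both exponents in $x$ and $y$ are strictly negative and the $z$-integral is finite. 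As $a, b, \alpha, \beta$ were arbitrary, every seminorm is finite and hence $K_T \diamond K_S \in \Schwartz(\R^d \times \R^d)$.

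The main obstacle is purely bookkeeping: one has to choose the decay exponents $N_1, N_2, M_1, M_2$ simultaneously large enough to absorb the polynomial weights in $x$ and $y$ \emph{and} still leave an integrable tail in $z$. There is no genuine analytic difficulty because Schwartz decay beats every polynomial; the actual content of the lemma is just that the joint $(x,z)$-decay of $K_T$ combines with the joint $(z,y)$-decay of $K_S$ under partial integration over $z$ to give a function that decays rapidly in both $x$ and $y$ together with all of its derivatives.
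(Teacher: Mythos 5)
Your proof is correct and follows essentially the same strategy as the paper: justify differentiating under the $z$-integral by Dominated Convergence, then bound every Schwartz seminorm of $K_T\diamond K_S$ by absorbing the polynomial weights $x^a,y^b$ into the decay of $K_T$ in $x$ and of $K_S$ in $y$, keeping enough $z$-decay to make the integral finite. The only difference is bookkeeping: you use the pointwise polynomial-decay characterization of Schwartz functions ($\abs{\partial^\alpha K}\lesssim (1+\abs{x})^{-N_1}(1+\abs{z})^{-N_2}$) and balance the exponents $N_1,N_2,M_1,M_2$ by hand, whereas the paper treats $\Phi(x,y,z):=(2\pi)^{-d/2}K_T(x,z)K_S(z,y)$ as a single Schwartz function in three variables and invokes Lemma~\ref{S_and_Sprime:schwartz_functions:lem:Lp_estimate} to bound $\snorm{\sup_{x,y}\sabs{x^a y^b \partial_x^\alpha\partial_y^\beta\Phi(x,y,\cdot)}}_{L^1}$ directly by the seminorms $\norm{\Phi}_{a\alpha b\beta c 0}$. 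Your version is slightly more elementary (it avoids the detour through the $L^1$-estimate lemma), at the small cost of tracking four decay exponents explicitly; the paper's version is more compact once Lemma~\ref{S_and_Sprime:schwartz_functions:lem:Lp_estimate} is available. Either way the content is the same and your argument is complete.
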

\begin{proof}
	We need to estimate the seminorms of $K_T \diamond K_S$: let $a , \alpha , b , \beta \in \N_0^d$ be multiindices and for simplicity define $\Phi : (x,y,z) \mapsto (2\pi)^{- \nicefrac{d}{2}} \, K_T(x,z) \, K_S(z,y)$. Then $\Phi \in \Schwartz(\R^d \times \R^d \times \R^d)$ is a Schwartz function in all three variables. First, we need to show we can exchange differentiation with respect to $x$ and $y$ and integration with respect to $z$, \ie that for fixed $x$ and $y$ 
	\begin{align}
		x^a y^b &\partial_x^{\alpha} \partial_y^{\beta} (K_T \diamond K_S)(x,y) = 
		x^a y^b \partial_x^{\alpha} \partial_y^{\beta} \frac{1}{(2\pi)^{\nicefrac{d}{2}}} \int_{\R^d_x} \dd z \, K_T(x,z) \, K_S(z,y) 
		\notag \\
		&
		= \frac{1}{(2\pi)^{\nicefrac{d}{2}}} \int_{\R^d_x} \dd z \, x^a y^b \partial_x^{\alpha} \partial_y^{\beta} \bigl ( K_T(x,z) \, K_S(z,y) \bigr ) 
		= \int_{\R^d_x} \dd z \, x^a y^b \partial_x^{\alpha} \partial_y^{\beta} \Phi(x,y,z) 
		\label{weyl_calculus:weyl_product:eqn:composition_of_operator_kernels}
	\end{align}
	holds. We will do this by estimating $\babs{x^a y^b \partial_x^{\alpha} \partial_y^{\beta} \Phi(x,y,z)}$ uniformly in $x$ and $y$ by an integrable function $G(z)$ and then invoking Dominated Convergence. For fixed $x$ and $y$, we estimate the $L^1$ norm of $x^a y^b \partial_x^{\alpha} \partial_y^{\beta} \Phi(x,y,\cdot)$ from above by a finite number of seminorms of $\Phi(x,y,\cdot)$ with the help of Lemma~\ref{S_and_Sprime:schwartz_functions:lem:Lp_estimate}, 
	\begin{align*}
		\int_{\R^d} \dd z \, &\babs{x^a y^b \partial_x^{\alpha} \partial_y^{\beta} \Phi(x,y,z)} = \bnorm{x^a y^b \partial_x^{\alpha} \partial_y^{\beta} \Phi(x,y,\cdot)}_{L^1(\R^d)} 
		\\
		&\qquad \leq C_1 \, \sup_{z \in \R^d} \babs{x^a y^b \partial_x^{\alpha} \partial_y^{\beta} \Phi(x,y,z)} + C_2 \, \max_{\abs{c} = 2 n} \sup_{z \in \R^d} \babs{x^a y^b \partial_x^{\alpha} \partial_y^{\beta} \Phi(x,y,z)} 
		\\
		&\qquad = C_1 \, \bnorm{x^a y^b \partial_x^{\alpha} \partial_y^{\beta} \Phi(x,y,\cdot)}_{0 0} + C_2 \, \max_{\abs{c} = 2 n} \bnorm{x^a y^b \partial_x^{\alpha} \partial_y^{\beta} \Phi(x,y,\cdot)}_{c 0} 
		. 
	\end{align*}
	Now we interchange $\sup$ and integration with respect to $z$, 
	\begin{align*}
		\sup_{x,y \in \R^d} \int_{\R^d} \dd z \, \babs{x^a y^b \partial_x^{\alpha} \partial_y^{\beta} \Phi(x,y,z)} &\leq \int_{\R^d} \dd z \, \sup_{x,y \in \R^d} \babs{x^a y^b \partial_x^{\alpha} \partial_y^{\beta} \Phi(x,y,z)} 
		\\
		&= \Bnorm{\sup_{x,y \in \R^d} \babs{x^a y^b \partial_x^{\alpha} \partial_y^{\beta} \Phi(x,y,\cdot)}}_{L^1(\R^d)}
		, 
	\end{align*}
	which can be estimated from above by 
	\begin{align*}
		\Bnorm{\sup_{x,y \in \R^d} \babs{x^a y^b \partial_x^{\alpha} \partial_y^{\beta} \Phi(x,y,\cdot)}}_{L^1(\R^d)} 
		&\leq C_1 \, \sup_{x,y \in \R^d} \sup_{z \in \R^d} \babs{x^a y^b \partial_x^{\alpha} \partial_y^{\beta} \Phi(x,y,z)} 
		+ \\
		&\qquad 
		+ C_2 \, \max_{\abs{c} = 2 n} \sup_{x,y \in \R^d} \sup_{z \in \R^d} \babs{x^a y^b \partial_x^{\alpha} \partial_y^{\beta} \Phi(x,y,z)} 
		\\
		&= C_1 \, \bnorm{\Phi}_{a \alpha b \beta 0 0} + C_2 \, \max_{\abs{c} = 2 n} \bnorm{\Phi}_{a \alpha b \beta c 0} < \infty 
		. 
	\end{align*}
	Here $\bigl \{ \norm{\cdot}_{a \alpha b \beta c \gamma} \bigr \}_{a , \alpha , b , \beta , c , \gamma \in \N_0^d}$ is the family of seminorms associated to $\Schwartz(\R^d \times \R^d \times \R^d)$ which are defined by 
	\begin{align*}
		\norm{\Phi}_{a \alpha b \beta c \gamma} := \sup_{x,y,z \in \R^d} \babs{x^a y^b z^c \partial_x^{\alpha} \partial_x^{\beta} \partial_x^{\gamma} \Phi(x,y,z)} 
		. 
	\end{align*}
	This means we have found an \emph{integrable} function 
	\begin{align*}
		G(z) := \sup_{x,y \in \R^d} \babs{x^a y^b \partial_x^{\alpha} \partial_y^{\beta} \Phi(x,y,z)}
	\end{align*} 
	which dominates $x^a y^b \partial_x^{\alpha} \partial_y^{\beta} \Phi(x,y,z)$ for all $x , y \in \R^d$. Hence, exchanging differentiation and integration in equation~\eqref{weyl_calculus:weyl_product:eqn:composition_of_operator_kernels} is possible and we can bound the $\norm{\cdot}_{a \alpha b \beta}$ seminorm on $\Schwartz(\R^d \times \R^d)$ by 
	\begin{align*}
		\bnorm{K_T \diamond K_S}_{a \alpha b \beta} &= \sup_{x,y \in \R^d} \babs{x^a y^b \partial_x^{\alpha} \partial_y^{\beta} (K_T \diamond K_S)(x,y)} 
		\\
		&\leq C_1 \, \bnorm{\Phi}_{a \alpha b \beta 0 0} + C_2 \, \max_{\abs{c} = 2 n} \bnorm{\Phi}_{a \alpha b \beta c 0} < \infty 
		. 
	\end{align*}
	This means $K_T \diamond K_S \in \Schwartz(\R^d \times \R^d)$. 
\end{proof}
\begin{proof}[Theorem~\ref{weyl_calculus:weyl_product:thm:weyl_product}]
	Using the definition of $\Op$, we get 
	\begin{align*}
		\Op(f) \, \Op(g) &= \frac{1}{(2\pi)^{2d}} \int_{\pspace} \dd Y \int_{\pspace} \dd Z \, (\Fs f)(Y) \, (\Fs g)(Z) \, \WeylSys(Y) \, \WeylSys(Z) 
		\\
		&= \frac{1}{(2\pi)^{2d}} \int_{\pspace} \dd Y \int_{\pspace} \dd Z \, (\Fs f)(Y) \, (\Fs g)(Z) \, e^{i \frac{\eps}{2} \sigma(Y,Z)} \, \WeylSys(Y+Z) 
		\\
		&= \frac{1}{(2\pi)^d} \int_{\pspace} \dd Z \left ( \frac{1}{(2\pi)^d} \int_{\pspace} \dd Y \, e^{i \frac{\eps}{2} \sigma(Y,Z-Y)} \, (\Fs f)(Y) \, (\Fs g)(Z-Y) \right ) \WeylSys(Z) 
		. 
	\end{align*}
	We recognize the inner integral as $\bigl ( \Fs (f \Weyl g) \bigr )(Z)$ and thus we add a Fourier transform to obtain the first of the two equivalent forms of the product formula: 
	\begin{align*}
		(f \Weyl g)(X) &= \Fs \left ( \frac{1}{(2\pi)^d} \int_{\pspace} \dd Y \, e^{i \frac{\eps}{2} \sigma(Y,\, \cdot \, -Y)} \, (\Fs f)(Y) \, (\Fs g)(\, \cdot \, -Y) \right )(X) 
		\\
		&= \frac{1}{(2\pi)^{2d}} \int_{\pspace} \dd Z \int_{\pspace} \dd Y \, e^{i \sigma(X,Z)} \, e^{i \frac{\eps}{2} \sigma(Y,Z-Y)} \, (\Fs f)(Y) \, (\Fs g)(Z-Y)
		\\
		&= \frac{1}{(2\pi)^{2d}} \int_{\pspace} \dd Y \int_{\pspace} \dd Z \, e^{i \sigma(X,Y+Z)} \, e^{i \frac{\eps}{2} \sigma(Y,Z)} \, (\Fs f)(Y) \, (\Fs g)(Z) 
		. 
	\end{align*}
	It remains to show that $f \Weyl g$ is a Schwartz function: from Remark~\ref{weyl_calculus:wigner_transform:rem:kernel_map}, we know that the Weyl kernels $K_f$ and $K_g$ of $f$ and $g$ are in $\Schwartz(\R^d \times \R^d)$. Hence, we can write the operator product of $\Op(f)$ and $\Op(g)$ in terms of the associated integral kernels, 
	\begin{align*}
		\bigl ( \Op(f) \Op(g) \varphi \bigr )(x) &= \frac{1}{(2\pi)^{\nicefrac{d}{2}}} \int_{\R^d} \dd y \, \int_{\R^d} \dd z \, \frac{1}{(2\pi)^{\nicefrac{d}{2}}} \, K_f(x,z) \, K_g(z,y) \, \varphi(y) 
		\\
		&= \frac{1}{(2\pi)^{\nicefrac{d}{2}}} \int_{\R^d} \dd y \, (K_f \diamond K_g)(x,y) \, \varphi(y) 
		\overset{!}{=} \Op \bigl ( f \Weyl g \bigr )
		. 
	\end{align*}
	By Lemma~\ref{weyl_calculus:weyl_product:lem:composition_of_operator_kernels}, $K_f \diamond K_g \in \Schwartz(\R^d \times \R^d)$. We use the inverse of Weyl quantization, Proposition~\ref{weyl_calculus:wigner_transform:prop:inverse_weyl_quantization}, to conclude 
	\begin{align*}
		f \Weyl g &= \Op^{-1} \bigl ( \Op(f) \Op(g) \bigr ) 
		= \eps^d \, \WignerTrafo (K_f \diamond K_g) \in \Schwartz(\pspace)
	\end{align*}
	as $\WignerTrafo$ maps $\Schwartz(\R^d \times \R^d)$ bijectively onto $\Schwartz(\pspace)$. 
	
	To show that the first form of the Weyl product, equation~\eqref{weyl_calculus:weyl_product:eqn:weyl_product}, is equivalent to the second, we have to write out all Fourier tranforms and collect the exponentials properly to obtain 
	\begin{align*}
		(f \Weyl g)(X) &= \frac{1}{(2 \pi)^{4d}} \int_{\pspace} \dd Y \int_{\pspace} \dd Y' \int_{\pspace} \dd Z \int_{\pspace} \dd Z' \, e^{i \sigma(X,Y+Z)} e^{i \frac{\eps}{2} \sigma(Y,Z)} e^{i \sigma(Y,Y')} e^{i \sigma(Z,Z')} 
		\cdot \\
		&\qquad \qquad \qquad \qquad \qquad \qquad \qquad \qquad \cdot 
		f(Y') \, g(Z') 
		\\
		&= \frac{1}{(2 \pi)^{4d}} \int_{\pspace} \dd Y \int_{\pspace} \dd Y' \int_{\pspace} \dd Z \int_{\pspace} \dd Z' \, e^{i \sigma(Y,Y' - X)} e^{i \sigma(Z,Z' - X - \frac{\eps}{2} Y)} \, f(Y') \, g(Z') 
		\\
		&= \frac{(2\pi)^{2d}}{(2 \pi)^{4d}} \int_{\pspace} \dd Y \int_{\pspace} \dd Y' \int_{\pspace} \dd Z' \, e^{i \sigma(Y,Y' - X)} \, \delta \bigl ( Z' - X - \tfrac{\eps}{2} Y \bigr ) \, f(Y') \, g(Z') 
		\\
		&= \frac{1}{(2 \pi)^{2d}} \int_{\pspace} \dd Y \int_{\pspace} \dd Y' \, e^{i \sigma(Y,Y' - X)} \, f(Y') \, g \bigl ( X + \tfrac{\eps}{2} Y \bigr ) 
		. 
	\end{align*}
	Making one last change of variables, $\tilde{Z} := X + \tfrac{\eps}{2} Y$, we get the second form of the Weyl product, 
	\begin{align*}
		(f \Weyl g)(X) &= \frac{1}{(2 \pi)^{2d}} \frac{2^{2d}}{\eps^{2d}} \int_{\pspace} \dd \tilde{Z} \int_{\pspace} \dd Y' \, e^{i \sigma(\frac{2}{\eps}(\tilde{Z} - X),Y' - X)} \, f(Y') \, g(\tilde{Z}) 
		\\
		&= \frac{1}{(\pi \eps)^{2d}} \int_{\pspace} \dd Y \int_{\pspace} \dd Z \, e^{- i \frac{2}{\eps} \sigma(X - Y,X - Z)} \, f(Y) \, g(Z) 
		. 
	\end{align*}
	This concludes the proof. 
\end{proof}
\begin{example}
	Let us compute $x \Weyl \xi$ which we then Weyl quantize using problem~23~(i). Since neither $x$ nor $\xi$ are Schwartz functions, we will dispense with mathematical rigor for now: in the distributional sense, we see from 
	\begin{align*}
		(\xi_l \Weyl x_l)(x,\xi) &= \frac{1}{(2\pi)^{2d}} \int_{\pspace} \dd Y \int_{\pspace} \dd Z \, e^{i \sigma(X,Y+Z)} e^{i \frac{\eps}{2} \sigma(Y,Z)} \, (\Fs \eta'_l)(Y) \, (\Fs z'_l)(Z) 
	\end{align*}
	that we need to compute the symplectic Fourier transforms of the factors. As distributions, their Fourier transforms exist and for the first, we get 
	\begin{align*}
		(\Fs \eta'_l)(Y) &= \frac{1}{(2 \pi)^d} \int_{\pspace} \dd Y' \, e^{i \sigma(Y,Y')} \, \eta'_l 
		= \frac{1}{(2 \pi)^d} \int_{\R^d_x} \dd y' \int_{\R^d_{\xi}} \dd \eta' \, \eta'_l e^{i (\eta \cdot y' - y \cdot \eta')} 
		\\
		&
		= \frac{1}{(2 \pi)^d} \int_{\R^d_x} \dd y' \int_{\R^d_{\xi}} \dd \eta' \, (+i) \partial_{y_l}  e^{i (\eta \cdot y' - y \cdot \eta')} 
		= i \, (2\pi)^d \, \partial_{y_l} \delta(Y) 
		. 
	\end{align*}
	Similarly, we calculate the second Fourier transform to be $(\Fs z'_l)(Z) = -i \, (2\pi)^d \, \partial_{\zeta_l} \delta(Z)$. Plugged into the product formula, we obtain 
	\begin{align*}
		(\xi_l \Weyl x_l)(x,\xi) &= \frac{1}{(2\pi)^{2d}} \int_{\pspace} \dd Y \int_{\pspace} \dd Z \, e^{i \sigma(X,Y+Z)} e^{i \frac{\eps}{2} \sigma(Y,Z)} \, (-i^2) \, (2\pi)^{2d} \, \partial_{y_l} \delta(Y) \, \partial_{\zeta_l} \delta(Z) 
		\\
		&= - \int_{\pspace} \dd Y \int_{\pspace} \dd Z \, \partial_{y_l} \Bigl ( e^{i \sigma(X,Y+Z)} e^{i \frac{\eps}{2} \sigma(Y,Z)} \Bigr ) \, \delta(Y) \, \partial_{\zeta_l} \delta(Z) 
		\\
		&= + \int_{\pspace} \dd Y \int_{\pspace} \dd Z \, \partial_{\zeta_l} \Bigl ( \bigl ( i \xi_l - i \tfrac{\eps}{2} \zeta_l \bigr ) \, e^{i \sigma(X,Y+Z)} e^{i \frac{\eps}{2} \sigma(Y,Z)} \Bigr ) \, \delta(Y) \, \delta(Z) 
		\\
		&= \int_{\pspace} \dd Y \int_{\pspace} \dd Z \, \Bigl ( (-i^2) \bigl ( \xi_l - \tfrac{\eps}{2} \zeta_l \bigr ) \bigl ( x_l + \tfrac{\eps}{2} y_l \bigr ) - i \tfrac{\eps}{2} \Bigr ) \, e^{i \sigma(X,Y+Z)} e^{i \frac{\eps}{2} \sigma(Y,Z)} 
		\cdot \\
		&\qquad \qquad \qquad \quad \cdot 
		\delta(Y) \, \delta(Z) 
		\\
		&= \bigl ( \xi_l - \tfrac{\eps}{2} 0 \bigr ) \bigl ( x_l + \tfrac{\eps}{2} 0 \bigr ) - \eps \tfrac{i}{2} 
		= x_l \, \xi_l - \eps \tfrac{i}{2} 
		. 
	\end{align*}
	By problem~23~(i) and $\Op(1) = \id_{L^2}$, we can compute the Weyl quantization of this function explicitly, 
	\begin{align*}
		\Op \bigl ( \xi \Weyl x \bigr ) &= \Op( x \cdot \xi ) - \eps \, d \tfrac{i}{2} \, \id_{L^2} 
		\\ 
		&= \Qe \cdot \Pe - \eps \, d \tfrac{i}{2} \, \id_{L^2} - \eps \, d \tfrac{i}{2} \, \id_{L^2} 
		\\
		&= \Qe \cdot \Pe - \eps \, d i \, \id_{L^2} 
		= \Pe \cdot \Qe 
		. 
	\end{align*}
	This is exactly what we have expected. \marginpar{\small 2009.12.22}
\end{example}
The Weyl product has the following useful properties which can be easily proven: 
\begin{prop}[Properties of Weyl product]
	Let $f , g , h \in \Schwartz(\pspace)$ and $\alpha \in \C$. Then the Weyl product has the following properties: 
	\begin{enumerate}[(i)]
		\item The Weyl product is bilinear, \ie $\bigl ( f + \alpha g \bigr ) \Weyl h = f \Weyl h + \alpha \bigl ( g \Weyl h \bigr )$ and $f \Weyl \bigl ( g + \alpha h \bigr ) = f \Weyl g + \alpha \bigl ( f \Weyl h \bigr )$ hold. 
		\item The Weyl product is associative, \ie $\bigl ( f \Weyl g \bigr ) \Weyl h = f \Weyl \bigl ( g \Weyl h \bigr )$. 
		\item If $f \equiv f(x)$ and $g \equiv g(x)$ are functions of position, then the Weyl product reduces to the pointwise product, $f \Weyl g = f \, g$. 
		\item If $f \equiv f(\xi)$ and $g \equiv g(\xi)$ are functions of momentum, then the Weyl product reduces to the pointwise product, $f \Weyl g = f \, g$. 
	\end{enumerate}
\end{prop}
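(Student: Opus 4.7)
The plan is to leverage two structural results already established: the intertwining identity $\Op(f\Weyl g)=\Op(f)\,\Op(g)$ (Theorem~\ref{weyl_calculus:weyl_product:thm:weyl_product}) and the bijectivity of $\Op$ on Schwartz functions (Proposition~\ref{weyl_calculus:wigner_transform:prop:inverse_weyl_quantization}). Together with linearity of $\Op$ (Theorem~\ref{weyl_calculus:weyl_quantization:thm:properties_weyl_quantization}), these reduce (i) and (ii) to the corresponding, entirely trivial properties of operator composition on $L^2(\R^d)$. Items (iii) and (iv) require a brief direct calculation from the integral formula for $\Weyl$.

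For (i), I would start from $\Op\bigl((f+\alpha g)\Weyl h\bigr)=\Op(f+\alpha g)\,\Op(h)$, distribute using linearity of $\Op$ and bilinearity of operator composition to obtain $\Op(f)\Op(h)+\alpha\,\Op(g)\Op(h)$, rewrite this as $\Op(f\Weyl h+\alpha\,g\Weyl h)$, and finally invoke injectivity of $\Op$ on $\Schwartz(\pspace)$ to strip the quantization. Linearity in the right argument is symmetric. For (ii), I would compute $\Op\bigl((f\Weyl g)\Weyl h\bigr)=\Op(f\Weyl g)\Op(h)=\Op(f)\Op(g)\Op(h)=\Op(f)\Op(g\Weyl h)=\Op\bigl(f\Weyl(g\Weyl h)\bigr)$ using associativity of operator composition, then again appeal to injectivity of $\Op$.

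For (iii) and (iv), I would work from the second form of the product in Theorem~\ref{weyl_calculus:weyl_product:thm:weyl_product},
\begin{align*}
    (f\Weyl g)(X)=\frac{1}{(\pi\eps)^{2d}}\int_{\pspace}\!\dd Y\int_{\pspace}\!\dd Z\,e^{-i\frac{2}{\eps}\sigma(X-Y,X-Z)}\,f(Y)\,g(Z),
\end{align*}
and expand the exponent as $-i\frac{2}{\eps}\bigl[(\xi-\eta)\cdot(x-z)-(x-y)\cdot(\xi-\zeta)\bigr]$. For (iii), when $f$ and $g$ depend only on position, the variables $\eta$ and $\zeta$ appear only in the exponential; performing these two integrals (using $\int\dd\eta\,e^{i\frac{2}{\eps}\eta\cdot(x-z)}=(\pi\eps)^d\delta(x-z)$ and the analogous identity in $\zeta$) collapses the remaining position integrals via the delta functions to $y=z=x$. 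The surviving pure-$\xi$ exponentials evaluate to $1$ on the support of the deltas, and the prefactors cancel, leaving $f(x)\,g(x)$. Claim (iv) is proved by the symmetric computation, integrating first over $y$ and $z$.

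The only genuine obstacle is a bookkeeping one: the statement of (iii)–(iv) implicitly requires extending $\Weyl$ beyond $\Schwartz(\pspace)$, since a nonzero $f$ depending only on $x$ is never in $\Schwartz(\pspace)$. I would handle this by noting that the integral computation above makes sense distributionally (the $\eta$- and $\zeta$-integrals yield bona fide tempered distributions), and that once $\Op$ is extended to such symbols, the identity $\Op(f(x))=f(\Qe)$ together with $M_f M_g=M_{fg}$ on the common invariant dense domain $\Schwartz(\R^d)$ provides an alternative, operator-theoretic verification that $f\Weyl g=f\cdot g$ in both cases — the momentum case following after conjugation by the Fourier transform, which turns $\Op(g(\xi))$ into multiplication in the momentum representation.
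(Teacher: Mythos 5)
The paper leaves the proof of this proposition as an exercise, so there is no authoritative argument to compare against; your proposal stands on its own, and it is correct. The reduction of (i) and (ii) to linearity and associativity of operator composition is exactly the right move: Theorem~\ref{weyl_calculus:weyl_product:thm:weyl_product} gives $\Op(f \Weyl g) = \Op(f)\Op(g)$ with $f \Weyl g \in \Schwartz(\pspace)$, and Proposition~\ref{weyl_calculus:wigner_transform:prop:inverse_weyl_quantization} supplies the injectivity of $\Op$ on $\Schwartz(\pspace)$ needed to strip the quantization at the end. For (iii) and (iv) I checked the details of the integral computation: writing $\sigma(X-Y,X-Z) = y\cdot(\xi-\zeta) + z\cdot(\eta-\xi) + x\cdot(\zeta-\eta)$, integrating out $\eta$ and $\zeta$ in (iii) produces $(\pi\eps)^{d}\delta(x-z)$ and $(\pi\eps)^{d}\delta(x-y)$ which cancel the prefactor exactly, and the surviving exponent vanishes on the support of the deltas; the same works with $y,z$ swapped for $\eta,\zeta$ in (iv). Your observation that a nonzero function of position or momentum alone is never in $\Schwartz(\pspace)$ is the one genuine subtlety here, and you handle it correctly: the statement as given is strictly vacuous under the stated hypotheses, so one must either interpret the integrals distributionally or pass to the operator side, where $\Op(f(x)) = f(\Qe)$ is plain multiplication and $\Op(g(\xi))$ becomes multiplication in the momentum representation after conjugation by $\Fourier$. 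Both routes you sketch close the gap.

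One small remark: for (iii) you do not actually need to extend $\Op$ or argue distributionally if you are willing to restrict to $f,g \in \BCont^\infty(\R^d_x)$ — the example following Proposition~\ref{weyl_calculus:weyl_quantization:prop:application_Weyl_quantization} already shows $\Op(V) = V(\Qe)$ for such $V$, and multiplication operators on $L^2$ compose pointwise. The momentum case then follows by conjugation with $\Fourier$ as you say. This sidesteps the delta-function manipulations, which, while entirely standard, are the only place in your argument that is not fully rigorous at the level of care the rest of the lecture maintains.
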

The proofs are left as an exercise. 


\section{Asymptotics} 
\label{weyl_calculus:asymptotics}

The most important reason why \emph{physicists} should care about Weyl calculus is that there is an \emph{asymptotic expansion} of the product, \ie we can write 
\begin{align*}
	f \Weyl g \asymp \sum_{n = 0}^{\infty} \eps^n (f \Weyl g)_{(n)} 
\end{align*}
where all the terms $(f \Weyl g)_{(n)}$ are known explicitly. In general, this expansion does not converge to any function, but instead for any $N \in \N$, we can write 
\begin{align*}
	f \Weyl g = \sum_{n = 0}^N \eps^n (f \Weyl g)_{(n)} + \eps^{N+1} R_N = \sum_{n = 0}^N \eps^n (f \Weyl g)_{(n)} + \order(\eps^{N+1}) 
	. 
\end{align*}
As $\eps \to 0$, the remainder converges to $0$ at least as fast as $\eps^{N+1}$. Expansions that do not converge \emph{are the rule} rather than the exception, \eg when one sums up Feynman diagrams of different processes, they are often sorted by powers of some small coupling constant, $\alpha \simeq \nicefrac{1}{137}$, for instance. One has \emph{no reason to believe} that summing over \emph{all} terms, one gets something finite! This is where optimal truncation comes into play: instead of summing \emph{all} terms, one stops at some $N(\eps)$ where the error is minimal. Before we state the main theorem of this section, we will introduce the Landau symbols (``little and big O notation''). 
\begin{defn}[Landau symbols]
	Let $f , g : \R^d \longrightarrow \C$. We say $f(x) = \order \bigl ( g(x) \bigr )$ as $x \to x_0$ if and only if 
	\begin{align*}
		\limsup_{x \to x_0} \frac{\abs{f(x)}}{\abs{g(x)}} < \infty 
	\end{align*}
	or quivalently if there exist $M \geq 0$ and $\delta > 0$ such that $\abs{f(x)} \leq M \abs{g(x)}$ for all $\abs{x - x_0} < \delta$. 
	
	We say $f(x) = o \bigl ( g(x) \bigr )$ as $x \to x_0$ if and only if 
	\begin{align*}
		\lim_{x \to x_0} \frac{f(x)}{g(x)} = 0 
		. 
	\end{align*}
\end{defn}
\begin{example}
	We have $x^2 = \order(x)$ and $x = \order(x)$ as $x \to 0$. However $x \neq o(x)$, but $x^2 = o(x)$ as $x \to 0$ since 
	\begin{align*}
		\lim_{x \to 0} \frac{x^2}{x} = \lim_{x \to 0} x = 0 
		. 
	\end{align*}
\end{example}
Let us now formulate the main theorem of this section: 
\begin{thm}\label{weyl_calculus:asymptotics:thm:asymptotic_expansion}
	Let $f , g \in \Schwartz(\pspace)$ and $\eps < 1$. Then the Weyl product can be expanded asymptotically in $\eps$, \ie for any $N \in \N_0$ we have 
	\begin{align}
		f \Weyl g = \sum_{n = 0}^N \eps^n (f \Weyl g)_{(n)} + \eps^{N+1} R_N = \sum_{n = 0}^N \eps^n (f \Weyl g)_{(n)} + \order(\eps^{N+1}) 
		. 
	\end{align}
	All the terms of the expansion are Schwartz functions, $(f \Weyl g)_{(n)} \in \Schwartz(\pspace)$ for all $n \in \N_0$, and are known explicitly, 
	\begin{align}
		(f \Weyl g)_{(n)}(X) &= \frac{1}{n!} \frac{i^n}{2^n} \bigl ( \nabla_y \cdot \nabla_{\zeta} - \nabla_{\eta} \cdot \nabla_z \bigr )^n f(Y) \, g(Z) \Big \vert_{Y = X = Z} 
		. 
		\label{weyl_calculus:asymptotics:eqn:asymptotic_expansion_nth_order_term}
	\end{align}
	The remainder $R_N \in \Schwartz(\pspace)$ as given by equation~\eqref{weyl_calculus:asymptotics:eqn:remainder} is also a rapidly decreasing function. The first two terms are given by 
	\begin{align}
		f \Weyl g = f \, g - \eps \tfrac{i}{2} \bigl \{ f , g \bigr \} + \order(\eps^2)
		\label{weyl_calculus:asymptotics:eqn:asymptotic_expansion_first_terms}
	\end{align}
	where $\bigl \{ f , g \bigr \} = \sum_{j = 1}^d \bigl ( \partial_{\xi_j} f \, \partial_{x_j} g - \partial_{x_j} f \, \partial_{\xi_j} g \bigr )$ is the Poisson bracket. 
\end{thm}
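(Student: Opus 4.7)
The plan is to start from the first form of the Weyl product established in Theorem~\ref{weyl_calculus:weyl_product:thm:weyl_product},
\begin{align*}
    (f \Weyl g)(X) = \frac{1}{(2\pi)^{2d}} \int_{\pspace} dY \int_{\pspace} dZ \, e^{i\sigma(X,Y+Z)} \, e^{i\frac{\eps}{2}\sigma(Y,Z)} \, (\Fs f)(Y) \, (\Fs g)(Z),
\end{align*}
and to Taylor-expand the \emph{twisting factor} $e^{i\frac{\eps}{2}\sigma(Y,Z)}$ in powers of $\eps$ with an integral remainder,
\begin{align*}
    e^{i\frac{\eps}{2}\sigma(Y,Z)} = \sum_{n=0}^{N} \frac{1}{n!}\Bigl(\tfrac{i\eps}{2}\sigma(Y,Z)\Bigr)^{n} + \tfrac{\eps^{N+1}}{N!}\bigl(\tfrac{i}{2}\bigr)^{N+1}\sigma(Y,Z)^{N+1} \int_{0}^{1}(1-s)^{N} e^{is\frac{\eps}{2}\sigma(Y,Z)} \, ds,
\end{align*}
with $\sigma(Y,Z) = \eta\cdot z - y\cdot \zeta$. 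Substituting into the integral and collecting orders of $\eps$ yields the partial sum plus a remainder; it then remains (i) to identify the $n$th term with the differential operator expression \eqref{weyl_calculus:asymptotics:eqn:asymptotic_expansion_nth_order_term}, and (ii) to control the remainder.

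For the identification step, I would observe that under the involutive symplectic Fourier transform $\Fs$, multiplication by the components of $Y$ and $Z$ corresponds to differentiation: writing $f(Y) = (\Fs\Fs f)(Y) = (2\pi)^{-d}\int e^{i\sigma(Y,Y')}(\Fs f)(Y')\, dY'$ and similarly for $g$, direct computation of $\partial_{y_j}f$, $\partial_{\eta_j}f$, $\partial_{z_j}g$, $\partial_{\zeta_j}g$ and evaluation at $Y=Z=X$ shows
\begin{align*}
    B\bigl[f(Y)g(Z)\bigr]\Big\vert_{Y=Z=X} = \tfrac{1}{(2\pi)^{2d}}\int_\pspace dY' \int_\pspace dZ' \, \sigma(Y',Z')\, e^{i\sigma(X,Y'+Z')}(\Fs f)(Y')(\Fs g)(Z'),
\end{align*}
where $B := \nabla_y \cdot \nabla_\zeta - \nabla_\eta \cdot \nabla_z$. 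Iterating this identity $n$ times (the distinct groups of variables making all partial derivatives commute and the symbol $\sigma(Y,Z)$ simply getting powered up) gives the cleaner equality
\begin{align*}
    \tfrac{1}{(2\pi)^{2d}}\int\int \sigma(Y,Z)^{n}\, e^{i\sigma(X,Y+Z)}(\Fs f)(Y)(\Fs g)(Z)\, dY\, dZ = B^{n}\bigl[f(Y)g(Z)\bigr]\Big\vert_{Y=Z=X},
\end{align*}
which is exactly formula~\eqref{weyl_calculus:asymptotics:eqn:asymptotic_expansion_nth_order_term} after multiplying by $\frac{1}{n!}(i/2)^n$. Since $B^n[f\otimes g]$ is a linear combination of products $(\partial_x^\alpha \partial_\xi^\beta f)(\partial_x^\gamma \partial_\xi^\delta g)$, each $(f \Weyl g)_{(n)}$ lies in $\Schwartz(\pspace)$ as the pointwise product of Schwartz functions. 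A quick check for $n=1$ recovers $(f\Weyl g)_{(1)} = -\tfrac{i}{2}\{f,g\}$, hence \eqref{weyl_calculus:asymptotics:eqn:asymptotic_expansion_first_terms}.

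The hard part will be the remainder. Explicitly,
\begin{align*}
    \eps^{N+1} R_N(X) = \tfrac{1}{N!}\bigl(\tfrac{i\eps}{2}\bigr)^{N+1} \int_0^1 (1-s)^N \, \mathcal{I}_s(X) \, ds,
\end{align*}
with
\begin{align*}
    \mathcal{I}_s(X) := \tfrac{1}{(2\pi)^{2d}}\int\int e^{i\sigma(X,Y+Z)} e^{is\frac{\eps}{2}\sigma(Y,Z)}\, \sigma(Y,Z)^{N+1} (\Fs f)(Y) (\Fs g)(Z) \, dY \, dZ.
\end{align*}
The point is that $\mathcal{I}_s$ has exactly the shape of the Weyl product \eqref{weyl_calculus:weyl_product:eqn:weyl_product} at parameter $s\eps \in [0,\eps]$ applied to the pair of Schwartz functions whose symplectic Fourier transforms are the components of $\sigma(Y,Z)^{N+1}(\Fs f)(\Fs g)$; equivalently, by the same Fourier identity as above, $\mathcal{I}_s$ can be written as a finite linear combination of $s\eps$-Weyl products of explicit derivatives of $f$ and $g$. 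I would then invoke the Schwartz-stability of the Weyl product (Theorem~\ref{weyl_calculus:weyl_product:thm:weyl_product}) together with a careful inspection of the proof to check that all the estimates on the seminorms of a $\theta$-Weyl product are uniform in $\theta \in [0,1]$ (which is routine since every seminorm bound there reduces to estimates on the Schwartz seminorms of the kernels $K_f, K_g$, and those do not depend on the twisting parameter). This gives $\mathcal{I}_s \in \Schwartz(\pspace)$ with seminorms bounded uniformly in $s \in [0,1]$ and in $\eps < 1$; integration in $s$ preserves this, so $R_N \in \Schwartz(\pspace)$ and every seminorm is $\order(1)$ as $\eps \to 0$, yielding the claimed $\order(\eps^{N+1})$ bound. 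The main technical obstacle is precisely this last uniformity statement, which forces one to revisit the proof of Schwartz-stability of $\Weyl$ and verify that the constants produced there are insensitive to the twist parameter.
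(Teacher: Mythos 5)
Your Steps 1 and 2 match the paper's proof closely: Taylor-expand the twisting factor with integral remainder, then use the fact that the symplectic Fourier transform $\Fs$ is an involution to trade the polynomial $\sigma(Y,Z)^n$ against the bidifferential operator $B = \nabla_y \cdot \nabla_\zeta - \nabla_\eta \cdot \nabla_z$. Your identification of the $n$th order term is correct (your single-$B$ computation checks out; the paper does the same via the multinomial expansion of $\sigma(Y,Z)^n$ and the rule $\Fs \bigl ( (-i\partial_\eta)^b (+i\partial_y)^a f \bigr ) = y^b \eta^a \Fs f$).

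Where you and the paper part ways is Step 3, and there is a genuine gap in your proposed route. You want to show $R_N \in \Schwartz(\pspace)$ with $\eps$-uniform seminorm bounds by writing $\mathcal{I}_s$ as a finite sum of $s\eps$-Weyl products and invoking Schwartz-stability (Theorem~\ref{weyl_calculus:weyl_product:thm:weyl_product}), arguing that the constants there are ``insensitive to the twist parameter'' because the seminorms of $K_f, K_g$ ``do not depend on the twisting parameter.'' This last assertion is false: the Weyl kernel map in Proposition~\ref{weyl_calculus:wigner_transform:prop:inverse_weyl_quantization} is
\begin{align*}
    K_f(x,y) = \eps^{-d} \, (\Fourier_2 f) \bigl ( \tfrac{1}{2}(x+y) , \tfrac{y - x}{\eps} \bigr ) ,
\end{align*}
which depends on $\eps$ explicitly both through the $\eps^{-d}$ prefactor and through the scaling of the second argument, and whose $\Schwartz(\R^d \times \R^d)$ seminorms blow up as $\eps \to 0$. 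Lemma~\ref{weyl_calculus:weyl_product:lem:composition_of_operator_kernels} bounds seminorms of $K_f \diamond K_g$ in terms of seminorms of the kernels, so without tracking whether the $\eps^d$ from the Wigner transform cancels the $\eps^{-d}$ factors (there are cancellations, but checking them is not routine), your uniformity claim is unsupported. The paper avoids this entirely by a cleaner subtraction: since $f \Weyl g \in \Schwartz(\pspace)$ by Theorem~\ref{weyl_calculus:weyl_product:thm:weyl_product} and each $(f \Weyl g)_{(n)} \in \Schwartz(\pspace)$ from Step 2, the remainder
\begin{align*}
    R_N = f \Weyl g - \sum_{n = 0}^N \eps^n (f \Weyl g)_{(n)}
\end{align*}
is automatically a Schwartz function as a finite linear combination of Schwartz functions; the $\order(\eps^{N+1})$ statement then follows by inspecting the explicit integral form of $\tilde R_N$, where the factor $\int_0^1 \dd s \, (1-s)^N e^{is\frac{\eps}{2}\sigma(Y,Z)}$ is bounded by $\frac{1}{N+1}$ uniformly in $\eps$. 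You should either adopt this subtraction argument (recommended — it is one line) or, if you insist on the direct route via $\theta$-Weyl products, actually carry out the uniformity check rather than asserting it.
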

\begin{proof}
	\textbf{Step 1: Expanding the twister. }
	Let $N \in \N_0$ be arbitrary, but fixed. Then we can expand the twisting factor in they Weyl product,  
	\begin{align*}
		e^{i \frac{\eps}{2} \sigma(Y,Z)} = \sum_{n = 0}^N \eps^n \frac{1}{n!} \frac{i^n}{2^n} \bigl ( \sigma(Y,Z) \bigr )^n + \tilde{R}_N(Y,Z) 
		, 
	\end{align*}
	where 
	\begin{align*}
		\tilde{R}_N(Y,Z) &= \frac{1}{N!} \bigl ( i \tfrac{\eps}{2} \sigma(Y,Z) \bigr )^{N+1} \, \int_0^1 \dd s \, (1 - s)^{N} \, e^{i s \frac{\eps}{2} \sigma(Y,Z)} 
		\\
		&= \eps^{N+1} \frac{1}{N!} \frac{i^{N+1}}{2^{N+1}} \bigl ( \sigma(Y,Z) \bigr )^{N+1} \, \int_0^1 \dd s \, (1 - s)^{N} \, e^{i s \frac{\eps}{2} \sigma(Y,Z)} 
		. 
	\end{align*}
	is the remainder of the Taylor expansion. If we plug this into the product formula, 
	\begin{align}
		(f \Weyl g)(X) &= \frac{1}{(2\pi)^{2d}} \int_{\pspace} \dd Y \int_{\pspace} \dd Z \, e^{i \sigma(X,Y+Z)} \, 
		 \Bigl ( \mbox{$\sum_{n = 0}^N$} \eps^n \tfrac{1}{n!} \tfrac{i^n}{2^n} \bigl ( \sigma(Y,Z) \bigr )^n + \tilde{R}_N(Y,Z) \Bigr ) 
		\cdot \notag \\
		&\qquad \qquad \qquad \qquad \qquad \qquad \qquad \qquad \cdot  
		(\Fs f)(Y) \, (\Fs g)(Z) 
		\notag \\
		&=: \sum_{n = 0}^N \eps^n (f \Weyl g)_{(n)}(X) + \eps^{N+1} R_N(X) 
		\label{weyl_calculus:asymptotics:eqn:remainder}
		, 
	\end{align}
	we can define the $n$th order terms and the remainder. 
	\medskip
	
	\noindent
	\textbf{Step 2: Treating $(f \Weyl g)_{(n)}$. }
	First of all, we note that 
	\begin{align*}
		\bigl ( \sigma(Y,Z) \bigr )^n &= \bigl ( \eta \cdot z - y \cdot \zeta \bigr )^n = \sum_{\abs{a} + \abs{b} = n} \frac{(-1)^{\abs{b}}}{a! b!} \eta^a y^b z^a \zeta^b 
	\end{align*}
	is just a polynomial in $y$, $\eta$, $z$ and $\zeta$. Since $f \in \Schwartz(\pspace)$, by Theorem~\ref{S_and_Sprime:schwartz_functions:thm:Fourier_is_bijection} 
	\begin{align*}
		y^b \eta^a (\Fs f) = \Fs \bigl ( (- i \partial_{\eta'})^b (+ i \partial_{y'})^a f \bigr ) \in \Schwartz(\pspace)
	\end{align*}
	holds for all multiindices $a , b \in \N_0^d$ and similarly for $g$. This means, the integral expression for $(f \Weyl g)_{(n)}$ exists for all $X \in \pspace$ and reduces to 
	\begin{align*}
		(f \Weyl g)_{(n)}(X) &= \frac{1}{n!} \frac{i^n}{2^n} \sum_{\abs{a} + \abs{b} = n} \frac{(-1)^{\abs{b}}}{a! b!} \frac{1}{(2 \pi)^{2d}} \int_{\pspace} \dd Y \int_{\pspace} \dd Z \, e^{i \sigma(X,Y+Z)} 
		\cdot \\
		&\qquad \qquad \qquad \qquad \qquad \qquad \qquad \qquad \quad \cdot 
		\eta^a y^b (\Fs f)(Y) \, z^a \zeta^b (\Fs g)(Z) 
		\\
		&= \frac{1}{n!} \frac{i^n}{2^n} \sum_{\abs{a} + \abs{b} = n} \frac{(-1)^{\abs{b}}}{a! b!} \frac{1}{(2 \pi)^{2d}} \int_{\pspace} \dd Y \int_{\pspace} \dd Z \, e^{i \sigma(X,Y+Z)} 
		\cdot \\
		&\qquad \qquad \quad \cdot 
		\Bigl ( \Fs \bigl ( (- i \partial_{\eta'})^b (+ i \partial_{y'})^a f \bigr ) \Bigr )(Y) 
		\Bigl ( \Fs \bigl ( (- i \partial_{\zeta'})^a (+ i \partial_{z'})^b g \bigr ) \Bigr )(Z) 
		. 
	\end{align*}
	The remaining integral is nothing but the symplectic Fourier transform in $Y$ and $Z$. The symplectic Fourier transform is its own inverse, hence we get expression~\eqref{weyl_calculus:asymptotics:eqn:asymptotic_expansion_nth_order_term}, 
	\begin{align*}
		\ldots &= \frac{1}{n!} \frac{i^n}{2^n} \sum_{\abs{a} + \abs{b} = n} \frac{(-1)^{\abs{b}}}{a! b!} \Bigl ( \Fs^2 \bigl ( (- i \partial_{\eta'})^b (+ i \partial_{y'})^a f \bigr ) \Bigr )(X) \, \Bigl ( \Fs^2 \bigl ( (- i \partial_{\zeta'})^a (+ i \partial_{z'})^b g \bigr ) \Bigr )(X) 
		\\
		&= \frac{1}{n!} \frac{i^n}{2^n} \sum_{\abs{a} + \abs{b} = n} \frac{(-1)^{\abs{b}}}{a! b!} \bigl ( (- i \partial_{\xi})^b (+ i \partial_x)^a f \bigr ) (X) \, \bigl ( (- i \partial_{\xi})^a (+ i \partial_{x})^b g \bigr )(X) 
		\\
		&= \frac{1}{n!} \frac{i^n}{2^n} \bigl ( \nabla_y \cdot \nabla_{\zeta} - \nabla_{\eta} \cdot \nabla_z \bigr )^n f(Y) \, g(Z) \Big \vert_{Y = X = Z} 
		. 
	\end{align*}
	Since each of the factors consists of derivatives of Schwartz functions, $(f \Weyl g)_{(n)} \in \Schwartz(\pspace)$ is also a Schwartz function. 
	\medskip
	
	\noindent
	\textbf{Step 3: Treating the remainder $R_N$. }
	If we combine Theorem~\ref{weyl_calculus:weyl_product:thm:weyl_product}, $f \Weyl g \in \Schwartz(\pspace)$ with Step~2, $(f \Weyl g)_{(n)} \in \Schwartz(\pspace)$, we conclude that the remainder is also a Schwartz function, 
	\begin{align*}
		R_N = f \Weyl g - \sum_{n = 0}^N \eps^n (f \Weyl g)_{(n)} \in \Schwartz(\pspace) 
		. 
	\end{align*}
	From the explicit expression for the remainder, equation~\eqref{weyl_calculus:asymptotics:eqn:remainder}, we also see that it is indeed of order $\order(\eps^{N+1})$. This concludes the proof. 
\end{proof}
\begin{example}
	In the previous section, we have calculated $\xi \Weyl x$. If we use the asymptotic expansion -- which in this case is exact, we can obtain this result with much less work. Pluggin in equation~\eqref{weyl_calculus:asymptotics:eqn:asymptotic_expansion_first_terms}, we get 
	\begin{align*}
		\bigl ( \xi \Weyl x \bigr )(x,\xi) &= \xi \cdot x - \eps \, \frac{i}{2} \sum_{l = 1}^d \bigl \{ \xi_l , x_l \bigr \} + \order(\eps^2) 
		\\
		&
		= x \cdot \xi - \eps \, \frac{i}{2} \sum_{l,j = 1}^d \bigl ( \partial_{\xi_j} \xi_l \, \partial_{x_j} x_l - \partial_{x_j} \xi_l \, \partial_{\xi_j} x_l \bigr ) + \order(\eps^2) 
		\\
		&
		= x \cdot \xi - \eps \, \frac{i}{2} \sum_{l,j = 1}^d \delta_{lj} + \order(\eps^2) 
		= x \cdot \xi - \eps \, d \tfrac{i}{2} + \order(\eps^2) 
		. 
	\end{align*}
	The remainder, however, is exactly $0$: the factor $\bigl ( \sigma(Y,Z) \bigr )^2$ is a polynomial of second order in $y \cdot \zeta$ and $\eta \cdot z$ and leads to \emph{two} derivatives with respect to position and momentum. But $x$ and $\xi$ are linear so their second and higher-order derivatives vanish identically. Hence, we have shown 
	\begin{align*}
		\xi \Weyl x = \xi \cdot x - \eps \, d \tfrac{i}{2} 
		. 
	\end{align*}
	Generally, if one of the factors is a polynomial of $k$th degree, the asymptotic expansion terminates after finitely many terms and is exact. 
\end{example}
In Chapter~\ref{frameworks:properties_of_quantizations}, we have emphasized the importance that the the Moyal commutator 
\begin{align*}
	\bigl [ f , g \bigr ]_{\Weyl} := f \Weyl g - g \Weyl f 
\end{align*}
vanishes as $\eps \to 0$. Using equation~\eqref{weyl_calculus:asymptotics:eqn:asymptotic_expansion_first_terms}, we immediately get 
\begin{align*}
	\bigl [ f , g \bigr ]_{\Weyl} = - i \eps \bigl \{ f , g \bigr \} + \order(\eps^3) 
	. 
\end{align*}
The error is of \emph{third} order as all contributions from even powers vanish. This can be traced back to 
\begin{align*}
	\bigl ( \sigma(Y,Z) \bigr )^{2k} = \bigl ( - \sigma(Z,Y) \bigr )^{2k} = \bigl ( \sigma(Z,Y) \bigr )^{2 k} 
	&&
	\forall k \in \N_0 
\end{align*}
and the commutativity of multiplication in $\C$. We mention the latter explicitly as we will quantize matrix-valued functions in the next section. There, it is not at all clear that even the $0$th order vanishes. \marginpar{\small 2010.01.12}


\section{Application: Diagonalization of the Dirac equation} 
\label{weyl_calculus:diag_dirac}

We will now give an application that is not at all trivial. If it were 1983, we could have published a paper in a prestigious scientific journal \cite{Cordes:pseudodifferential_FW_transform:1983,Cordes:pseudodifferential_FW_transform:2004}. The dynamics of a relativistic spin-$\nicefrac{1}{2}$ particle is described by the Dirac equation \cite{Yndurain:relativisticQM:1996}. The form of this equation follows from three requirements: \emph{gauge-covariance} with respect to the inhomogeneous Lorentz group which is comprised of Lorentz boosts, rotations in space and translations in space-time ($3+3+4 = 10$-dimensional) and the fact we are interested in \emph{spin-$\nicefrac{1}{2}$} particles \emph{with mass $m > 0$} \cite{Wigner:unitary_reps_Lorentz_group:1939,Lerner:derivation_Dirac_eqn:1996,Buitrago:deduction_Dirac:1988}. As long as the energy is below the pair creation threshold of $2 m c^2$, the Dirac equation accurately describes the physics of a spin-$\nicefrac{1}{2}$ particle. Here, $c$ is the speed of light in vacuo. 

We will not attempt to give a derivation here, but only present the result: the Dirac equation can be written in standard form as 
\begin{align*}
	i \frac{\partial}{\partial t} \Psi &= \hat{H}_D \Psi 
	, 
	&& 
	\Psi \in L^2(\R^3;\C^4) 
	, 
\end{align*}
where the free Dirac hamiltonian for a particle with mass $m > 0$ 
\begin{align*}
	\hat{H}_D &= c^2 m \beta + c (- i \hbar \nabla_x) \cdot \alpha 
\end{align*}
contains the $\alpha_j$ and $\beta$ matrices, 
\begin{align*}
	\alpha_j := \left (
	\begin{matrix}
		0 & \sigma_j \\
		\sigma_j & 0 \\
	\end{matrix}
	\right )
	, 
	\qquad 
	\beta := \left (
	\begin{matrix}
		\id_{\C^2} & 0 \\
		0 & - \id_{\C^2} \\
	\end{matrix}
	\right ) 
	, 
	&& 
	j = 1,2,3 
	. 
\end{align*}
The $\sigma_j \in \mathrm{Mat}_{\C}(2)$, $j = 1,2,3$, are the usual Pauli matrices. For convenience, we choose units such that $\hbar = 1$ and rescale the energy by $\nicefrac{1}{c^2}$, 
\begin{align*}
	\hat{H}_0 := \tfrac{1}{c^2} \hat{H}_D = m \beta + \bigl ( - \tfrac{i}{c} \nabla_x \bigr ) \cdot \alpha =: H_0(\Pe) 
	. 
\end{align*}
This suggests to use 
\begin{align*}
	\Pe &:= - \tfrac{i}{c} \nabla_x \\
	\Qe &:= \hat{x}
\end{align*}
as building block operators. Using the relations 
\begin{align*}
	\alpha_j \beta =& - \beta \alpha_j 
	, \\
	[\alpha_k , \alpha_j]_+ :=& \; \alpha_k \alpha_j + \alpha_j \alpha_k = 2 \delta_{kj} \, \id_{\C^4} 
	&&
	k,j = 1,2,3 
	, 
\end{align*}
we can show that 
\begin{align*}
	u_0(\Pe) &:= \frac{1}{\sqrt{2 E(\Pe) (E(\Pe) + m)}} \bigl ( (E(\Pe) + m) \, \id_{\C^4} - (\Pe \cdot \alpha ) \beta \bigr ) 
	, 
\end{align*}
is unitary, \ie it satisfies 
\begin{align*}
	u_0(\Pe) u_0(\Pe)^* = \id_{L^2} = u_0(\Pe)^* u_0(\Pe) 
	, 
\end{align*}
and diagonalizes $H_0(\Pe)$, 
\begin{align*}
	u_0(\Pe) H_0(\Pe) u_0(\Pe)^* &= E(\Pe) \beta := \sqrt{m^2 + \Pe^2} \, \beta 
	. 
\end{align*}
This means, the relativistic kinetic energy governs the free dynamics of relativistic particles. The upper two components of wave functions in the diagonalized representation correspond to particles (``electrons''), the lower two are associated to antiparticles (``positrons''). As each of the energies is two-fold spin-degenerate, the unitary $u_0(\Pe)$ is \emph{not} unique. 

If we add an electric field, we need to add the electrostatic potential, 
\begin{align*}
	H(\Qe,\Pe) := m \beta + \Pe \cdot \alpha + \tfrac{1}{c^2} V(\Qe) 
	=: H_0(\Pe) + \tfrac{1}{c^2} H_2(\Qe) 
	. 
\end{align*}
As long as the particle's energy is below the pair creation threshold, we expect that electronic and positronic degrees of freedom still decouple, at least in an approximate sense. Our choice of building block operators suggests that the \emph{relevant small parameter is $\nicefrac{1}{c}$}. More preoperly, in light of the discussion in Chapter~\ref{frameworks:properties_of_quantizations}, the small parameter should be dimensionless. To remedy this, we should in principle work with the ratio $\nicefrac{v_0}{c}$ where $v_0$ is some characteristic velocity of the system. If the particle is relativistic $v_0 \gtrsim 0.1 c$, but not too fast, say $v_0 \lesssim 0.7 c$, we expect approximate decoupling in the presence of an electric field. 

Since $\Pe$ and $\Qe$ do not commute, their commutator is of the order $\nicefrac{1}{c}$, and $u_0(\Pe)$ no longer diagonalizes $H(\Qe,\Pe)$. The usual recipe found in text books is to use the Foldy-Wouthuysen transform \cite{FoldyW:DiracTheory1950,Thaller:DiracEquation:1992,Yndurain:relativisticQM:1996} which successively diagonalizes $H(\Qe,\Pe)$ up to errors of higher order. The result is the non-relativistic Pauli hamiltonian and higher-order corrections, 
\begin{align*}
	\hat{H}_{\mathrm{Pauli}} = m \beta + \frac{1}{c^2} \biggl ( \frac{1}{2m} (- i \nabla_x)^2 + V(\hat{x}) \biggr ) + \order(\nicefrac{1}{c^3})  
	, 
\end{align*}
and it is often claimed that one gets the Taylor expansion of $\sqrt{m^2 + \xi^2}$ around $\xi = 0$. This is correct, but neither a proof nor a derivation. This equation also does not accurately describe the dynamics of a \emph{relativistic} particle, the kinetic energy is non-relativistic. How do we recover $\sqrt{m^2 + \Pe^2}$ as kinetic energy? The first who has found a way was Cordes in the mid-1980s \cite{Cordes:pseudodifferential_FW_transform:1983}. More sophisticated arguments have since appeared elsewhere \cite{Teufel:adiabaticPerturbationTheory:2003,FuerstLein:nonRelLimitDirac:2008} which allow for the derivation of corrections in a systematic fashion. 

Let us do the calculation using Weyl calculus. In fact, all we need at this stage is 
\begin{align*}
	f \Weyl g &= f \, g - \tfrac{1}{c} \tfrac{i}{2} \{ f , g \} + \order(\nicefrac{1}{c^2}) 
	= f \, g + \order(\nicefrac{1}{c}) 
	. 
\end{align*}
Then we immediately conclude 
\begin{align*}
	u_0(\Pe) H(\Qe,\Pe) u_0(\Pe)^* &= E(\Pe) \beta + \tfrac{1}{c^2} u_0(\Pe) V(\Qe) u_0(\Pe)^*
	\\
	&= E(\Pe) \beta + \tfrac{1}{c^2} \Op(u_0) \Op(V) \Op(u_0^*) 
	= E(\Pe) \beta + \tfrac{1}{c^2} \Op \bigl ( u_0 \Weyl V \Weyl u_0^* \bigr ) 
	\\
	&= E(\Pe) \beta + \tfrac{1}{c^2} \Op \bigl ( u_0 V u_0^* + \order(\nicefrac{1}{c}) \bigr ) 
	= E(\Pe) \beta + \tfrac{1}{c^2} V(\Qe) + \order(\nicefrac{1}{c^3})
	. 
\end{align*}
A slightly more sophisticated argument shows that if in addition to the electric field, a \emph{magnetic} field is added, then the Weyl quantization of $u_0^A(x,\xi) := u_0 \bigl ( x , \xi - \tfrac{1}{c^2} A(x) \bigr )$ diagonalizes 
\begin{align*}
	\hat{H}^A := m \beta + \bigl ( \Pe - \tfrac{1}{c^2} A(\Qe) \bigr ) \cdot \alpha + \tfrac{1}{c^2} V(\Qe)
\end{align*}
up to errors of order $\nicefrac{1}{c^3}$, 
\begin{align*}
	\Op(u_0^A) \hat{H}^A \Op(u_0^A)^* = \Op \bigl ( E \bigl ( \xi - \tfrac{1}{c^2} A(x) \bigr ) \bigr ) + \tfrac{1}{c^2} V(\Qe) + \order(\nicefrac{1}{c^3}) 
	. 
\end{align*}
In Chapter~\ref{multiscale}, we will present a systematic scheme to derive all higher-order corrections. 


\chapter{The semiclassical limit} 
\label{semiclassics}
%
This section is the culmination of what we have done up to now and one of the two main topics of this lecture: we aim to make the fuzzy notion of `semiclassical limit' into a mathematical theorem. The semiclassical limit can be understood as a \emph{scaling limit}. Since even the physics of semiclassical limits is somewhat obscure, we start the topic by repeating the commonly chosen approach, namely the Ehrenfest `theorem.' 

In Chapter~\ref{frameworks:quantum_mechanics}, we have discussed the Heisenberg picture in quantum mechanics where observables $\hat{A}$ are evolved in time, $\hat{A}(t) = e^{+ i \frac{t}{\eps} \hat{H}} \hat{A} e^{- i \frac{t}{\eps} \hat{H}}$, and states remain constant in time. Then $\hat{A}(t)$ obeys the Heisenberg equations of motion, 
\begin{align*}
	\frac{\dd}{\dd t} \hat{A}(t) &= \frac{1}{i \eps} \bigl [ \hat{A}(t) , \hat{H} \bigr ] = \frac{1}{i \eps} e^{+ i \frac{t}{\eps} \hat{H}} \bigl [ \hat{A} , \hat{H} \bigr ] e^{- i \frac{t}{\eps} \hat{H}} 
	. 
\end{align*}
If we consider the standard hamiltonian $\hat{H} = \frac{1}{2m} \Pe^2 + V(\Qe)$ for simplicity, then the equations of motion for position $\Qe$ and momentum $\Pe$ are 
\begin{align}
	\frac{\dd }{\dd t} \Qe(t) &= \tfrac{1}{i \eps} e^{+ i \frac{t}{\eps} \hat{H}} \bigl [ \Qe , \hat{H} \bigr ] e^{- i \frac{t}{\eps} \hat{H}} 
	= e^{+ i \frac{t}{\eps} \hat{H}} \tfrac{\Pe}{m} e^{- i \frac{t}{\eps} \hat{H}} 
	= \frac{\Pe(t)}{m} 
	\label{semiclassics:eqn:Ehrenfest_eom_Q}
\end{align}
and 
\begin{align}
	\frac{\dd }{\dd t} \Pe(t) &= \tfrac{1}{i \eps} e^{+ i \frac{t}{\eps} \hat{H}} \bigl [ \Pe , \hat{H} \bigr ] e^{- i \frac{t}{\eps} \hat{H}} 
	= \tfrac{1}{i \eps} e^{+ i \frac{t}{\eps} \hat{H}} \bigl [ \Pe , V(\Qe) \bigr ] e^{- i \frac{t}{\eps} \hat{H}} 
	\notag \\
	&= - \bigl ( \nabla_x V(\Qe) \bigr )(t) 
	= - \nabla_x V \bigl ( \Qe(t) \bigr ) 
	\label{semiclassics:eqn:Ehrenfest_eom_P}
	, 
\end{align}
respectively. The last equality, $- \bigl ( \nabla_x V(\Qe) \bigr )(t) = - \nabla_x V \bigl ( \Qe(t) \bigr )$, is non-trivial and a consequence of the so-called spectral theorem \cite[Section~VIII.3]{Reed_Simon:bibel_1:1981}. These equations look like Hamilton's equations of motion, equation~\eqref{frameworks:classical_mechanics:eqn:hamiltons_eom}. 

If the particle is in state $\psi \in \mathcal{D}(\hat{H}) \subset L^2(\R^d)$, equations~\eqref{semiclassics:eqn:Ehrenfest_eom_Q} and \eqref{semiclassics:eqn:Ehrenfest_eom_P} make predictions on the measured velocities, momenta and so forth: we simply compute the expectation value. The first line, 
\begin{align}
	\dot{q}(t) := \mathbb{E}_{\psi} \bigl ( \dot{\Qe}(t) \bigr ) = \mathbb{E}_{\psi} \bigl ( \tfrac{\Pe(t)}{m} \bigr ) 
	= \bscpro{\psi}{\tfrac{\Pe(t)}{m} \psi} 
	\label{semiclassics:eqn:Ehrenfest_eom_expval_Q}
	, 
\end{align}
reproduces the relationship between averaged velocity and momentum of a nonrelativistic classical particle. The equation of motion for the expectation value of the momentum reads 
\begin{align}
	\dot{p}(t) := \mathbb{E}_{\psi} \bigl ( \dot{\Pe}(t) \bigr ) = \mathbb{E}_{\psi} \bigl ( - \nabla_x V \bigl ( \Qe(t) \bigr ) \bigr )
	= \bscpro{\psi}{ - \nabla_x V \bigl ( \Qe(t) \bigr ) \psi} 
	. 
	\label{semiclassics:eqn:Ehrenfest_eom_expval_P}
\end{align}
This result is known throughout the physics community as Ehrenfest `theorem.' We have put the word theorem in quotation marks since a mathematically rigorous proof equires one to make assumptions on $\hat{H}$. Although it is possible to make these equations mathematically rigorous (\eg \cite{Friesecke_Schmidt:Ehrenfest_theorem:2010}), we shall make no such attempt. 

Equation~\eqref{semiclassics:eqn:Ehrenfest_eom_expval_P} presents us with the challenge that we do not have direct access to the position variable $q(t) = \mathbb{E}_{\psi} \bigl ( \Qe(t) \bigr )$ as in general 
\begin{align}
	\mathbb{E}_{\psi} \bigl ( - \nabla_x V \bigl ( \Qe(t)) \bigr ) \neq - \nabla_x V \bigl ( \mathbb{E}_{\psi}(\Qe(t)) \bigr ) 
\end{align}
holds. Hence, it is still not clear \emph{in what way} this implies classical behavior. Even in some approximate sense, left- and right-hand side are in general \emph{not equal} as one can see from the red wavepacket in Figure~\ref{semiclassics:figure:Ehrenfest}. 
%
\begin{figure}
	\hfil\includegraphics[height=4cm]{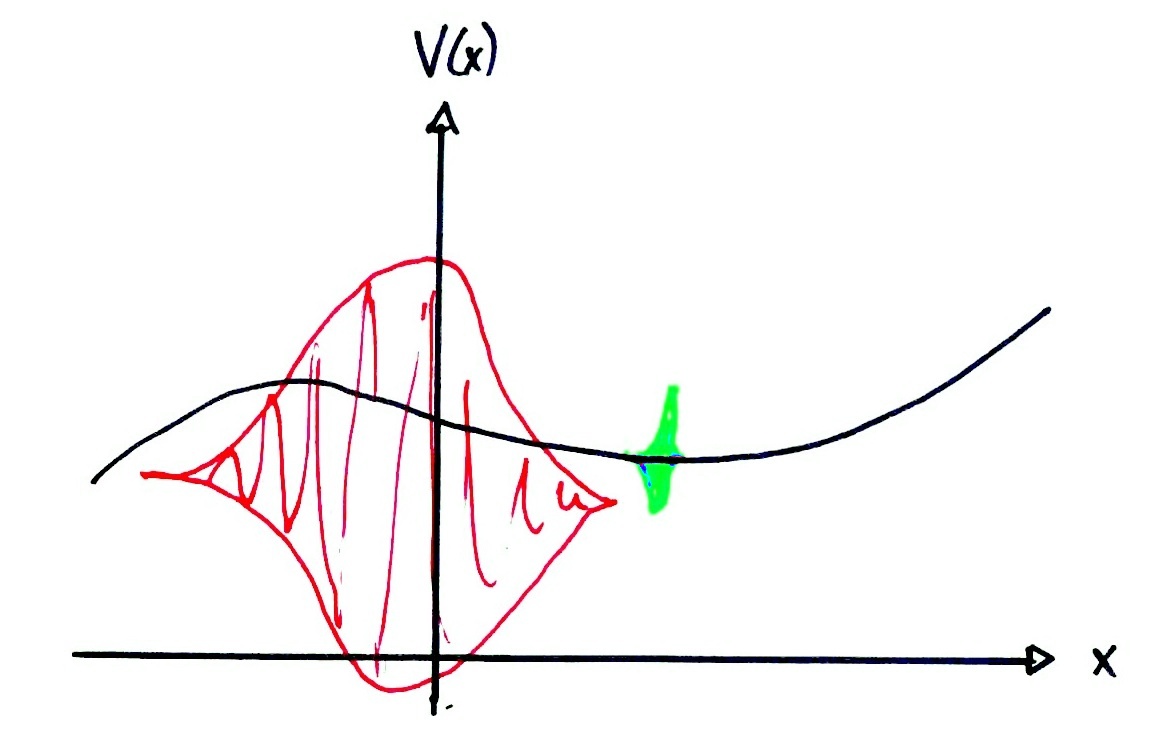}\hfil
	\caption{The green wavepacket is sharply peaked and thus will have a `good' semiclassical behavior. The red wavepacket's spread is too large, its expectation values of position and momentum are not expected to follow the trajectory of a classical point particle. }
	\label{semiclassics:figure:Ehrenfest}
\end{figure}
This means that in order for a semiclassical approximation \emph{in the sense of Ehrenfest} to hold, the wave function must be sharply peaked in comparison to the variation of the potential $V$, \ie one heuristically expects the semiclassical approximation to be more accurate for the sharply peaked green wavepacket rather than the red one. If the wave function has multiple peaks or varies on the same scale as the potential, we cannot expect semiclassical behavior \emph{in this description}. Indeed, this point was very clear to Ehrenfest when he first proposed this approach in a 1927 paper \cite{Ehrenfest:semiclassics:1927}: 
\begin{quote}
	Gleichung~(5) [equations~\eqref{semiclassics:eqn:Ehrenfest_eom_expval_Q} and \eqref{semiclassics:eqn:Ehrenfest_eom_expval_P}] besagt aber offenbar: Jedesmal, wenn die Breite des (Wahrscheinlichkeits-)Wellenpakets $\psi \psi^*$ (im Verhältnis zu \emph{makro\-skopischen} Distanzen) ziemlich klein ist, passt die Beschleunigung (des Schwerpunktes [$q$]) des Wellenpaketes im Sinne der Newtonschen Gleichungen zu der ``am Orte des Wellenpaketes herschenden'' Kraft [$- \nabla_x V$]. 
\end{quote}
However, in most standard textbooks, these crucial remarks are not properly stressed: first of all, the validity of a classical approximation has little to do with `$\hbar$ being small,' but rather with a \emph{separation of length scales}. Secondly, the Ehrenfest `theorem' can serve as a starting point of a semiclassical limit if one restricts oneself to \emph{wave packets} which are sharply peaked in real and reciprocal space. The prototypical example are Gaussian wave packets and their generalizations.\footnote{Interest in states of this form has surged in the 1980s and there is a vast amount of literature on this topic. An early, but well-written review from the point of view of physics, is a work by Littlejohn \cite{Littlejohn:semiclassics_wavepackets:1986}. For a detailed mathematical analysis, a good starting point are the publications of Hagedorn, \eg \cite{Hagedorn:semiclassics_coherent_states:1980}. }

So assume we start with a wave packet which is well-localized in space and reciprocal space, \eg a Gaussian. How long does it remain a sharply focussed wave packet? Even in the absence of a potential, we know that the wave function broadens over time and the maximum of $\abs{\psi(x)}$ scales as $t^{- \nicefrac{d}{2}}$ over time (see Proposition~\eqref{S_and_Sprime:schwartz_functions:prop:free_Schroedinger}). Furthermore, if a potential is present, it is not at all clear if the wave packet breaks up or not. 

Let us consider a step potential $V(x) = V_0 ( 1 - \theta(x))$, $V_0 > 0$, where $\theta$ is the Heavyside function. Then from quantum mechanics, we know that upon hitting the step, part of the wavefunction will be reflected while another part of it will be transmitted, \ie the wavepacket splits in two! Hence, \emph{slow variation of the potential and ``smoothness''} are crucial.\footnote{The smoothness condition can be lifted to a certain degree: the potential only needs to be smooth in regions the wave function can `explore.' } 
\begin{figure}
	\hfil\includegraphics[height=4cm]{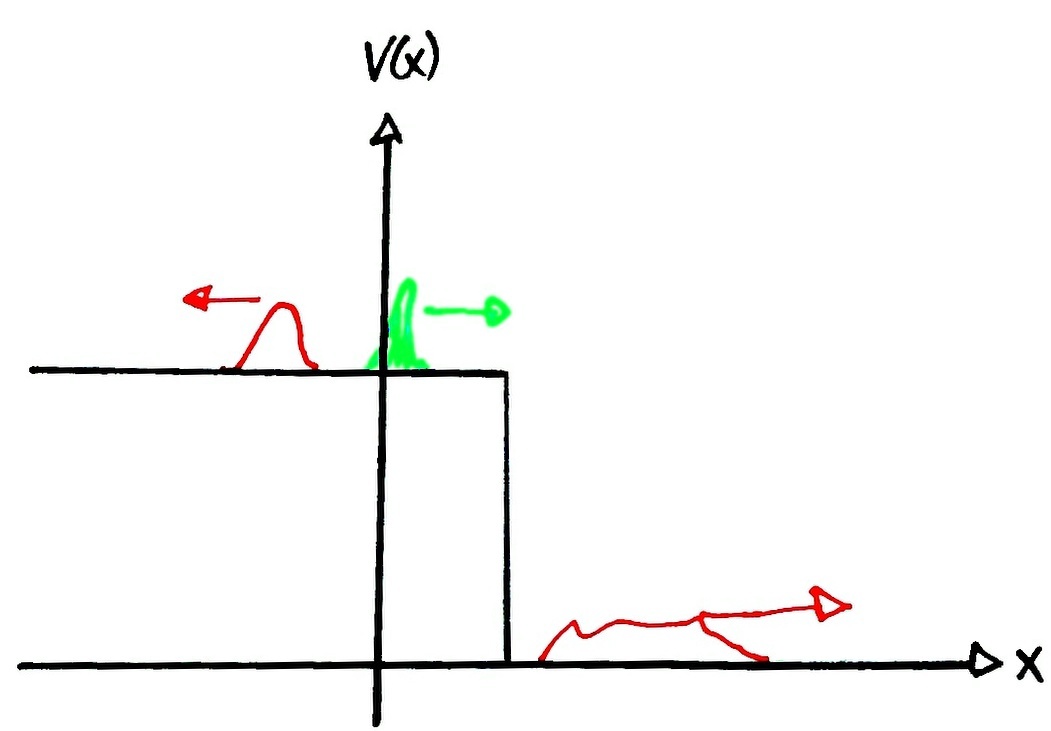}\hfil
	\caption{The incoming wavepacket (green) is split into two outgoing contributions (red). }
\end{figure}
The \emph{microscopic scale} is given by the \emph{de Broglie wavelength} associated to the wavefunction~$\psi$, 
\begin{align*}
	\lambda = \frac{2 \pi \hbar}{\babs{\mathbb{E}_{\psi}(\Pe)}} 
	, 
\end{align*}
whereas the \emph{macroscopic scale} is given by the \emph{characteristic length on which $V$ varies}. In what follows, let $\eps$ denote the \emph{ratio of microscopic to macroscopic length scales}. Thus $\eps$ is a dimensionless quantity and $\ll 1$. Then if $V(x)$ varies on the scale $\order(1)$, $V(\eps x)$ varies on the scale $\order(\eps)$. Hence, the force $F_{\eps}(x) := - \nabla_x \bigl ( V(\eps x) \bigr ) = - \eps \nabla_x V(\eps x)$ generated by $V(\eps x)$ is \emph{weak} and the equations of motion are given by 
\begin{align*}
	\dot{x} &= \frac{p}{m} \\
	\dot{p} &= - \eps \nabla_x V(\eps x) 
	. 
\end{align*}
It is crucial to understand that these equations are written from the perspective of the quantum particle, \ie we use \emph{microscopic units of length!} If we rewrite these equations on the macroscopic length scale, we need to use $q := \eps x$ as position coordinate and we obtain 
\begin{align*}
	\dot{q} &= \eps \dot{x} = \eps \frac{p}{m} \\
	\dot{p} &= - \eps \nabla_q V(q) 
	. 
\end{align*}
Hence, forces are weak and we need to wait \emph{long times} to see their effect, \ie \emph{macroscopic times} $t := \nicefrac{s}{\eps}$. On this time and length scale, the extra factors of $\eps$ in the equations of motion cancel, 
\begin{align*}
	\dot{q} &= \frac{\dd}{\dd s} q = \eps \frac{\dd}{\dd t} q = \eps \frac{p}{m} \\
	\dot{p} &= \frac{\dd }{\dd s} p = \eps \frac{\dd}{\dd t} p = - \eps \nabla_q V(q) 
	. 
\end{align*}
In other words, upon \emph{rescaling space and time to macroscopic units}, we arrive at the usual equations of motion of a classical particle. Now that we have found the correct microscopic and macroscopic scales, we can write down the Schrödinger equation in \emph{microscopic units of length} and \emph{long, macroscopic times}. We will keep the physical constant $\hbar = \unit[1.0546]{Js}$ to emphasize that none of the arguments hinge on `$\hbar$ being small.' Now, the Schrödinger equation reads 
\begin{align*}
	i \eps \hbar \frac{\partial}{\partial t} \psi_{\eps}(t) &= \Bigl ( \frac{1}{2m} \bigl ( - i \hbar \nabla_x \bigr )^2 + V(\eps \hat{x}) \Bigr ) \psi_{\eps}(t) 
	, 
	&& 
	\psi_{\eps}(0) = \psi_0 \in \mathcal{D}(\hat{H}) 
	, 
\end{align*}
and the building block operators are 
\begin{align}
	\Qe_{\mathrm{micro}} &:= \eps \hat{x} \\
	\Pe_{\mathrm{micro}} &:= - i \hbar \nabla_x 
	\notag 
	. 
\end{align}
The index $\eps$ in $\psi_{\eps}$ indicates that the solutions to the Schrödinger equation depend parametrically on $\eps$. Alternatively, we could rescale space and introduce macroscopic position $q := \eps x$ which implies $\nabla_x = \eps \nabla_q$. Then in \emph{macroscopic units}, the Schrödinger equation reads 
\begin{align}
	i \eps \hbar \frac{\partial}{\partial t} \tilde{\psi}_{\eps}(t) &= \Bigl ( \frac{1}{2m} \bigl ( - i \eps \hbar \nabla_q \bigr )^2 + V(\hat{q}) \Bigr ) \tilde{\psi}_{\eps}(t) 
	, 
	&& 
	\tilde{\psi}_{\eps}(0) = \tilde{\psi}_0 \in \mathcal{D}(\hat{H}) 
	\label{semiclassics:eqn:Schroedinger_eqn_macroscopic}
	. 
\end{align}
Since we live in a macroscopic world, this scaling is in fact by far the most common choice. What is the relation between $\psi_{\eps}$ and $\tilde{\psi}_{\eps}$? $\abs{\psi_{\eps}(t,x)}^2$ is a probability \emph{density} and thus scales like $\nicefrac{1}{\mathrm{volume}} = \eps^{-d}$. This suggests to define the unitary rescaling operator $U_{\eps} : L^2(\R^d) \longrightarrow L^2(\R^d)$ as 
\begin{align*}
	\bigl ( U_{\eps} \psi \bigr )(x) := \eps^{- \nicefrac{d}{2}} \, \psi(\nicefrac{x}{\eps}) 
	, 
	&& 
	\psi \in L^2(\R^d)
	. 
\end{align*}
Indeed, the prefactor is consistent with the change of variables formula: for all $\varphi , \psi \in L^2(\R^d)$, we have 
\begin{align*}
	\bscpro{U_{\eps} \varphi}{U_{\eps} \psi} &= \int_{\R^d} \dd q \, \bigl ( U_{\eps} \varphi \bigr )(q)^* \, \bigl ( U_{\eps} \psi \bigr )(q) 
	= \int_{\R^d} \dd q \, \eps^{-d} \, \varphi^*(\nicefrac{q}{\eps}) \, \psi(\nicefrac{q}{\eps}) 
	\\
	&= \int_{\R^d} \dd x \, \varphi^*(x) \, \psi(x) = \scpro{\varphi}{\psi} 
	. 
\end{align*}
In this scaling, the building block operators are 
\begin{align}
	\Qe_{\mathrm{macro}} &:= \eps \hat{x} \\
	\Pe_{\mathrm{macro}} &:= - i \hbar \nabla_x 
	\notag 
	. 
\end{align}
Looking at the Schrödinger equation in the macroscopic scaling, equation~\eqref{semiclassics:eqn:Schroedinger_eqn_macroscopic}, one understands why the semiclassical limit is often thought of as the $\hbar \to 0$ limit. \emph{Formally}, the roles of $\eps$ and $\hbar$ are interchangeable, \emph{physically, they are not!} In fact, the precise origin of $\eps$ says something about the physical mechanism \emph{why} a system behaves almost classically. In the next Chapter, for instance, $\eps := \sqrt{\nicefrac{m}{M}}$ is given by the square root of the ratio of the masses of light to heavy particles and $\eps \to 0$ implies the dynamics of the heavy particles freezes out. 

Put together, we are now prepared to tackle the main questions of this section: 
\begin{enumerate}[(i)]
	\item What is the semiclassical limit? 
	\item Is it necessary to start with special `semiclassical' states? 
	\item How large is the error? 
	\item Do we make the semiclassical limit for observables or for states? 
\end{enumerate}
The first point can be understood by comparing the frameworks (Chapter~\ref{frameworks:comparison}) and using Weyl calculus as developed in Chapter~\ref{weyl_calculus}: we already know how to associate quantum observables to classical observables (Weyl quantization) and quantum states to quasi-classical states (Wigner transform). Hence, the task we are left with is a comparison of classical and quantum dynamics. In the Heisenberg picture where observables are evolved in time and states remain constant, we can thus compare the \emph{quantization of the classically evolved observable} 
\begin{align}
	F_{\mathrm{cl}}(t) := \Op \bigl (f (t) \bigr ) = \Op \bigl ( f \circ \phi_t \bigr ) 
	\label{semiclassics:eqn:F_cl}
\end{align}
and the quantum mechanically evolved observable 
\begin{align}
	F_{\mathrm{qm}}(t) := e^{+ i \frac{t}{\eps} \Op(h)} \, \Op(f) \, e^{- i \frac{t}{\eps} \Op(h)} 
	\label{semiclassics:eqn:F_qm}
\end{align}
Here $\phi_t$ is the hamiltonian flow associated to the hamiltonian function $h : \pspace \longrightarrow \R$ and equation~\eqref{frameworks:classical_mechanics:eqn:hamiltons_eom} and $e^{- i \frac{t}{\eps} \Op(h)}$ is the strongly continuous one-parameter time evolution group generated by $\Op(h)$ over macroscopic times. Ideally, we would like to study dynamics generated by fairly general hamiltonian functions (and associated operators) as well as observables. Typically, they are smooth, but have polynomially growth: the standard hamiltonian $h(x,\xi) = \tfrac{1}{2} \xi^2 + V(x)$, position $x$, momentum $\xi$ and angular momentum $L = x \wedge \xi$ are of this type. A theorem of this generality is out of reach with the tools we have at our disposal, so we restrict ourselves to the case $h , f \in \Schwartz(\pspace)$. The reader rest assured that these somewhat artificial restrictions can be lifted using so-called oscillatory integral techniques \cite{Robert:tour_semiclassique:1987,Robert:semiclassics:1998}. The general strategy remains the same, but each step in the derivation needs to be justified carefully. 
\begin{thm}[Semiclassical limit for observables]\label{semiclassics:thm:semiclassical_limit_observables}
	Let $h \in \Schwartz(\pspace)$ be a classical hamiltonian function and $\phi_t$ the hamiltonian flow associated to $h$ via equations~\eqref{frameworks:classical_mechanics:eqn:hamiltons_eom}. Then for any (real-valued) observable $f \in \Schwartz(\pspace)$, the classical evolution approximates the quantum evolution in the following sense: for any $T > 0$, there exists $C_T > 0$ such that $F_{\mathrm{qm}}(t)$ and $F_{\mathrm{cl}}(t)$ agree up to errors of order $\order(\eps^2)$ for all $t \in [- T , + T]$, 
	\begin{align}
		\bnorm{F_{\mathrm{qm}}(t) - F_{\mathrm{cl}}(t)}_{\mathcal{B}(L^2(\R^d))} &= \Bnorm{e^{+ i \frac{t}{\eps} \Op(h)} \, \Op(f) \, e^{- i \frac{t}{\eps} \Op(h)} - \Op \big ( f \circ \phi_t \bigr )}_{\mathcal{B}(L^2(\R^d))} 
		\notag \\
		&\leq C_T \eps^2 \abs{t}
		. 
	\end{align}
\end{thm}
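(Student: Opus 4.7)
The plan is to run the classical Egorov interpolation argument, bridging $F_{\mathrm{cl}}(t)$ and $F_{\mathrm{qm}}(t)$ by sliding the amount of ``quantum time'' from $0$ to $t$ while compensating with ``classical time'' from $t$ to $0$. Concretely, I would introduce the interpolating family
\begin{align*}
	A(s) := e^{+i\frac{s}{\eps}\Op(h)}\,\Op(f\circ\phi_{t-s})\,e^{-i\frac{s}{\eps}\Op(h)},
	\qquad s\in[0,t],
\end{align*}
so that $A(t)=F_{\mathrm{qm}}(t)$ and $A(0)=F_{\mathrm{cl}}(t)$. (Note that $\Op(h)$ is bounded and selfadjoint on $L^2(\R^d)$ by Theorem~\ref{weyl_calculus:weyl_quantization:thm:properties_weyl_quantization} and Proposition~\ref{weyl_calculus:weyl_quantization:prop:application_Weyl_quantization}, so Stone's Theorem furnishes the unitary group. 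The flow $\phi_t$ exists globally by Corollary~\ref{frameworks:classical_mechanics:cor:existence_flow} since $\nabla h$ is bounded and hence globally Lipschitz.) Differentiating in $s$, combining the Heisenberg-type contribution with the classical equation of motion $\frac{\dd}{\dd s}(f\circ\phi_{t-s})=-\{h,f\circ\phi_{t-s}\}$ of Proposition~\ref{frameworks:classical_mechanics:prop:equations_of_motion_Poisson_bracket}, yields with $g_s:=f\circ\phi_{t-s}$
\begin{align*}
	\partial_s A(s)
	= e^{+i\frac{s}{\eps}\Op(h)}\,\Bigl(\tfrac{i}{\eps}[\Op(h),\Op(g_s)]-\Op(\{h,g_s\})\Bigr)\,e^{-i\frac{s}{\eps}\Op(h)}.
\end{align*}

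Next I would invoke the asymptotic expansion of the Moyal product, Theorem~\ref{weyl_calculus:asymptotics:thm:asymptotic_expansion}. The key observation is that even orders of the expansion are symmetric in the two arguments (as emphasized at the end of Section~\ref{weyl_calculus:asymptotics}) and therefore cancel in the commutator $h\Weyl g_s - g_s\Weyl h$. Hence the Poisson-bracket term is exact at order $\eps$ and the remainder starts at order $\eps^3$: explicitly $h\Weyl g_s - g_s\Weyl h = -i\eps\{h,g_s\} + \eps^3\,\tilde r_s$ with $\tilde r_s\in\Schwartz(\pspace)$ given by the $N=2$ Taylor remainder of the twister in \eqref{weyl_calculus:asymptotics:eqn:remainder}. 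Quantizing and using the linearity of $\Op$, the bracket in $\partial_s A(s)$ reduces to $\eps^2\Op(r_s)$ for some $r_s\in\Schwartz(\pspace)$. Integrating from $0$ to $t$ and using that conjugation by the unitaries $e^{\pm i\frac{s}{\eps}\Op(h)}$ preserves the operator norm,
\begin{align*}
	\bnorm{F_{\mathrm{qm}}(t)-F_{\mathrm{cl}}(t)}
	=\Bnorm{\eps^2\!\int_0^t\!\dd s\,e^{+i\frac{s}{\eps}\Op(h)}\Op(r_s)e^{-i\frac{s}{\eps}\Op(h)}}
	\le \eps^2\,|t|\,\sup_{|s|\le T}\bnorm{\Op(r_s)}.
\end{align*}

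The main technical obstacle is to bound $\sup_{|s|\le T}\bnorm{\Op(r_s)}$. Proposition~\ref{weyl_calculus:weyl_quantization:prop:application_Weyl_quantization} supplies the simple estimate $\bnorm{\Op(r_s)}\le(2\pi)^{-d}\bnorm{\Fs r_s}_{L^1(\pspace)}$, and Lemma~\ref{S_and_Sprime:schwartz_functions:lem:Lp_estimate} turns the $L^1$-norm into a finite sum of Schwartz seminorms of $\Fs r_s$, equivalently of $x^a\partial^\alpha r_s$ with $|a|,|\alpha|$ bounded by some $k=k(d)$. Since $r_s$ is built out of finitely many derivatives of $h$ and of $g_s=f\circ\phi_{t-s}$ via \eqref{weyl_calculus:asymptotics:eqn:asymptotic_expansion_nth_order_term} and the explicit remainder in \eqref{weyl_calculus:asymptotics:eqn:remainder}, the problem reduces to controlling Schwartz seminorms of $f\circ\phi_\tau$ uniformly for $|\tau|\le T$.

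This last step is where the real work sits. I would establish it in two substeps. First, decay: the flow satisfies $|\phi_\tau(X)-X|\le|\tau|\,\bnorm{\nabla h}_\infty$, so $f\in\Schwartz(\pspace)$ forces $f\circ\phi_\tau$ to decay faster than any polynomial, uniformly in $|\tau|\le T$. Second, smoothness with controlled derivatives: Theorem~\ref{frameworks:classical_mechanics:thm:smoothness_flow} gives $\phi_\tau\in\Cont^\infty$, and the variational equation recalled in Remark~\ref{frameworks:classical:remark:Liouville} together with a Gr\"onwall argument bounds $D^\beta\phi_\tau$ on $[-T,T]$ in terms of finitely many $\Cont^k$ norms of the Schwartz function $h$. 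Combining both via the Fa\`a di Bruno formula shows $f\circ\phi_\tau\in\Schwartz(\pspace)$ with every seminorm bounded uniformly in $\tau\in[-T,T]$ by a constant depending only on finitely many seminorms of $f$ and $h$. Feeding this back produces
\begin{align*}
	C_T:=\sup_{|s|\le T}(2\pi)^{-d}\bnorm{\Fs r_s}_{L^1(\pspace)}<\infty,
\end{align*}
which is the announced constant. The growth of $C_T$ with $T$ is at most exponential, consistent with the Gr\"onwall estimate on the flow derivatives; for a fixed compact time interval this is all we need.
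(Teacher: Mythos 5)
Your proposal is correct and follows essentially the same Duhamel--Egorov interpolation as the paper: the same interpolating family $A(s)$, the same grouping of terms into a Moyal commutator plus Poisson bracket after differentiating in $s$, and the same use of the odd/even antisymmetry of $\sigma(Y,Z)$ in the twister to upgrade the commutator error from $\order(\eps)$ to $\order(\eps^2)$. The only substantive difference is that you sketch the uniform control of Schwartz seminorms of $f\circ\phi_\tau$ (displacement bound on the flow, variational equation plus Gr\"onwall, Fa\`a di Bruno), whereas the paper outsources precisely this point to \cite[Lemma~III.9]{Robert:tour_semiclassique:1987} with a footnote outlining the same idea --- one small slip to note: boundedness of $\nabla h$ alone does not imply global Lipschitz continuity, you need boundedness of the Hessian, which $h \in \Schwartz(\pspace)$ of course supplies.
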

\begin{remark}
	Question (ii) and (iii) are answered decisively: \emph{at no point do we make assumptions on the specific shape or form of the initial state}, the above estimate holds in the operator norm, \ie \emph{uniformly} for all initial states! Furthermore, we get an explicit upper bound for the error term in the proof. 
	
	It seems that the Heisenberg picture is singled out here, but we will also prove a semiclassical limit for states which follows from Theorem~\ref{semiclassics:thm:semiclassical_limit_observables} and Corollary~\ref{weyl_calculus:weyl_calculus:wigner_transform:cor:quantum_classical_expectation_value}. 
\end{remark}
The main ingredients in the proof are the so-called Duhamel trick (essentially an application of the fundamental theorem of calculus) as well as the asymptotic expansion of the Weyl product. A footnote to the educated reader: we purposely mimic the proof for symbols whose quantization define possibly unbounded selfadjoint operators and ignore some minor simplifications. 
\begin{proof}
	Let $T > 0$ be arbitrary, but fixed. As $h , f \in \Schwartz(\pspace)$ are real-valued and have an integrable symplectic Fourier transform, $\Op(h)$ and $\Op(f)$ define bounded selfadjoint operators on $L^2(\R^d)$ (Proposition~\ref{weyl_calculus:weyl_quantization:prop:application_Weyl_quantization} and Theorem~\ref{weyl_calculus:weyl_quantization:thm:properties_weyl_quantization}) and we do not have to consider domain questions. By Stone's theorem, $U(t) = e^{- i \frac{t}{\eps} \Op(h)}$ is a unitary, strongly continuous one-parameter evolution group. Hence, $F_{\mathrm{qm}}(t)$ defines a bounded operator on $L^2(\R^d)$. 
	
	By assumption on $h$, the components of the hamiltonian vector field and all its derivatives in $x$ and $\xi$ decay rapidly and are smooth. Thus, by Corollary~\ref{frameworks:classical_mechanics:cor:existence_flow} and Theorem~\ref{frameworks:classical_mechanics:thm:smoothness_flow}, the hamiltonian flow $\phi_t$ exists globally in time and is also smooth. By \cite[Lemma~III.9]{Robert:tour_semiclassique:1987} all derivatives of the flow are bounded uniformly in $t \in [-T , +T]$ and the composition $f(t) = f \circ \phi_t$ is again a Schwartz function.\footnote{The idea of the proof is to write down equations of motion for the first-order derivatives of $\phi_t$ which involve second-order derivatives of $h$. These are bounded by assumption and thus imply that all first-order derivatives of the flow $\phi_t$ are bounded by a constant depending only on time. One then proceeds by induction to show the boundedness of higher-order derivatives. } Since products of derivatives of Schwartz functions are Schwartz functions, we also have $\frac{\dd}{\dd t} f(t) = \{ h , f(t) \} \in \Schwartz(\pspace)$. 
	Hence, $F_{\mathrm{cl}}(t)$ also defines a bounded selfadjoint operator for all $t \in \R$. 
	
	We will now use the so-called Duhamel trick to rewrite the difference of $F_{\mathrm{qm}}(t)$ and $F_{\mathrm{cl}}(t)$ as the integral over the derivative of a function via the fundamental theorem of calculus: for any $\varphi \in \Schwartz(\R^d) \subset L^2(\R^d)$, we have 
	\begin{align*}
		F_{\mathrm{qm}}(t) \varphi - F_{\mathrm{cl}}(t) \varphi &= e^{+ i \frac{t}{\eps} \Op(h)} \, \Op(f) \, e^{- i \frac{t}{\eps} \Op(h)} \varphi - \Op \big ( f(t) \varphi \bigr ) 
		\\
		&= \int_0^t \dd s \, \frac{\dd}{\dd s} \left ( e^{+ i \frac{t}{\eps} \Op(h)} \, \Op \bigl ( f(t - s) \bigr ) \, e^{- i \frac{t}{\eps} \Op(h)} \right ) \varphi 
		\\
		&= \int_0^t \dd s \, e^{+ i \frac{s}{\eps} \Op(h)} \, \biggl ( \tfrac{i}{\eps} \Op(h) \, \Op \bigl ( f(t - s) \bigr ) 
		+ \biggr . \\
		\biggl . &\qquad \qquad \quad 
		+ \frac{\dd}{\dd s} \Op \bigl ( f(t - s) \bigr ) - \tfrac{i}{\eps} \Op(h) \, \Op \bigl ( f(t - s) \bigr ) \biggr ) \, e^{- i \frac{s}{\eps} \Op(h)} \varphi 
		. 
	\end{align*}
	The above integral exists as a Bochner integral, \ie an integral of vector-valued functions.\footnote{A measurable function $f : \R \longrightarrow \mathcal{X}$ that takes values in a Banach space $\mathcal{X}$ is called Bochner integrable (or integrable for short) if and only if $\norm{f}_{L^1(\R,\mathcal{X})} := \int_{\R} \dd t \, \norm{f(t)}_{\mathcal{X}} < \infty$. } 
	We can group the first and last term to give a commutator; furthermore, the commutator of the Weyl quantizations of two Schwartz functions is equal to the Weyl quantization of the \emph{Moyal commutator} and once we plug in the asymptotic expansion of the Weyl product, Theorem~\ref{weyl_calculus:asymptotics:thm:asymptotic_expansion}, we obtain 
	\begin{align*}
		\tfrac{i}{\eps} \bigl [ \Op(h) , \Op \bigl ( f(t - s) \bigr ) \bigr ] &= \tfrac{i}{\eps} \Op \bigl ( [ h , f(t-s) ]_{\Weyl} \bigr ) 
		= \tfrac{i}{\eps} \Op \bigl ( - i \eps \{ h , f(t-s) \} + \order(\eps^3) \bigr ) 
		\\
		&= \Op \bigl ( \{ h , f(t-s) \} \bigr ) + \order(\eps^2) 
		. 
	\end{align*}
	Since the Fourier transform $\Fs : \Schwartz(\pspace) \longrightarrow \Schwartz(\pspace)$ is a continuous map and all derivatives of the flow $\partial^{\alpha}_x \partial^{\beta}_{\xi} \phi_t$ are bounded functions, $\abs{\alpha} + \abs{\beta} \geq 1$, (Lemma~III.9 from \cite{Robert:tour_semiclassique:1987}) which can be bounded independently of $t \in [-T , +T]$, we conclude we can also interchange integration implicit in $\Op$ and differentiation with respect to $s$ as $f(t-s)$ and $\frac{\dd}{\dd s} f(t-s)$ can be bounded uniformly in $s \in [0 , t]$ and for all $t \in [-T , +T]$. Hence, we can write the difference between the quantum observable $F_{\mathrm{qm}}(t)$ and the classical observable $F_{\mathrm{cl}}(t)$ as 
	\begin{align*}
		\ldots &= \int_0^t \dd s \, e^{+ i \frac{s}{\eps} \Op(h)} \, \left ( \tfrac{i}{\eps} \bigl [ \Op(h) , \Op \bigl ( f(t - s) \bigr ) \bigr ] + \tfrac{\dd}{\dd s} \Op \bigl ( f(t - s) \bigr ) \right ) \, e^{- i \frac{s}{\eps} \Op(h)} \varphi 
		\\
		&= \int_0^t \dd s \, e^{+ i \frac{s}{\eps} \Op(h)} \, \Op \left ( \tfrac{i}{\eps} \bigl [ h , f(t - s) \bigr ]_{\Weyl} + \tfrac{\dd}{\dd s} \bigl ( f(t - s) \bigr ) \right ) \, e^{- i \frac{s}{\eps} \Op(h)} \varphi 
		. 
	\end{align*}
	By Proposition~\ref{frameworks:classical_mechanics:prop:equations_of_motion_Poisson_bracket}, 
	\begin{align*}
		\frac{\dd}{\dd s} f(t-s) &= - \frac{\dd}{\dd t} f(t - s) = - \bigl \{ h , f(t-s) \bigr \} 
	\end{align*}
	holds. On the other hand, we can expand the Moyal commutator with the help of the asymptotic expansion of the Weyl product, Theorem~\ref{weyl_calculus:asymptotics:thm:asymptotic_expansion}, 
	\begin{align*}
		\bigl [ h , f(t-s) \bigr ]_{\Weyl} &= - i \eps \bigl \{ h , f(t-s) \bigr \} + \order(\eps^3) 
		. 
	\end{align*}
	The error is really of \emph{third} order in $\eps$ as explained at the end of Chapter~\ref{weyl_calculus:asymptotics} and known explicitly. Put together, this implies 
	\begin{align*}
		\tfrac{i}{\eps} \bigl [ h , f(t - s) \bigr ]_{\Weyl} + \tfrac{\dd}{\dd s} \bigl ( f(t - s) \bigr ) &= \tfrac{i}{\eps} (- i \eps) \bigl \{ h , f(t-s) \bigr \} - \bigl \{ h , f(t-s) \bigr \} + \tfrac{i}{\eps} \order(\eps^3) 
		\\
		&
		= \order(\eps^2)
	\end{align*}
	where the right-hand side -- as a sum of Schwartz functions -- is again a Schwartz function. Thus, its quantization is a bounded operator on $L^2(\R^d)$ and the estimate 
	\begin{align*}
		\Bnorm{\Op \Bigl ( \tfrac{i}{\eps} \bigl [ h , f(t - s) \bigr ]_{\Weyl} + \tfrac{\dd}{\dd s} \bigl ( f(t - s) \bigr ) \Bigr )}_{\mathcal{B}(L^2(\R^d))} \leq C_T \eps^2 
	\end{align*}
	holds for all $t \in [-T , +T]$ and some constant $C_T > 0$ that depends on $T$. Taking the operator norm of $F_{\mathrm{qm}}(t) - F_{\mathrm{cl}}(t)$, we conclude 
	\begin{align*}
		\bnorm{F_{\mathrm{qm}}(t) \varphi - F_{\mathrm{cl}}(t)}_{\mathcal{B}(L^2(\R^d))} &\leq \int_0^t \dd s \, \Bnorm{\Op \Bigl ( \tfrac{i}{\eps} \bigl [ h , f(t - s) \bigr ]_{\Weyl} + \tfrac{\dd}{\dd s} \bigl ( f(t - s) \bigr ) \Bigr )}_{\mathcal{B}(L^2(\R^d))} 
		\\
		&\leq \int_0^t \dd s \, C_T \eps^2 = C_T \eps^2 \abs{t} 
		. 
	\end{align*}
	This finishes the proof. 
\end{proof}
\begin{example}
	Even though the following example does not satisfy the assumption of the preceding theorem, it illustrates the semiclassical limit for a realistic hamiltonian and obsersable. Consider $h(x,\xi) = \tfrac{1}{2} \xi^2 + V(x)$ for some smooth bounded potential $V$ with bounded derivatives up to any order. Then the $l$th component of momentum $\xi_j$ has a good semiclassical limit: we can (formally) compute 
	\begin{align*}
		\frac{\dd}{\dd t} \Pe_j(t) &= \frac{i}{\eps} [ \Op(h) , \Pe_j(t) ] = \frac{i}{\eps} e^{+ i \frac{t}{\eps} \Op(h)} \, [\Op(h) , \Pe_j] \, e^{- i \frac{t}{\eps} \Op(h)} 
		= - \partial_{x_j} V(\Qe(t)) 
		\\
		&= \frac{i}{\eps} e^{+ i \frac{t}{\eps} \Op(h)} \, \Op \bigl ( [h , \xi_j]_{\Weyl} \bigr ) \, e^{- i \frac{t}{\eps} \Op(h)} 
		\\
		&
		= \frac{i}{\eps} e^{+ i \frac{t}{\eps} \Op(h)} \, \Op \bigl ( + i \eps \partial_{x_j} V \bigr ) \, e^{- i \frac{t}{\eps} \Op(h)} + \tfrac{i}{\eps} \order(\eps^3) 
		\\
		&= e^{+ i \frac{t}{\eps} \Op(h)} \, \Op \bigl ( - \partial_{x_j} V \bigr ) \, e^{- i \frac{t}{\eps} \Op(h)} + \order(\eps^2) 
	\end{align*}
	and 
	\begin{align*}
		\dot{\xi}_j(t) = \{ h , \xi_j(t) \} = - \partial_{x_j} V(x(t)) 
	\end{align*}
	explicitly. Naïvely, one may think that there is no error term in the Moyal commutator since $\xi_j$ is a polynomial of first order and thus the asymptotic expansion of the Weyl commutator terminates after the first-order correction. However, we plug in the \emph{time-evolved observable} $\xi_j(t)$ which clearly depends on the initial conditions $(x_0,\xi_0) \in \pspace$ chosen. Hence, in general, the higher-order terms in the Moyal commutator $[h , \xi_j(t)]_{\Weyl} = - i \eps \{ h , \xi_j(t) \} + \order(\eps^3)$ are non-zero. 
	
	Repeating the calculations in the proof (and ignoring technical problems), we arrive at 
	\begin{align*}
		&\bnorm{\Pe_{\mathrm{qm} \, j}(t) \varphi - \Pe_{\mathrm{cl} \, j}(t)}_{\mathcal{B}(L^2(\R^d))} \leq \int_0^t \dd s \, \Bnorm{\Op \Bigl ( \tfrac{i}{\eps} \bigl [ h , \xi_j(t - s) \bigr ]_{\Weyl} + \tfrac{\dd}{\dd s} \bigl ( \xi_j(t - s) \bigr ) \Bigr )}_{\mathcal{B}(L^2(\R^d))} 
		\\
		&\qquad \qquad 
		= \int_0^t \dd s \, \Bnorm{\Op \Bigl ( - \partial_{x_j} V(x(t-s)) + \order(\eps^2) + \partial_{x_j} V(x(t-s) \Bigr )}_{\mathcal{B}(L^2(\R^d))} 
		\\
		&\qquad \qquad 
		\leq \int_0^t \dd s \, C_T \eps^2 = C_T \eps^2 \abs{t} 
	\end{align*}
	Since the derivatives of the potential up to third order (which are needed to control the remainder) are bounded, $C_T < \infty$ may be chosen. Hence, at least formally, momentum has a good semiclassical limit. Similarly, position also has a good semiclassical limit. 
\end{example}
The semiclassical limit for states is a consequence of the semiclassical limit for observables. 
\begin{cor}[Semiclassical limit for states]
	Let $h \in \Schwartz(\pspace)$ and $T > 0$. For any $\psi \in L^2(\R^d)$, we define $\psi(t) := e^{- i \frac{t}{\eps} \Op(h)} \psi$. Then the classically evolved signed probability measure 
	\begin{align}
		\mu_{\mathrm{cl}}(t) := (2 \pi)^{- \nicefrac{d}{2}} \, \WignerTrafo(\psi,\psi) \circ \phi_{-t} 
	\end{align}
	approximates the quantum mechanically evolved state 
	\begin{align}
		\mu_{\mathrm{qm}}(t) := (2 \pi)^{- \nicefrac{d}{2}} \, \WignerTrafo(\psi(t),\psi(t)) 
	\end{align}
	up to errors of $\order(\eps^2)$ in the sense that 
	\begin{align}
		\int_{\pspace} \dd X \, f(X) \, \Bigl ( \bigl ( \mu_{\mathrm{qm}}(t) \bigr )(X) - \bigl ( \mu_{\mathrm{cl}}(t) \bigr )(X) \Bigr ) = \order(\eps^2)
		&&
		\forall f \in \Schwartz(\pspace)
		\label{semiclassics:eqn:O_eps2_closeness_states}
	\end{align}
	holds for any $t \in [-T , +T]$. 
\end{cor}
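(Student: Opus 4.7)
The plan is to reduce the statement directly to Theorem~\ref{semiclassics:thm:semiclassical_limit_observables} via Corollary~\ref{weyl_calculus:weyl_calculus:wigner_transform:cor:quantum_classical_expectation_value}, which identifies phase space integrals against the Wigner transform with quantum expectation values. The whole point is that if we pair both sides with a Schwartz test function $f$, then the quantum side becomes $\bscpro{\psi(t)}{\Op(f)\psi(t)} = \bscpro{\psi}{F_{\mathrm{qm}}(t)\psi}$, and after a classical change of variables the classical side becomes $\bscpro{\psi}{\Op(f\circ\phi_t)\psi} = \bscpro{\psi}{F_{\mathrm{cl}}(t)\psi}$. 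The difference is then controlled by the operator norm estimate for $F_{\mathrm{qm}}(t)-F_{\mathrm{cl}}(t)$.

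Concretely, I would first unfold the quantum side. Using the unitarity of $e^{-i\frac{t}{\eps}\Op(h)}$ and Corollary~\ref{weyl_calculus:weyl_calculus:wigner_transform:cor:quantum_classical_expectation_value}, we have
\begin{align*}
	\int_{\pspace} \dd X \, f(X) \, \bigl(\mu_{\mathrm{qm}}(t)\bigr)(X)
	&= \frac{1}{(2\pi)^{\nicefrac{d}{2}}} \int_{\pspace} \dd X \, f(X) \, \WignerTrafo(\psi(t),\psi(t))(X) \\
	&= \bscpro{\psi(t)}{\Op(f) \psi(t)}
	= \bscpro{\psi}{F_{\mathrm{qm}}(t) \psi}.
\end{align*}
For the classical side I would perform the change of variables $X \mapsto \phi_t(X)$ in the integral. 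By Theorem~\ref{frameworks:classical:thm:Liouville} the flow has unit Jacobian, so
\begin{align*}
	\int_{\pspace} \dd X \, f(X) \, \bigl(\mu_{\mathrm{cl}}(t)\bigr)(X)
	&= \frac{1}{(2\pi)^{\nicefrac{d}{2}}} \int_{\pspace} \dd X \, f(X) \, \WignerTrafo(\psi,\psi)\bigl(\phi_{-t}(X)\bigr) \\
	&= \frac{1}{(2\pi)^{\nicefrac{d}{2}}} \int_{\pspace} \dd X \, (f \circ \phi_t)(X) \, \WignerTrafo(\psi,\psi)(X)
	= \bscpro{\psi}{F_{\mathrm{cl}}(t) \psi}.
\end{align*}

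Subtracting the two identities and applying Cauchy--Schwarz gives
\begin{align*}
	\abs{\int_{\pspace} \dd X \, f(X) \, \bigl(\mu_{\mathrm{qm}}(t) - \mu_{\mathrm{cl}}(t)\bigr)(X)}
	&= \babs{\bscpro{\psi}{\bigl(F_{\mathrm{qm}}(t) - F_{\mathrm{cl}}(t)\bigr) \psi}} \\
	&\leq \norm{\psi}^2 \, \bnorm{F_{\mathrm{qm}}(t) - F_{\mathrm{cl}}(t)}_{\mathcal{B}(L^2(\R^d))}
	\leq C_T \, \eps^2 \, \abs{t} \, \norm{\psi}^2,
\end{align*}
where the last estimate is Theorem~\ref{semiclassics:thm:semiclassical_limit_observables} applied to the test function $f$. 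This is exactly the $\order(\eps^2)$ statement claimed in equation~\eqref{semiclassics:eqn:O_eps2_closeness_states}.

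The only point that needs mild care is justifying the change of variables and checking that $f \circ \phi_t \in \Schwartz(\pspace)$ so that Theorem~\ref{semiclassics:thm:semiclassical_limit_observables} applies; both facts were already established in the proof of that theorem, where boundedness of all derivatives of $\phi_t$ on $[-T,+T]$ (via Lemma~III.9 of Robert) was invoked. There is no genuine obstacle here: once one recognizes that both sides of \eqref{semiclassics:eqn:O_eps2_closeness_states} are just quantum expectation values of $F_{\mathrm{qm}}(t)$ and $F_{\mathrm{cl}}(t)$ in the same state $\psi$, the result is immediate from the operator-norm semiclassical estimate.
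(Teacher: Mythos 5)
Your proof is correct and follows essentially the same route as the paper: rewrite both sides as quantum expectation values $\bscpro{\psi}{F_{\mathrm{qm}}(t)\psi}$ and $\bscpro{\psi}{F_{\mathrm{cl}}(t)\psi}$ via Corollary~\ref{weyl_calculus:weyl_calculus:wigner_transform:cor:quantum_classical_expectation_value} and Liouville's theorem, then invoke Theorem~\ref{semiclassics:thm:semiclassical_limit_observables}. The only difference is that you spell out the Cauchy--Schwarz step at the end, which the paper leaves implicit.
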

Note that this is a statement on the time evolution of states: the observable $f$ only `tests' the difference between the quantum state $\mu_{\mathrm{qm}}(t) = (2 \pi)^{- \nicefrac{d}{2}} \, \WignerTrafo(\psi(t),\psi(t))$ and the classically evolved state $\mu_{\mathrm{cl}}(t) = (2 \pi)^{- \nicefrac{d}{2}} \, \WignerTrafo(\psi,\psi) \circ \phi_{-t}$. 
\begin{proof}
	By Definition of the Wigner transform and Corollary~\ref{weyl_calculus:weyl_calculus:wigner_transform:cor:quantum_classical_expectation_value}, for any $f \in \Schwartz(\pspace)$, we can rewrite 
	\begin{align*}
		\bscpro{\psi}{F_{\mathrm{qm}}(t) \psi} &= \bscpro{\psi(t)}{\Op(f) \psi(t)} 
		= \frac{1}{(2\pi)^{\nicefrac{d}{2}}} \int_{\pspace} \dd X \, \WignerTrafo(\psi(t),\psi(t)) \, f(X) 
	\end{align*}
	as a classical expectation value with respect to the signed probability measure $\mu_{\mathrm{qm}}(t) = (2\pi)^{- \nicefrac{d}{2}} \, \WignerTrafo(\psi(t),\psi(t))$. Similarly, the expectation value of the classically evolved observable can be rewritten as 
	\begin{align*}
		\bscpro{\psi}{F_{\mathrm{cl}}(t) \psi} &= \bscpro{\psi}{\Op \bigl ( f(t) \bigr ) \psi} 
		= \frac{1}{(2\pi)^{\nicefrac{d}{2}}} \int_{\pspace} \dd X \, f \circ \phi_t (X) \, \bigl ( \WignerTrafo(\psi,\psi) \bigr )(X) 
		. 
	\end{align*}
	An application of Liouville's theorem, Theorem~\ref{frameworks:classical:thm:Liouville}, yields 
	\begin{align*}
		\bscpro{\psi}{F_{\mathrm{cl}}(t) \psi} &= \frac{1}{(2\pi)^{\nicefrac{d}{2}}} \int_{\pspace} \dd X \, f(X) \, \bigl ( \WignerTrafo(\psi,\psi) \circ \phi_{-t} \bigr )(X) 
		= \int_{\pspace} \dd X \, f(X) \, \bigl ( \mu_{\mathrm{cl}}(t) \bigr )(X)
	\end{align*}
	Hence, by Theorem~\ref{semiclassics:thm:semiclassical_limit_observables}, $\mu_{\mathrm{qm}}(t) - \mu_{\mathrm{cl}}(t) = \order(\eps^2)$ in the sense of equation~\eqref{semiclassics:eqn:O_eps2_closeness_states}. 
\end{proof}
One last word on wavepackets: there are three general types of wavepackets \cite[pp.59--60]{Teufel:adiabaticPerturbationTheory:2003}: 
\begin{enumerate}[(i)]
	\item Localization in space \emph{and} reciprocal space: let $\psi \in L^2(\R^d)$ satisfy some mild technical assumptions ($\psi \in \Schwartz(\R^d)$ works, but more general $L^2$-functions are admissible as well). Then the $L^2$-function 
	\begin{align*}
		\psi^{\eps}_{x_0,\xi_0}(x) := \eps^{- \nicefrac{d}{4}} e^{- \frac{i}{\eps} \xi_0 \cdot (x - x_0)} \, \psi \bigl ( \tfrac{x - x_0}{\sqrt{\eps}} \bigr ) 
	\end{align*}
	is sharply peaked at $x_0$ for $\eps$ small while its Fourier transform is peaked around $\xi_0$. \emph{Wave packets of this type track the trajectory associated to the initial conditions $(x_0,\xi_0) \in \pspace$ and energy $E_0 = h(x_0,\xi_0)$.} Typical examples are \emph{Gaussian wavepackets} where $\psi(x) = \frac{2^{\nicefrac{d}{2}}}{\pi^{\nicefrac{d}{2}}} e^{- \frac{x^2}{2}}$. 
	\item Localization in space \emph{or} reciprocal space: for any $\psi \in L^2(\R^d)$, 
	\begin{align*}
		\hat{\psi}^{\eps}_{\xi_0}(\xi) := \hat{\psi}(\xi - \nicefrac{\xi_0}{\eps}) 
	\end{align*}
	is peaked around $\xi_0$ for $\eps \ll 1$ while its probability distribution in real space is given by $\abs{\psi(x)}^2$, \ie 
	\begin{align*}
		\bigl ( \WignerTrafo(\psi^{\eps}_{\xi_0},\psi^{\eps}_{\xi_0}) \bigr )(x,\xi) \xrightarrow{\eps \to 0} (2\pi)^{\nicefrac{d}{2}} \, \abs{\psi(x)}^2 \, \delta(\xi - \xi_0) 
	\end{align*}
	in the weak sense. Similarly, the classical pseudo-probability distribution associated to 
	\begin{align*}
		\psi_{x_0}^{\eps}(x) := \eps^{- \nicefrac{d}{2}} \psi \bigl ( \tfrac{x - x_0}{\eps} \bigr ) 
	\end{align*}
	is $\sabs{\hat{\psi}(\xi)}^2 \delta(x - x_0)$. In both cases, one deals with a \emph{distribution of initial states} in either real or reciprocal space. 
	\item \emph{WKB ansatz:} here, we assume the wave function is of the form 
	\begin{align*}
		\psi_{\eps}(x) := \sqrt{\rho(x)} \, e^{- \frac{i}{\eps} S(x)} 
		. 
	\end{align*}
	The ansatz here is that the wave function associated to the probability distribution $\rho$ has fast oscillations superimposed to a slowly varying enveloping function $\sqrt{\rho}$. Then 
	\begin{align*}
		\bigl ( \WignerTrafo(\psi_{\eps},\psi_{\eps}) \bigr )(x,\xi) \xrightarrow{\eps \to 0} (2\pi)^{\nicefrac{d}{2}} \, \rho(x) \, \delta(\xi - \nabla_x S(x))
	\end{align*}
	holds in the weak sense as has been computed in exercise~27. This means, we start with an initial distribution of states in real space, each of which follows the classical trajectory associated to the classical action $S$. 
\end{enumerate}
%


\chapter{Theory of multiscale systems} 
\label{multiscale}

There are plenty of examples of systems with two or more inherent scales: 
\begin{enumerate}[(i)]
	\item A classical spinning top: the Earth, for instance, rotates around its own axis about once every 24 hours (to be more exact, it is approximately $23$ hours, $56$ minutes and $4.1$ seconds). On the other hand, the axis of rotation precesses around the axis of precession about once ever $25,700-25,800$ years. 
	\item A quantum spinning top is realized by systems with $L \cdot S$ coupling in strong magnetic fields. Here, $L_z$ and $S_z$ are ``almost good quantum numbers'' \cite[p.~311]{Sakurai:modern_qm:1994} if the magnetic field $B$ is strong. 
	\item The Dirac equation as treated in Chapter~\ref{weyl_calculus:diag_dirac} \cite{FuerstLein:nonRelLimitDirac:2008}. 
	\item Currents in crystalline solids including a Kubo formula which explains the quantization of the Hall current \cite{PST:effDynamics:2003}. 
	\item Polarization and piezoelectricity of crystalline solids \cite{Lein:polarization:2005,PanatiSparberTeufel:polarization:2006}. 
	\item Physics of molecules in the Born-Oppenheimer approximation \cite{PST:Born-Oppenheimer:2007}. 
	\item Time-adiabatic problems of quantum mechanics \cite[Chapter~4.3]{PST:sapt:2002}, \cite{Kato:perturbation_theory:1966}. 
\end{enumerate}
Our treatment is based on the seminal work of Panati, Spohn and Teufel \cite{PST:sapt:2002,Teufel:adiabaticPerturbationTheory:2003}, \emph{space-adiabatic perturbation theory} and we will introduce it by a specific example:

\section{Physics of molecules: the Born-Oppenheimer approximation} 
\label{}

Consider a molecule in $\R^3$ consisting of $N$ nuclei with masses $m_n$, $n \in \{ 1 , \ldots , N \}$, and $K$ electrons. If we consider electrons and nuclei as non-relativist particles, then, quantum mechanically, the dynamics is generated by the Born-Oppenheimer hamiltonian: 
\begin{align}
	\opHBO := \sum_{n = 1}^N \frac{1}{2 m_n} (- i \hbar \nabla_{x_n})^2 &+ \sum_{k = 1}^K \frac{1}{2 m_{\mathrm{e}}} (- i \hbar \nabla_{y_k})^2 
	+ \notag \\
	&+ V_{\mathrm{e-e}}(\hat{y}) + V_{\mathrm{e-nuc}}(\hat{x} , \hat{y}) + V_{\mathrm{nuc-nuc}}(\hat{x}) 
	\notag \\
	=: \sum_{n = 1}^N \frac{1}{2 m_n} (- i \hbar \nabla_{x_n})^2 &+ \He(\hat{x})
	\label{multiscale:physics_of_molecules:eqn:naive_Born_Oppenheimer_hamiltonian}
\end{align}
The potential energy has three distinct contributions: an electron-electron interaction term, an electron-nucleon interaction term and a nucleon-nucleon interaction term. Here, $x_n \in \R^3$ and $y_k \in \R^3$ are the coordinates in $\R^3$ of the $n$th nucleon and $k$th electron, respectively. 

If, for instance, one assumes that the particles interact via the Coulomb potential and the nuclei have (effective) charges $Z_n$, $n \in \{ 1 , \ldots , N \}$, then the potentials are given by 
\begin{align*}
	V_{\mathrm{e-e}}(\hat{y}) &:= \frac{1}{2} \sum_{j \neq k} \frac{1}{\hat{y}_j - \hat{y}_k} \\
	V_{\mathrm{e-nuc}}(\hat{x} , \hat{y}) &:= \sum_{k = 1}^K \sum_{n = 1}^N \frac{Z_n}{\hat{y}_k - \hat{x}_n} \\
	V_{\mathrm{nuc-nuc}}(\hat{x}) &:= \frac{1}{2} \sum_{l \neq n} \frac{Z_l Z_n}{\hat{x}_l - \hat{x}_n} 
	. 
\end{align*}
It is often advantageous both, for practical and theoretical considerations, to smudge out the charge so as to get rid of the Coulomb singularities. Also, the bare Coulomb potential is often replaced by an effective nucleonic potential that takes into account the fact that electrons associated to lower shells do not play a role in chemical reactions. We will always assume that the potentials are smooth functions in $x$. 

The time evolution is generated by the Schrödinger equation, 
\begin{align*}
	i \hbar \frac{\partial }{\partial t} \Psi(t) &= \opHBO \Psi(t) 
	, 
	&& 
	\Psi(0) = \Psi_0 \in L^2(\R^{3N} \times \R^{3K}) 
	. 
\end{align*}
To simplify the discussion, we have neglected here that electrons are fermions and as such obey the Pauli exclusion principle, \ie $\Psi_0$ and thus also $\Psi(t)$ should be antisymmetric when exchanging two electron coordinates $y_k$ and $y_l$. Before we continue our analysis, we will simplify the hamiltonian: first of all, we choose units such that $\hbar = 1$ and by a suitable rescaling, we may think of the nuclei having the same mass $m_{\mathrm{nuc}}$. Then the Born-Oppenheimer hamiltonian can be written as 
\begin{align*}
	\opHBO &= \frac{1}{2 m_{\mathrm{e}}} \sum_{n = 1}^N \Bigl ( - i \sqrt{\tfrac{m_{\mathrm{e}}}{m_{\mathrm{nuc}}}} \nabla_{x_n} \Bigr )^2 + \frac{1}{2 m_{\mathrm{e}}} \sum_{k = 1}^K (- i \nabla_{y_k})^2 
	+ \\
	&\qquad \qquad 
	+ V_{\mathrm{e-e}}(\hat{y}) + V_{\mathrm{e-nuc}}(\hat{x} , \hat{y}) + V_{\mathrm{nuc-nuc}}(\hat{x}) 
	. 
\end{align*}
This suggests to introduce $\eps := \sqrt{\nicefrac{m_{\mathrm{e}}}{m_{\mathrm{nuc}}}} \ll 1$ as the small parameter and by choosing units such that $m_{\mathrm{e}} = 1$, we finally obtain 
\begin{align*}
	\opHBO &= \tfrac{1}{2} \bigl ( - i \eps \nabla_x \bigr )^2 + \tfrac{1}{2} \bigl ( - i \nabla_y \bigr )^2 + V_{\mathrm{e-e}}(\hat{y}) + V_{\mathrm{e-nuc}}(\hat{x} , \hat{y}) + V_{\mathrm{nuc-nuc}}(\hat{x}) 
	\\
	&=: \tfrac{1}{2} \Pe^2 + \He(\Qe) 
	. 
\end{align*}
Physically, $\eps$ is always small since a proton is about $2,000$ times heavier than an electron. This means even in the worst case, $\eps \lesssim \nicefrac{1}{44}$. Typical atomic cores contain tens of hadrons, \ie $m_{\mathrm{nuc}} \simeq 10^4-10^5 m_{\mathrm{e}}$, which means that very often the numerical value of $\eps$ is significantly smaller than $\nicefrac{1}{44}$. Hence, we have introduced the usual building block observables 
\begin{align}
	\Pe &:= - i \eps \nabla_x 
	\label{multiscale:physics_of_molecules:eqn:building_block_observables}
	\\
	\Qe &:= \hat{x} \notag 
\end{align}
and we can use the results from Chapter~\ref{weyl_calculus}. 

The physical intuition tells us that electrons adjust instantaneously to the configuration of the nuclei, \ie they are enslaved by the nuclei, and $\Pe$ and $\Qe$ are \emph{approximate constants of motion}, 
\begin{align*}
	[\opHBO , \Pe] &= \order(\eps) \\
	[\opHBO , \Qe] &= \order(\eps) 
	. 
\end{align*}
The strategy to solve this from a physicist's perspective is the following: \marginpar{\small 2010.01.20}
\begin{enumerate}[(i)]
	\item For fixed nucleonic positions, one solves the eigenvalue problem 
	\begin{align*}
		\He(x) \varphi_n(x) &= E_n(x) \varphi_n(x) 
	\end{align*}
	where the nucleonic positions $x = (x_1 , \ldots , x_N) \in \R^{3N}$ are regarded as \emph{parameters} and $\varphi_n(x) \in \Hfast := L^2(\R^{3K})$ is an electronic wave function. More properly, one looks at the spectrum of $\He(x)$ for each value of $x$ which gives rise to energy bands (bound states where the molecule exists) and usually also a part with continuous spectrum (which means the molecule has dissociated). Typically, the energy values $E_n$ are sorted by size, \ie 
	\begin{align*}
		E_1(x) \leq E_2(x) \leq \ldots 
	\end{align*}
	\begin{figure}
		\hfil\includegraphics[height=5cm]{./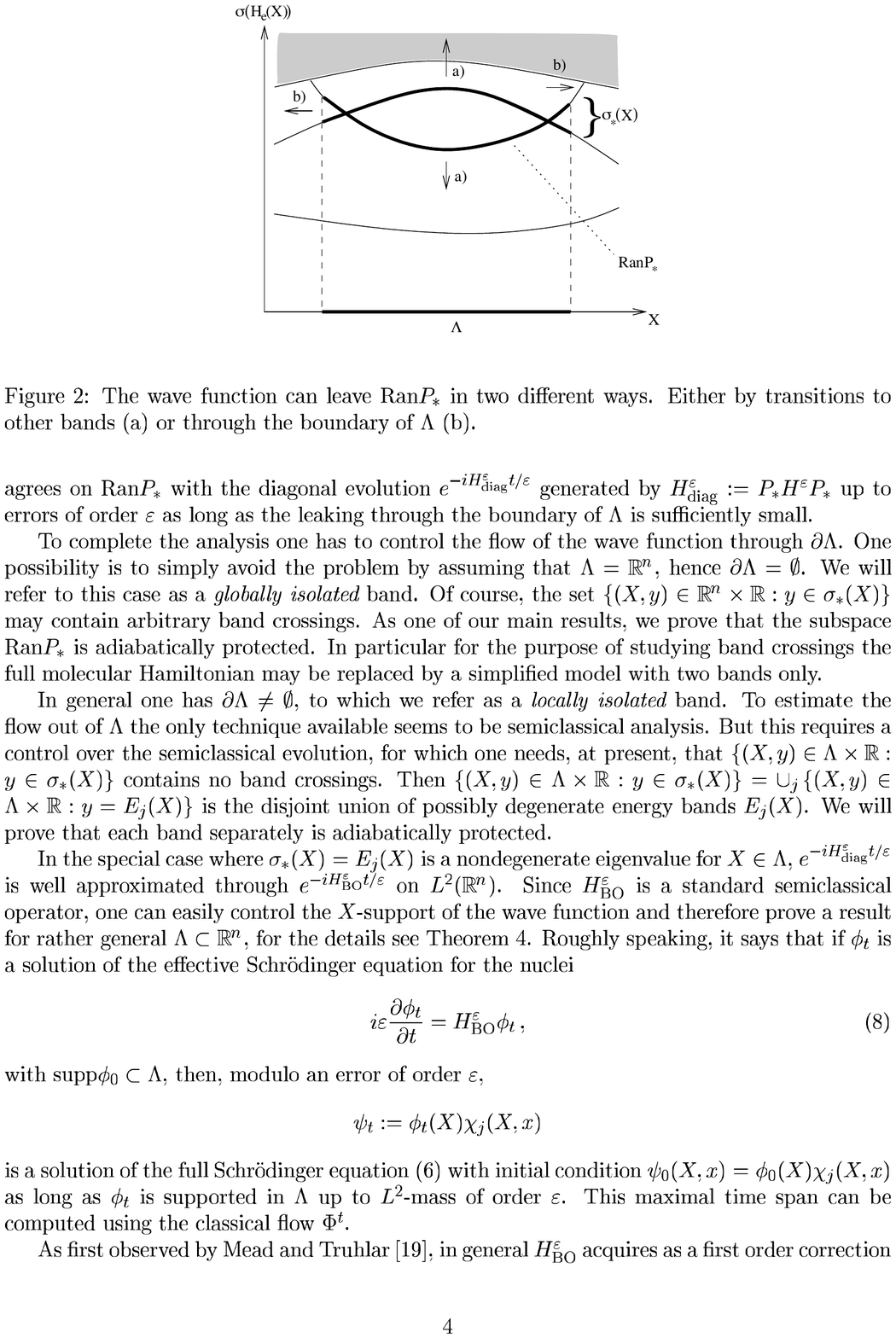}\hfil
		\caption{A simple example of a band spectrum. The relevant part of the spectrum is separated from the remainder by a gap \cite{Spohn_Teufel:time-dep_BO_approx:2001}. }
		\label{multiscale:physics_of_molecules:figure:band_structure}
	\end{figure}
	\item For physical reactions, there are usually \emph{``relevant bands,''} \eg the ground state and the first excited state. These two need to be \emph{separated by a gap from the rest of the spectrum} so that one may neglect band transitions from the ``relevant bands'' to the rest. 
	\item The last step is an \emph{``adiabatic approximation:''} one can neglect band transitions to other bands if the total energy is bounded independently of $\eps$, \eg if 
	\begin{align*}
		\norm{\Pe \Psi_0} = \norm{\eps \nabla_x \Psi_0} = \order(1) 
		. 
	\end{align*}
	Then there are several options: 
	\begin{enumerate}[(a)]
		\item One treats the nucleons \emph{classically}, \eg in case of an isolated relevant band, one considers the hamilton function 
		\begin{align*}
			H_{\mathrm{cl}}(x,\xi) := \tfrac{1}{2} \xi^2 + E_1(x) 
		\end{align*}
		where the ground state energy band $E_1$ acts as an \emph{effective potential}. For us, the semiclassical approximation is a distinct and separate step (see Chapter~\ref{multiscale:semiclassics}). 
		\item Or one treats nucleons quantum mechanically. Usually one does a product ansatz here: let $\chi \in L^2(\R^{3N})$ be a nucleonic wave function and $\varphi_1(x) \in L^2(\R^{3K})$ be the electronic ground state. Then we define the product wave function 
		\begin{align*}
			\Psi(x,y) := \chi(x) \, \varphi_1(x,y) 
			. 
		\end{align*}
		Related to this, we define the projection onto the electronic ground state 
		\begin{align*}
			\pi_0(x) := \sopro{\varphi_1(x)}{\varphi_1(x)}_{L^2(\R^{3K})} 
			. 
		\end{align*}
		For each $x$, $\pi_0(x)$ is an operator on the electronic subspace $L^2(\R^{3K}) = \Hfast$. Its quantization $\pi_0(\Qe)$ is an operator on the full Hilbert space $L^2(\R^{3N} \times \R^{3K})$ which maps each $\Phi \in L^2(\R^{3N} \times \R^{3K})$ onto a product wave function, 
		\begin{align*}
			\bigl ( \pi_0(\Qe) \Phi \bigr )(x,y) = \underbrace{\bscpro{\varphi_1(x)}{\Phi(x,\cdot)}_{L^2(\R^{3K})}}_{\in L^2(\R^{3N})} \, \varphi_1(x,y) 
			. 
		\end{align*}
		In order to compute how $\opHBO$ acts on product wave functions 
		\begin{align*}
			\Psi(x,y) = \chi(x) \, \varphi_1(x,y) 
			, 
		\end{align*}
		we compute the expectation value: since 
		\begin{align*}
			\bscpro{\Psi}{\opHBO \Psi} &= \tfrac{1}{2} \bscpro{\Psi}{\Pe^2 \Psi} + \bscpro{\Psi}{\He(\Qe) \Psi} 
			= \tfrac{1}{2} \bscpro{\Pe \Psi}{\Pe \Psi} + \bscpro{\Psi}{\He(\Qe) \Psi} 
		\end{align*}
		holds, we can compute the expectation values separately for kinetic and potential energy: the potential energy term can be inferred from 
		\begin{align*}
			\bigl ( \He(\Qe) \chi \varphi_1 \bigr )(x,y) &= \chi(x) \, \He(x) \varphi_1(x,y) = E_1(x) \, \chi(x) \varphi_1(x,y) 
			\\
			&= \bigl ( E_1(\Qe) \chi \varphi_1 \bigr )(x,y) 
			. 
		\end{align*}
		The kinetic energy term splits in four: using the symmetry of $\Pe$, we get 
		\begin{align*}
			\bscpro{\Pe \Psi}{\Pe \Psi} &= \int_{\R^{3N}} \dd x \int_{\R^{3K}} \dd y \, \bigl ( \Pe \chi \varphi_1 \bigr )^* (x,y) \, \bigl ( \Pe \chi \, \varphi_1 \Bigr )(x,y) 
			\\
			&= \int_{\R^{3N}} \dd x \int_{\R^{3K}} \dd y \, \Bigl ( \bigl ( \Pe \chi \bigr )^*(x) \, \varphi_1^*(x,y) + \chi^*(x) \, \bigl ( \Pe \varphi_1 \bigr )^*(x,y) \Bigr ) 
			\cdot \\
			&\qquad \qquad \qquad \qquad \cdot 
			\Bigl ( \bigl ( \Pe \chi \bigr )(x) \, \varphi_1(x,y) + \chi(x) \, \bigl ( \Pe \varphi_1 \bigr )(x,y) \Bigr ) 
			. 
		\end{align*}
		Now we rewrite this expression by introducing the scalar product on $L^2(\R^{3K})$, using that the fast state $\varphi_1(x)$ is normalized for all $x \in \R^{3N}$, 
		\begin{align*}
			\scpro{\varphi_1(x)}{\varphi_1(x)}_{L^2(\R^{3K})} = 1 
			, 
		\end{align*}
		and by introducing the \emph{Berry phase} \cite{Berry:quantal_phase_factors:1984,Simon:holonomy:1983,Bohm_Mostafazadeh_Koizumi_Niu_Zwanziger:geometric_phase:2003} 
		\begin{align}
			\mathcal{A}(x) := i \scpro{\varphi_1(x)}{\nabla_x \varphi_1(x)}_{L^2(\R^{3K})} 
			. 
		\end{align}
		This leads to the following simplifications if we complete the square: 
		\begin{align*}
			\bscpro{\Pe \Psi}{\Pe \Psi} &= \int_{\R^{3N}} \dd x \, \Bigl ( \bigl ( \Pe \chi \bigr )^*(x) \, \bigl ( \Pe \chi \bigr )(x) 
			+ \Bigr . \\
			&\qquad \qquad \qquad 
			- \bigl ( \Pe \chi \bigr )^*(x) \cdot \eps \mathcal{A}(x) \, \chi(x) 
			- \chi^*(x) \, \eps \mathcal{A}^*(x) \cdot \bigl ( \Pe \chi \bigr )(x) 
			+ \\
			&\qquad \qquad \qquad 
			+ \eps^2 \chi^*(x) \, \bscpro{\nabla_x \varphi_1(x)}{\cdot \nabla_x \varphi_1(x)}_{L^2(\R^{3K})} \, \chi(x)
			\Bigr ) 
			\\
			&= \bscpro{\chi}{\bigl ( \Pe - \eps \mathcal{A}(\Qe) \bigr )^2 \chi}_{L^2(\R^{3N})} 
			+ \\
			&\qquad 
			- \eps^2 \bscpro{\chi}{\mathcal{A}(\Qe)^2 \chi}_{L^2(\R^{3N})} + \eps^2 \Bscpro{\chi}{\scpro{\nabla_x \varphi_1(\Qe)}{\cdot \nabla_x \varphi_1(\Qe)} \chi}_{L^2(\R^{3N})} 
			\\
			&= \bscpro{\chi}{\bigl ( \Pe - \eps \mathcal{A}(\Qe) \bigr )^2 \chi}_{L^2(\R^{3N})} 
			+ \\
			&\qquad 
			+ \eps^2 \Bscpro{\chi}{\bscpro{\nabla_x \varphi_1(\Qe)}{(\id - \sopro{\varphi_1(\Qe)}{\varphi_1(\Qe)}) \nabla_x \varphi_1(\Qe)} \chi}_{L^2(\R^{3N})} 
			. 
		\end{align*}
		This suggests to propose 
		\begin{align}
			\pi_0(\Qe) \opHBO \pi_0(\Qe) &= \tfrac{1}{2} \bigl ( \Pe - \eps \mathcal{A}(\Qe) \bigr )^2 + E_1(\Qe) + \tfrac{\eps^2}{2} \phi(\Qe) 
			\label{multiscale:physics_of_molecules:eqn:guess_effective_hamiltonian}
		\end{align}
		as \emph{effective hamiltonian} for the nuclei where the last term, 
		\begin{align}
			\phi(x) := \Bscpro{\nabla_x \varphi_1(x)}{\cdot (\id - \sopro{\varphi_1(x)}{\varphi_1(x)}) \nabla_x \varphi_1(x)}_{L^2(\R^{3K})} 
		\end{align}
		is the so-called \emph{Born-Huang potential}. 
	\end{enumerate}
\end{enumerate}
What are the problems associated to this simplistic approach? 
\begin{enumerate}[(i)]
	\item The nuclei move slowly, their typical velocities are of order $\order(\eps)$, and we have to wait for times of order $\order(\nicefrac{1}{\eps})$ to see motion on the scale $\order(1)$. Hence, it is necessary to include all $\order(\eps)$ corrections. 
	\item Since $[\pi_0(\Qe) , \opHBO] = \order(\eps)$, after times of order $\order(\nicefrac{1}{\eps})$, has the initial product wave function evolved into an approximate product wave function? If so, how big is the error after macroscopic times? In other words, does 
	\begin{align*}
		\bigl ( e^{- i \frac{t}{\eps} \opHBO} \chi \varphi_1 \bigr )(x,y) = \tilde{\chi}(x) \, \varphi_1(x,y) + \order(\eps)
	\end{align*}
	hold for some $\tilde{\chi} \in L^2(\R^{3N})$? Naïvely, the answer would be no: if we compare the full evolution $e^{- i \frac{t}{\eps} \opHBO}$ to that generated by $\pi_0(\Qe) \opHBO \pi_0(\Qe)$, then a naïve approximation yields 
	\begin{align*}
		\Bigl ( e^{- i \frac{t}{\eps} \opHBO} &- e^{- i \frac{t}{\eps} \pi_0(\Qe) \opHBO \pi_0(\Qe)} \Bigr ) \pi_0(\Qe) = 
		\\
		&= \int_0^{\nicefrac{t}{\eps}} \dd s \, \frac{\dd }{\dd s} \Bigl ( e^{- i s \opHBO} e^{- i (\frac{t}{\eps} - s) \pi_0(\Qe) \opHBO \pi_0(\Qe)} \Bigr ) \pi_0(\Qe) 
		\\
		&= \int_0^{\nicefrac{t}{\eps}} \dd s \, e^{- i s \opHBO} \, \Bigl ( - i \opHBO \pi_0(\Qe) + i \pi_0(\Qe) \opHBO \pi_0(\Qe)^2 \Bigr ) 
		\, 
		e^{- i (\frac{t}{\eps} - s) \pi_0(\Qe) \opHBO \pi_0(\Qe)} 
		\\
		&= \int_0^{\nicefrac{t}{\eps}} \dd s \, e^{- i s \opHBO} \, i \underbrace{\bigl [ \opHBO , \pi_0(\Qe) \bigr ]}_{= \order(\eps)} \, \pi_0(\Qe) e^{- i (\frac{t}{\eps} - s) \pi_0(\Qe) \opHBO \pi_0(\Qe)} 
		\\
		&= \order(\nicefrac{1}{\eps}) \cdot \order(\eps) = \order(1) 
		. 
	\end{align*}
	This is \emph{not good enough} for our intents and purposes since the errors may sum up to order $\order(1)$. In reality, the integral is highly oscillatory and we will show in Chapter~\ref{multiscale:effective_quantum_dynamics:effective_dynamics} that the errors stay $\order(\eps)$ small. The way one should think about it is this: the wave function oscillates quickly in and out of the space of product states $\pi_0(\Qe) L^2(\R^{3N} \times \R^{3K})$ (see Figure~\ref{multiscale:physics_of_molecules:figure:oscillations}). 
	\begin{figure}
		\hfil\includegraphics[height=4cm]{./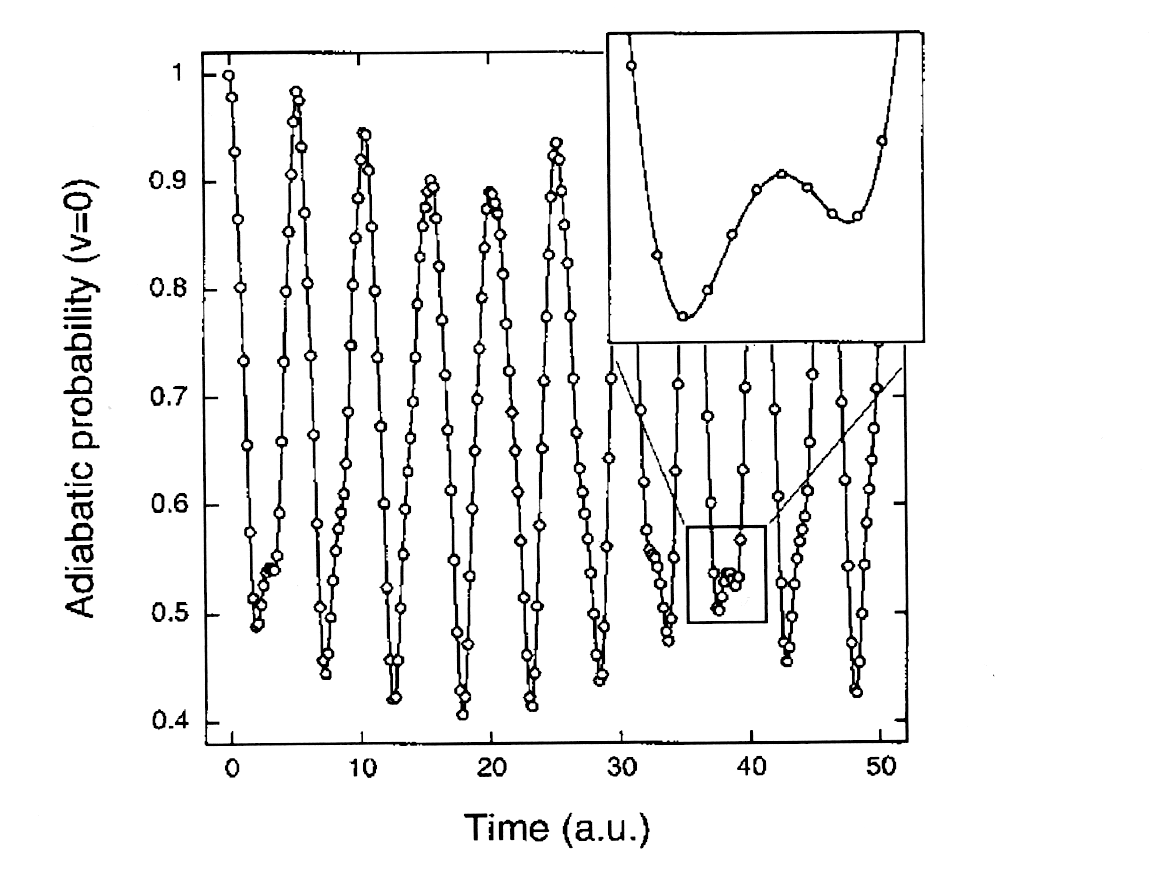}\hfil
		\caption{The wave functions oscillates in and out of the relevant subspace \cite[p.~162]{Wyatt:qd_trajectories:2005}. }
		\label{multiscale:physics_of_molecules:figure:oscillations}
	\end{figure}
	\item Last, but not least, what about higher-order corrections? Are we able to improve our approximation by some scheme? 
\end{enumerate}
%


\section{The adiabatic trinity} 
\label{multiscale:adiabatic_trinity}
Born-Oppenheimer systems have three basic features that are shared by many other multiscale systems (\eg the treatment of the relativistic spin-$\nicefrac{1}{2}$ particle presented in Section~\ref{weyl_calculus:diag_dirac}). We will call them \emph{adiabatic trinity}: 
\begin{enumerate}[(i)]
	\item \emph{A distinction between slow and fast degrees of freedom. }
	The nuclei move much slower than the electrons. This suggests to split the total, physical Hilbert space $\Hphys := L^2(\R^{3N} \times \R^{3K})$ into slow (nucleonic) and fast (electronic) Hilbert space, 
	\begin{align*}
		\Hslow \otimes \Hfast := L^2(\R^{3N}) \otimes L^2(\R^{3K}) 
		. 
	\end{align*}
	\item \emph{A small dimensionless parameter $\eps \ll 1$ which quantifies the separation of scales.}
	In our situation, $\eps := \sqrt{\nicefrac{m_{\mathrm{e}}}{m_{\mathrm{nuc}}}}$ quantifies the ratio of typical velocities of electrons and nuclei if they are assumed to have similar momentum. 
	\item \emph{A relevant part of the spectrum of $H_{\eps = 0}(x,\xi)$ (an operator on $\Hfast$) separated by a gap from the remainder. }
	Here, we are interested in a collection of bands 
	\begin{align*}
		\sigma_{\rel}(x,\xi) := \bigcup_{j \in \mathcal{J}} \bigl \{ E_j(x) \bigr \} 
	\end{align*}
	that are physically relevant, \eg the ground state band. Here, $\mathcal{J}$ is a finite index set, $L := \abs{\mathcal{J}} < \infty$. There may be band crossings between relevant bands, but the relevant bands must not cross, merge or touch the remainder of the spectrum $\sigma_{\rel}^c(x,\xi) := \sigma \bigl ( H_{\eps = 0}(x,\xi) \bigr ) \setminus \sigma_{\rel}(x,\xi)$. In other words, there must be a gap 
	\begin{align*}
		C_g := \inf_{(x,\xi) \in \pspace} \mathrm{dist} \, \bigl ( \sigma_{\rel}(x,\xi) , \sigma_{\rel}^c(x,\xi) \bigr ) > 0 
		. 
	\end{align*}
	This gap suppresses transitions from relevant bands to states in the remainder of the spectrum \emph{exponentially}. Hence, we can consider the contributions from the relevant part of the spectrum separately. \marginpar{\small 2010.01.26}
\end{enumerate}
Since this is a conceptual problem, we will use ``conceptual notation,'' \eg $\Hphys$ instead of $L^2(\R^{3N} \times \R^{3K})$ and $\Hslow$ instead of $L^2(\R^{3N})$. We are initially given a hamiltonian operator 
\begin{align*}
	\hat{H}_{\eps} = H_{\eps}(\Qe,\Pe) := \sum_{n = 0}^{\infty} \eps^n H_n(\Qe,\Pe) := \opHBO 
\end{align*}
which acts on $\Hphys = L^2(\R^{3N} \times \R^{3K})$. We allow the more general situation where $\hat{H}_{\eps}$ is a power series in $\eps$. The idea is that for $\eps > 0$, the nucleonic kinetic energy acts as a ``perturbation'' for the case $\eps = 0$. If the nuclei are infinitely more heavy than the electrons, $\eps = 0$, their dynamics freezes out and we can arrange all bits and pieces into the following diagram: 
\begin{align}
	\bfig
		\node L2full(-1300,0)[\Hphys]
		\node piL2full(-1300,-600)[\hat{\pi}_0 \Hphys]
		\node HslowHfast(0,0)[\Hslow \otimes \Hfast]
		\node Href(0,-600)[\Hslow \otimes \C^{\abs{\mathcal{J}}}]
		\arrow[L2full`HslowHfast;\hat{u}_0]
		\arrow[L2full`piL2full;\hat{\pi}_0]
		\arrow[HslowHfast`Href;\Piref]
		\arrow/-->/[piL2full`Href;]
		\Loop(-1300,0){\Hphys}(ur,ul)_{e^{-i t \hat{H}_{\eps = 0}}} 
		\Loop(0,0){\Hslow \otimes \Hfast}(ur,ul)_{e^{-i t \hat{u}_0 \hat{H}_{\eps = 0} \hat{u}_0^*}} 
		\Loop(-1300,-600){\hat{\pi}_0 \Hphys}(dr,dl)^{e^{-i t \hat{\pi}_0 \hat{H}_{\eps = 0} \hat{\pi}_0}} 
		\Loop(0,-600){\Hslow \otimes \C^{\abs{\mathcal{J}}}}(dr,dl)^{e^{-i t \Piref \hat{u}_0 \hat{H}_{\eps = 0} \hat{u}_0^* \Piref}} 
	\efig
	\label{multiscale:adiabatic_trinity:diagram:unperturbed}
\end{align}
In the unperturbed case, the dynamics on the physical Hilbert space $\Hphys = L^2(\R^{3N} \times \R^{3K})$ is generated by $e^{- i t \hat{H}_{\eps = 0}} = e^{- i t \He(\Qe)}$. Associated to the family of \emph{physically relevant bands}, property~(iii) of the adiabatic trinity, 
\begin{align*}
	\sigma_{\mathrm{rel}}(x,\xi) := \bigcup_{j \in \Index} \{ E_j(x) \} 
	, 
\end{align*}
we can defined the associated \emph{family of projections}
\begin{align}
	\pi_0(x) := \sum_{j \in \Index} \sopro{\varphi_j(x)}{\varphi_j(x)} : \Hfast \longrightarrow \Hfast 
	. 
	\label{multiscale:adiabatic_trinity:eqn:pi0}
\end{align}
Here, the $\varphi_j(x)$ are the eigenfunctions of the electronic hamiltonian $\He(x)$ associated to $E_j(x)$, 
\begin{align*}
	\He(x) \varphi_j(x) = E_j(x) \, \varphi_j(x) 
	. 
\end{align*}
Fear each $x \in \R^{3N}$, $\He(x) \in \mathcal{B}(\mathcal{D},\Hfast)$ is a selfadjoint operator with dense domain $\mathcal{D} \subset \Hfast$, \ie $\He$ is an operator-valued function. Without loss of generality, we may assume the $\varphi_j(x)$, $j \in \Index$, are normalized to $1$ for all $x \in \R^{3N}$, 
\begin{align*}
	\scpro{\varphi_j(x)}{\varphi_j(x)}_{\Hfast} = 1
	. 
\end{align*}
If a band $E_j$ is $k$-fold degenerate, the index set $\Index$ is such that the band is counted $k$ times, \ie we associate $k$ orthonormal vectors to this band which span the eigenspace associated to $E_j(x)$. 

If we quantize $\pi_0$ to $\hat{\pi}_0 := \pi_0(\Qe)$, it becomes an operator on $\Hphys$. The subspace 
\begin{align*}
	\hat{\pi}_0 \Hphys \subset \Hphys 
\end{align*}
corresponds to those states where the electronic (fast) part belongs to the relevant subspace. \emph{By assumption}, we are interested in the dynamics of initial states that are associated to the relevant electronic bands. Since the unperturbed hamiltonian $\hat{H}_{\eps = 0}$ commutes with $\hat{\pi}_0$, $[ \hat{H}_{\eps = 0} , \hat{\pi}_0 ] = 0$, the operator $\hat{\pi}_0$ is a \emph{constant of motion} and thus the dynamics generated by the unperturbed hamiltonian $\hat{H}_{\eps = 0}$ leave $\hat{\pi}_0 \Hphys$ invariant. This means, the unitary time evolution on the relevant subspace $\hat{\pi}_0 \Hphys$ is generated by 
\begin{align*}
	\hat{\pi}_0 e^{- i t \hat{H}_{\eps = 0}} \hat{\pi}_0 = e^{- i t \hat{\pi}_0 \hat{H}_{\eps = 0} \hat{\pi}_0} : \hat{\pi}_0 \Hphys \longrightarrow \hat{\pi}_0 \Hphys 
	. 
\end{align*}
In our case, the projected hamiltonian is equal to 
\begin{align*}
	\hat{\pi}_0 \hat{H}_{\eps = 0} \hat{\pi}_0 &= \sopro{\varphi_j(\Qe)}{\varphi_j(\Qe)} \He(\Qe) \sopro{\varphi_{j'}(\Qe)}{\varphi_{j'}(\Qe)} 
	= E_j(\Qe) \sopro{\varphi_j(\Qe)}{\varphi_j(\Qe)} 
\end{align*}
where Einstein's summation convention is used (\ie repeated indices in a product are implicitly summed over). 

So far, we have only used property~(iii) of the adiabatic trinity, the existence of physically relevant bands. With this, we have explained the left column of the diagram. The right column is associated to property~(i), \ie the existence of slow (nucleonic) and fast (electronic) degrees of freedom. We now introduce a unitary $\hat{u}_0$ which facilitates the separation into slow and fast degrees of freedom. While it is superfluous in the unperturbed case, it is \emph{essential} once the perturbation is switched on. Let $\{ \chi_j \}_{j \in \Index} \subset \Hfast$ be a set of orthonormal vectors in the fast Hilbert space. These vectors are arbitrary and nothing of the subsequent construction will depend on a specific choice of these vectors. For all $x \in \R^{3N}$, we define the unitary operator 
\begin{align}
	u_0(x) := \sum_{j \in \Index} \sopro{\chi_j}{\varphi_j(x)} + u_0^{\perp}(x) : \Hfast \longrightarrow \Hfast 
	\label{multiscale:adiabatic_trinity:eqn:u0}
\end{align}
where $u_0^{\perp}(x)$ is associated to the remainder of the spectrum and is such that $u_0(x)$ is a unitary on $\Hfast$. $u_0(x)$ maps the $x$-dependent vector $\varphi_j(x)$, $j \in \Index$, onto the $x$-\emph{in}dependent vector $\chi_j$, 
\begin{align*}
	u_0(x) \varphi_j(x) = \chi_j = u_0(x') \varphi_j(x') 
	. 
\end{align*}
The quantization $\hat{u}_0 := u_0(\Qe)$ is a unitary map from $\Hphys$ to $\Hslow \otimes \Hfast = L^2(\R^{3N}) \otimes L^2(\R^{3K})$. The simple product state $\phi \, \varphi_j$, $\phi \in L^2(\R^{3N})$, $j \in \Index$, is mapped onto 
\begin{align*}
	\bigl ( \hat{u}_0 \phi \varphi_j \bigr )(x,y) &= \bigl ( u_0(\Qe) \phi \varphi_j \bigr )(x,y) 
	= \phi(x) \, u_0(x) \varphi_j(x,y) = \phi(x) \chi_j(y) 
	. 
\end{align*}
In the new representation, product states are mapped onto even \emph{simpler} product states where the electronic component is \emph{independent of $x$}. Since $\hat{u}_0$ is a unitary operator, the full dynamics can be transported to the unitarily equivalent space, 
\begin{align*}
	\hat{u}_0 e^{- i t \hat{H}_{\eps = 0}} \hat{u}_0^* = e^{- i t \hat{u}_0 \hat{H}_{\eps = 0} \hat{u}_0^*} : \Hslow \otimes \Hfast \longrightarrow \Hslow \otimes \Hfast 
	. 
\end{align*}
This is \emph{not an approximation}, just a convenient rewriting by choosing a suitable representation. The states associated to the relevant bands can also be written in the new representation: if we view $\hat{\pi}_0 \Hphys$ as a subspace of $\Hphys$, it makes sense to write 
\begin{align*}
	\hat{u}_0 \hat{\pi}_0 \Hphys = \hat{u}_0 \hat{\pi}_0 \hat{u}_0^* \, \hat{u}_0 \Hphys = \hat{u}_0 \hat{\pi}_0 \hat{u}_0^* ( \Hslow \otimes \Hfast ) 
	. 
\end{align*}
We dub this unitarily transformed projection \emph{reference projection} $\Piref = \id_{\Hslow} \otimes \piref$, 
\begin{align*}
	\piref :=& \; u_0(x) \pi_0(x) u_0(x)^* 
	\\
	=& \; \bigl ( \sopro{\chi_j}{\varphi_j(x)} + u_0^{\perp}(x) \bigr ) \sopro{\varphi_k(x)}{\varphi_k(x)} \bigl ( \sopro{\varphi_j(x)}{\chi_j} + u_0^{\perp}(x)^* \bigr ) 
	\\
	=& \; \sopro{\chi_j}{\chi_j} 
	. 
\end{align*}
By construction, the reference projection does not depend on the slow (nucleonic) coordinates. We can and will view 
\begin{align*}
	\Piref (\Hslow \otimes \Hfast) = \Hslow \otimes (\piref \Hfast) \cong \Hslow \otimes \C^{\abs{\Index}} 
\end{align*}
either as subspace of $\Hslow \otimes \Hfast$ or, more commonly, as $\Hslow \otimes \C^{\abs{\Index}}$. In case of the latter, each $\chi_j \in \Hfast$ is identified with a \emph{canonical base vector} of $\C^{\abs{\Index}}$. Hence, operators on $\Piref (\Hslow \otimes \Hfast) \cong \Hslow \otimes \C^{\abs{\Index}}$ can be thought of as \emph{matrix-valued operators} just like the Dirac hamiltonian. The dynamics on $\Hslow \otimes \C^{\abs{\Index}}$ is generated by 
\begin{align*}
	\Piref \, \hat{u}_0 \, e^{- i t \hat{H}_{\eps = 0}} \, \hat{u}_0^* \, \Piref = e^{- i t \Piref \, \hat{u}_0 \hat{H}_{\eps = 0} \hat{u}_0^* \, \Piref} : \Hslow \otimes \C^{\abs{\Index}} \longrightarrow \Hslow \otimes \C^{\abs{\Index}} 
	. 
\end{align*}
These dynamics describe the full unperturbed dynamics exactly if we start with initial conditions associated to the relevant bands. 
\medskip

\noindent
For the unperturbed case, \ie $\eps = 0$, we have found a way to systematically exploit the structure of adiabatic systems. What happens if we switch on the perturbation, \ie we let $\eps > 0$? Let us postulate that if the separation of scales is big enough, \ie $\eps \ll 1$ small enough, there exists an analogous diagram 
\begin{align}
	\bfig
		\node L2full(-1300,0)[\Hphys]
		\node piL2full(-1300,-600)[\hat{\Pi}_{\eps} \Hphys]
		\node HslowHfast(0,0)[\Hslow \otimes \Hfast]
		\node Href(0,-600)[\Hslow \otimes \C^{\abs{\mathcal{J}}}]
		\arrow[L2full`HslowHfast;\hat{U}_{\eps}]
		\arrow[L2full`piL2full;\hat{\Pi}_{\eps}]
		\arrow[HslowHfast`Href;\Piref]
		\arrow/-->/[piL2full`Href;]
		\Loop(-1300,0){\Hphys}(ur,ul)_{e^{-i t \hat{H}_{\eps}}} 
		\Loop(0,0){\Hslow \otimes \Hfast}(ur,ul)_{e^{-i t \hat{U}_{\eps} \hat{H}_{\eps} \hat{U}_{\eps}^*}} 
		\Loop(-1300,-600){\hat{\Pi}_{\eps} \Hphys}(dr,dl)^{e^{-i t \hat{\Pi}_{\eps} \hat{H}_{\eps} \hat{\Pi}_{\eps}}} 
		\Loop(0,-600){\Hslow \otimes \C^{\abs{\mathcal{J}}}}(dr,dl)^{e^{-i t \Piref \hat{U}_{\eps} \hat{H}_{\eps} \hat{U}_{\eps}^* \Piref}} 
	\efig
	\label{multiscale:adiabatic_trinity:diagram:perturbed}
\end{align}
where some $\eps$-dependent unitary $\hat{U}_{\eps} : \Hphys \longrightarrow \Hslow \otimes \Hfast$ facilitates the splitting into slow and fast degrees of freedom and a projection $\hat{\Pi}_{\eps} : \Hphys \longrightarrow \Hphys$ is somehow still associated to the relevant bands $\sigma_{\mathrm{rel}}(x) = \bigcup_{j \in \Index} \{ E_j(x) \}$. \emph{If} the nucleonic kinetic energy can be regarded as a perturbation, it would be natural to ask that 
\begin{align*}
	\hat{\Pi}_{\eps} &= \hat{\pi}_0 + \order(\eps) \\
	\hat{U}_{\eps} &= \hat{u}_0 + \order(\eps) 
\end{align*}
holds in some sense. Implicitly, we have postulated that $\hat{\Pi}_{\eps}$ and $\hat{U}_{\eps}$ have the following four properties: 
\begin{align}
	\hat{\Pi}_{\eps}^2 &= \hat{\Pi}_{\eps} 
	&&
	[ \hat{H}_{\eps} , \hat{\Pi}_{\eps} ] = \order(\eps^{\infty}) 
	\label{multiscale:adiabatic_trinity:eqn:four_properties_operators}
	\\
	\hat{U}_{\eps}^* \hat{U}_{\eps} &= \id_{\Hphys} , \; 
	\hat{U}_{\eps} \hat{U}_{\eps}^* = \id_{\Hslow \otimes \Hfast} 
	&&
	\hat{U}_{\eps} \hat{\Pi}_{\eps} \hat{U}_{\eps}^* = \Piref 
	\notag 
\end{align}
The first column merely tells us that $\hat{\Pi}_{\eps}$ and $\hat{U}_{\eps}$ are a projection and a unitary while the second tells us the two are \emph{adapted to the problem}. Ultimately, we are interested in the dynamics generated by the \emph{effective hamiltonian} 
\begin{align*}
	\hat{h}_{\mathrm{eff}} = \Piref \hat{U}_{\eps} \hat{H}_{\eps} \hat{U}_{\eps}^* \Piref + \order(\eps^{\infty})
\end{align*}
which generates the dynamics on the reference space $\Hslow \otimes \C^{\abs{\Index}}$. It describes the dynamics accurately if we start with an initial state in the subspace $\hat{\Pi}_{\eps} \Hphys$. Compared to the naïve subspace $\hat{\pi}_0 \Hphys$, this subspace is tiled by an angle of order $\order(\eps)$. 
\medskip

\noindent
Panati, Spohn and Teufel have constructed $\hat{\Pi}_{\eps}$ and $\hat{U}_{\eps}$ recursively order-by-order via a ``defect construction.'' Before we go into detail, let us introduce Weyl calculus for operator-valued functions which is then used to reformulate equation~\eqref{multiscale:adiabatic_trinity:eqn:four_properties_operators}. This allows us to make use of the asymptotic expansion of the Weyl product proven in Chapter~\ref{weyl_calculus:asymptotics}. 


\section{Intermezzo: Weyl calculus for operator-valued functions} 
\label{multiscale:weyl_calculus_operator_valued_functions}

We need to modify Weyl calculus to accommodate \emph{operator-valued} functions. Let $\Hil$ be a separable Hilbert space and $\mathcal{B}(\Hil)$ the Banach space of bounded operators on $\Hil$. Our goal is to quantize functions $f : \pspace \longrightarrow \mathcal{B}(\Hil)$ that take values in the bounded operators on $\Hil$. The simplest example which we have already gotten to know in Chapter~\ref{weyl_calculus:diag_dirac} are matrix-valued symbols. For all intents and purposes, the reader may think of $\Hil = \C^N$. 

As usual, we start by defining the Weyl system: for each $X \in \pspace$ 
\begin{align*}
	\WeylSys(X) := e^{- i \sigma(X,(\Qe,\Pe))} \otimes \id_{\Hil}
\end{align*}
defines a unitary operator on $L^2(\R^d , \Hil) \cong L^2(\R^d) \otimes \Hil$. With this, for any $f \in \Schwartz \bigl ( \pspace,\mathcal{B}(\Hil) \bigr )$, the Weyl quantization 
\begin{align*}
	\Op(f) := \frac{1}{(2\pi)^d} \int_{\pspace} \dd X \, (\Fs f)(X) \, \WeylSys(X) 
	= \frac{1}{(2\pi)^d} \int_{\pspace} \dd X e^{- i \sigma(X,(\Qe,\Pe))} \otimes (\Fs f)(X) 
\end{align*}
defines a bounded operator on $L^2(\R^d,\Hil)$. The proof is virtually identical to that of Proposition~\ref{weyl_calculus:weyl_quantization:prop:application_Weyl_quantization}. This also induces a Weyl product $\Weyl$ that is given by the \emph{same} formula as before, 
\begin{align*}
	(f \Weyl g)(X) = \frac{1}{(2\pi)^{2d}} \int_{\pspace} \dd Y \int_{\pspace} \dd Z \, e^{i \sigma(X,Y + Z)} \, e^{i \frac{\eps}{2} \sigma(Y,Z)} \, (\Fs f)(Y) \, (\Fs g)(Z) 
	, 
\end{align*}
which needs to be interpreted slightly differently, though. This product has the \emph{same asymptotic expansion}, 
\begin{align*}
	f \Weyl g &= f \, g - \eps \tfrac{i}{2} \{ f , g \} + \order(\eps^2) 
	. 
\end{align*}
Here, $f \, g$ is defined pointwise via the operator product on $\mathcal{B}(\Hil)$, 
\begin{align*}
	(f \, g)(X) := f(X) \, g(X) 
	. 
\end{align*}
This has some profound consequences: for instance, the Poisson bracket of an operator-valued function with itself does not vanish, 
\begin{align*}
	\{ f , f \} = \sum_{l = 1}^d \bigl ( \partial_{\xi_l} f \, \partial_{x_l} f - \partial_{x_l} f \, \partial_{\xi_l} f \bigr ) 
	= \sum_{l = 1}^d [ \partial_{\xi_l} f , \partial_{x_l} f ]
\end{align*}
since the derivative of $f$ with respect to $\xi$ in general does not commute with the derivative of $f$ with respect to $x$. Hence the Weyl commutator of $f$ and $g$ has non-trivial terms from the zeroth order on, 
\begin{align*}
	[ f , g ]_{\Weyl} &= f \Weyl g - g \Weyl f = [ f , g ] - \eps \tfrac{i}{2} \bigl ( \{ f , g \} - \{ g , f \} \bigr ) + \order(\eps^2) 
	. 
\end{align*}
This will be important in Chapter~\ref{multiscale:semiclassics} where we discuss the semiclassical limit of multiscale systems. 


\section{Effective quantum dynamics: adiabatic decoupling to all orders} 
\label{multiscale:effective_quantum_dynamics}

Now that we have introduced Weyl quantization for operator-valued functions, let us rewrite the defining properties of 
\begin{align*}
	\hat{\Pi}_{\eps} = \Op_{\eps}(\pi) + \order(\eps^{\infty})
\end{align*}
and 
\begin{align*}
	\hat{U}_{\eps} = \Op_{\eps}(u) + \order(\eps^{\infty})
\end{align*}
which are now assumed to be quantizations of the $\mathcal{B}(\Hfast)$-valued functions $\pi$ and $u$,\footnote{For technical reasons, we must distinguish between $\Op_{\eps}(\pi)$ and $\hat{\Pi}_{\eps}$ and similarly between $\Op_{\eps}(u)$ and $\hat{U}_{\eps}$. One can recover $\hat{\Pi}_{\eps}$ and $\hat{U}_{\eps}$ from $\Op_{\eps}(\pi)$ and $\Op_{\eps}(u)$, see~\cite[p.82~f.,~p.87~f.]{Teufel:adiabaticPerturbationTheory:2003} for details. In computations, one need not distinguish between these objects. } 
\begin{align}
	\pi \Weyl \pi &= \pi + \order(\eps^{\infty}) 
	&&
	[H_{\eps} , \pi]_{\Weyl} = \order(\eps^{\infty}) 
	\label{multiscale:adiabatic_trinity:eqn:four_properties_symbols}
	\\
	u \Weyl u^* &= 1 + \order(\eps^{\infty}) , \; u^* \Weyl u = 1 + \order(\eps^{\infty}) 
	&&
	u \Weyl \pi \Weyl u^* = \piref + \order(\eps^{\infty}) 
	\notag 
	. 
\end{align}
We will expand $\pi \asymp \sum_{n = 0}^{\infty} \eps^n \pi_n$ and $u \asymp \sum_{n = 0}^{\infty} \eps^n u_n$ asymptotically. We guess that the zeroth order terms are -- as suggested by the notation -- $\pi_0$ and $u_0$ as defined by equations~\eqref{multiscale:adiabatic_trinity:eqn:pi0} and \eqref{multiscale:adiabatic_trinity:eqn:u0} since 
\begin{align*}
	\pi_0 \Weyl \pi_0 &= \pi_0 + \order(\eps) 
	&&
	[H_{\eps} , \pi_0]_{\Weyl} = \order(\eps) 
	\\
	u_0 \Weyl u_0^* &= 1 + \order(\eps) , \; u_0^* \Weyl u_0 = 1 + \order(\eps) 
	&&
	u_0 \Weyl \pi_0 \Weyl u_0^* = \piref + \order(\eps)
	\notag 
	. 
\end{align*}
follows immediately from the asymptotic expansion of $\Weyl$. In case of the first equation, for instance, we use that $\pi_0(x)$ is a projection on $\Hfast$ for each $x \in \R^{3N}$ and obtain 
\begin{align*}
	\pi_0 \Weyl \pi_0 = \pi_0 \pi_0 + \order(\eps) = \pi_0 + \order(\eps) 
	. 
\end{align*}
Our goal is to construct the \emph{effective hamiltonian}
\begin{align}
	h_{\mathrm{eff}} := \piref \Weyl u \Weyl H_{\eps} \Weyl u^* \Weyl \piref 
	= \piref \, u \Weyl H_{\eps} \Weyl u^* \, \piref 
	\label{multiscale:effective_quantum_dynamics:eqn:effective_hamiltonian}
\end{align}
since its quantization gives the effective dynamics of states associated to the relevant bands. Let us start with the construction of the tilted projection.

\subsection{Construction of the projection} 
\label{multiscale:effective_quantum_dynamics:projection}

We can split each term $\pi_n$ of the projection into a (block) diagonal term 
\begin{align*}
	\pi_n^{\mathrm{D}} := \pi_0 \pi_n \pi_0 + (1 - \pi_0) \pi_n (1 - \pi_0) 
\end{align*}
and an offdiagonal term 
\begin{align*}
	\pi_n^{\mathrm{OD}} := \pi_0 \pi_n (1 - \pi_0) + (1 - \pi_0) \pi_n \pi_0 
	. 
\end{align*}
One can check easily that $\pi_n = \pi_n^{\mathrm{D}} + \pi_n^{\mathrm{OD}}$. Let us focus on the first-order correction $\pi_1$:
it turns out that the first equation, the \emph{projection defect} 
\begin{align*}
	\pi_0 \Weyl \pi_0 - \pi_0 =: \eps G_1 + \order(\eps^2) = - \tfrac{i}{2} \{ \pi_0 , \pi_0 \} + \order(\eps^2)
	, 
\end{align*}
only determines the \emph{diagonal} term. To see that, let us substitute $\pi_0 + \eps \pi_1^{\mathrm{OD}}$ into the above equation and do the computation: 
\begin{align*}
	\bigl ( \pi_0 + \eps \pi_1^{\mathrm{OD}} \bigr ) &\Weyl \bigl ( \pi_0 + \eps \pi_1^{\mathrm{OD}} \bigr ) - \bigl ( \pi_0 + \eps \pi_1^{\mathrm{OD}} \bigr ) 
	= \\
	&
	= \eps \bigl ( \underbrace{\pi_0 \pi_1^{\mathrm{OD}} + \pi_1^{\mathrm{OD}} \pi_0}_{= \pi_1^{\mathrm{OD}}} - \tfrac{i}{2} \{ \pi_0 , \pi_0 \} - \pi_1^{\mathrm{OD}} \bigr ) + \order(\eps^2) 
	\\
	&
	= - \tfrac{i}{2} \{ \pi_0 , \pi_0 \} + \order(\eps^2)
\end{align*}
This means we could have substituted \emph{any} offdiagonal $\pi_1^{\mathrm{OD}}$ into the equation, the term cancels out and its choice does not affect the projection defect $G_1$. One can show that the diagonal part can be computed from $G_1$ via 
\begin{align*}
	\pi_1^{\mathrm{D}} = - \pi_0 G_1 \pi_0 + (1 - \pi_0) G_1 (1 - \pi_0) 
	. 
\end{align*}
The diagonal part needs to be taken into account when computing the \emph{commutation defect} 
\begin{align*}
	\bigl [ H_{\eps} , \pi_0 + \eps \pi_1^{\mathrm{D}} \bigr ]_{\Weyl} &=: \eps F_1 + \order(\eps^2) 
	\\
	&= \eps \Bigl ( [H_1 , \pi_0] + [H_0 , \pi_1^{\mathrm{D}}] - \tfrac{i}{2} \{ H_0 , \pi_0 \} + \tfrac{i}{2} \{ \pi_0 , H_0 \} \Bigr ) + \order(\eps^2) 
\end{align*}
from which the offdiagonal part can be determined. Since by assumption 
\begin{align*}
	\order(\eps^2) &= \bigl [ H_{\eps} , \pi_0 + \eps \pi_1 \bigr ]_{\Weyl} = \bigl [ H_{\eps} , \pi_0 + \eps \pi_1^{\mathrm{D}} \bigr ]_{\Weyl} + \bigl [ H_{\eps} , \eps \pi_1^{\mathrm{OD}} \bigr ]_{\Weyl} 
	\\
	&= \eps \bigl ( F_1 + [H_0 , \pi_1^{\mathrm{OD}}] \bigr ) + \order(\eps^2) 
\end{align*}
holds, we conclude that the offdiagonal part $\pi_1^{\mathrm{OD}}$ must satisfy 
\begin{align}
	\bigl [ H_0 , \pi_1^{\mathrm{OD}} \bigr ] = - F_1 
	\label{multiscale:effective_quantum_dynamics:projection:eqn:pi_1_od_equation} 
	. 
\end{align}
In the presence of band crossings within the relevant bands, equation~\eqref{multiscale:effective_quantum_dynamics:projection:eqn:pi_1_od_equation} cannot be solved in closed form and one has to resort to numerical schemes and an alternative ansatz \cite[p.~79]{Teufel:adiabaticPerturbationTheory:2003}. If there are no band crossings, the solution is given by 
\begin{align*}
	\pi_1^{\mathrm{OD}} = - \sum_{j \in \Index} (1 - \pi_0) (H_0 - E_j)^{-1} F_1 \sopro{\varphi_j}{\varphi_j} - \sum_{j \in \Index} \sopro{\varphi_j}{\varphi_j} F_1 (H_0 - E_j)^{-1} (1 - \pi_0) 
	. 
\end{align*}
Higher-order terms can be devised in a similar manner by recursion: assume we have computed the first $n$ terms starting from a given $\pi_0$, 
\begin{align*}
	\pi^{(n)} := \sum_{k = 0}^n \eps^k \pi_k 
	. 
\end{align*}
Then we can get the diagonal part $\pi_{n+1}^{\mathrm{D}}$ from the projection defect 
\begin{align*}
	\pi^{(n)} \Weyl \pi^{(n)} - \pi^{(n)} =: \eps^{n+1} G_{n+1} + \order(\eps^{n+2}) 
\end{align*}
by the same equation as before, 
\begin{align}
	\pi_{n+1}^{\mathrm{D}} = - \pi_0 G_{n+1} \pi_0 + (1 - \pi_0) G_{n+1} (1 - \pi_0) 
	. 
	\label{multiscale:effective_quantum_dynamics:projection:eqn:pi_n_d_equation} 
\end{align}
The difficulty to solve 
\begin{align}
	[H_0 , \pi_{n+1}^{\mathrm{OD}}] = - F_{n+1} 
	\label{multiscale:effective_quantum_dynamics:projection:eqn:pi_n_od_equation} 
\end{align}
for the offdiagonal part $\pi_{n+1}^{\mathrm{OD}}$ explicitly remains where the right-hand side is the commutation defect 
\begin{align*}
	\bigl [ H_{\eps} , \pi^{(n)} + \eps^{n+1} \pi_{n+1}^{\mathrm{D}} \bigr ]_{\Weyl} =: \eps^{n+1} F_{n+1} + \order(\eps^{n+2})
	. 
\end{align*}
%


\subsection{Construction of the intertwining unitary} 
\label{multiscale:effective_quantum_dynamics:unitary}

The first order of the unitary is easy to construct once $\pi_1$ has been computed. We make the ansatz that we write $u_1 = (a_1 + b_1) u_0$ as the sum of a symmetric part $a_1 = a_1^*$ and an antisymmetric part $b_1 = - b_1^*$ which is akin to decomposing a complex number into real and imaginary part. Hence, this is not a restriction or a special assumption: any bounded operator $A : \Hfast \longrightarrow \Hfast$ can be written as 
\begin{align*}
	C = \tfrac{1}{2} (C + C^*) + \tfrac{1}{2} (C - C^*) =: A + B 
	. 
\end{align*}
The symmetric part $a_1$ follows from the \emph{unitarity defect} 
\begin{align*}
	u_0 \Weyl u_0^* - 1 =: \eps A_1 + \order(\eps^2) 
\end{align*}
and is given by 
\begin{align*}
	a_1 = - \tfrac{1}{2} A_1 
	. 
\end{align*}
Equivalently, $u_0^* \Weyl u_0 - 1$ could have been used to deduce $a_1$. We have to plug in $a_1$ into the \emph{intertwining defect} 
\begin{align*}
	\bigl ( u_0 + \eps a_1 u_0 \bigr ) \Weyl \bigl ( \pi_0 + \eps \pi_1 \bigr ) \Weyl \bigl ( u_0 + \eps a_1 u_0 \bigr )^* - \piref =: \eps B_1 + \order(\eps^2) 
\end{align*}
which can also be solved for $b_1$ explicitly, 
\begin{align}
	b_1 := [\piref , B_1] 
	. 
\end{align}
This recipe immediately generalizes to higher-order terms: if we have computed $u$ up to errors of order $\order(\eps^{n+1})$, 
\begin{align*}
	u^{(n)} := \sum_{k = 0}^n \eps^k u_k 
	, 
\end{align*}
we can solve for 
\begin{align}
	a_{n+1} = - \tfrac{1}{2} A_{n+1}
\end{align}
where $A_{n+1}$ is the unitarity defect 
\begin{align*}
	u^{(n)} \Weyl {u^{(n)}}^* - 1 =: \eps^{n+1} A_{n+1} + \order(\eps^{n+2}) 
	. 
\end{align*}
The antisymmetric part $b_{n+1}$ is again related to the intertwining defect 
\begin{align*}
	\eps^{n+1} B_{n+1} + \order(\eps^{n+2}) := \bigl ( u^{(n)} + \eps^{n+1} a_{n+1} u_0 \bigr ) \Weyl \pi^{(n+1)} \Weyl \bigl ( u^{(n)} + \eps^{n+1} a_{n+1} u_0 \bigr )^* - \piref
\end{align*}
through the equation 
\begin{align}
	b_{n+1} = [\piref , B_{n+1}] 
	. 
\end{align}
Put together, these give the $n+1$th order term of the unitary, \marginpar{\small 2010.02.02}
\begin{align}
	u_{n+1} = (a_{n+1} + b_{n+1}) u_0 
	. 
\end{align}
%


\subsection{The effective hamiltonian} 
\label{multiscale:effective_quantum_dynamics:effective_hamiltonian}

The zeroth order of the effective hamiltonian 
\begin{align*}
	h_{\mathrm{eff} \, 0} = \piref \, u_0 H_0 u_0^* \, \piref 
\end{align*}
is obtained by simply replacing the Weyl product with the pointwise product and $u$ as well as $H_{\eps}$ with the zeroth-order terms in their respective asymptotic expansions. To compute terms of higher orders, it is useful to compute the unitarily transformed hamiltonian first, 
\begin{align}
	h := u \Weyl H_{\eps} \Weyl u^* 
	. 
\end{align}
Then a closed form for $h_1$ can be deduced from expanding the left-hand side of 
\begin{align}
	u \Weyl H_{\eps} - h_0 \Weyl u = \eps h_1 \Weyl u + \order(\eps^2) = \eps h_1 u_0 + \order(\eps^2) 
	\label{multiscale:effective_quantum_dynamics:effective_hamiltonian:eqn:closed_expression_h_n_plus_1}
\end{align}
to first order. The main simplification is that the left-hand side consists of single Weyl products instead of double Weyl products which are much more difficult to expand. This relation is key to computing higher-order terms in $h$ and $h_{\mathrm{eff}}$ without making mistakes by leaving out terms. If $h^{(n)} := \sum_{k = 0}^n \eps^k h_k$ is the transformed hamiltonian where all terms up to $n$th order have been computed, then a closed expression for $h_{n+1}$ can be found by computing the left-hand side of 
\begin{align*}
	u \Weyl H_{\eps} - h^{(n)} \Weyl u = \eps^{n+1} h_{n+1} u_0 + \order(\eps^{n+2}) 
	. 
\end{align*}
Once we collect all terms of first order, $h_1$ can be expressed as 
\begin{align}
	h_1 &= u_1 H_0 u_0^* + u_0 H_1 u_0^* - h_0 u_1 u_0^* - \tfrac{i}{2} \{ u_0 , H_0 \} u_0^* + \tfrac{i}{2} \{ h_0 , u_0 \} u_0^* 
	\notag \\
	&= [u_1 u_0^* , h_0] + u_0 H_1 u_0^* + \tfrac{i}{2} \bigl ( \{ h_0 , u_0 \} - \{ u_0 , H_0 \} \bigr ) u_0^* 
	. 
	\label{multiscale:effective_quantum_dynamics:effective_hamiltonian:eqn:closed_expression_h_1}
\end{align}
We emphasize that it helps \emph{tremendously} to collect terms that are alike because there are typically a lot of non-obvious cancellations. The effective hamiltonian is obtained by conjugating with $\piref$, 
\begin{align}
	h_{\mathrm{eff} \, 1} &= \piref h_1 \piref 
	\notag \\
	&= \piref [u_1 u_0^* , h_0] \piref + \piref u_0 H_1 u_0^* \piref - \tfrac{i}{2} \piref \bigl ( \{ h_0 , u_0 \} - \{ u_0 , H_0 \} \bigr ) u_0^* \piref 
	. 
	\label{multiscale:effective_quantum_dynamics:effective_hamiltonian:eqn:closed_expression_h_eff_1}
\end{align}
Let us go back to the original problem of the Born-Oppenheimer system where the relevant bands consist of an isolated, non-degenerate band, $\sigma_{\mathrm{rel}}(x) = \{ E_*(x) \}$. Then the hamiltonian is the Weyl quantization of 
\begin{align*}
	H_{\eps}(x,\xi) = \tfrac{1}{2} \xi^2 \otimes \id_{\Hfast} + \He(x) \equiv \tfrac{1}{2} \xi^2 + \He(x) = H_0(x,\xi) 
\end{align*}
and $\piref = \sopro{\chi}{\chi}$ for some $\chi \in \Hfast$ with norm $1$. The leading order term of the effective hamiltonian is exactly what we expect from equation~\eqref{multiscale:physics_of_molecules:eqn:guess_effective_hamiltonian}, 
\begin{align}
	h_{\mathrm{eff} \, 0}(x,\xi) &= \piref \, u_0(x) H_0(x,\xi) u_0^*(x) \, \piref 
	\notag \\
	&= \sopro{\chi}{\chi} \bigl ( \sopro{\chi}{\varphi_*(x)} + u_0^{\perp}(x) \bigr ) \bigl ( \tfrac{1}{2} \xi^2 + \He(x) \bigr ) \bigl ( \sopro{\varphi_*(x)}{\chi} + u_0^{\perp}(x)^* \bigr ) \sopro{\chi}{\chi} 
	\notag \\
	&= \bigl ( \tfrac{1}{2} \xi^2 + E_*(x) \bigr ) \sopro{\chi}{\chi} 
	. 
\end{align}
To compute the first-order correction, we need to work a little: we need to compute only two terms since $u_0 H_1 u_0^* = 0$. The first one vanishes identically as well as we are interested in the case of a single band, 
\begin{align*}
	\piref \bigl [ u_1 u_0^* , h_0 \bigr ] \piref &= \piref \, u_1 u_0^* h_0 \, \piref - \piref \, h_0 u_1 u_0^* \, \piref 
	\\
	&= \piref \, u_1 u_0^* E_* \, \piref - E_* \piref \, u_1 u_0^* \, \piref 
	\\
	&= E_* \, \bigl ( \piref \, u_1 u_0^* \, \piref - \piref \, u_1 u_0^* \, \piref \bigr )
	= 0 
	. 
\end{align*}
Here, the usefulness of grouping terms properly shows. Similarly, the second term also simplifies since the Weyl product is bilinear and the Weyl product of two functions of position only reduces to the pointwise product. This means, we get 
\begin{align*}
	\tfrac{i}{2} \piref &\bigl ( \{ h_0 , u_0 \} - \{ u_0 , H_0 \} \bigr ) u_0^* \piref = 
	\tfrac{i}{2} \tfrac{1}{2} \piref \bigl ( \{ \xi^2 , u_0 \} - \{ u_0 , \xi^2 \} \bigr ) u_0^* \piref 
	\\
	&= \tfrac{i}{2} \piref \{ \xi^2 , u_0 \} u_0^* \piref 
	= \sum_{l = 1}^d i \xi_l \, \sopro{\chi}{\chi} \, \sopro{\chi}{\partial_{x_l} \varphi_*} \, \sopro{\varphi_*}{\chi} \, \sopro{\chi}{\chi} 
	\\
	&= \xi \cdot i \scpro{\nabla_x \varphi_*}{\varphi_*}_{\Hfast} \, \sopro{\chi}{\chi} 
	. 
\end{align*}
Since $\varphi_*(x)$ has norm $1$ for each $x \in \R^{3N}$, we can differentiate 
\begin{align*}
	\scpro{\varphi_*(x)}{\varphi_*(x)}_{\Hfast} = 1 
\end{align*}
with respect to $x$ to conclude $\scpro{\nabla_x \varphi_*}{\varphi_*}_{\Hfast} = - \scpro{\varphi_*}{\nabla_x \varphi_*}_{\Hfast}$. Thus, we can express $h_{\mathrm{eff} \, 1}$ in terms of the Berry phase, 
\begin{align*}
	h_{\mathrm{eff} \, 1} &= \tfrac{i}{2} \piref \bigl ( \{ h_0 , u_0 \} - \{ u_0 , H_0 \} \bigr ) u_0^* \piref 
	= - \xi \cdot \mathcal{A} \, \piref 
	. 
\end{align*}
This term contains the geometric phase associated to the band $E_*$. If we complete the square (which gives another error of order $\eps^2$), we see that $\mathcal{A}$ acts like a ``magnetic vector potential,''
\begin{align*}
	h_{\mathrm{eff}} &= h_{\mathrm{eff} \, 0} + \eps h_{\mathrm{eff} \, 1} + \order(\eps^2) 
	= \bigl ( \tfrac{1}{2} \xi^2 + E_* \bigr ) \, \piref - \eps \xi \cdot \mathcal{A} \, \piref + \order(\eps^2) 
	\\
	&= \bigl ( \tfrac{1}{2} (\xi - \eps \mathcal{A})^2 + E_*(x) \bigr ) \, \piref + \order(\eps^2)
	. 
\end{align*}
Associated to $\mathcal{A}$, there is a pseudo-magnetic field $\Omega := \dd \mathcal{A}$, the so-called \emph{Berry curvature}. This field modifies quantum and classical dynamics alike if time-reversal symmetry of the system is broken by \eg a magnetic field $B \neq 0$ or in the presence of band crossings. In our particular case, $\sigma_{\mathrm{rel}}(x) = \{ E_*(x) \}$ and $B = 0$, the pseudo-magnetic field $\Omega = 0$ can be shown to vanish identically. 


\subsection{Effective Dynamics} 
\label{multiscale:effective_quantum_dynamics:effective_dynamics}

The last -- and most important -- piece of the puzzle is to approximate the full dynamics generated by $e^{- i t \hat{H}_{\eps}}$ on $\Hphys$ for states in the relevant subspace $\hat{\Pi}_{\eps} \Hphys$ by the dynamics generated by the effective hamiltonian $\hat{h}_{\mathrm{eff}} := \Op_{\eps}(h_{\mathrm{eff}})$ on $\Hslow \otimes \C^{\abs{\Index}}$. 

As a first step, let us show that it suffices to compute only finitely many terms of $\hat{h}_{\mathrm{eff}}$: assume we are interested to approximate the dynamics for \emph{times of the order $\order(\nicefrac{1}{\eps})$} and we would like to \emph{make a total error of at most $\order(\eps)$}. We claim that it suffices to compute only $\hat{h}_{\mathrm{eff} \, 0}$ and $\hat{h}_{\mathrm{eff} \, 1}$. By the usual trick, we rewrite the difference of the two evolution groups as an integral and then reduce the difference between the evolution groups to the difference of their respective generators: 
\begin{align}
	e^{- \frac{i}{\eps} t \hat{h}_{\mathrm{eff}}} - &e^{- \frac{i}{\eps} t (\hat{h}_{\mathrm{eff} \, 0} + \eps \hat{h}_{\mathrm{eff} \, 1})} = \int_0^{\nicefrac{t}{\eps}} \dd s \, \frac{\dd }{\dd s} \Bigl (  e^{- i s \hat{h}_{\mathrm{eff}}} \, e^{- i (\frac{t}{\eps} - s) (\hat{h}_{\mathrm{eff} \, 0} + \eps \hat{h}_{\mathrm{eff} \, 1})} \Bigr ) 
	\notag \\
	&= \int_0^{\nicefrac{t}{\eps}} \dd s \, e^{- i s \hat{h}_{\mathrm{eff}}} \underbrace{\bigl (  
	- i \hat{h}_{\mathrm{eff}} + i (\hat{h}_{\mathrm{eff} \, 0} + \eps \hat{h}_{\mathrm{eff} \, 1})
	\bigr )}_{= \order(\eps^2)} e^{- i (\frac{t}{\eps} - s) (\hat{h}_{\mathrm{eff} \, 0} + \eps \hat{h}_{\mathrm{eff} \, 1})} 
	\notag \\
	&= \order(\eps^2) \cdot \order(\nicefrac{1}{\eps}) = \order(\eps) 
	\label{multiscale:effective_quantum_dynamics:effective_dynamics:eqn:dynamics_h_eff_compared_h0_h1}
\end{align}
This means even for macroscopic times ($\order(\nicefrac{1}{\eps})$ on the microscopic scale), the difference between $e^{- \frac{i}{\eps} t \hat{h}_{\mathrm{eff}}} \psi$ and $e^{- \frac{i}{\eps} t (\hat{h}_{\mathrm{eff} \, 0} + \eps \hat{h}_{\mathrm{eff} \, 1})} \psi$ for \emph{any} initial state $\psi \in \Hslow \otimes \C^{\abs{\Index}}$ is small, \ie of order $\order(\eps)$. 

Now let us explain in what way the effective hamiltonian can be used to approximate the full dynamics: let us reconsider our favorite diagram~\eqref{multiscale:adiabatic_trinity:diagram:perturbed} where we have defined $\hat{h} := \Op_{\eps}(h)$ and $\hat{h}_{\mathrm{eff}} := \Op_{\eps}(h_{\mathrm{eff}})$ to simplify notation. 
\begin{align*}
	\bfig
		\node L2full(-1300,0)[\Hphys]
		\node piL2full(-1300,-600)[\hat{\Pi}_{\eps} \Hphys]
		\node HslowHfast(0,0)[\Hslow \otimes \Hfast]
		\node Href(0,-600)[\Hslow \otimes \C^{\abs{\mathcal{J}}}]
		\arrow[L2full`HslowHfast;\hat{U}_{\eps}]
		\arrow[L2full`piL2full;\hat{\Pi}_{\eps}]
		\arrow[HslowHfast`Href;\Piref]
		\arrow/-->/[piL2full`Href;]
		\Loop(-1300,0){\Hphys}(ur,ul)_{e^{-i t \hat{H}_{\eps}}} 
		\Loop(0,0){\Hslow \otimes \Hfast}(ur,ul)_{e^{-i t \hat{h}}} 
		\Loop(0,-600){\Hslow \otimes \C^{\abs{\mathcal{J}}}}(dr,dl)^{e^{-i t \hat{h}_{\mathrm{eff}}}} 
	\efig
\end{align*}
Now we show that the dynamics in the upper-left corner, the physical Hilbert space, by the dynamics in the upper-right space by using $\hat{U}_{\eps} = \Op_{\eps}(u) + \order(\eps^{\infty})$, 
\begin{align*}
	u^* \Weyl h \Weyl u - H_{\eps} &= u^* \Weyl u \Weyl H_{\eps} \Weyl u^* \Weyl u - H_{\eps} = \order(\eps^{\infty}) 
	, 
\end{align*}
and the usual Duhamel trick: 
\begin{align}
	e^{- i t \hat{H}_{\eps}} - &\hat{U}_{\eps}^* e^{- i t \hat{h}} \hat{U}_{\eps} = e^{- i t \hat{H}_{\eps}} - \Op_{\eps}(u)^* e^{- i t \hat{h}} \Op_{\eps}(u) + \order(\eps^{\infty}) 
	\notag \\
	&= \int_0^t \dd s \, \frac{\dd }{\dd s} \Bigl ( e^{- i s \hat{H}_{\eps}} \, \Op_{\eps}(u)^* e^{- i (t - s) \hat{h}} \Op_{\eps}(u) \Bigr ) + \order(\eps^{\infty}) 
	\notag 
\end{align}
Expanding the terms yields that the two evolution groups differ only by $\order(\eps^{\infty})$
\begin{align}
	\ldots &= \int_0^t \dd s \, \Bigl ( e^{- i s \hat{H}_{\eps}} (- i \hat{H}_{\eps}) \Op_{\eps}(u)^* e^{- i (t - s) \hat{h}} \Op_{\eps}(u) 
	+ \Bigr . \notag \\
	&\qquad \qquad \qquad \qquad \Bigl . 
	+ e^{- i s \hat{H}_{\eps}} \, \Op_{\eps}(u)^* (+ i \hat{h}) e^{- i (t - s) \hat{h}} \Op_{\eps}(u) 
	\Bigr ) + \order(\eps^{\infty})
	\notag \\
	&= \int_0^t \dd s \, e^{- i s \hat{H}_{\eps}} \, i \Op_{\eps} \bigl ( u^* \Weyl h \Weyl u - H_{\eps} \bigr ) \, \Op_{\eps}(u)^* e^{- i (t - s) \hat{h}} \Op_{\eps}(u) + \order(\eps^{\infty}) 
	\notag \\
	&= \order(\eps^{\infty}) 
	\label{multiscale:effective_quantum_dynamics:effective_dynamics:eqn:full_dynamics_dynamics_h}
\end{align}
Furthermore, if we start with a state in the relevant subspace, \ie $\hat{\Pi}_{\eps} \Hphys$, then we can use the \emph{effective hamiltonian $\hat{h}_{\mathrm{eff}}$} to approximate the full dynamics: by equation~\eqref{multiscale:effective_quantum_dynamics:effective_dynamics:eqn:full_dynamics_dynamics_h} and $\hat{\Pi}_{\eps} = \Op_{\eps}(\pi) + \order(\eps^{\infty})$, we get 
\begin{align}
	e^{- i t \hat{H}_{\eps}} \hat{\Pi}_{\eps} &= \Op_{\eps}(u)^* e^{- i t \hat{h}} \, \Op_{\eps}(u) \, \Op_{\eps}(\pi) + \order(\eps^{\infty}) 
	\notag \\
	&= \Op_{\eps}(u)^* e^{- i t \hat{h}} \, \Piref \, \Op_{\eps}(u) + \order(\eps^{\infty}) 
	\notag \\
	&= \Op_{\eps}(u)^* e^{- i t \hat{h}_{\mathrm{eff}}} \, \Piref \, \Op_{\eps}(u) + \order(\eps^{\infty}) 
	. 
	\label{multiscale:effective_quantum_dynamics:effective_dynamics:eqn:full_dynamics_effective_dynamics}
\end{align}
Finally, since we are interested in times of order $\order(\nicefrac{1}{\eps})$ and in a total precision of $\order(\eps)$, we combine  equation~\eqref{multiscale:effective_quantum_dynamics:effective_dynamics:eqn:dynamics_h_eff_compared_h0_h1} with equation~\eqref{multiscale:effective_quantum_dynamics:effective_dynamics:eqn:full_dynamics_effective_dynamics} to get the final result, 
\begin{align}
	e^{- i \frac{t}{\eps} \hat{H}_{\eps}} &= \Op_{\eps}(u)^* e^{- i \frac{t}{\eps} \hat{h}_{\mathrm{eff}}} \, \Piref \, \Op_{\eps}(u) + \order(\eps^{\infty}) 
	\notag \\
	&= \Op_{\eps}(u)^* e^{- i \frac{t}{\eps} (\hat{h}_{\mathrm{eff} \, 0} + \eps \hat{h}_{\mathrm{eff} \, 1})} \, \Piref \, \Op_{\eps}(u) + \order(\eps) 
	\notag \\
	&= \Op_{\eps}(u_0)^* e^{- i \frac{t}{\eps} (\hat{h}_{\mathrm{eff} \, 0} + \eps \hat{h}_{\mathrm{eff} \, 1})} \, \Piref \, \Op_{\eps}(u_0) + \order(\eps) 
	\notag \\
	&= \hat{u}_0^* \, e^{- i \frac{t}{\eps} (\hat{h}_{\mathrm{eff} \, 0} + \eps \hat{h}_{\mathrm{eff} \, 1})} \, \Piref \, \hat{u}_0 + \order(\eps) 
	\label{multiscale:effective_quantum_dynamics:effective_dynamics:eqn:full_dynamics_effective_dynamics_first_order} 
	. 
\end{align}
This result tells us that if we start with states that are associated to the relevant bands, we can approximate the dynamics generated by the Born-Oppenheimer hamiltonian 
\begin{align*}
	\opHBO = \tfrac{1}{2} \Pe^2 + \He(\Qe)
\end{align*}
with that of the first two terms of effective hamiltonian, \ie the quantization of \marginpar{\small 2010.02.03}
\begin{align*}
	h_{\mathrm{eff} \, 0}(x,\xi) + \eps h_{\mathrm{eff} \, 1}(x,\xi) = \tfrac{1}{2} \bigl ( \xi - \eps \mathcal{A}(x) \bigr )^2 + E_*(x) + \order(\eps^2) 
	. 
\end{align*}
%



\section{Semiclassical limit} 
\label{multiscale:semiclassics}
As a last step in the analysis, we would like to make a semiclassical limit. We emphasize the semiclassical approximation is \emph{distinct} from the adiabatic approximation. These two are very often mixed in other approaches. 

Although there is a semiclassical theory for degenerate bands and several bands \cite{Robert:semiclassics:1998,Teufel:adiabaticPerturbationTheory:2003}, for simplicity we consider the case where the relevant part of the spectrum consists of an \emph{isolated, non-degenerate band}, $\sigma_{\mathrm{rel}}(x) := \{ E_*(x) \}$. We have shown in Chapter~\ref{multiscale:effective_quantum_dynamics:effective_hamiltonian} that the effective hamiltonian is to leading order the quantization of 
\begin{align*}
	h_{\mathrm{eff}}(x,\xi) = \tfrac{1}{2} \bigl ( \xi - \eps \mathcal{A}(x) \bigr )^2 + E_*(x) + \order(\eps^2) 
	. 
\end{align*}
Now we would like to make a semiclassical approximation for the dynamics of observables. It turns out that this is only possible for observables that are compatible with the splitting into slow and fast degrees of freedom. 
\begin{defn}[Macroscopic observable]
	A macroscopic observable $F_{\eps} \in \mathcal{B}(\Hphys)$ is the quantization of a $\mathcal{B}(\Hfast)$-valued function 
	\begin{align*}
		f \in \mathcal{S} \bigl ( \pspace , \mathcal{B}(\Hfast) \bigr ) 
	\end{align*}
	such that $f$ commutes pointwise with the hamiltonian $H_{\eps}$, \ie 
	\begin{align*}
		\bigl [ f(x,\xi) , H_{\eps}(x,\xi) \bigr ] &= f(x,\xi) \, H_{\eps}(x,\xi) - H_{\eps}(x,\xi) \, f(x,\xi) = 0 \in \mathcal{B}(\Hfast)
	\end{align*}
	holds for all $(x,\xi) \in \pspace$. 
\end{defn}
This implies $f$ is compatible with the band structure of $H_{\eps}$. With respect to the fast electronic degrees of freedom, it is a constant of motion. Even with respect to the slow nucleonic degrees of freedom, it is still an \emph{approximate constant of motion}, 
\begin{align*}
	\bigl [ H_{\eps} , f \bigr ]_{\Weyl} = \underbrace{\bigl [ H_{\eps} , f \bigr ]}_{= 0} + \order(\eps) = \order(\eps)
	. 
\end{align*}
In a way, this implies $F_{\eps}$ is oblivious to the details of the fast dynamics. Mathematically, this key property is needed to ensure that the Weyl commutator which appears in the Egorov theorem (Theorem~7.1) is small. Associated to a macroscopic observable $f$ (we will use $f$ synonymously with its quantization $F_{\eps}$) is its effective observable 
\begin{align*}
	f_{\mathrm{eff}} := \piref \, u \Weyl f \Weyl u^* \, \piref 
\end{align*}
defined analogously to the effective hamiltonian (equation~\eqref{multiscale:effective_quantum_dynamics:eqn:effective_hamiltonian}). This effective observable describes the physics if we are interested in initial conditions associated to the relevant band $E_*$ in the representation fascilitated by $\hat{U}_{\eps} = \Op_{\eps}(u) + \order(\eps^{\infty})$. The condition $[H_{\eps}(x,\xi) , f(x,\xi)] = 0$ ensures that $f_{\mathrm{eff}}$ is compatible with this procedure, \ie 
\begin{align*}
	[ \hat{\Pi}_{\eps} , \Op_{\eps}(f) ] = \order(\eps) 
\end{align*}
and 
\begin{align*}
	[ \Piref , \Op_{\eps}(f_{\mathrm{eff}}) ] = 0 
\end{align*}
hold. If we identify $\Hslow \otimes \C^1$ with $\Hslow$, then according to the diagram, 
\begin{align*}
	\bfig
		\node L2full(-1300,0)[\Hphys]
		\node piL2full(-1300,-600)[\hat{\Pi}_{\eps} \Hphys]
		\node HslowHfast(0,0)[\Hslow \otimes \Hfast]
		\node Href(0,-600)[\Hslow]
		\arrow[L2full`HslowHfast;\hat{U}_{\eps}]
		\arrow[L2full`piL2full;\hat{\Pi}_{\eps}]
		\arrow[HslowHfast`Href;\Piref]
		\arrow/-->/[piL2full`Href;]
		\Loop(-1300,0){\Hphys}(ur,ul)_{e^{-i t \hat{H}_{\eps}}} 
		\Loop(0,0){\Hslow \otimes \Hfast}(ur,ul)_{e^{-i t \hat{h}}} 
		\Loop(0,-600){\Hslow}(dr,dl)^{e^{-i t \hat{h}_{\mathrm{eff}}}} 
	\efig 
	, 
\end{align*}
there is a clear interpretation of these objects: $\Op_{\eps}$ is the \emph{physical observable} we seek to measure in experiments. This physical observable lives on the physical Hilbert space $\Hphys$ and its dynamics is governed by $e^{-i t \hat{H}_{\eps}}$. The \emph{effective observable} lives on the reference space $\Hslow$ and its dynamics is governed by the effective time evolution $e^{- i t (\hat{h}_{\mathrm{eff} \, 0} + \eps \hat{h}_{\mathrm{eff} \, 1})}$ up to errors of order $\order(\eps^2)$. 

Let us compute the first few terms of the effective observable $f_{\mathrm{eff}}$: the leading-order term is given by the partial average over the electronic state $\varphi_*$, 
\begin{align}
	f_{\mathrm{eff} \, 0}(x,\xi) &= \piref \, u_0(x) f(x,\xi) u_0^*(x) \, \piref = \scpro{\varphi_*(x)}{f(x,\xi) \varphi_*(x)}_{\Hfast} 
	\label{multiscale:semiclassics:eqn:effective_observable_0}
	. 
\end{align}
The first-order term also simplifies as we are interested in an isolated non-degenerate band: by equation~\eqref{multiscale:effective_quantum_dynamics:effective_hamiltonian:eqn:closed_expression_h_eff_1} and arguments completely analogous to those in Chapter~\ref{multiscale:effective_quantum_dynamics:effective_hamiltonian}, the subleading term computes to be 
\begin{align}
	f_{\mathrm{eff} \, 1} &= \underbrace{\piref \, [u_1 u_0^* , f] \, \piref}_{= 0} + 0 + \tfrac{i}{2} \piref \bigl ( \{ f_{\mathrm{eff} \, 0} , u_0 \} - \{ u_0 , f \} \bigr ) u_0^* \piref 
	\notag \\
	&= - i \nabla_{\xi} f_{\mathrm{eff} \, 0} \cdot \mathcal{A} 
	\label{multiscale:semiclassics:eqn:effective_observable_1}
\end{align}
where we have used 
\begin{align*}
	\piref \, i \partial_{x_l} u_0 u_0^* \, \piref &= i \scpro{\partial_{x_l} \varphi_*}{\varphi_*}_{\Hfast} 
	= - i \scpro{\varphi_*}{\partial_{x_l} \varphi_*}_{\Hfast} 
	\\
	&= - \mathcal{A}_l 
	. 
\end{align*}
\begin{example}
	Two exceptionally important observables are our \emph{effective} building block observables 
	\begin{align}
		x_{\mathrm{eff}} &:= \piref \, u \Weyl x \Weyl u^* \piref = x + \order(\eps^2) \\
		\xi_{\mathrm{eff}} &:= \piref \, u \Weyl \xi \Weyl u^* \piref = \xi - \eps \mathcal{A}(x) + \order(\eps^2) 
		\notag 
		. 
	\end{align}
	At least up to errors of order $\order(\eps^2)$ they look like the usual position observable and \emph{kinetic} momentum associated to the ``magnetic'' field $\eps \Omega_{lj} := \partial_{x_l} (\eps \mathcal{A}_j) - \partial_{x_j} (\eps \mathcal{A}_l)$. Since there is a prefactor of $\eps$, this means, the ``magnetic'' field is weak. It is not surprising that this pseudomagnetic field $\eps \Omega$ appears on the level of the classical equations of motion if we increase our precision to include terms of $\order(\eps)$. 
\end{example}
Let us define the map $T_{\eps} : \pspace \longrightarrow \pspace$, 
\begin{align*}
	T_{\eps} : (x,\xi) \mapsto \bigl ( x , \xi - \eps \mathcal{A}(x) \bigr ) + \order(\eps^2)
\end{align*}
which replaces $(x,\xi)$ by the effective observables $\bigl ( x_{\mathrm{eff}} , \xi_{\mathrm{eff}} \bigr ) = \bigl ( x , \xi - \eps \mathcal{A}(x) \bigr )$. This is in fact akin to adding a magnetic field to classical mechanics by minimal substitution, \ie replacing momentum $\xi$ with kinetic momentum $\xi - A(x)$. We propose the following 
\begin{lem}
	Let $f$ be a macroscopic observable. Then up to errors of order $\order(\eps^2)$, $f_{\mathrm{eff}}$ is equal to the minimally substituted observable $\piref \, u_0 (f \circ T_{\eps}) u_0^* \, \piref$, 
	\begin{align}
		f_{\mathrm{eff}}(x,\xi) &= (f \circ T_{\eps})(x,\xi) + \order(\eps^2) 
		= f \bigl ( x , \xi - \eps \mathcal{A}(x) \bigr ) + \order(\eps^2) 
		. 
	\end{align}
\end{lem}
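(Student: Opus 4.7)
The plan is to compute $f_{\mathrm{eff}}$ explicitly up to first order in $\eps$ using the recursive formulas already established for $h_{\mathrm{eff}}$ (specialised to an observable), and then to match this with the first-order Taylor expansion of $f \circ T_{\eps}$. So I would first write $f_{\mathrm{eff}} = f_{\mathrm{eff}\,0} + \eps f_{\mathrm{eff}\,1} + \order(\eps^2)$ with $f_{\mathrm{eff}\,0}$ given by \eqref{multiscale:semiclassics:eqn:effective_observable_0} and the subleading term obtained by the same derivation as \eqref{multiscale:effective_quantum_dynamics:effective_hamiltonian:eqn:closed_expression_h_eff_1}, but with $f$ replacing $H_{\eps}$ (so $f_1 \equiv 0$ and the analogue of the $u_0 H_1 u_0^*$ contribution is absent). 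In a parallel column I would expand
\begin{align*}
    (f \circ T_{\eps})(x,\xi) = f(x,\xi) - \eps \, \mathcal{A}(x) \cdot \nabla_{\xi} f(x,\xi) + \order(\eps^2)
\end{align*}
and take the partial electronic expectation $\scpro{\varphi_*(x)}{\,\cdot\, \varphi_*(x)}_{\Hfast}$, reducing the target to the pointwise identity $f_{\mathrm{eff}\,1}(x,\xi) = - \mathcal{A}(x) \cdot \nabla_{\xi} f_{\mathrm{eff}\,0}(x,\xi)$.

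The heart of the proof is then the computation of $f_{\mathrm{eff}\,1}$. Because $f$ is macroscopic, $[f(x,\xi), H_{\eps}(x,\xi)] = 0$; since the $\xi$-dependent part of $H_{\eps}$ is a scalar on $\Hfast$, this forces $[f(x,\xi), H_{\mathrm{e}}(x)] = 0$, and hence $[f(x,\xi), \pi_0(x)] = 0$ for every $(x,\xi)$. Differentiating in $\xi$ (which leaves $\pi_0$ untouched) shows $[\partial_{\xi_l} f, \pi_0] = 0$, so that $\partial_{\xi_l} f\,\varphi_* = (\partial_{\xi_l} f_{\mathrm{eff}\,0})\,\varphi_*$: the derivatives of $f$ still act as scalars on the relevant eigenspace. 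This is the mechanism that will make two \emph{a priori} different contributions to the Poisson-bracket part of $f_{\mathrm{eff}\,1}$ collapse to the same expression.

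With this in hand, the term $\piref [u_1 u_0^*, f] \piref$ vanishes (exactly as $\piref [u_1 u_0^*, h_0] \piref$ did for the single-band hamiltonian, using that $f$ is block-diagonal with respect to $\pi_0$), and what remains is the Poisson-bracket contribution. Since $u_0 = u_0(x)$ is independent of $\xi$, one has $\{f_{\mathrm{eff}\,0}, u_0\} = (\nabla_{\xi} f_{\mathrm{eff}\,0}) \cdot \nabla_x u_0$ and $\{u_0, f\} = - \nabla_x u_0 \cdot \nabla_{\xi} f$. Sandwiching with $\piref\,\cdot\,u_0^*\piref$ and using the identities $\piref \partial_{x_l} u_0\, u_0^* \piref = \scpro{\partial_{x_l}\varphi_*}{\varphi_*}\,\piref = i \mathcal{A}_l \piref$ and $\scpro{\partial_{x_l}\varphi_*}{(\partial_{\xi_l} f)\varphi_*} = (\partial_{\xi_l} f_{\mathrm{eff}\,0})\,\scpro{\partial_{x_l}\varphi_*}{\varphi_*}$ (the second coming from the eigenvalue relation above), both Poisson-bracket pieces give $i\,\mathcal{A} \cdot \nabla_{\xi} f_{\mathrm{eff}\,0}\,\piref$, and the prefactor $\tfrac{i}{2}$ combines them into $-\mathcal{A} \cdot \nabla_{\xi} f_{\mathrm{eff}\,0}\,\piref$, matching the Taylor expansion.

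The main obstacle is the second identity in the previous paragraph: without the commutation $[\partial_{\xi_l} f,\pi_0]=0$ one would get $\scpro{\partial_{x_l}\varphi_*}{(\partial_{\xi_l} f)\varphi_*}$, which in general has no reason to factor into a product of $\mathcal{A}_l$ with $\partial_{\xi_l} f_{\mathrm{eff}\,0}$, so that the two Poisson-bracket contributions would not combine and the minimal substitution picture would fail. Thus the macroscopic-observable hypothesis is precisely what is needed, and the entire argument hinges on its efficient propagation through the derivatives in the Poisson bracket.
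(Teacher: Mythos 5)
Your proof is correct and follows the same route as the paper's: Taylor-expand $f \circ T_{\eps}$ to first order in $\eps$ and match the result against $f_{\mathrm{eff} \, 0}$ and $f_{\mathrm{eff} \, 1}$ as computed from the same recursion that gave $h_{\mathrm{eff}}$. You do make explicit a point the paper leaves implicit, and it is worth stating: the macroscopic hypothesis propagates, via $[f, H_{\eps}] = 0 \Rightarrow [f, \pi_0] = 0 \Rightarrow [\partial_{\xi_l} f, \pi_0] = 0$, into the scalar identity $\partial_{\xi_l} f \, \varphi_* = (\partial_{\xi_l} f_{\mathrm{eff} \, 0}) \, \varphi_*$, and this is precisely what makes the two a priori different Poisson-bracket contributions to $f_{\mathrm{eff} \, 1}$ coincide and the minimal-substitution picture close.
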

\begin{proof}
	We Taylor expand $f \circ T_{\eps}$ around $\eps = 0$ and get 
	\begin{align*}
		(f \circ T_{\eps})(x,\xi) &= f \bigl ( x  + \order(\eps^2) , \xi - \eps \mathcal{A}(x)  + \order(\eps^2) \bigr ) 
		\\
		&
		= f(x,\xi) - \eps \nabla_{\xi} f(x,\xi) \cdot \mathcal{A}(x) + \order(\eps^2) 
		. 
	\end{align*}
	If we multiply the above expression with $\piref \, u_0$ from the left and $u_0^* \, \piref$ from the right, it coincides with $f_{\mathrm{eff} \, 0} + \eps f_{\mathrm{eff} \, 1}$ as given by equations~\eqref{multiscale:semiclassics:eqn:effective_observable_0} and~\eqref{multiscale:semiclassics:eqn:effective_observable_1}. 
\end{proof}
This means, $f$ and $f_{\mathrm{eff}}$ are related by minimal substitution and averaging over the fast degrees of freedom. 
\begin{thm}[Semiclassical limit]
	Let $f$ be a macroscopic observable. Then the full quantum dynamics of a Born-Oppenheimer system for an isolated non-degenerate band $E_*$ can be approximated by the flow $\phi_t$ generated by 
	\begin{align}
		\left (
		\begin{matrix}
			0 & \id \\
			- \id & 0 \\
		\end{matrix}
		\right ) \left (
		\begin{matrix}
			\dot{x} \\
			\dot{\xi} \\
		\end{matrix}
		\right ) = \left (
		\begin{matrix}
			\nabla_{x} \\
			\nabla_{\xi} \\
		\end{matrix}
		\right ) \bigl ( \tfrac{1}{2} \xi^2 + E_*(x) \bigr ) 
		\label{multiscale:semiclassics:eqn:classical_eom}
	\end{align}
	in the following sense: 
	\begin{align}
		\hat{\Pi}_{\eps} \bigl ( F_{\mathrm{qm}}(t) - F_{\mathrm{cl}}(t) \bigr ) \hat{\Pi}_{\eps} &:= 
		\hat{\Pi}_{\eps} e^{+ i \frac{t}{\eps} \hat{H}_{\eps}} \Op_{\eps}(f) e^{- i \frac{t}{\eps} \hat{H}_{\eps}} \hat{\Pi}_{\eps} - \hat{u}_0^* \, \Piref \, \Op_{\eps}(f_{\mathrm{eff} \, 0} \circ \phi_t) \, \Piref \, \hat{u}_0
		\notag \\
		&= \order(\eps^2) 
		\label{multiscale:semiclassics:eqn:semiclassical_limit}
	\end{align}
\end{thm}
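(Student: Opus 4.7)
The plan is to combine two results already established in the excerpt: the adiabatic decoupling formula \eqref{multiscale:effective_quantum_dynamics:effective_dynamics:eqn:full_dynamics_effective_dynamics}, which reduces the full dynamics on $\hat{\Pi}_{\eps}\Hphys$ to an effective dynamics on $\Hslow$, and the Egorov Theorem (Theorem~7.1), which provides the semiclassical approximation on the reduced space. The whole point is that macroscopic observables are precisely those for which these two steps can be composed without interference.

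First I would conjugate everything by $\hat{u}_0$ and use that $\hat{\Pi}_{\eps}=\Op_{\eps}(\pi)+\order(\eps^{\infty})$ together with $u\Weyl\pi\Weyl u^{*}=\piref+\order(\eps^{\infty})$ to rewrite the left-hand side of \eqref{multiscale:semiclassics:eqn:semiclassical_limit}. Because $f$ is macroscopic, i.e.\ $[f,H_{\eps}]=0$ pointwise, the transported symbol $u\Weyl f\Weyl u^{*}$ commutes to leading orders with $\piref$, so its conjugation by $\Piref$ yields $f_{\mathrm{eff}}=f_{\mathrm{eff}\,0}+\eps f_{\mathrm{eff}\,1}+\order(\eps^{2})$ on the reference space. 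Exactly as in \eqref{multiscale:effective_quantum_dynamics:effective_dynamics:eqn:full_dynamics_effective_dynamics_first_order}, this reduces the problem to showing
\begin{align*}
    e^{+i\frac{t}{\eps}\hat{h}_{\mathrm{eff}}}\,\Op_{\eps}(f_{\mathrm{eff}})\,e^{-i\frac{t}{\eps}\hat{h}_{\mathrm{eff}}}-\Op_{\eps}(f_{\mathrm{eff}\,0}\circ\phi_{t})=\order(\eps^{2})
\end{align*}
on $\Piref(\Hslow\otimes\C^{\abs{\Index}})\cong\Hslow$.

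Next I would apply the Egorov Theorem on this reduced scalar space to the Hamiltonian $h_{\mathrm{eff}\,0}=\frac{1}{2}\xi^{2}+E_{*}(x)$ and the symbol $f_{\mathrm{eff}\,0}$, whose associated flow $\phi_{t}$ is precisely \eqref{multiscale:semiclassics:eqn:classical_eom}. Then I would absorb two further error terms: (a) the difference between propagating with the full $\hat{h}_{\mathrm{eff}}$ and with $\hat{h}_{\mathrm{eff}\,0}$, which by the Duhamel computation \eqref{multiscale:effective_quantum_dynamics:effective_dynamics:eqn:dynamics_h_eff_compared_h0_h1} is small when the relevant subleading $\eps\hat{h}_{\mathrm{eff}\,1}$ and higher corrections are folded into the remainder; (b) the difference between $f_{\mathrm{eff}}$ and $f_{\mathrm{eff}\,0}$, which is $\eps f_{\mathrm{eff}\,1}+\order(\eps^{2})$ and which, after Egorov-evolving, remains of the same size by boundedness of the flow's derivatives (Theorem~\ref{frameworks:classical_mechanics:thm:smoothness_flow} and Lemma~III.9 of Robert, as used in the proof of Theorem~7.1). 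Combining these contributions gives the $\order(\eps^{2})$ bound claimed.

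The main obstacle is that neither the hypotheses of Theorem~7.1 nor of the adiabatic expansion strictly cover the physically relevant symbols at hand: $h_{\mathrm{eff}\,0}=\frac{1}{2}\xi^{2}+E_{*}(x)$ is polynomially growing and $f$ takes values in $\mathcal{B}(\Hfast)$, not in $\C$. Upgrading Egorov to this setting is the substantive work. Concretely, one needs an operator-valued Weyl calculus on a Hörmander symbol class in which $h_{\mathrm{eff}\,0}$ is an admissible Hamiltonian generating a self-adjoint operator with globally existing flow, and one needs to verify that the Duhamel step in the proof of Theorem~7.1 still closes: the Moyal commutator $[h_{\mathrm{eff}\,0},f_{\mathrm{eff}\,0}(t)]_{\Weyl}+i\eps\{h_{\mathrm{eff}\,0},f_{\mathrm{eff}\,0}(t)\}$ must be $\order(\eps^{3})$ in an operator norm one can control, and the transported observable $f_{\mathrm{eff}\,0}\circ\phi_{t}$ must stay in the same symbol class uniformly for $t\in[-T,+T]$. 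These are precisely the oscillatory-integral estimates alluded to in the discussion preceding Theorem~7.1, and carrying them out rigorously — rather than the application to the Born–Oppenheimer structure, which is essentially bookkeeping — is where the real labour lies.
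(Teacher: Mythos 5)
Your route is the same as the paper's: conjugate by $\hat{u}_0$, invoke the adiabatic decoupling to replace $e^{-i\frac{t}{\eps}\hat{H}_{\eps}}\hat{\Pi}_{\eps}$ by the truncated effective propagator on $\Hslow$, and then run an Egorov-type Duhamel argument against the flow generated by $h_{\mathrm{eff}\,0}=\tfrac{1}{2}\xi^2+E_*(x)$; your discussion of the analytic obstruction (that $h_{\mathrm{eff}\,0}$ and the $E_j$ are not Schwartz, so Theorem~7.1 and the symbolic calculus would need an upgrade to Hörmander-type classes) is accurate and identifies a gap the paper glosses over.

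Two points of emphasis differ, and one deserves correction. You flag the operator-valuedness of $f\in\mathcal{S}(\pspace,\mathcal{B}(\Hfast))$ as part of what makes the Egorov step difficult. But once you project onto a single non-degenerate band, $f_{\mathrm{eff}\,0}=\scpro{\varphi_*}{f\varphi_*}_{\Hfast}$ and $h_{\mathrm{eff}\,0}+\eps h_{\mathrm{eff}\,1}$ are \emph{scalar}-valued, so the pointwise commutator $[h_{\mathrm{eff}\,0}+\eps h_{\mathrm{eff}\,1},f_{\mathrm{eff}\,0}(t-s)]$ vanishes identically. That is precisely the structural observation the paper leans on to make the leading Moyal-commutator term drop out; you state the needed estimate on $[\,\cdot\,,\cdot\,]_{\Weyl}$ but do not isolate this as the reason it holds. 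Second, your concluding claim of an $\order(\eps^2)$ bound does not follow from the steps you lay out: the Duhamel comparison of $e^{-\frac{i}{\eps}t\hat{h}_{\mathrm{eff}}}$ with $e^{-\frac{i}{\eps}t(\hat{h}_{\mathrm{eff}\,0}+\eps\hat{h}_{\mathrm{eff}\,1})}$ already costs $\order(\eps)$ over macroscopic times, and the Egorov remainder is likewise $\order(\eps)$. The paper's own proof in fact only reaches $\order(\eps)$ and defers the improvement to $\order(\eps^2)$ to a closing remark (compute $h_{\mathrm{eff}\,2}$ and incorporate the pseudomagnetic field), so you have inherited the paper's discrepancy; stating $\order(\eps)$ and naming what extra input would close the gap would be more honest.
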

\begin{proof}
	First of all, from Chapter~\ref{multiscale:effective_quantum_dynamics:effective_dynamics} (equation~\eqref{multiscale:effective_quantum_dynamics:effective_dynamics:eqn:full_dynamics_effective_dynamics_first_order}), we know we can replace the full dynamics $e^{- i \frac{t}{\eps}} \hat{\Pi}_{\eps}$ by 
	\begin{align*}
		e^{- i \frac{t}{\eps} \hat{H}_{\eps}} &= \hat{u}_0^* e^{- i \frac{t}{\eps} (\hat{h}_{\mathrm{eff} \, 0} + \eps \hat{h}_{\mathrm{eff} \, 1})} \Piref \, \hat{u}_0 + \order(\eps) 
		. 
	\end{align*}
	This can be plugged into 
	\begin{align*}
		\hat{\Pi}_{\eps} F_{\mathrm{qm}}(t) \hat{\Pi}_{\eps} &= \bigl ( \hat{u}_0^* e^{- i \frac{t}{\eps} (\hat{h}_{\mathrm{eff} \, 0} + \eps \hat{h}_{\mathrm{eff} \, 1})} \Piref \, \hat{u}_0 + \order(\eps) \bigr )^* \Op_{\eps}(f) 
		\cdot \\
		&\qquad \qquad \qquad \qquad \qquad \qquad \cdot 
		\bigl ( \hat{u}_0^* e^{- i \frac{t}{\eps} (\hat{h}_{\mathrm{eff} \, 0} + \eps \hat{h}_{\mathrm{eff} \, 1})} \Piref \, \hat{u}_0 + \order(\eps) \bigr ) 
		\\
		&= \hat{u}_0^* e^{- i \frac{t}{\eps} (\hat{h}_{\mathrm{eff} \, 0} + \eps \hat{h}_{\mathrm{eff} \, 1})} \underbrace{\Piref \, \hat{u}_0 \, \Op_{\eps}(f) \, \hat{u}_0^* \, \Piref}_{= f_{\mathrm{eff} \, 0} + \order(\eps)} e^{- i \frac{t}{\eps} (\hat{h}_{\mathrm{eff} \, 0} + \eps \hat{h}_{\mathrm{eff} \, 1})} \hat{u}_0 + \order(\eps) 
		\\
		&= \hat{u}_0^* e^{- i \frac{t}{\eps} (\hat{h}_{\mathrm{eff} \, 0} + \eps \hat{h}_{\mathrm{eff} \, 1})} \Op_{\eps}(f_{\mathrm{eff} \, 0}) e^{- i \frac{t}{\eps} (\hat{h}_{\mathrm{eff} \, 0} + \eps \hat{h}_{\mathrm{eff} \, 1})} \hat{u}_0 + \order(\eps) 
		. 
	\end{align*}
	Now we make the semiclassical approximation: since we can regard $\Op_{\eps} \bigl (\hat{h}_{\mathrm{eff} \, 0} + \eps \hat{h}_{\mathrm{eff} \, 1} \bigr )$ and $\Op_{\eps}(f_{\mathrm{eff} \, 0})$ as operators on $\Hslow \otimes \C^1 \cong \Hslow = L^2(\R^{3N})$, we can invoke Theorem~7.1, the Egorov-type theorem: by the usual trick, we relate quantum and classical dynamics: first of all, since $(x_{\mathrm{eff}} , \xi_{\mathrm{eff}}) = (x,\xi) + \order(\eps^2)$, we can replace the effective variables by usual variables in the classical equations of motion, equation~\eqref{multiscale:semiclassics:eqn:classical_eom}. We also see that the hamiltonian in these equations of motion is nothing but $h_{\mathrm{eff} \, 0}(x,\xi) = \tfrac{1}{2} \xi^2 + E_*(x)$. 
	
	Now we can make a semiclassical argument to replace the full quantum mechanics generated by $\hat{h}_{\mathrm{eff} \, 0} + \eps \hat{h}_{\mathrm{eff} \, 1}$ and compare it to the quantization of the classically evolved observable, 
	\begin{align*}
		e^{+ i \frac{t}{\eps} (\hat{h}_{\mathrm{eff} \, 0} + \eps \hat{h}_{\mathrm{eff} \, 1})} &\Op_{\eps}(f_{\mathrm{eff} \, 0}) e^{- i \frac{t}{\eps} (\hat{h}_{\mathrm{eff} \, 0} + \eps \hat{h}_{\mathrm{eff} \, 1})} - \Op_{\eps}(f_{\mathrm{eff} \, 0} \circ \phi_t) 
		= \\
		&= \int_0^t \dd s \, \frac{\dd}{\dd s} \Bigl ( e^{+ i \frac{s}{\eps} (\hat{h}_{\mathrm{eff} \, 0} + \eps \hat{h}_{\mathrm{eff} \, 1})} \Op_{\eps}(f_{\mathrm{eff} \, 0} \circ \phi_{t - s}) e^{- i \frac{s}{\eps} (\hat{h}_{\mathrm{eff} \, 0} + \eps \hat{h}_{\mathrm{eff} \, 1})} \Bigr ) 
		\\
		&= \int_0^t \dd s \, e^{+ i \frac{s}{\eps} (\hat{h}_{\mathrm{eff} \, 0} + \eps \hat{h}_{\mathrm{eff} \, 1})} \Op_{\eps} \Bigl ( \tfrac{i}{\eps} \bigl [ h_{\mathrm{eff} \, 0} + \eps h_{\mathrm{eff} \, 1} , f_{\mathrm{eff} \, 0}(t - s) \bigr ]_{\Weyl} 
		+ \\
		&\qquad \qquad \qquad \qquad \qquad \qquad \qquad 
		- \frac{\dd}{\dd t} f_{\mathrm{eff} \, 0}(t - s) \Bigr ) e^{- i \frac{s}{\eps} (\hat{h}_{\mathrm{eff} \, 0} + \eps \hat{h}_{\mathrm{eff} \, 1})} 
		. 
	\end{align*}
	To show that the integral is small, we need to show that 
	\begin{align*}
		\tfrac{i}{\eps} \bigl [ h_{\mathrm{eff} \, 0} + \eps h_{\mathrm{eff} \, 1} , &f_{\mathrm{eff} \, 0}(t - s) \bigr ]_{\Weyl} - \bigl \{ h_{\mathrm{eff} \, 0} , f_{\mathrm{eff} \, 0}(t - s) \bigr \} 
		= \\
		&
		= \tfrac{i}{\eps} \bigl [ h_{\mathrm{eff} \, 0} + \eps h_{\mathrm{eff} \, 1} , f_{\mathrm{eff} \, 0}(t - s) \bigr ] + \order(\eps) 
	\end{align*}
	is of order $\order(\eps)$. In the single-band case, $h_{\mathrm{eff} \, 0} + \eps h_{\mathrm{eff} \, 1}$ and $f_{\mathrm{eff} \, 0}$ are \emph{scalar}-valued functions and hence the pointwise commutator vanishes, 
	\begin{align*}
		\bigl [ h_{\mathrm{eff} \, 0} + \eps h_{\mathrm{eff} \, 1} , f_{\mathrm{eff} \, 0}(t - s) \bigr ] = 0 
		. 
	\end{align*}
	Thus, we have shown that equation~\eqref{multiscale:semiclassics:eqn:semiclassical_limit} holds, 
	\begin{align*}
		\hat{\Pi}_{\eps} F_{\mathrm{qm}}(t) \hat{\Pi}_{\eps} &= \hat{u}_0^* \Piref \, \Op_{\eps} \bigl ( f_{\mathrm{eff}} \circ \phi_t \bigr ) \, \Piref \hat{u}_0 + \order(\eps) 
		. 
	\end{align*}
	This concludes the proof. 
\end{proof}
The proof already contains the core idea how to push the error to $\order(\eps^2)$: one has to compute $h_{\mathrm{eff} \, 2}$ and then include the pseudomagnetic field $\eps \Omega$ in the semiclassical equations of motion. However, since the pseudomagnetic field is weak, on our time scale and with our level of precision, it will not affect the dynamics. 



\backmatter

\bibliographystyle{alpha}
\bibliography{./quantization_semiclassics}

\newcommand{\etalchar}[1]{$^{#1}$}
\begin{thebibliography}{BMK{\etalchar{+}}03}

\bibitem[Arn89]{Arnold:classical_mechanics:1989}
V.~I. Arnold.
\newblock {\em Mathematical Methods of Classical Mechanics}.
\newblock Springer, 1989.

\bibitem[Arn92]{Arnold:ode:1992}
V.~I. Arnold.
\newblock {\em Ordinary Differential Equations}.
\newblock Springer, 1992.

\bibitem[Ber75]{Berezin:quantization:1975}
F.~A. Berezin.
\newblock Quantization in complex symmetric spaces.
\newblock {\em Mathematics of the USSR-Izvestiya}, 9(2):341--379, 1975.

\bibitem[Ber84]{Berry:quantal_phase_factors:1984}
M.~V. Berry.
\newblock {Quantal phase factors accompanying adiabatic changes}.
\newblock {\em Proc. R. Soc. Lond. A}, 392:45--57, 1984.

\bibitem[BMK{\etalchar{+}}03]{Bohm_Mostafazadeh_Koizumi_Niu_Zwanziger:geometri%
c_phase:2003}
A.~Bohm, A.~Mostafazadeh, H.~Koizumi, Q.~Niu, and J.~Zwanziger.
\newblock {\em {The Geometric Phase in Quantum Systems}}.
\newblock Springer Verlag, 2003.

\bibitem[Bui88]{Buitrago:deduction_Dirac:1988}
J~Buitrago.
\newblock A comment on the deduction of the dirac equation.
\newblock {\em European Journal of Physics}, 9(2):149--150, 1988.

\bibitem[Cor83]{Cordes:pseudodifferential_FW_transform:1983}
H.~O. Cordes.
\newblock {A pseudodifferential Foldy-Wouthuysen transform}.
\newblock {\em Communications in Partial Differential Equations},
  8(13):1475--1485, 1983.

\bibitem[Cor04]{Cordes:pseudodifferential_FW_transform:2004}
H.~O. Cordes.
\newblock {A precise pseudodifferential Foldy-Wouthuysen transform for the
  Dirac equation}.
\newblock {\em Journal of Evolution Equations}, 4(1):128--138, March 2004.

\bibitem[Dre07]{Dreher:dynamische_systeme:2007}
Michael Dreher.
\newblock Dynamische systeme, 2007.
\newblock Universität Konstanz.

\bibitem[Ehr27]{Ehrenfest:semiclassics:1927}
P.~Ehrenfest.
\newblock {Bemerkung über die angenäherte Gültigkeit der klassischen
  Mechanik innerhalb der Quantenmechanik}.
\newblock {\em Zeitschrift für Physik}, 455:455--457, 1927.

\bibitem[FL08]{FuerstLein:nonRelLimitDirac:2008}
Martin Fürst and Max Lein.
\newblock {Scaling Limits of the Dirac Equation with Slowly-varying
  Electromagnetic Fields: an Adiabatic Approach}.
\newblock {\em to be published}, 2008.

\bibitem[FS10]{Friesecke_Schmidt:Ehrenfest_theorem:2010}
Gero Friesecke and Bernd Schmidt.
\newblock {A sharp version of Ehrenfest's theorem for general self-adjoint
  operators}.
\newblock {\em ArXiv e-prints}, March 2010.

\bibitem[FW50]{FoldyW:DiracTheory1950}
Leslie~L. Foldy and Siegfried~A. Wouthuysen.
\newblock On the dirac theory of spin 1/2 particles and its non-relativistic
  limit.
\newblock {\em Phys. Rev.}, 78(1):29--36, Apr 1950.

\bibitem[Hag80]{Hagedorn:semiclassics_coherent_states:1980}
George~A. Hagedorn.
\newblock {Semiclassical Quantum Mechanics}.
\newblock {\em {Communications in Mathematical Physics}}, 71:77--93, 1980.

\bibitem[Kat66]{Kato:perturbation_theory:1966}
Tosio Kato.
\newblock {\em Perturbation Theory for Linear Operators}.
\newblock Springer Verlag, 1966.

\bibitem[KS04]{Knauf:klassische_Mechanik:2004}
Andreas Knauf and Rudi Seiler.
\newblock Vorlesung mathematische physik i: Klassische mechanik, 2004.
\newblock Universität Erlangen-Nürnberg.

\bibitem[Lei05]{Lein:polarization:2005}
Max Lein.
\newblock {A Dynamical Approach To Piezocurrents In Slowly Perturbed Crystals}.
\newblock {diploma thesis}, {TU München, Munich, Germany}, November 2005.

\bibitem[Ler96]{Lerner:derivation_Dirac_eqn:1996}
L~Lerner.
\newblock Derivation of the dirac equation from a relativistic representation
  of spin.
\newblock {\em European Journal of Physics}, 17(4):172--175, 1996.

\bibitem[Lit86]{Littlejohn:semiclassics_wavepackets:1986}
Robert~G. Littlejohn.
\newblock The semiclassical evolution of wave packets.
\newblock {\em Physics Reports}, 138(4-5):193 -- 291, 1986.

\bibitem[LL01]{LiebLoss:Analysis:2001}
Elliot Lieb and Michael Loss.
\newblock {\em Analysis}.
\newblock American Mathematical Society, 2001.

\bibitem[LW93]{LittlejohnWeigert:diagonalizationMultiWave:1993}
Robert~G. Littlejohn and Stefan Weigert.
\newblock {Diagonalization of multicomponent wave equations with a
  Born-Oppenheimer example}.
\newblock {\em Physical Review A}, 47(5):3506--3512, 1993.

\bibitem[MR99]{MarsdenRatiu:introMechanicsSymmetry1999}
Jerold~E. Marsden and Tudor~S. Ratiu.
\newblock {\em Introduction to mechanics and symmetry}.
\newblock Springer Verlag, New York, 1999.

\bibitem[Mü99]{Mueller:productRuleGaugeInvariantWeylSymbols:1999}
M.~Müller.
\newblock {Product rule for gauge invariant Weyl symbols and its application to
  the semiclassical description of guiding centre motion }.
\newblock {\em J. Phys. A: Math. Gen.}, 32:1035--1052, 1999.

\bibitem[PST03a]{PST:effDynamics:2003}
Gianluca Panati, Herbert Spohn, and Stefan Teufel.
\newblock {Effective dynamics for Bloch electrons: Peierls substitution }.
\newblock {\em Communications in Mathematical Physics}, 242:547--578, October
  2003.

\bibitem[PST03b]{PST:sapt:2002}
Gianluca Panati, Herbert Spohn, and Stefan Teufel.
\newblock {Space Adiabatic Perturbation Theory}.
\newblock {\em Adv. Theor. Math. Phys.}, 7:145--204, 2003.

\bibitem[PST06]{PanatiSparberTeufel:polarization:2006}
Gianluca Panati, Christof Sparber, and Stefan Teufel.
\newblock {A Simple Semi-classical Description Of Piezoelectricity}.
\newblock {\em to be published}, June 2006.

\bibitem[PST07]{PST:Born-Oppenheimer:2007}
Gianluca Panati, Herbert Spohn, and Stefan Teufel.
\newblock The time-dependent born-oppenheimer approximation.
\newblock {\em M2AN}, 41(2):297--314, March 2007.

\bibitem[Rob87]{Robert:tour_semiclassique:1987}
Didier Robert.
\newblock {\em {Autour de l'Approximation Semi-Classique}}.
\newblock Birkhäuser, 1987.

\bibitem[Rob98]{Robert:semiclassics:1998}
D~Robert.
\newblock Semi-classical approximation in quantum mechanics. a survey of old
  and recent results.
\newblock {\em Helvetica Physica Acta}, 71(1):44--116, 1998.

\bibitem[RS81]{Reed_Simon:bibel_1:1981}
Michael Reed and Barry Simon.
\newblock {\em Methods of Modern Mathematical Physics, I: Functional Analysis}.
\newblock Academic Press, 1981.

\bibitem[Sak94]{Sakurai:modern_qm:1994}
Jun~John Sakurai.
\newblock {\em Modern Quantum Mechanics}.
\newblock Addison Wesley Publishing Company, 1994.

\bibitem[Sim83]{Simon:holonomy:1983}
Barry Simon.
\newblock {Holonomy, the Quantum Adiabatic Theorem, and Berry's Phase}.
\newblock {\em Phys. Rev. Lett.}, 51(24):2167--2170, 1983.

\bibitem[ST01]{Spohn_Teufel:time-dep_BO_approx:2001}
Herbert Spohn and Stefan Teufel.
\newblock Adiabatic decoupling and time-dependent born-oppenheimer theory.
\newblock {\em Comm. Math. Phys}, 224:113--132, 2001.

\bibitem[Teu03]{Teufel:adiabaticPerturbationTheory:2003}
Stefan Teufel.
\newblock {\em {Adiabatic Perturbation Theory in Quantum Dynamics}}.
\newblock Springer Verlag, 2003.

\bibitem[Tha92]{Thaller:DiracEquation:1992}
Bernd Thaller.
\newblock {\em {The Dirac Equation}}.
\newblock Springer Verlag, 1992.

\bibitem[Wal08]{Waldmann:deformationQuantization:2008}
Stefan Waldmann.
\newblock {\em {Poisson-Geometrie und Deformationsquantisierung. Eine
  Einf\"uhrung}}.
\newblock Springer Verlag, 2008.

\bibitem[Wer05]{Werner:Funktionalanalysis:2005}
Dirk Werner.
\newblock {\em Funktionalanalysis}.
\newblock Springer, 2005.

\bibitem[Wig39]{Wigner:unitary_reps_Lorentz_group:1939}
E.~Wigner.
\newblock {On Unitary Representations of the Inhomogeneous Lorentz Group}.
\newblock {\em The Annals of Mathematics}, 40(1):149--204, 1939.

\bibitem[Wya05]{Wyatt:qd_trajectories:2005}
Robert~E. Wyatt.
\newblock {\em {Quantum dynamics with trajectories: introduction to quantum
  hydrodynamics}}.
\newblock Springer, 2005.

\bibitem[Ynd96]{Yndurain:relativisticQM:1996}
Francisco Yndurain.
\newblock {\em {Relativistic Quantum Mechanics and Introduction to Field
  Theory}}.
\newblock Springer Verlag, 1996.

\end{thebibliography}

\end{document}